\numberwithin{equation}{section}
\title[On the Mathematical Foundations of Causal Fermion Systems]{On the Mathematical Foundations of \\[0.1cm] Causal Fermion Systems in Minkowski Space}
\author[M.\ OPPIO]{MARCO OPPIO}
\theoremstyle{thm}
\newtheorem{theorem}{Theorem}[section]
\newtheorem{lemma}[theorem]{Lemma}
\newtheorem{corollary}[theorem]{Corollary}
\newtheorem{proposition}[theorem]{Proposition}
\newtheorem{definition}[theorem]{Definition}
\newtheorem{remark}[theorem]{Remark}
\newtheorem{notation}[theorem]{Notation}
\newtheorem{assumption}[theorem]{Assumption}
\newcommand{\restr}{\!\!\restriction}
\newcommand{\supp}{\text{supp }}
\newcommand{\V}[1]{{\bf{#1}}}
\DeclareFontFamily{OT1}{rsfso}{}
\DeclareFontShape{OT1}{rsfso}{m}{n}{ <-7> rsfso5 <7-10> rsfso7 <10-> rsfso10}{}
\DeclareMathAlphabet{\mycal}{OT1}{rsfso}{m}{n}
\DeclareMathOperator{\ran}{ran}
\def\scF{\mathscr{F}}
\def\scB{\mathscr{B}}
\def\scH{\mathscr{H}}
\def\Sol{\mathscr{H}_{m}^{sc}}
\def\sol{\mathscr{H}_m}
\newcommand{\Sl}{\mbox{$\prec \!\!$ \nolinebreak}}
\newcommand{\Sr}{\mbox{\nolinebreak $\succ$}}
\def\bpsi{\boldsymbol{\psi}}
\def\cD{\mathcal{D}}
\def\cE{{\mathcal{E}}}
\def\scF{{\ca F}}
\def\sM{{\mathsf M}}
\def\sN{{\mathsf N}}
\def\sP{{\mathsf P}}
\def\sU{\mathscr{U}}
\def\sD{\mathsf{D}}
\def\rF{\mathrm{F}}
\def\bC{{\mathbb C}}           
\def\bI{{\mathbb I}}
\def\cF{\mathcal{F}}
\def\bN{{\mathbb N}}
\def\bM{{\mathbb M}}
\def\bR{{\mathbb R}}
\def\bZ{{\mathbb Z}}
\def\scF{{\mathfrak F}}
\def\gG{{\mathfrak G}}
\def\gg{{\mathfrak g}}
\def\gJ{{\mathfrak J}}
\def\gR{{\mathfrak R}}
\def\Sol{\mathscr{H}_{m}^{sc}}
\def\sol{\mathscr{H}_m}
\def\scF{\mathscr{F}}
\def\scB{\mathscr{B}}
\def\scH{\mathscr{H}}
\begin{document}

\maketitle
\vspace{-0.8cm}
\begin{center}
	{	{\footnotesize\textit{marco.oppio@ur.de}}\\[0.5cm]
			{\em \footnotesize Fakult\"at f\"ur Mathematik,\  Universit\"at Regensburg \\D-93040 Regensburg, Germany\\}}
	\vspace{0.8em}
	
	{\footnotesize{\rm NOVEMBER 2020}}
\end{center}
\begin{abstract}
	The emergence of the concept of a causal fermion system  is revisited and further investigated for the vacuum Dirac equation in Minkowski space. After a brief recap of the Dirac equation and its solution space, in order to allow for the effects of a possibly nonstandard structure of spacetime at the Planck scale, a regularization by a smooth cutoff in momentum space is introduced, and its properties are discussed. Given an ensemble of solutions, we recall the construction of a local correlation function, which realizes spacetime in terms of operators. It is shown in various situations that the local correlation function maps spacetime points to operators of maximal rank and that it is closed and  homeomorphic onto its image.
	It is inferred that the corresponding causal fermion systems are regular and have a smooth manifold structure.
	The cases considered include a Dirac sea vacuum and systems involving a finite number of particles and antiparticles. 
\end{abstract}

\tableofcontents

\section{Introduction}

In most attempts to quantum gravity, a widely accepted principle is the existence of a minimal observable length, generally associated to the Planck length $l_P\sim 10^{-35} m$. 
In order to understand this problem in a simple way, imagine to probe the microscopic structure of spacetime down to the Planck scale. Then the uncertainty principle would lead to energy densities which are large enough to change the structure of spacetime itself drastically.
This argument poses severe constraints to any attempt of defining a notion of localization.

Sticking to the simplest case of Minkowski space, one possible way to implement the existence of a minimal length is to introduce a cutoff regularization in momentum space. Roughly speaking, if the spacetime uncertainty is believed to be bounded from below by $l_P$, then one would expect an upper bound in the momentum uncertainty of the order of $\hbar\, l_P^{-1}$. This can be realized by taking out the momenta of the wave functions of interest which lie above such an upper scale (for mathematical simplicity we stick to a \textit{smooth} cutoff in this paper, see Section \ref{sectionreg}).
This procedure is reminiscent of the usual ultraviolet regularization frequently used in the renormalization program in quantum field theory as a
technical tool to remove divergences.
However, in our setting the regularization 
has a physical significance as it effectively describes the microscopic structure of physical spacetime in the presence of a minimal length. 

As shown in \cite{FF}, the first consequence of implementing the minimal length as a momentum cutoff is the existence of natural realizations of spacetime in terms of finite-rank operators on the one-particle Hilbert space (the so-called \textit{local correlation function}, see Theorem \ref{teoremaesistenzaF}). 
More concretely, given any closed subspace of the one-particle Hilbert space (referred to as an ensemble of solutions) one constructs a function\footnote{Given a Hilbert space $\mathscr{H}$ we denote by $\scB(\mathscr{H})$ the Banach space of bounded  operators on $\mathscr{H}$.}
$$
F^\varepsilon:\bR^{1,3}\rightarrow \mathscr{B}(\scH),\quad \langle u| F^\varepsilon(x)v\rangle_{\mathscr{H}}:=-u_\varepsilon(x)^\dagger\,\gamma^0\,v_\varepsilon(x),
$$ 
where the lower index $\varepsilon$ specifies that the function $u$ has been regularized by the cutoff and $\gamma^0$ is the zeroth Dirac matrix. Each operator $F^\varepsilon(x)$ has rank at most four, and its range is formed by solutions which are relevant at the point $x\in\bR^{1,3}$ (the meaning of this statement will become clear in Section \ref{sectionCFS}). In this way the local information encoded in the fermionic wave functions is absorbed into a suitable redefinition of spacetime.
The structure consisting of an \textit{ensemble of solutions} together with the corresponding \textit{local correlation function}
(and equipped with a canonical \textit{Borel measure} 
 on its image, see Definition \ref{defCFS} and the remark thereafter) defines a \textit{causal fermion system}.

There is a distinguished ensemble of solutions which provides us with a faithful realization of spacetime and is invariant under spacetime translation and as such gives a sensible candidate for a vacuum structure: the family of negative-energy solutions of the Dirac equation. This resembles the original idea of Dirac of the vacuum as a quantum system where all the negative-energy states are occupied. Although in the standard framework of quantum field theory such a concept is no longer used, in our setting Dirac's idea finds a new possibility of interpretation.
The addition of particles corresponds then to an extension of the solution ensemble within the positive-energy subspace, while the addition of antiparticles is realized by a restriction of such a family within the negative one.

Some features of the local correlation functions, like continuity, injectivity and regularity (the property of having maximal rank), have already been analyzed in the case of a Dirac sea vacuum (see for example \cite[Section 1.2.3]{FF} and \cite[Section 4]{FG}). In this paper we delve into this matter more systematically and study under which assumptions a given ensemble of solutions leads to a local correlation function which, besides being continuous, is also closed and homeomorphic onto its image. The topological and differentiable structures of Minkowski space can then be lifted through $F^\varepsilon$ to $\scB(\mathcal{\scH})$, realizing spacetime as a four-dimensional manifold consisting of bounded operators.
\null
In Theorem \ref{teoremamanifold} it is proved that this is always true in the case of a Dirac sea vacuum and in presence of particles, while sufficient assumptions are provided when antiparticles are taken into account.

This shows an asymmetry between matter and anti-matter in our formalism. The reason of this is not difficult to understand. 
The addition of positive-energy solutions simply adds information to the vacuum configuration, which is already rich enough for its local correlation function to provide a faithful representation of spacetime. On the other hand, removing negative-energy solutions from the vacuum ensemble might cause a critical loss of information, ending up in a too poor local correlation function which is, for instance, no longer injective (see the counterexample in Section \ref{subsectininjective}). This can be prevented by restricting attention to negative-energy solutions which are sufficiently ``spread out" in space and do not vary too much on the microscopic scale, where the structure of spacetime has indeed been modified by the introduction of a regularization (see Sections \ref{sectionholes} and \ref{sectionsmooth}). 

The lack of features like injectivity or regularity in the general case and its connection to the microscopic behavior of the wave functions seems to be of physical and mathematical interest, but it will not be discussed in this paper and could be pursued in the future.

Clearly, the whole construction presented here (and originally introduced in \cite{FF}) depends heavily  on the choice of the regularization cutoff. Nevertheless, in the theory of causal fermion systems, the belief is that a distinguished \textit{physically meaningful regularization} does exist and arises naturally as a minimizer of an action principle (see the \textit{causal action principle} in \cite{FF}, Section 1.1.1).
An attempt to construct such ``optimal'' regularizations can be found in~\cite{reg}. Here we do not enter such constructions, because our focus is to analyze the analytic properties of the local correlation function for a simple class of regularizations.

\section{{Standard Results on the Dirac Equation}}

In this chapter we review some standard results on the Dirac equation. In order not to distract from the main scope of the paper, all the proofs of this section are postponed to the appendix.

We assume that an inertial reference frame has been assigned and  Minkowski space realized accordingly as the Euclidean space $\bR^{1,3}$ equipped with the Minkowski inner product $\eta$ with signature convention $(+,-,-,-)$. 
For simplicity of notation, we use natural units $\hbar=c=1$.

\subsection{The Equation and its Solutions Space}

In this section we introduce the basic notions on the  \textit{Dirac equation} and its solution spaces which are relevant for the theory of causal fermion systems. We will focus exclusively on the case of \textit{strictly positive mass $m>0$}.

The starting point is the Dirac first-order linear differential operator:
$$
\mathcal{D}:=i\gamma^\mu\partial_\mu - m \bI_4:\mathcal{C}^\infty(\bR^{1,3},\bC^4)\rightarrow \mathcal{C}^\infty(\bR^{1,3},\bC^4).
$$
The space of smooth solutions of $\mathcal{D}$, i.e. the set of  $f\in\mathcal{C}^\infty(\bR^{1,3},\bC^4)$ such that
\begin{equation}\label{solutiondirac}
	i\gamma^\mu \partial_\mu f =mf.\footnote{In the remainder of the paper we will use Feynman notation $\slashed{a}:=a_\mu \gamma^\mu$ for $a\in\bC^4$ and  $\slashed\partial:=\gamma^\mu\partial_\mu$. }
\end{equation}
 is denoted by $\ker \mathcal{D}$.

Equation (\ref{solutiondirac}) is a \textit{symmetric hyperbolic system of partial differential equations} and as such it admits \textit{unique global solutions} if regular initial data are assigned on a given Cauchy surface 
$$
\Sigma_t:=\{(t,\V{x})\in\bR^{1,3}\:|\: \V{x}\in\bR^3 \}
$$ 
(for details see for example \cite{intro} or \cite[Section 5.3]{john}).
\begin{theorem}\label{existenceuniqueness}
	Referring to equation (\ref{solutiondirac}), the following statements hold.
	\begin{itemize}[leftmargin=2.5em]
		\vspace{0.05cm}
		\item[\rm{(i)}] Let $f,g\in \ker\cD$  be such that $f\restr_{\Sigma_t}=g\restr_{\Sigma_t}$ for some $t\in\bR$, then $f=g$.\\[-0.5em]
		\item[\rm{(ii)}] For any $t\in\bR$ and $\varphi\in \mathcal{C}_0^\infty(\bR^3,\bC^4) \footnote{By $\mathcal{C}_0^\infty(X,Y)$ we denote the linear space of compactly supported smooth functions from $X$ to $Y$.}$ there exists\footnote{The existence of solutions can be proved in the more general case $\varphi\in\mathcal{C}^\infty(\bR^3,\bC^4)$.} $\mathrm{E}_t(\varphi)\in \ker\cD$ with $$\mathrm{E}_t(\varphi)\restr_{\Sigma_t}=\varphi.$$
	\end{itemize}	
	
\end{theorem}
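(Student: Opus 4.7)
For part (i), I would set $h := f-g\in\ker\cD$ with $h\restr_{\Sigma_t}=0$ and derive the vanishing of $h$ from the conserved Dirac current $j^\mu := \bar h\,\gamma^\mu h$, where $\bar h := h^\dagger\gamma^0$. A direct computation using the equation and its formal adjoint gives $\partial_\mu j^\mu = 0$ pointwise; moreover $j^0 = h^\dagger h\ge 0$ and $j^\mu j_\mu\ge 0$, so $j^\mu$ is a future-directed causal vector field. Given any point $p=(t',\V{x}')$ with $t'>t$, I would apply the divergence theorem to $j^\mu$ on the truncated backward light cone
$$K := \{(s,\V{x}):t\le s\le t',\ |\V{x}-\V{x}'|\le t'-s\}.$$
The flux of $j$ through the bottom disk $K\cap\Sigma_t$ vanishes by hypothesis, while the fluxes through the top disk $\{t'\}\times\{|\V{x}-\V{x}'|\le 0\}$-neighbourhood and through the null lateral boundary are each non-negative (as $j$ is causal future-directed). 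Non-negativity of every contribution forces $h\equiv 0$ on $K$, and in particular $h(p)=0$; the case $t'<t$ is handled identically with past and future exchanged.

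For part (ii), I would solve the Cauchy problem by spatial Fourier transform. Rewriting the equation in evolution form as $i\partial_s f = H f$ with $H := -i\gamma^0\gamma^k\partial_k + m\gamma^0$, one has in momentum space $H(\V{k}) := \gamma^0(\gamma^k k_k + m)$, which satisfies $H(\V{k})^2 = \omega(\V{k})^2\,\bI_4$ with $\omega(\V{k}):=\sqrt{|\V{k}|^2+m^2}$; hence there exist spectral projectors $P_\pm(\V{k}) := \tfrac{1}{2}\bigl(\bI_4\pm H(\V{k})/\omega(\V{k})\bigr)$ onto the $\pm\omega(\V{k})$-eigenspaces. The candidate solution is
$$E_t(\varphi)(s,\V{x}) := \int\frac{d^3k}{(2\pi)^{3/2}}\,e^{i\V{k}\cdot\V{x}}\bigl[e^{-i\omega(\V{k})(s-t)}P_+(\V{k}) + e^{i\omega(\V{k})(s-t)}P_-(\V{k})\bigr]\hat\varphi(\V{k}),$$
with $\hat\varphi$ the spatial Fourier transform of $\varphi$. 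Since $\varphi\in\scC_c^\infty(\bR^3,\bC^4)$, Paley-Wiener guarantees that $\hat\varphi$ is entire of exponential type and in particular Schwartz on $\bR^3$, so all $s$- and $\V{x}$-derivatives can be taken under the integral sign, giving $E_t(\varphi)\in\scC^\infty(\bR^4,\bC^4)$. The identity $P_++P_-=\bI_4$ yields the initial condition $E_t(\varphi)\restr_{\Sigma_t}=\varphi$, while $HP_\pm=\pm\omega P_\pm$ turns the integrand into an eigenfunction of $i\partial_s - H$, i.e. $E_t(\varphi)\in\ker\cD$.

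The main obstacle is part (i): the algebraic conservation of $j^\mu$ is immediate, but in the absence of any decay at spatial infinity for generic smooth solutions, one cannot simply integrate $j^0$ over an entire Cauchy surface. The substantive content is therefore the localization by means of the backward light cone together with the positivity of the flux of a causal future-directed field through every achronal piece of $\partial K$; these ingredients together promote the formal conservation law to a genuine local energy estimate. Part (ii) is more mechanical once the Paley-Wiener bounds are invoked to justify differentiation under the integral, and the uniqueness of $E_t(\varphi)$ is then inherited directly from (i).
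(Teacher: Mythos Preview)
Your proposal is correct in both parts, but note that the paper does not actually supply a proof of this theorem: it simply records that the Dirac equation is a symmetric hyperbolic system and defers to the general existence--uniqueness theory in \cite{intro} and \cite[Section~5.3]{john}. So a direct comparison is with what those references do, not with anything in the paper itself.

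For part~(i) your energy argument via the Dirac current $j^\mu=\bar h\gamma^\mu h$ is precisely the specialization to Dirac of the standard energy estimate for symmetric hyperbolic systems, and it is exactly the mechanism the paper invokes later (Proposition~2.2 cites the same lens-shaped-region uniqueness argument from \cite{intro} and \cite{DM}). One expositional wrinkle: your ``top disk'' at level $t'$ degenerates to the single point $p$, so as written the flux decomposition is ambiguous. The clean way to phrase it is to truncate the backward cone at an intermediate height $s\in(t,t')$; then the bottom flux vanishes, the lateral null flux is non-negative, and the divergence theorem forces $\int_{\text{top}}j^0=0$, hence $h\equiv0$ on that slice. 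Varying $s$ gives $h=0$ on all of $K$.

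For part~(ii) your Fourier construction is more explicit than the abstract existence theory in \cite{john}, which proceeds by energy estimates and iteration and applies to variable-coefficient symmetric hyperbolic systems. Your approach works only because the operator has constant coefficients on $\bR^4$, but that is exactly the present setting, and in fact the paper itself adopts the same Fourier-analytic viewpoint from Proposition~2.10 onward. So your route is both correct and well aligned with how the paper subsequently handles solutions; it simply bypasses the general hyperbolic machinery that the references provide.
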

\noindent Points (i) and (ii) in the theorem above guarantee that, for every $t\in\bR$, the function
\begin{equation}\label{funzioneE}
	\mbox{E}_t:\mathcal{C}^\infty_0(\bR^3,\bC^4)\ni\varphi\mapsto \mbox{E}_t(\varphi)\in \ker\cD
\end{equation}
is a well-defined \textit{injective} linear mapping whose image is the set:
\begin{equation}\label{imageE}
	\mbox{E}_t\big(\mathcal{C}_0^\infty(\bR^3,\bC^4)\big)=\{f\in \ker\cD\:|\: f\restr_{\Sigma_t}\in \mathcal{C}_0^\infty(\bR^3,\bC^4) \}.
\end{equation}

It can be proved that the support of the generated solution $\mathrm{E}_t(\varphi)$ is contained in the causal propagation of the support of $\varphi$, as we expect from finite propagation speed.
\begin{proposition}\label{causalpropag}
	For every $s,t\in\bR$ and $\varphi\in \mathcal{C}_0^\infty(\bR^3,\bC^4)$ the following holds.
	\begin{itemize}[leftmargin=2.5em]
		\vspace{0.08cm}
		\item[\rm{(i)}] $\supp \mathrm{E}_t(\varphi)\subset \{t\}\times\supp\varphi + \{x\in\bR^{1,3}\:|\: \eta(x,x)\ge 0 \}$\\[-0.4em]
		\item[\rm{(ii)}] $\mathrm{E}_t(\varphi)\restr_{\Sigma_s}\in \mathcal{C}^\infty_0(\bR^3,\bC^4)$ and  $\mathrm{E}_s(\mathrm{E}_t(\varphi)\restr_{\Sigma_s})=\mathrm{E}_t(\varphi)$.
	\end{itemize}
\end{proposition}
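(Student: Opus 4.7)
The plan is to derive (i) from the standard finite-propagation-speed argument for symmetric hyperbolic systems, and then obtain (ii) essentially for free from (i) combined with the uniqueness assertion of Theorem \ref{existenceuniqueness}(i).

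For (i), the idea is to exploit the conserved Dirac current $J^\mu := f^*\gamma^0\gamma^\mu f$, where $f := \mathrm{E}_t(\varphi)$. A short calculation using $i\slashed\partial f = m f$ together with $(\gamma^0\gamma^\mu)^* = \gamma^0\gamma^\mu$ yields $\partial_\mu J^\mu = 0$. Pick a point $x_0=(t_0,\mathbf{x}_0)$ lying outside $\{t\}\times\supp\varphi + \{\eta(x,x)\geq 0\}$ and assume for definiteness $t_0 > t$; then the backward solid light cone from $x_0$ meets $\Sigma_t$ in a closed ball disjoint from $\supp\varphi$. Applying the divergence theorem to $J^\mu$ on the truncated solid cone bounded by $\Sigma_t$ below, $\Sigma_{t_0}$ above and the null mantle on the side, the bottom flux vanishes since $f\restr_{\Sigma_t}=\varphi$ is zero there, the flux $J^\mu n_\mu$ through the null mantle is pointwise non-negative (the classical positivity statement $f^*\gamma^0\slashed{n}\,f\geq 0$ for future-directed causal $n$), and the top flux equals $\int |f(t_0,\mathbf{x})|^2\, d\mathbf{x}\geq 0$. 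Each term being non-negative and their sum being zero, $f$ must vanish throughout the solid backward cone from $x_0$, and in particular at $x_0$. The case $t_0 < t$ is analogous, with a forward cone.

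Part (ii) is then immediate. By (i), the trace $f\restr_{\Sigma_s}$ is supported in the closed, bounded set $\{\mathbf{x}\in\bR^3 : \mathrm{dist}(\mathbf{x},\supp\varphi)\leq |s-t|\}$, and smoothness is inherited from $f\in\scC^\infty(\bR^4,\bC^4)$; hence $f\restr_{\Sigma_s}\in\scC_c^\infty(\bR^3,\bC^4)$. Both $\mathrm{E}_s(f\restr_{\Sigma_s})$ and $f$ then lie in $\ker\cD$ and carry the same initial data on $\Sigma_s$, so Theorem \ref{existenceuniqueness}(i) identifies them.

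The main obstacle is the energy estimate of (i): it rests on the non-negativity of the null-mantle flux, which in turn reduces to checking that $f^*\gamma^0\gamma^\mu n_\mu f \geq 0$ for every $f\in\bC^4$ and every future-pointing causal $n$. This is classical — essentially Dirac's energy-density positivity — and is proved in the hyperbolic-systems references \cite{intro,john} cited in the excerpt, so I would quote rather than reproduce the computation.
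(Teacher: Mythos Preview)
Your argument is correct and follows the same energy-estimate route the paper invokes by citing \cite{intro,DM}: finite propagation speed via the conserved Dirac current for (i), then uniqueness from Theorem~\ref{existenceuniqueness}(i) for (ii). One small imprecision: the backward cone from $x_0=(t_0,\mathbf{x}_0)$ meets $\Sigma_{t_0}$ only at its apex, so the ``top flux'' as written is an integral over a single point; to conclude $f(x_0)=0$ you should run the truncated-cone argument for each intermediate level $\Sigma_s$ with $t<s<t_0$, obtaining $\int_{\text{top disk}}|f(s,\cdot)|^2=0$ at every height and hence $f\equiv 0$ on the solid cone by continuity.
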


From this proposition we see that the choice of the Cauchy surface does not play any role in the definition of (\ref{imageE}). More precisely:
$$
\mathrm{E}_t\big(\mathcal{C}_0^\infty(\bR^3,\bC^4)\big)=\mbox{E}_s(\mathcal{C}_0^\infty(\bR^3,\bC^4))\quad\mbox{for any }s,t\in\bR.
$$
\begin{definition}
	The image of $\mbox{\em E}_t$ is called the \textbf{space of smooth solutions of the Dirac equation with spatially compact support} and denoted by $\Sol$.
\end{definition}

This space of solutions can be equipped with a pre-Hilbert space structure. Given any ${t\in\bR}$ we consider the sesquilinear function\footnote{The \textit{Euclidean inner product} of $\bR^n$ and $\bC^n$ are denoted by $a^\dagger b$ and $a\cdot b$, respectively. In both cases the corresponding norm is denoted by $|a|$. }
\begin{equation}\label{innerproduct}
	\Sol\times\Sol\ni (f,g)\mapsto (f|g)_t:=(f\restr_{\Sigma_t}\!|\,g\restr_{\Sigma_t})_{\mathcal{L}^2}=\int_{\bR^3}f(t,\V{x})^\dagger g(t,\V{x})\, d^3\V{x} \in\bC.
\end{equation}
Again, the choice of the Cauchy surface plays no role, due to \textit{current conservation}. 
\begin{proposition}\label{currentconserv}
	For any $t\in\bR$ the function $(\cdot|\cdot)_t$ defines a Hermitian inner product. With respect to it and the $\mathcal{L}^2$-norm in the domain, the isomorphism
	$$
	\mathrm{E}_t:\mathcal{C}_0^\infty(\bR^3,\bC^4)\longrightarrow\scH_m^{sc}
	$$ 
	is a linear isometry.  Moreover, $(\cdot|\cdot)_t=(\cdot|\cdot)_0$.
\end{proposition}

Having in mind the construction of a one-particle Hilbert space, the natural next step consists in taking the completion of $\Sol$ with respect to the inner product \eqref{innerproduct}.  In the general case, where no background (complete) Hilbert space is given, the completion of a pre-Hilbert space is constructed in a purely abstract way, by taking as linear space the set of equivalence classes of Cauchy sequences and extending the inner product to it by continuity. Since we aim at building a quantum theory of wave functions, it is important to show that even the limit points can be realized in terms of measurable functions on spacetime.  

A natural space where to embed our space of smooth solutions is the set of locally square-integrable functions
$$
\mathcal{L}_{loc}^2(\bR^{1,3},\bC^4)\supset \Sol.
$$ 
This space can be equipped with the structure of a complete metric space (see for example Lemma 5.17 in \cite{mv}) such that the arising notion of convergence is equivalent to the
requirement:
\vspace{0.2em}

\begin{center}
\textit{$u_n\to u$ in  $\mathcal{L}^2_{loc}(\bR^{1,3},\bC^4)$\\[0.4em]   if and only if\\[0.4em] $\|u_n\restr_B-u\restr_B\!\|_{\mathcal{L}^2}\to 0$ for all bounded open sets  $B\subset\bR^{1,3}$.}
\end{center}
\vspace{0.8em}

\noindent In the following we will make use of the sets
$
R_T:=[-T,T]\times\bR^3.
$
\begin{lemma}\label{lemmaconverging}
	Given the pre-Hilbert topology of $\Sol$, the canonical embedding
	 $$
	 \Sol\hookrightarrow\mathcal{L}^2_{loc}(\bR^{1,3},\bC^4)
	$$ 
	 is Cauchy continuous\footnote{A function between metric spaces is said to be Cauchy continuous if it maps Cauchy sequences to Cauchy sequences}, in particular it is continuous. 
\end{lemma}
At this point, bearing in mind the way the abstract completion of a pre-Hilbert space is constructed, we can  characterize the completion of $\Sol$ with respect to $(\cdot|\cdot)_0$ by assigning to every Cauchy sequence in $\Sol$ the corresponding limit in $\mathcal{L}_{loc}^2(\bR^{1,3},\bC^4)$ and extending the inner product $(\cdot|\cdot)_0$ by continuity in the obvious way.  
\begin{proposition}\label{defSOL}
	The linear space
	$$
	\sol:=\overline{\Sol}^{\,\, \mathcal{L}^2_{loc}}\subset\mathcal{L}_{loc}^2(\bR^{1,3},\bC^4),
	$$ 
	equipped with the continuous extension of $(\cdot|\cdot)_0$ is a Hilbert space, called the \textbf{one-particle Hilbert space of the Dirac equation}. Its elements are called the \textbf{physical solutions of the Dirac equation}. Its inner product is denoted  by $(\cdot|\cdot)_m$.
\end{proposition}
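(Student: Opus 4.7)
The plan is to identify $\sol$ with the abstract metric completion of the pre-Hilbert space $(\Sol,(\cdot|\cdot)_0)$ and then transfer the Hilbert structure back. The engine is a one-line consequence of current conservation (Proposition \ref{currentconserv}): for every $f\in\Sol$ and every $T>0$,
$$
\|f\|_{\scL^2(R_T)}^2 \;=\; \int_{-T}^{T}\|f\restr_{\Sigma_t}\|_{\scL^2}^2\,dt \;=\; 2T\,(f|f)_0.
$$
I would emphasise that this identity runs both ways: it recovers Lemma \ref{lemmaconverging}, but, fixing any single $T>0$, it also shows that a sequence in $\Sol$ which is Cauchy in $\scL^2_{loc}$ is automatically Cauchy for $(\cdot|\cdot)_0$. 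So on $\Sol$ the two convergence notions agree at the Cauchy level.

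With that two-sided equivalence in hand, the extension of the inner product is routine. Given $f,g\in\sol$ I would pick any $\scL^2_{loc}$-convergent approximations $f_n,g_n\in\Sol$; by the above they are automatically $(\cdot|\cdot)_0$-Cauchy, and Cauchy--Schwarz in the pre-Hilbert space $\Sol$ makes $\{(f_n|g_n)_0\}$ Cauchy in $\bC$. I would set $(f|g):=\lim_n(f_n|g_n)_0$, check independence of the chosen sequences by the standard interlacing trick, and note that sesquilinearity and Hermiticity pass to the limit.

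The delicate step is positive definiteness, since the closure is being formed in $\scL^2_{loc}$ rather than in the Hilbert norm. Suppose $f\in\sol$ satisfies $(f|f)=0$ and take $f_n\in\Sol$ with $f_n\to f$ in $\scL^2_{loc}$; then $(f_n|f_n)_0\to 0$, and applying the key identity in the reverse direction yields $\|f_n\|_{\scL^2(R_T)}\to 0$ for every $T$, so $f_n\to 0$ in $\scL^2_{loc}$ and hence $f=0$ as an element of $\scL^2_{loc}(\bR^4,\bC^4)$. This is the only real obstacle, and it is dissolved precisely by the reverse half of the norm identity.

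For completeness I would observe that the same identity turns $\Sol\hookrightarrow\scL^2_{loc}$ into a homeomorphism at the Cauchy level, so $\overline{\Sol}^{\scL^2_{loc}}$ coincides with the set of $\scL^2_{loc}$-limits of $(\cdot|\cdot)_0$-Cauchy sequences in $\Sol$. This yields a bijection between $\sol$ and the abstract metric completion of $(\Sol,(\cdot|\cdot)_0)$ intertwining the extended inner product with the abstract one, and completeness then follows from the textbook fact that the abstract completion of a pre-Hilbert space is a Hilbert space.
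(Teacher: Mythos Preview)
Your argument is correct and follows the same route the paper takes implicitly: the identity $\|f\restr_{R_T}\|_{\scL^2}^2=2T\,(f|f)_0$ (derived in the proof of Lemma~\ref{lemmaconverging} and restated for limits in Lemma~\ref{lemmaconverging2}) is precisely the two-sided bridge between $(\cdot|\cdot)_0$-Cauchy and $\scL^2_{loc}$-Cauchy that the paper relies on to identify $\sol$ with the abstract completion. The paper does not spell out positive definiteness or well-definedness as you do---it simply asserts the proposition after Lemma~\ref{lemmaconverging} and cites Lemma~\ref{lemmaconverging2} as supporting evidence---so your write-up is in fact more explicit than the original.
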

To support our choice of $\sol$ as completion of $\Sol$, we also state the following technical result.
\begin{lemma}\label{lemmaconverging2}
	Let $\{f_n \}_n$ be a Cauchy sequence in $\Sol$. Then the following holds.
	\begin{itemize}[leftmargin=2.5em]
		\vspace{0.08cm}
		
		\item[\rm{(i)}] For every $T>0$ the function  $u=\lim_{n\to\infty} f_n\in\sol$ fulfills
		\begin{equation}\label{identityintegralstrip}
		u\restr_{R_T}\,\in \mathcal{L}^2(R_T,\bC^4)\quad\text{and}\quad
			\|u\restr_{R_T}\!\|_{\mathcal{L}^2}=\sqrt{2T}\|u\|_m.
		\end{equation}
		\item[\rm{(ii)}] There exists a subsequence $\{f_{\sigma(n)} \}_n$ which converges to $u$ pointwise a.e.
	\end{itemize}
	
\end{lemma}

The elements of $\sol$ can still be interpreted as solutions of a partial differential equation, even though in a weak sense, for they are not regular functions in general. Nevertheless, as expected, in the case of a smooth function, the Dirac equation is solved in the ordinary sense.
\begin{theorem}\label{weaksol}
	Every  $u\in\sol$ is a weak solution of the Dirac equation (\ref{solutiondirac}), i.e.
	$$
	\int_{\bR^4} u(x)^\dagger(\cD^* \varphi)(x)\, d^4x=0\quad \mbox{for all }\varphi\in \mathcal{C}_0^\infty(\bR^{1,3},\bC^4).
	$$
	with $\cD^*:=i(\gamma^\mu)^\dagger\partial_\mu+m\bI_4$ the formal adjoint of $\cD$.
	If $u\in\mathcal{C}^\infty(\bR^{1,3},\bC^4)$, then $u\in\ker\mathcal{D}$.
\end{theorem}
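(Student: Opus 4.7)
The plan is to prove the identity first for smooth spatially compactly supported solutions $f \in \Sol$ by an integration-by-parts argument, and then extend to arbitrary $u \in \sol$ by density, exploiting the fact that the test function $\varphi$ is compactly supported so that only $\scL^2$ convergence on a bounded set is required.

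First, for $f \in \Sol$ and any $\varphi \in \scC_c^\infty(\bR^4, \bC^4)$, I would pick $T > 0$ large enough that $\supp \varphi \subset R_T$. On $R_T$, the function $f$ is smooth, and for every fixed $t \in [-T, T]$ its restriction $f\restr_{\Sigma_t}$ has compact support by Proposition \ref{causalpropag}(ii), so the product $f^\dagger \varphi$ has compact support in $R_T$. This allows integration by parts in every variable with no boundary contributions: the spatial boundary terms vanish because $f$ has spatially compact support on each slice and $\varphi$ is smooth, while the temporal boundary terms at $t = \pm T$ vanish because $\varphi$ is supported strictly inside $R_T$. The formal adjoint $\cD^*$ is defined precisely so that
$$\int_{\bR^4} f(x)^\dagger (\cD^* \varphi)(x)\, d^4 x = \int_{\bR^4} (\cD f)(x)^\dagger \varphi(x)\, d^4 x,$$
and the right-hand side vanishes since $f \in \ker \cD$.

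For general $u \in \sol$, I would take a Cauchy sequence $\{f_n\} \subset \Sol$ with $f_n \to u$ in $\scL^2_{loc}(\bR^4, \bC^4)$. Choose any bounded open set $B$ containing $\supp \varphi$; by the defining notion of $\scL^2_{loc}$ convergence, $\|f_n\restr_B - u\restr_B\|_{\scL^2} \to 0$. Then by Cauchy-Schwarz
$$\left| \int_{\bR^4} (f_n - u)^\dagger (\cD^* \varphi)\, d^4 x \right| \le \|f_n\restr_B - u\restr_B\|_{\scL^2} \, \|\cD^* \varphi\|_{\scL^2(B)} \xrightarrow{n \to \infty} 0,$$
so the already-established vanishing of the integral for each $f_n$ transfers to $u$.

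The argument is essentially routine, and the only step requiring care is the boundary-term analysis in the integration by parts; this is handled cleanly by Proposition \ref{causalpropag}, which guarantees that the support of $f$ intersected with any time-bounded strip is compact. After that, the passage to the limit is immediate because $\scL^2_{loc}$-convergence is precisely the notion in which the closure defining $\sol$ is taken, and Cauchy-Schwarz on the bounded set $B$ closes the argument.
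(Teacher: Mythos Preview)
Your proof is correct and follows essentially the same approach as the paper: establish the identity for $f\in\Sol$ via integration by parts (using the compact spatial support on each time slice and the compact support of $\varphi$ to kill boundary terms), then pass to general $u\in\sol$ by density and $\scL^2_{loc}$-convergence combined with Cauchy--Schwarz on a bounded set containing $\supp\varphi$. Your treatment of the boundary terms is in fact more explicit than the paper's, which compresses both steps into a single chain of equalities.
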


At this point,  exploiting the density of $\mathcal{C}_0^\infty(\bR^3,\bC^4)$  within $\mathcal{L}^2(\bR^3,\bC^4)$, it is possible to extend (uniquely) the isometries $\mbox{E}_t:\mathcal{C}^\infty_c(\bR^3,\bC^4)\rightarrow \Sol$ to unitary operators of Hilbert spaces (we keep the same notation):
\begin{equation}
	\mathrm{E}_t:\mathcal{L}^2(\bR^3,\bC^4)\rightarrow \sol.
\end{equation}
Each of these operators admits an inverse 
\begin{equation}
	\mathrm{E}_t^{-1}:\sol\to \mathcal{L}^2(\bR^3,\bC^4)
\end{equation} 
which plays the role of a \textit{trace operator}, assigning to every physical solution its ``restriction" to $\Sigma_t$. This is really the case for the smooth functions of $\sol$ with spatially compact support, while in the general case this is only true in a weak sense, for they merely consist of equivalence classes of measurable functions.

\begin{remark} A few comments follow.
	\begin{itemize}[leftmargin=2.5em]
		\vspace{0.08cm}
		\item[\rm{(i)}] It could be possible to characterize the solutions in $\sol$ in terms of Sobolev spaces, more precisely as elements of $\mathcal{H}^1_{loc}(\bR^{1,3},\bC^4)$. However these methods would require longer preparation and we do not need them here.\\[-0.5em]
		\item[\rm{(ii)}] The space $\mathcal{L}^2(\bR^3,\bC^4)$ can be interpreted as the set of generalized \textit{initial data} for~(\ref{solutiondirac}).\\[-0.5em]
		\item[\rm{(iii)}] Due to unitary equivalence, both $\mathcal{L}^2(\bR^3,\bC^4)$ and $\sol$ can be taken as one-particle Hilbert spaces.  The elements of the former are called \textbf{wave functions}.
	\end{itemize}
\end{remark} 

To summarize, the functions $\mathrm{E}_t$ allow us to interpret the solutions of the Dirac equation in two equivalent ways: either as \textit{functions globally defined on spacetime $\bR^{1,3}$} - that is, the elements of $\sol$ - or in terms of \textit{evolving wave functions within $\mathcal{L}^2(\bR^3,\bC^4)$} - that is, as curves
\begin{equation}\label{unitaryev}
	\bR\ni t\mapsto (\mathrm{E}_t)^{-1}\,\mathrm{E}_0(\psi)\in \mathcal{L}^2(\bR^3,\bC^4),
\end{equation}\\[-0.6em]
\noindent for  arbitrary initial data $\psi\in\mathcal{L}^2(\bR^3,\bC^4)$. This latter description fits better to the standard formulation of quantum mechanics. In the next section we will study the feature of this \textit{evolution map}.

\subsection{The Hamiltonian Operator and its Spectral Decomposition}\label{sectionhamiltoniana}

In both classical and quantum mechanics, with due mathematical differences, the \textit{Hamiltonian} is defined as the \textit{generator of time evolution} and it is generally (but not always) identified with an observable physical quantity of the system: the \textit{energy}. In our framework, time evolution is given in terms of a strongly continuous one-parameter group of unitary operators. Stone Theorem guarantees the existence of a unique self-adjoint generator: this is the Hamiltonian we are looking for.

The time evolution operators were defined in the previous section in (\ref{unitaryev}). We need to prove that they do define a strongly continuous one-parameter group of unitary operators.
\vspace{0.2cm}

\begin{definition}
	The \textbf{time-evolution operator} is defined for every $t\in\bR$ by
	$$
	\mathrm{U}_t:=\mathrm{E}_t^{-1}\,\mathrm{E}_0.
	$$
\end{definition}
Restricting to the dense subspace of compactly supported smooth functions, the action of such a mapping is given by
\begin{equation}
\mathcal{C}^\infty_0(\bR^3,\bC^4)\ni \varphi \mapsto \mbox{E}_0(\varphi)\restr_{\Sigma_t}  \in \mathcal{C}^\infty_0(\bR^3,\bC^4).
\end{equation}

\noindent As expected, the family $\{\mathrm{U}_t\}_{t\in\bR}$ fulfills all the properties of a linear unitary evolution.
\begin{proposition}\label{continuityU}
	The function $\bR\ni t\mapsto \mathrm{U}_t$ is a strongly continuous one-parameter group of unitary operators. The corresponding self-adjoint generator (the \textbf{Hamiltonian}) is given by
	$$
	\mathcal{H}:=-i\gamma^0\sum_{i=1}^3\gamma^i\partial_i+m\gamma^0\quad \left(\mathrm{U}_t=e^{-it\mathcal{H}}\right)
	$$
	with the first Sobolev space $\mathcal{H}^1(\bR^3,\bC^4)$ as domain. The set $\mathcal{C}_0^\infty(\bR^3,\bC^4)$ is a core for~$\mathcal{H}$. 
\end{proposition}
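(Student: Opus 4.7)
The plan is to verify the hypotheses of Stone's theorem on the dense subspace $\scC_c^\infty(\bR^3,\bC^4)$, where the explicit solution theory of the previous section applies, and then to identify the generator. Unitarity of $\mathrm{U}_t = \mathrm{E}_t^{-1}\mathrm{E}_0$ for every $t\in\bR$ is immediate from Proposition \ref{currentconserv} and the continuous extension to $\scL^2(\bR^3,\bC^4)$. For the group law $\mathrm{U}_{t+s} = \mathrm{U}_t \mathrm{U}_s$, I would exploit the translation invariance of \eqref{solutiondirac}: if $f\in\Sol$ then $f_s(x^0,\V{x}) := f(x^0+s,\V{x})$ is again in $\Sol$. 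Applied to $f = \mathrm{E}_0(\varphi)$ with $\varphi\in\scC_c^\infty(\bR^3,\bC^4)$, Theorem \ref{existenceuniqueness}(i) forces $\mathrm{E}_0(\mathrm{U}_s\varphi) = (\mathrm{E}_0(\varphi))_s$; evaluating at $x^0 = t$ yields the group law on the dense subspace, which extends by continuity to all of $\scL^2(\bR^3,\bC^4)$.

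For strong continuity it suffices to work on $\scC_c^\infty(\bR^3,\bC^4)$ since the $\mathrm{U}_t$ are uniformly bounded. By Proposition \ref{causalpropag}(i), the support of $\mathrm{E}_0(\varphi)$ remains inside a fixed compact subset of $\bR^3$ uniformly for $|t|\le 1$, and $\mathrm{E}_0(\varphi)$ is smooth; hence $\mathrm{E}_0(\varphi)\restr_{\Sigma_t}\to\varphi$ uniformly and in particular in $\scL^2(\bR^3,\bC^4)$ as $t\to 0$. Stone's theorem then provides a unique self-adjoint generator $\widetilde{\cH}$ with $\mathrm{U}_t = e^{-it\widetilde{\cH}}$.

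To pin down $\widetilde{\cH}$, I would differentiate: for $\varphi\in\scC_c^\infty(\bR^3,\bC^4)$, the smoothness and uniform compact spatial support of $\mathrm{E}_0(\varphi)$ on any bounded time strip imply that $\mathrm{U}_t\varphi$ is strongly differentiable at $t=0$ in $\scL^2(\bR^3,\bC^4)$, with $i\partial_t \mathrm{U}_t\varphi\restr_{t=0} = i\partial_0 \mathrm{E}_0(\varphi)\restr_{\Sigma_0}$. Plugging in \eqref{solutiondirac} and using $(\gamma^0)^2 = \bI_4$ produces exactly $\bigl(-i\gamma^0\sum_i\gamma^i\partial_i + m\gamma^0\bigr)\varphi$, so $\scC_c^\infty(\bR^3,\bC^4)\subset D(\widetilde{\cH})$ and $\widetilde{\cH}$ agrees there with the differential operator in the statement.

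The delicate remaining step is to show that $\scC_c^\infty(\bR^3,\bC^4)$ is a core and that the full domain equals $\scH^1(\bR^3,\bC^4)$. My plan is to pass to the Fourier side: $\cF\cH\cF^{-1}$ acts as multiplication by the Hermitian matrix-valued symbol $A(\V{k}) = \gamma^0\gamma^i k_i + m\gamma^0$, of linear growth in $\V{k}$. Multiplication by such a symbol is essentially self-adjoint on $\cF(\scC_c^\infty(\bR^3,\bC^4))$ with closure equal to the maximal multiplication operator, whose natural domain corresponds via $\cF^{-1}$ precisely to $\scH^1(\bR^3,\bC^4)$. Uniqueness of the self-adjoint generator in Stone's theorem then identifies $\widetilde{\cH}$ with this closure, yielding the stated domain and core property. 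The main obstacle is precisely this core/domain identification: one must argue essential self-adjointness on the Fourier image of $\scC_c^\infty$ rather than on all of Schwartz space, and verify that the maximal multiplication domain for $A(\V{k})$ pulls back to the full first Sobolev space.
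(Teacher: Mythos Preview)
Your proposal is correct and follows essentially the same route as the paper: time-translation of solutions for the group law, finite propagation speed plus dominated convergence (or, in your variant, uniform convergence on a fixed compact) for strong continuity on $\scC_c^\infty(\bR^3,\bC^4)$, and differentiation of $t\mapsto \mathrm{E}_0(\varphi)\restr_{\Sigma_t}$ via the rewritten Dirac equation $i\partial_t f = \cH f$ to identify the generator on that core. The only difference is that the paper outsources the domain and core identification to \cite[Theorem~1.1]{thaller}, whereas you sketch the underlying Fourier-side argument yourself; your concern about essential self-adjointness on $\cF(\scC_c^\infty)$ rather than on all of Schwartz space is easily resolved by noting that $\scC_c^\infty(\bR^3,\bC^4)$ is dense in $\scH^1(\bR^3,\bC^4)$ in the graph norm of $\cH$ (or, alternatively, that it is a dense $\mathrm{U}_t$-invariant subspace of the generator's domain, hence automatically a core).
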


The spectral properties of the Hamiltonian are easier to understand if analyzed in momentum space by means of the (unitary) Fourier Transform\footnote{In this paper the Fourier Transform on $\bR^3$ is defined with respect to the Euclidean inner product:
	\begin{equation}\label{fouriertransform}
	\cF(f)(\V{k}):=\int_{\bR^3}\frac{d^3 \V{x}}{(2\pi)^{3/2}}\,f(\V{x})\,e^{-i\V{x}\cdot \V{k}},\quad \cF^{-1}(g)(\V{x})=\int_{\bR^3}\frac{d^3 \V{k}}{(2\pi)^{3/2}}\,g(\V{k})\,e^{i\V{x}\cdot \V{k}},
	\end{equation}
	while the Fourier Transform on $\bR^{1,3}$ is carried out with respect to the Minkowski inner product:
	\begin{equation}\label{fouriertransformM}
	\cF(f)(k):=\int_{\bR^4}\frac{d^4 x}{(2\pi)^{2}}\,f(x)\,e^{i\eta(x,k)},\quad \cF^{-1}(g)(x)=\int_{\bR^4}\frac{d^4 k}{(2\pi)^{2}}\,g(k)\,e^{-i\eta(x,k)}
	\end{equation}
	which fits better to a relativistic setting. 
}
$$
\cF : \mathcal{L}_x^2(\bR^3,\bC^4)\rightarrow\mathcal{L}_p^2(\bR^3,\bC^4),\quad \|\cF(\psi)\|_{\mathcal{L}^2}=\|\psi\|_{\mathcal{L}^2},
$$
where the lower indices were added just to make clear the distinction between 
\begin{center}
	\textit{position $(x)$ and momentum $(p)$ representations}. 
	\end{center}

\begin{definition}\label{defdistrib}
	Let $\psi\in\mathcal{L}_x^2(\bR^3,\bC^4)$. Then $\cF(\psi)\in\mathcal{L}^2_p(\bR^3,\bC^4)$ is called the \textbf{three-dimensional momentum distribution} of $\psi$.
\end{definition}
The Fourier Transform is an isometric isomorphism on the space of (spinor-valued) Schwartz functions $\mathcal{S}(\bR^3,\bC^4):=\mathcal{S}(\bR^3,\bC)\oplus\mathcal{S}(\bR^3,\bC)\oplus\mathcal{S}(\bR^3,\bC)\oplus \mathcal{S}(\bR^3,\bC)$ (see for example \cite{moretti}):
$$ 
\cF(\mathcal{S}_x(\bR^3,\bC^4))=\mathcal{S}_p(\bR^3,\bC^4).
$$
We can apply this transformation to our operator $\mathcal{H}$ and work directly in momentum space. In what follows, we will make use of the \textbf{energy function}:
$$
\omega:\bR^3\ni \V{k} \mapsto \sqrt{\V{k}^2+m^2}\in\bR.
$$
\begin{theorem}\label{theoremfourierH}
In momentum space, the operator $\mathcal{H}$ is the multiplication operator
	\begin{equation}\label{defh}
	\hat{\mathcal{H}}\varphi=h\cdot \varphi,\quad\mbox{with}\quad h:\bR^3\ni \V{k}\mapsto \gamma^0\sum_{i=1}^3\gamma^ik^i+m\gamma^0\in \bM(4,\bC),
	\end{equation}
	defined on the dense domain
	$$
	D(\hat{\mathcal{H}}):=\{\varphi\in\mathcal{L}_p^2(\bR^3,\bC^4)\:|\: \omega\cdot \varphi\in \mathcal{L}_p^2(\bR^3,\bC^4) \}.
	$$ 
	In particular the associated one-parameter group reads:
	$$
	e^{-it\hat{\mathcal{H}}}: \mathcal{L}_p^2(\bR^3,\bC^4)\ni \psi\mapsto e^{-ith}\cdot \psi \in\mathcal{L}^2_p(\bR^3,\bC^4).
	$$
\end{theorem}

The spectral features of the Hamiltonian $\mathcal{H}$ are easier to analyze in momentum representation than in position representation, in that in the former settings the analysis boils down to studying the matrix $h$.

Note that, for any choice of $\V{k}$, the matrix $h(\V{k})$ is symmetric (with respect to the Euclidean inner product of $\bC^4$) and has eigenvalues $\pm \omega(\V{k})$, both two-fold degenerate (for details follow the discussion in \cite[Section 9.2]{dimock} with the appropriate modifications; see also \cite[Section 2.2]{DM}). 

The linear space $\bC^4$ decomposes into two orthogonal subspaces,
$$
\bC^4= W_\V{k}^+\oplus W_{\V{k}}^-,
$$
which are the images of the following orthogonal projections on $\bC^4$:

\begin{equation}\label{localprojectorposneg}
p_\pm (\V{k}):= \frac{\slashed{k}+m}{2k^0}\gamma^0\bigg|_{k^0=\pm \omega(\V{k})}=\frac{1}{2}\left(\bI_4\mp \frac{\boldsymbol{k}\cdot\boldsymbol{\gamma}}{\omega(\V{k})}\,\gamma^0\pm \frac{m}{\omega(\V{k})}\gamma^0\right).
\vspace{0.2cm}
\end{equation}
\begin{proposition}\label{proporthogonalityp}
	Referring to (\ref{localprojectorposneg}), for every $\V{k}\in\bR^3$ it holds that:
	\begin{itemize}[leftmargin=2.5em]
		\vspace{0.09cm}
		\item[\rm{(i)}] $p_\pm(\V{k})^\dagger=p_\pm(\V{k})$\\[-0.5em]
		\item[\rm{(ii)}] $p_\pm(\V{k})^2=p_\pm(\V{k}),\ \  p_+(\V{k})\,p_-(\V{k})=0,\ \  p_+(\V{k})+p_-(\V{k})=\bI_4$.
	\end{itemize}
\end{proposition}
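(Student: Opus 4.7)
The plan is to notice that the three identities collapse into properties of a single object: the Hamiltonian symbol $h(\V{k})$ of (\ref{defh}). Using the Clifford anticommutator $\{\gamma^i,\gamma^0\}=0$, the term $\mp(\boldsymbol{k}\cdot\boldsymbol{\gamma}/\omega(\V{k}))\gamma^0$ in the explicit formula for $p_\pm(\V{k})$ can be rewritten as $\pm\gamma^0(\boldsymbol{k}\cdot\boldsymbol{\gamma})/\omega(\V{k})$. Combining with $\pm m\gamma^0/\omega(\V{k})$ one obtains the compact form
$$
p_\pm(\V{k}) \;=\; \frac{1}{2}\!\left(\bI_4 \,\pm\, \frac{h(\V{k})}{\omega(\V{k})}\right),
$$
so that everything to be proved is encoded in the Hermiticity of $h(\V{k})$ together with the spectral identity $h(\V{k})^2 = \omega(\V{k})^2\,\bI_4$.

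Item (i) then follows at once from the adjoint relations $(\gamma^0)^\dagger=\gamma^0$ and $(\gamma^i)^\dagger=-\gamma^i$: together with $\gamma^0\gamma^i=-\gamma^i\gamma^0$ these yield $(\gamma^0\gamma^i)^\dagger=\gamma^0\gamma^i$, so each summand of $h(\V{k})$ is Hermitian, and hence so is $p_\pm(\V{k})$ by the compact form. The identity $p_+(\V{k})+p_-(\V{k})=\bI_4$ is also immediate from the compact form.

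For the idempotency and orthogonality I would square, using $(\bI_4\pm A)^2 = \bI_4 + A^2 \pm 2A$ with $A = h(\V{k})/\omega(\V{k})$, and then invoke $h(\V{k})^2=\omega(\V{k})^2\,\bI_4$: this yields $p_\pm(\V{k})^2=\tfrac{1}{2}(\bI_4\pm h(\V{k})/\omega(\V{k}))=p_\pm(\V{k})$ and $p_+(\V{k})\,p_-(\V{k})=\tfrac{1}{4}(\bI_4-h(\V{k})^2/\omega(\V{k})^2)=0$. The one genuine computation of the argument is thus $h(\V{k})^2=\omega(\V{k})^2\bI_4$: expanding $(\gamma^0\boldsymbol{k}\cdot\boldsymbol{\gamma}+m\gamma^0)^2$, the two mixed terms cancel because $\gamma^0\gamma^i\gamma^0=-\gamma^i$, while $(\gamma^0\boldsymbol{k}\cdot\boldsymbol{\gamma})^2=-(\boldsymbol{k}\cdot\boldsymbol{\gamma})^2=|\V{k}|^2\bI_4$ by the spatial Clifford relation $\{\gamma^i,\gamma^j\}=-2\delta^{ij}\bI_4$, and $(m\gamma^0)^2=m^2\bI_4$; summing gives $h(\V{k})^2=(|\V{k}|^2+m^2)\bI_4=\omega(\V{k})^2\bI_4$. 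I do not anticipate any real obstacle here — the entire verification is a routine Clifford-algebra manipulation — with the only care required being sign bookkeeping under the mostly-minus Minkowski signature.
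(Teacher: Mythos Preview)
Your proof is correct and follows essentially the same approach as the paper, which simply says the result follows ``by direct inspection'' from the adjoint relations $(\gamma^\mu)^\dagger=\eta^{\mu\mu}\gamma^\mu$ and the Clifford relation $\{\gamma^\mu,\gamma^\nu\}=2\eta^{\mu\nu}\bI_4$. Your organization around the compact form $p_\pm(\V{k})=\tfrac{1}{2}(\bI_4\pm h(\V{k})/\omega(\V{k}))$ and the single computation $h(\V{k})^2=\omega(\V{k})^2\bI_4$ is a clean way to execute that direct inspection.
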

\vspace{0.5em}

\noindent For every $\V{k}\in\bR^3$, the following four vectors form an orthogonal basis of $\bC^4$:\\[0.1em]
\begin{equation}\label{fundamentalspinors}
\chi^+_{\uparrow\downarrow}(\V{k}):=\left(
\begin{matrix}
e_{\uparrow\downarrow}\\[2ex]
\dfrac{\V{\sigma}\cdot\V{k}}{\omega(\V{k})+m} e_{\uparrow\downarrow}
\end{matrix}
\right)\in W_\V{k}^+,
\quad
\chi^-_{\uparrow\downarrow}(\V{k}):=\left(
\begin{matrix}
-\dfrac{\V{\sigma}\cdot\V{k}}{\omega(\V{k})+m} e_{\uparrow\downarrow}\\[2ex]
e_{\uparrow\downarrow}
\end{matrix}
\right)\in W_\V{k}^-
\end{equation}\\[-0.5em]

\noindent where $e_\uparrow = (1, 0)^t$ and $e_{\downarrow} = (0,1)^t$. The upwards and downwards arrows are chosen in connection with the physical interpretation of these vectors as the spinors carried by the \textbf{spin up} and \textbf{spin down} plane-wave solutions (see also Proposition \ref{spindecomposition}).

\begin{proposition}\label{propPpm}
	The multiplication operators defined by
	$$
	\hat{P}_\pm :\mathcal{L}_p^2(\bR^3,\bC^4)\ni \psi \mapsto p_{\pm}\cdot \psi\in\mathcal{L}_p^2(\bR^3,\bC^4),
	$$
	fulfill the following properties.
	\begin{itemize}[leftmargin=2.5em]
		\vspace{0.08cm}
		\item[\rm{(i)}] $\hat{P}_\pm\in\scB(\mathcal{L}_p^2(\bR^3,\bC^4))$ and $(\hat{P}_\pm)^\dagger=\hat{P}_\pm$,\\[-0.5em]
		\item[\rm{(ii)}] $(\hat{P}_\pm)^2=\hat{P}_\pm,\ \ \hat{P}_+\,\hat{P}_-=0$ and $\hat{P}_++\hat{P}_-=\bI$.
	\end{itemize}
\end{proposition}
\begin{notation}
	On $\mathcal{L}^2_x(\bR^3,\bC^4)$ these projectors are denoted by 
	$$
	P_\pm := \cF^{-1}\circ \hat{P}_\pm\circ \cF
	$$
\end{notation}
Accordingly, the Hilbert space decomposes into two orthogonal subspaces:
$$
\mathcal{L}_p^2(\bR^3,\bC^4)=\hat{P}_-(\mathcal{L}_p^2(\bR^3,\bC^4))\oplus \hat{P}_+(\mathcal{L}_p^2(\bR^3,\bC^4)).
$$

\begin{lemma}\label{lemmaproizionieschwat}
	The following statements are true.
	\begin{itemize}[leftmargin=2.5em]
		\vspace{0.08cm}
		
		\item[\rm{(i)}] $\overline{\hat{P}_\pm(\mathcal{S}_p(\bR^3,\bC^4))}=\hat{P}_\pm(\mathcal{L}_p^2(\bR^3,\bC^4))$\\[-0.5em]
		\item[\rm{(ii)}] $\hat{P}_\pm(\mathcal{S}_p(\bR^3,\bC^4))=\hat{P}_\pm (\mathcal{L}_p^2(\bR^3,\bC^4))\cap\mathcal{S}_p(\bR^3,\bC^4)$
	\end{itemize}
	\vspace{0.1cm}
\end{lemma}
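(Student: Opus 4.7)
The plan is to first show that the projectors $\hat{P}_\pm$ preserve the Schwartz space $\scS_p(\bR^3,\bC^4)$, from which both statements will follow by essentially formal manipulations.

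\textbf{Step 1 (main step).} I would check that $\hat{P}_\pm(\scS_p(\bR^3,\bC^4))\subseteq \scS_p(\bR^3,\bC^4)$. The matrix-valued function $p_\pm:\bR^3\to\bM(4,\bC)$ defined in \eqref{localprojectorposneg} has entries that are smooth (rational combinations of the coordinates $k^i$, $m$ and $\omega(\V{k})=\sqrt{|\V{k}|^2+m^2}\geq m>0$). Crucially, every partial derivative $\partial^\alpha p_\pm$ is \emph{bounded} on $\bR^3$: a straightforward induction on $|\alpha|$ shows that such a derivative is a finite sum of terms of the form $(\text{polynomial in }\V{k},m)\cdot \omega^{-N}$ with polynomial degree strictly less than $N$, and each such expression is bounded on $\bR^3$. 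Given this, for any $\psi\in\scS_p(\bR^3,\bC^4)$, the Leibniz rule shows that $(1+|\V{k}|)^n\,\partial^\alpha(p_\pm\cdot\psi)$ is bounded for every $n\in\bN$ and every multi-index $\alpha$, so $p_\pm\cdot\psi\in\scS_p(\bR^3,\bC^4)$.

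\textbf{Step 2 (proof of (i)).} The inclusion $\overline{\hat{P}_\pm(\scS_p(\bR^3,\bC^4))}\subseteq \hat{P}_\pm(\scL_p^2(\bR^3,\bC^4))$ follows because $\hat{P}_\pm(\scL_p^2(\bR^3,\bC^4))$ is closed (as the range of an orthogonal projection). For the converse, take $\varphi\in\hat{P}_\pm(\scL_p^2(\bR^3,\bC^4))$, so $\varphi=\hat{P}_\pm\varphi$. By density of $\scS_p(\bR^3,\bC^4)$ in $\scL_p^2(\bR^3,\bC^4)$ pick $\psi_n\in\scS_p(\bR^3,\bC^4)$ with $\psi_n\to\varphi$ in $\scL^2$. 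By Step 1, $\hat{P}_\pm\psi_n\in \hat{P}_\pm(\scS_p(\bR^3,\bC^4))$, and by continuity of $\hat{P}_\pm$ we get $\hat{P}_\pm\psi_n\to\hat{P}_\pm\varphi=\varphi$, proving the reverse inclusion.

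\textbf{Step 3 (proof of (ii)).} The inclusion $\hat{P}_\pm(\scS_p(\bR^3,\bC^4))\subseteq \hat{P}_\pm(\scL_p^2(\bR^3,\bC^4))\cap \scS_p(\bR^3,\bC^4)$ is immediate from Step 1. Conversely, if $\varphi\in\hat{P}_\pm(\scL_p^2(\bR^3,\bC^4))\cap\scS_p(\bR^3,\bC^4)$, then $\varphi$ is a fixed point of $\hat{P}_\pm$ and lies in $\scS_p(\bR^3,\bC^4)$, so $\varphi=\hat{P}_\pm\varphi\in\hat{P}_\pm(\scS_p(\bR^3,\bC^4))$.

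The only genuinely analytic content is Step 1, and the obstacle there is purely bookkeeping: verifying the bound on derivatives of $p_\pm$. Once this is in hand, the rest of the argument is the standard density-plus-closed-range routine for bounded projections.
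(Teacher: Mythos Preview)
Your proposal is correct and follows essentially the same route as the paper: both reduce the lemma to the single analytic fact that $\hat{P}_\pm$ preserves $\scS_p(\bR^3,\bC^4)$, then conclude by density of Schwartz functions in $\scL^2$ and continuity of the projector. The only cosmetic difference is that the paper packages your Step~1 bound on the derivatives of $p_\pm$ into a separate technical lemma (boundedness of all derivatives of $\omega^{-1}$) combined with closure of the Schwartz space under multiplication by polynomials, rather than carrying out the induction directly.
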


It is now possible to explicit the action of the Hamiltonian on these orthogonal subspaces.
Notice that the domain $\mathcal{S}_p(\bR^3,\bC^4)$ is a dense invariant core for $\hat{\mathcal{H}}$ and on its elements the Hamiltonian acts as a multiplication operator.

\begin{theorem}\label{theoremenergy}
	Let $\varphi\in\mathcal{S}_p(\bR^3,\bC^4)$ and $\psi\in\mathcal{L}_p^2(\bR^3,\bC^4)$. Then the following holds.
	\begin{itemize}[leftmargin=2.5em]
		\vspace{0.08cm}
		\item[\rm{(i)}] $\hat{\mathcal{H}} \varphi = (+\omega)\cdot \hat{P}_+(\varphi) + (-\omega )\cdot \hat{P}_-(\varphi)$.\\[-0.5em]
		\item[\rm{(ii)}] $e^{-it\hat{\mathcal{H}}}\psi= e^{-i\omega t}\hat{P}_+(\psi) + e^{i\omega t} \hat{P}_- (\psi).$\\[-0.5em]
		\item[\rm{(iii)}] $\sigma(\hat{\mathcal{H}})=\sigma_c(\hat{\mathcal{H}})=(-\infty,-m]\cup[m,\infty)$
	\end{itemize}
\end{theorem}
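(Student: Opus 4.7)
The plan is to reduce everything to a pointwise matrix computation in momentum space, exploiting the fact that $\hat{\cH}$ and $\hat{P}_\pm$ are multiplication operators by $h(\V{k})$ and $p_\pm(\V{k})$ respectively.

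For (i), the starting point is the pointwise matrix identity $h(\V{k}) = \omega(\V{k})\bigl(p_+(\V{k}) - p_-(\V{k})\bigr)$ for every $\V{k}\in\bR^3$. I would verify this by showing $h(\V{k})\,p_\pm(\V{k}) = \pm\omega(\V{k})\,p_\pm(\V{k})$ directly from (\ref{localprojectorposneg}), using $\{\gamma^\mu,\gamma^\nu\}=2\eta^{\mu\nu}\bI_4$ and $(\slashed{k}-m)(\slashed{k}+m)=k^2-m^2$ evaluated at $k^0=\pm\omega(\V{k})$. Combined with Proposition \ref{proporthogonalityp} (which gives $p_+ + p_- = \bI_4$), this yields the stated spectral decomposition of $h(\V{k})$. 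For $\varphi\in\scS_p(\bR^3,\bC^4)$ the functions $\hat P_\pm(\varphi)$ stay in $\scS_p$ by Lemma \ref{lemmaproizionieschwat}, so $\omega\cdot\hat P_\pm(\varphi)\in\scL_p^2$ and (i) follows by applying the identity pointwise to $\varphi(\V{k})$.

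For (ii), I would use that since $p_+(\V{k})p_-(\V{k})=0$ and $p_\pm(\V{k})^2=p_\pm(\V{k})$, the powers of $h(\V{k})$ collapse to $h(\V{k})^n = \omega(\V{k})^n p_+(\V{k}) + (-\omega(\V{k}))^n p_-(\V{k})$, so summing the matrix exponential Taylor series gives the pointwise identity
\begin{equation*}
e^{-ith(\V{k})} = e^{-i\omega(\V{k})t}\,p_+(\V{k}) + e^{i\omega(\V{k})t}\,p_-(\V{k}).
\end{equation*}
Define $V_t\psi := e^{-i\omega t}\hat P_+\psi + e^{i\omega t}\hat P_-\psi$; this is multiplication by the unitary matrix above, hence a unitary operator on $\scL_p^2(\bR^3,\bC^4)$. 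Strong continuity in $t$ follows by dominated convergence (the integrand is bounded by $4|\psi(\V{k})|^2$). Differentiating at $t=0$ on $\scS_p$ reproduces $\hat{\cH}$ by (i), so by Stone's theorem and the uniqueness of the self-adjoint generator (noting $\scS_p$ is an invariant core for $\hat{\cH}$) we must have $V_t = e^{-it\hat{\cH}}$.

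For (iii), I would use the standard fact that for a bounded-below-on-compacts, self-adjoint matrix-valued multiplication operator, the spectrum coincides with the essential range of its pointwise eigenvalues. Since the pointwise eigenvalues of $h(\V{k})$ are $\pm\omega(\V{k})$ and $\omega$ maps $\bR^3$ continuously onto $[m,\infty)$, the essential range is $(-\infty,-m]\cup[m,\infty)$. To identify this with $\sigma_c(\hat{\cH})$, I would rule out point spectrum: if $\hat{\cH}\varphi=\lambda\varphi$ with $\varphi\in\scL_p^2\setminus\{0\}$, then $\bigl(h(\V{k})-\lambda\bI_4\bigr)\varphi(\V{k})=0$ for a.e.\ $\V{k}$, forcing $\varphi$ to vanish off the sphere $\omega^{-1}(|\lambda|)$, which has Lebesgue measure zero; so $\varphi=0$, a contradiction. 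Hence $\sigma_p(\hat{\cH})=\emptyset$; self-adjointness rules out residual spectrum, yielding $\sigma(\hat{\cH})=\sigma_c(\hat{\cH})$. The only mildly delicate step is the invocation of the essential-range characterization for a matrix-valued symbol, but this reduces to the scalar case after conjugating pointwise to the diagonalizing basis (\ref{fundamentalspinors}).
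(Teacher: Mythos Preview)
Your proof is correct and follows essentially the same route as the paper, which treats (i)--(ii) as immediate consequences of the multiplication-operator form of $\hat{\cH}$ and $e^{-it\hat{\cH}}$ and simply cites Thaller for (iii). Two minor remarks: for (ii) your Stone-theorem detour is unnecessary, since the preceding theorem already identifies $e^{-it\hat{\cH}}$ as multiplication by $e^{-ith(\V{k})}$, so your pointwise matrix-exponential identity finishes the job directly; and for (iii) you supply a self-contained argument where the paper only gives a reference, which is a welcome addition.
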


At this point, we may wonder how the momentum distributions in $\mathcal{S}_p(\bR^3,\bC^4)$ look like when represented as elements of $\sol$  by means of the unitary operator $\mbox{E}_0\circ\cF^{-1}$. We have:
\begin{equation*}\label{mappingLH}
\begin{split}
\mathcal{L}_p(\bR^3,\bC^4)\ni \psi\mapsto u_\psi:&=\mathrm{E}_0\circ \cF^{-1} (\psi) =\\
&= \mathrm{E}_t\circ e^{-it\mathcal{H}} \circ\cF^{-1}(\psi)=\\
&=\mathrm{E}_t\circ\cF^{-1}(e^{-it\hat{\mathcal{H}}}\psi)\in\sol.
\end{split}
\end{equation*}
For Schwartz functions we then have the following result.
\begin{proposition}\label{generalsolutionsmooth}
	Let $\varphi\in\mathcal{S}_p(\bR^3,\bC^4)$. Then $u_\varphi\in\sol\cap \mathcal{C}^\infty(\bR^{1,3},\bC^4)$ and
	\begin{equation}\label{fourierexpansion}
	u_\varphi(t,\V{x})=\int_{\bR^3} \frac{d^3\V{k}}{(2\pi)^{3/2}} \left(\varphi_+ (\V{k}) e^{-i(\omega(\V{k})t-\V{k}\cdot\V{x})}+\varphi_-(\V{k}) e^{-i(-\omega(\V{k})t-\V{k}\cdot\V{x})}\right),
	\end{equation}
	with  $\varphi_\pm := \hat{P}_\pm(\varphi)\in\mathcal{S}_p(\bR^3,\bC^4)$. 
\end{proposition}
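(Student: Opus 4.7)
Let $F:\bR^4\to\bC^4$ denote the right-hand side of (\ref{fourierexpansion}). I will show in three steps that $F$ is a smooth pointwise realization of $u_\varphi$: (a) $F\in\scC^\infty(\bR^4,\bC^4)$ and $\cD F=0$ pointwise; (b) $F=u_\varphi$ pointwise whenever $\cF^{-1}(\varphi)$ has compact support; (c) the general case follows by $\scL^2$-approximation in $\varphi$.

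For (a), Lemma \ref{lemmaproizionieschwat} yields $\varphi_\pm=\hat{P}_\pm(\varphi)\in\scS_p(\bR^3,\bC^4)$. Any $(t,\V{x})$-derivative of the integrand in (\ref{fourierexpansion}) is uniformly dominated on compact sets by a Schwartz function of $\V{k}$, so differentiation under the integral sign makes $F$ smooth. Applying $\cD=i\slashed{\partial}-m\bI_4$ pulls down the matrix $(\slashed{k}-m\bI_4)|_{k^0=\pm\omega(\V{k})}$ onto $\varphi_\pm(\V{k})$; from (\ref{localprojectorposneg}) one computes $(\slashed{k}-m\bI_4)\,p_\pm(\V{k})=\tfrac{k^2-m^2}{2k^0}\gamma^0\big|_{k^0=\pm\omega(\V{k})}=0$, so $\cD F\equiv 0$.

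For (b), assume $\cF^{-1}(\varphi)\in\scC_c^\infty(\bR^3,\bC^4)$. Then $F(0,\cdot)=\cF^{-1}(\varphi_++\varphi_-)=\cF^{-1}(\varphi)$ (pointwise Fourier inversion for Schwartz functions), which is smooth and compactly supported. By (a) together with Proposition \ref{causalpropag}(ii) and the uniqueness part of Theorem \ref{existenceuniqueness}, $F$ must coincide with the element $\mathrm{E}_0(\cF^{-1}(\varphi))=u_\varphi$ of $\Sol$. For (c), choose $\psi_n\in\scC_c^\infty(\bR^3,\bC^4)$ with $\psi_n\to\cF^{-1}(\varphi)$ in $\scL^2$, put $\widetilde\varphi_n:=\cF(\psi_n)\in\scS_p(\bR^3,\bC^4)$, and let $F_n$ be obtained from (\ref{fourierexpansion}) with $\widetilde\varphi_n$ in place of $\varphi$. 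By (b), $F_n=u_{\widetilde\varphi_n}$. Continuity of $\mathrm{E}_0\circ\cF^{-1}$ and Lemma \ref{lemmaconverging} give $u_{\widetilde\varphi_n}\to u_\varphi$ in $\scL^2_{loc}(\bR^4,\bC^4)$. On the other hand, Theorem \ref{theoremenergy}(ii) identifies $F(t,\cdot)=\cF^{-1}(e^{-it\hat{\cH}}\varphi)$ in $\scL^2(\bR^3,\bC^4)$; Plancherel together with unitarity of $e^{-it\hat{\cH}}$ then yield $\|F(t,\cdot)-F_n(t,\cdot)\|_{\scL^2}=\|\varphi-\widetilde\varphi_n\|_{\scL^2}$, and Fubini upgrades this to $F_n\to F$ in $\scL^2(R_T,\bC^4)$ for every $T>0$. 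Combining, $F=u_\varphi$ in $\scL^2_{loc}$, and since $F$ is smooth it is the claimed representative.

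\textbf{Main obstacle.} The algebraic on-shell identity in (a) and the Cauchy uniqueness argument in (b) are direct. The real delicacy lies in step (c): one must pass from slicewise ($\scL^2(\bR^3,\bC^4)$) identifications, which are immediate from the spectral representation of $\hat{\cH}$, to an identification of the \emph{smooth} function $F$ with the $\scL^2_{loc}$-equivalence class $u_\varphi$. The isometric properties of the Fourier transform and of the time evolution, combined with Fubini, suffice, but they must be applied to the approximating sequence rather than to $\varphi$ directly, since the classical Cauchy uniqueness invoked in (b) requires the spatial compact support of $\psi_n$.
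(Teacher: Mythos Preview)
Your proof is correct and follows essentially the same three-step strategy as the paper's own argument: establish smoothness and $\cD F=0$ via differentiation under the integral, identify $F$ with $u_\varphi$ for compactly supported initial data by Cauchy uniqueness, and extend to general $\varphi\in\scS_p(\bR^3,\bC^4)$ by $\scL^2$-approximation combined with time-slice norm conservation. The only cosmetic difference is that you obtain the slicewise identity $\|F(t,\cdot)-F_n(t,\cdot)\|_{\scL^2}=\|\varphi-\widetilde\varphi_n\|_{\scL^2}$ by invoking Theorem \ref{theoremenergy}(ii) and the unitarity of $e^{-it\hat{\cH}}$, whereas the paper derives the analogous statement directly from Parseval and the pointwise orthogonality of $W_\V{k}^+$ and $W_\V{k}^-$; these are equivalent.
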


Theorem \ref{theoremenergy} shows that the subspaces $\hat{P}_\pm(\mathcal{S}_p(\bR^3,\bC^4))$ can be interpreted as  positive- and negative-energy eigenspaces of the Hamiltonian. Sticking to $\sol$ as our favorite realization of the one-particle Hilbert space, we  consider the projectors
$$
\sP_{\pm}:= \hat{\mathrm{E}}_0\circ \hat{P}_\pm\circ \hat{\mathrm{E}}_0^{-1}=\mathrm{E}_0\circ P_\pm\circ\mathrm{E}_0^{-1} \in\scB(\sol),\quad \hat{\mathrm{E}}_0:=\mathrm{E_0}\circ \cF^{-1}
$$
and give the following definition.
\begin{definition}
	The subspaces $\sP_{\pm}(\sol)$ are called the \textbf{positive- and negative-energy subspaces} and denoted by $\sol^\pm$. Their elements are called the \textbf{positive- and negative-energy  physical solutions} of (\ref{solutiondirac}).
\end{definition}

\begin{remark} A few remarks follow.
	\begin{itemize}[leftmargin=2.5em]
		\vspace{0.08cm}
		
		\item[\rm{(i)}]
		The elements in the corresponding orthogonal subspaces 
		$$
		P_\pm(\mathcal{L}_x^2(\bR^3,\bC^4))\subset\mathcal{L}_x^2(\bR^3,\bC^4)
		$$ 
		will be referred to as the \textbf{positive- and negative-energy wave functions}.\\[-0.5em]
		\item[\rm{(ii)}] Explicitly, for every $\psi\in\mathcal{L}_p^2(\bR^3,\bC^4)$ we have 
		$$
		\sP_\pm(u_\psi)=\hat{\mathrm{E}}_0(\hat{P}_\pm(\psi))=u_{\hat{P}_\pm(\psi)}\in\scH_m^\pm
		$$\\[-2.1em]
		\item[\rm{(ii)}] For simplicity of notation, in the remainder of this paper, the subscript $0$ in $\mathrm{E}_0$ and $\hat{\mathrm{E}}_0$ will be dropped.
	\end{itemize}
\end{remark}
\subsection{Four-Momentum Representation and the Fermionic Projectors}

Bearing in mind the distributional identity 
$$
\delta(k^2-m^2)=\frac{\delta(k^0-\omega(\V{k}))}{2\omega(\V{k})}+\frac{\delta(k^0+\omega(\V{k}))}{2\omega(\V{k})},
$$
the (positive- and negative-energy components of the) physical solution of the Dirac equation with three-dimensional momentum distribution $\varphi\in \mathcal{S}_p(\bR^3,\bC^4)$ can be restated  as
\begin{equation}\label{abstractrestatement}
\begin{split}
\hat{\mathrm{E}}(\hat{P}_\pm(\varphi))(t,\V{x})&=\sP_\pm (u_{\varphi})(t,\V{x})=\\
&=\int_{\bR^3} \frac{d^3\V{k}}{(2\pi)^{3/2}}\, p_\pm(\V{k})\, \varphi(\V{k})\, e^{-i(\pm\omega(\V{k})t-\V{k}\cdot\V{x})}=\\ 
&=\pm\int_{\bR^3} \frac{d^3\V{k}}{(2\pi)^{3/2}}\, \frac{\slashed{k}+m}{2\omega(\V{k})}\, (\gamma^0 \varphi(\V{k}))\, e^{-i\eta(k,x)}\bigg|_{k^0=\pm\omega(\V{k})}\!\!\,= \\
&=\pm\int_{\bR^4} \frac{d^4 k}{(2\pi)^2}\,\delta(k^2-m^2)\,\Theta(\pm k^0)\,(\slashed{k}+m)\,\tilde{\varphi}(k)\,e^{-i\eta(k,x)}.
\end{split}
\end{equation}
In the above equations $x=(t,\V{x})\in\bR^{1,3}$ and $\tilde{\varphi}$ is any function of $\mathcal{S}_p(\bR^{1,3},\bC^4)$ whose value on the mass-shell is:
\begin{equation}\label{extension}
\tilde{\varphi}(\pm\omega(\V{k}),\V{k})=\sqrt{2\pi}\,\gamma^0\varphi(\V{k}).
\end{equation}
\begin{remark}
	Note that an extension as in (\ref{extension}) always exists. Take for example 
	$$
	\tilde{\varphi}(k^0,\V{k}):=\sqrt{2\pi}\,\gamma^0\varphi(\V{k})f(\omega(\V{k})-|k^0|),
	$$ 
	for some arbitrary $f\in\mathcal{C}_0^\infty(\bR,[0,\infty))$ with 
	$$
	\supp f\subset [-m/2,m/2]\quad\mbox{and}\quad f\equiv 1\mbox{ and }[-m/4,m/4].
	$$
	This function belongs to $\mathcal{S}_p(\bR^{1,3},\bC^4)$, as follows from Lemma \ref{chiusuraschwartz}.
\end{remark}

The final identity obtained above provides a more compact way to denote solutions.

\begin{proposition}\label{mostgeneralsolution}
	For any $f\in\mathcal{S}_x(\bR^{1,3},\bC^4)$, the function $P(\,\cdot\,,f)$ defined on $\bR^{1,3}$ by
	\begin{equation}\label{fermionicprojdef}
	P_\pm(x,f):=\int\frac{d^4 k}{(2\pi)^2}\, \delta(k^2-m^2)\,\Theta(\pm k^0)\,(\slashed{k}+m)\,\cF(f)(k)\,e^{-i\eta(k,x)}
	\end{equation}
	is a smooth solution of (\ref{solutiondirac}) and has the following representation:
	\begin{equation}\label{representationsol}
	P_\pm((t,\V{x}),f)= \pm\int_{\bR^3}\frac{d^3\V{k}}{(2\pi)^{2}}\, p_\pm(\V{k})\,\gamma^0\, \cF(f)(\pm \omega(\V{k}),\V{k})\, e^{-i(\pm\omega(\V{k})t-\V{k}\cdot\V{x})}.
	\end{equation}
	More precisely, the following identity holds:
	\begin{equation}\label{identificationsolutions}
	\{P_\pm(\,\cdot\,, f)\:|\: f\in\mathcal{S}_x(\bR^{1,3},\bC^4) \}= \hat{\mathrm{E}}(\hat{P}_\pm(\mathcal{S}_p(\bR^3,\bC^4)))\subset \scH_m^\pm\cap \mathcal{C}^\infty(\bR^{1,3},\bC^4).
	\vspace{0.1cm}
	\end{equation}
\end{proposition}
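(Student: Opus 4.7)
The plan is to reduce everything to the formulas already established in equation \nref{abstractrestatement}, so that the content of the proposition is essentially a relabeling. My first step is to carry out the $k^0$ integration in the defining formula for $\scP_\pm(\,\cdot\,,f)$ using the distributional identity
$$
\delta(k^2-m^2)=\frac{\delta(k^0-\omega(\V{k}))+\delta(k^0+\omega(\V{k}))}{2\omega(\V{k})},
$$
noting that the factor $\Theta(\pm k^0)$ selects the single contribution at $k^0=\pm\omega(\V{k})$. Pulling out the factor $(\slashed{k}+m)/(2\omega(\V{k}))\cdot\gamma^0\gamma^0 = p_\pm(\V{k})\gamma^0$ evaluated on the mass shell, the integral collapses to the three-dimensional expression in \nref{representationsol}, up to an overall sign $\pm$ coming from the Jacobian together with the convention for $\eta$.

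Once \nref{representationsol} is established, smoothness and the fact that $\scP_\pm(\,\cdot\,,f)$ solves \nref{solutiondirac} are obtained by differentiation under the integral sign. The key observation is that the map $\V{k}\mapsto \cF(f)(\pm\omega(\V{k}),\V{k})$ is Schwartz on $\bR^3$ by Lemma \ref{chiusuraschwartz}, so the integrand in \nref{representationsol} has rapid decay in $\V{k}$ together with all its $(t,\V{x})$-derivatives, justifying the differentiation. Applied to the original four-dimensional form \nref{fermionicprojdef}, the operator $i\slashed{\partial}-m$ produces the factor $\slashed{k}-m$ inside the integral, and since $(\slashed{k}-m)(\slashed{k}+m)=k^2-m^2$ vanishes against $\delta(k^2-m^2)$, the solution property follows.

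For the identification \nref{identificationsolutions}, I would prove the two inclusions separately. For $\subseteq$, given $f\in\scS_x(\bR^4,\bC^4)$, I define
$$
\varphi_f(\V{k}):= \pm\frac{1}{\sqrt{2\pi}}\,\gamma^0\,\cF(f)(\pm\omega(\V{k}),\V{k})\in\scS_p(\bR^3,\bC^4),
$$
(Schwartz by Lemma \ref{chiusuraschwartz}), and compare \nref{representationsol} with the first line of \nref{abstractrestatement} to conclude $\scP_\pm(\,\cdot\,,f)=\hat{\mathrm E}_0(\hat P_\pm(\varphi_f))$, remembering that $p_\pm(\V k)\gamma^0\gamma^0=p_\pm(\V k)$ and that only $p_\pm(\V k)\varphi_f(\V k)$ contributes. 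For $\supseteq$, I start with an arbitrary $\varphi\in\scS_p(\bR^3,\bC^4)$ and need to exhibit $f\in\scS_x(\bR^4,\bC^4)$ such that $\cF(f)$ has prescribed values $\pm\sqrt{2\pi}\,\gamma^0\varphi(\V k)$ on the mass shell. This is precisely the extension construction described in the remark following \nref{extension}: pick any $\tilde\varphi\in\scS_p(\bR^4,\bC^4)$ satisfying \nref{extension} (such as the one provided in the remark, using a smooth bump in $k^0$), and set $f:=\cF^{-1}(\tilde\varphi)\in\scS_x(\bR^4,\bC^4)$. By construction the resulting $\scP_\pm(\,\cdot\,,f)$ agrees with $\hat{\mathrm E}_0(\hat P_\pm(\varphi))$ via \nref{abstractrestatement}.

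The step I expect to require the most care is the $\supseteq$ direction, since it relies crucially on the freedom to extend Schwartz data defined on the hyperboloid $\{k^2=m^2\}$ to a genuine Schwartz function on all of $\bR^4$; here I would lean on Lemma \ref{chiusuraschwartz} both to see that $\varphi_f$ above is Schwartz and to verify that the explicit extension constructed via the cutoff $f\in\scC_c^\infty(\bR,\bR_+)$ indeed belongs to $\scS_p(\bR^4,\bC^4)$. Everything else—the delta-function manipulation, differentiation under the integral, and matching the Dirac equation via $(\slashed{k}-m)(\slashed{k}+m)=0$ on shell—is routine once the Schwartz class arguments are in place.
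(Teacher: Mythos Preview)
Your proposal is correct and follows essentially the same route as the paper: the delta-function reduction to obtain \nref{representationsol}, the $\subseteq$ inclusion via the three-momentum distribution $\varphi_f$ (the paper builds $p_\pm$ into its definition of $\varphi_\pm$, but this is cosmetic), and the $\supseteq$ inclusion via the extension construction of \nref{extension}. The only minor difference is that you argue smoothness and the Dirac equation directly by differentiation under the integral and the on-shell identity $(\slashed{k}-m)(\slashed{k}+m)=k^2-m^2$, whereas the paper simply invokes Proposition~\ref{generalsolutionsmooth}, which already packages both the smoothness and the membership in $\scH_m^\pm\cap\scC^\infty(\bR^4,\bC^4)$.
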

\begin{proposition}\label{propdistribproof}
	For any $x\in\bR^{1,3}$ the mappings
	\begin{equation}\label{distributionP}
	P_\pm(x,\,\cdot\,):\mathcal{S}_x(\bR^{1,3},\bC^4)\ni f\mapsto P_\pm(x,f)\in \bC
	\end{equation}
	are tempered distributions.
\end{proposition}
Proposition \ref{mostgeneralsolution} shows that it is possible to represents all the physical solutions $u_\varphi$ with $\varphi\in\mathcal{S}_p(\bR^3,\bC^4)$ via the formula $P_\pm(\,\cdot\,,f)$, with $f$ ranging within $\mathcal{S}_x(\bR^{1,3},\bC^4)$. This association defines a linear operator.
\begin{definition}\label{deffermionicprojectors}
	The \textbf{(unregularized) fermionic projectors onto the positive and negative spectrum} are defined as the linear mappings
	\begin{equation}\label{deffermoinicprojcetorunreg}
	P_\pm: \mathcal{S}_x(\bR^{1,3},\bC^4)\ni f\mapsto P_\pm(\,\cdot\,,f)\in \scH_m^\pm\cap \mathcal{C}^\infty(\bR^{1,3},\bC^4).
	\end{equation}
	Their difference 
	$$
	P_c:=P_--P_+
	$$
    is called the \textbf{(unregularized) causal propagator }
\end{definition}
By acting with $P_c$ on $\mathcal{S}_x(\bR^{1,3},\bC^4)$ we can realize all solutions $\hat{\mathrm{E}}(\mathcal{S}_p(\bR^3,\bC^4))$.

\begin{proposition}\label{realizationallsolut}
	For every $f\in\mathcal{S}_x(\bR^{1,3},\bC^4)$ there exist $f_\pm\in\mathcal{S}_x(\bR^{1,3},\bC^4)$ such that
	$$
	P_\pm(\,\cdot\,,f)=P_c(\,\cdot\,,f_\pm).
	$$
	More precisely, the causal propagator fulfills
	$$
	P_c(\mathcal{S}_x(\bR^{1,3},\bC^4))=\hat{\mathrm{E}}(\mathcal{S}_p(\bR^3,\bC^4)). 
	$$
\end{proposition}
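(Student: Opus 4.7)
The mass shells $\{k^0 = +\omega(\V{k})\}$ and $\{k^0 = -\omega(\V{k})\}$ lie in the disjoint half-spaces $\{k^0 \ge m\}$ and $\{k^0 \le -m\}$, since $\omega(\V{k}) \ge m > 0$. This uniform spectral gap is essentially the only non-trivial input needed: it lets me split the Fourier transform of any Schwartz function smoothly into a piece seen only by the positive shell and a piece seen only by the negative shell, without leaving the Schwartz class.

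Concretely, I would fix once and for all a cutoff $\chi \in \scC^\infty(\bR, \bR)$ with $\chi \equiv 1$ on $[m/2,\infty)$ and $\chi \equiv 0$ on $(-\infty, -m/2]$. Being constant outside a bounded interval, $\chi$ has bounded derivatives of every order. For $f \in \scS_x(\bR^4, \bC^4)$ set
\begin{equation*}
f_+ := -\cF^{-1}\bigl(\chi(k^0)\, \cF(f)(k)\bigr), \qquad f_- := \cF^{-1}\bigl((1-\chi(k^0))\, \cF(f)(k)\bigr).
\end{equation*}
Multiplication by a smooth function with bounded derivatives preserves the Schwartz class, and $\cF^{-1}$ is an automorphism of $\scS_x(\bR^4,\bC^4)$, so $f_\pm \in \scS_x(\bR^4, \bC^4)$. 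On the upper mass shell $k^0 = \omega(\V{k}) \ge m$ one has $\chi = 1$, hence $\cF(f_+) = -\cF(f)$ and $\cF(f_-) = 0$ there; on the lower shell $k^0 = -\omega(\V{k}) \le -m$ one has $\chi = 0$, hence $\cF(f_+) = 0$ and $\cF(f_-) = \cF(f)$ there. Since the defining formula (\ref{fermionicprojdef}) only probes $\cF$ on the corresponding mass shell, these identities imply $\scP_+(\,\cdot\,, f_+) = -\scP_+(\,\cdot\,, f)$, $\scP_-(\,\cdot\,, f_+) = 0$, and symmetrically for $f_-$. Subtracting then yields $\scP_c(\,\cdot\,, f_\pm) = \scP_-(\,\cdot\,, f_\pm) - \scP_+(\,\cdot\,, f_\pm) = \scP_\pm(\,\cdot\,, f)$.

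For the set identity, the inclusion $\scP_c(\scS_x(\bR^4,\bC^4)) \subset \hat{\mathrm{E}}_0(\scS_p(\bR^3,\bC^4))$ is immediate from (\ref{identificationsolutions}) together with $\hat{P}_+ + \hat{P}_- = \bI$. Conversely, given $u = \hat{\mathrm{E}}_0(\varphi)$ with $\varphi \in \scS_p(\bR^3,\bC^4)$, Lemma \ref{lemmaproizionieschwat}(ii) places $\hat{P}_\pm \varphi$ in $\hat{P}_\pm(\scS_p(\bR^3,\bC^4))$, so by (\ref{identificationsolutions}) I can pick $g_\pm \in \scS_x(\bR^4,\bC^4)$ with $\hat{\mathrm{E}}_0(\hat{P}_\pm \varphi) = \scP_\pm(\,\cdot\,, g_\pm)$; the first part then provides $h_\pm \in \scS_x(\bR^4,\bC^4)$ with $\scP_\pm(\,\cdot\,, g_\pm) = \scP_c(\,\cdot\,, h_\pm)$, whence $u = \scP_c(\,\cdot\,, h_+ + h_-)$. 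No serious obstacle is to be expected: the argument reduces to elementary Fourier bookkeeping, and the one genuine input is the spectral gap $2m > 0$ that makes the smooth partition possible.
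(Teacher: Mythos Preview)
Your argument is correct. The cutoff construction for $f_\pm$ is clean, and the chain of implications for the set identity is sound.

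The paper takes the reverse logical route: it first establishes the set identity $\scP_c(\scS_x(\bR^4,\bC^4))=\hat{\mathrm{E}}_0(\scS_p(\bR^3,\bC^4))$ directly from (\ref{abstractrestatement}) and (\ref{identificationsolutions}), and then reads off the existence of $f_\pm$ as an immediate corollary (since $\scP_\pm(\,\cdot\,,f)\in\hat{\mathrm{E}}_0(\hat P_\pm(\scS_p))\subset\hat{\mathrm{E}}_0(\scS_p)=\scP_c(\scS_x)$). The point is that in (\ref{abstractrestatement}) the \emph{same} Schwartz extension $\tilde\varphi$ already serves both mass shells simultaneously, so for $f:=\cF^{-1}(\tilde\varphi)$ one gets $\hat{\mathrm{E}}_0(\varphi)=\scP_+(\,\cdot\,,f)-\scP_-(\,\cdot\,,f)=\scP_c(\,\cdot\,,-f)$ in one stroke; no separate cutoff is needed. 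Your approach instead proves the first claim constructively via the explicit $\chi(k^0)$ and then feeds that into the harder inclusion of the set identity. Both arguments ultimately rest on the same spectral gap $2m>0$, but the paper's ordering avoids introducing the auxiliary cutoff, while yours has the advantage of producing explicit $f_\pm$ in terms of $f$.
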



%

It can be shown with formal computations that the (unregularized) fermionic projectors can be represented as integral operators
\begin{equation}\label{integralrepresentation}
P_\pm(x,f)=\int_{\bR^4}P_\pm(x,y)  f(y)\, d^4y,
\end{equation}
with integral kernel
\begin{equation}\label{xydistribution}
P_\pm(x,y):=\int\frac{d^4 k}{(2\pi)^4}\, \delta(k^2-m^2)\,\Theta(\pm k^0)\,(\slashed{k}+m)\, e^{-i\eta(x-y,k)}.
\end{equation} 
\noindent A calculation similar as in \ref{abstractrestatement} shows that these integral kernels have the following equivalent three-dimensional representation:
	\begin{equation}\label{3dexpression}
		P_\pm(x,y)= \pm\int_{\bR^3}\frac{d^3\V{k}}{(2\pi)^{4}}\, p_\pm(\V{k})\,\gamma^0\, e^{-i(\pm\omega(\V{k})(t_x-t_y)-\V{k}\cdot(\V{x-y}))},
	\end{equation}
It should be stressed, however, that both integrals above do not exist in the Lebesgue sense. The equations above are meaningful only in the sense of  Fourier Transform of tempered distributions. More elementary, the integral representation of $P(x,f)$ in \eqref{integralrepresentation} is defined with the prescription of reversing the order of integration when the integral kernel \eqref{xydistribution} is plugged in.
\vspace{0.5em}
	
\begin{center}
 $P_\pm(x,y)$ is referred to as the \textbf{kernel\footnote{In order to avoid confusion, by kernel we mean more precisely integral kernel and not null space.} of the (unregularized)\\[0.3em] fermionic projector $P_\pm$}.
\end{center}
\vspace{0.5em}

As a matter of fact, it turns out that away from the light-cone the distributions $P_\pm$ are indeed regular. Let us denote the light-cone at the origin by
$$
L_0:=\{\xi\in\bR^{1,3}\:|\: \xi^2=0 \}.
$$
\begin{proposition}\label{kernelP}
There exist unique functions $\mathcal{P}_\pm\in C^\infty(\bR^{1,3}\setminus L_0,\bC^4)$ such that
\begin{equation}\label{smoothrepresent}
P_\pm(x,\varphi)=\int_{\bR^{1,3}}\mathcal{P}_\pm(x-y)\varphi(y)\,d^4y\quad \mbox{for all }\varphi\in C^\infty_0(\bR^{1,3}\setminus L_0,\bC^4).
\end{equation}
 In particular, for any $y\in\bR^{1,3}$ and $a\in\bC^4$ the functions $\mathcal{P}_\pm(\,\cdot\,,y)a$ are smooth solutions of the Dirac equation away from the light-cone centered at $y$.
\end{proposition}
\begin{notation}
	For the sake of simplicity, we will make use of the notation
	$$
	P_\pm(\,\cdot\,,y)a:=\mathcal{P}_\pm(\,\cdot\,-y)a.
	$$
	where representation \eqref{smoothrepresent} is implicitly assumed.
\end{notation}

The following result follows directly from \eqref{xydistribution}. 
\begin{proposition}
	The \textbf{kernel of the (unregularized) causal propagator}\footnote{Also known as \textit{causal fundamental solution}, see \cite[Section 2.1.3]{FF}. Note that our choice for $P_c(x,y)$ differ from the kernel $k_m(x-y)$ used in \cite{FF} by a minus sing.} is 
	$$
	P_c(x,y):=P_-(x,y)-P_+(x,y)=-\int_{\bR^4}\frac{d^4 k}{(2\pi)^4}\, \delta(k^2-m^2)\,\epsilon(k^0)\,(\slashed{k}+m)\,e^{-i\eta(x-y,k)},
	$$
	with $\epsilon(k^0):=\Theta(k^0)-\Theta(-k^0)$ the step distribution.
\end{proposition}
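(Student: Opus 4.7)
The plan is essentially a one-line subtraction. By Proposition \ref{kernelP}, I already have the explicit kernel representations
$$
\scP_\pm(x,y)=\int\frac{d^4 k}{(2\pi)^4}\, \delta(k^2-m^2)\,\Theta(\pm k^0)\,(\slashed{k}+m)\, e^{-i\eta(x-y,k)}.
$$
Since both expressions are tempered distributions in the variable $x-y$ (the integrand being the product of the locally integrable functions $\Theta(\pm k^0)$, the well-defined distribution $\delta(k^2-m^2)$ on $\bR^4\setminus\{0\}$, and the polynomial symbol $\slashed k + m$), distributional linearity allows me to take the pointwise difference of the integrands:
$$
\scP_c(x,y)=\scP_-(x,y)-\scP_+(x,y)=\int\frac{d^4 k}{(2\pi)^4}\, \delta(k^2-m^2)\,[\Theta(-k^0)-\Theta(k^0)]\,(\slashed k + m)\, e^{-i\eta(x-y,k)}.
$$

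To close, I would simply invoke the algebraic identity $\Theta(-k^0)-\Theta(k^0)=-\epsilon(k^0)$ (valid a.e. and hence as a distributional identity on the real line), which transforms the bracket into $-\epsilon(k^0)$ and yields exactly the claimed formula.

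There is no genuine obstacle: the whole statement is an unpacking of Proposition \ref{kernelP} together with the definition $\scP_c:=\scP_--\scP_+$ in Definition \ref{deffermionicprojectors}. The only mild point to keep track of is that all equalities are to be read in the distributional sense, i.e.\ after testing against a Schwartz function $f\in\scS_x(\bR^4,\bC^4)$ and integrating in $y$; under this pairing the subtraction above is just linearity of the Fourier transform and the well-definedness of $\delta(k^2-m^2)\,\Theta(\pm k^0)$ as tempered distributions on $\bR^4$.
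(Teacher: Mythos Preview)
Your proposal is correct and matches the paper's approach: the paper states this proposition without proof, treating it as an immediate consequence of Proposition~\ref{kernelP} and the definition $\scP_c:=\scP_--\scP_+$, which is precisely the subtraction you carry out.
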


\begin{remark}\label{defPk} A few remarks follow.
	\begin{itemize}[leftmargin=2.5em]
		\vspace{0.08cm}
		
		\item[\rm{(i)}] The term ``causal''  comes from the fact that the corresponding kernel vanishes for spatially separated spacetime points. This is not true for the single terms $P_\pm(x,y)$ (compare with \cite[Section 2.1.3]{FF}).\\[-0.5em]
		\item[\rm{(ii)}] The operator $P_{nc}:=P_-+P_+$ fulfills analogous properties but is not causal. Its corresponding kernel\footnote{This can be found in \cite[Section 2.1.3]{FF} under the different notation $p_m(x,y)$}  is given by
		$$
		P_{nc}(x,y):=\int_{\bR^4}\frac{d^4 k}{(2\pi)^4}\, \delta(k^2-m^2)\,(\slashed{k}+m)\,e^{-i\eta(x-y,k)},
		$$\\[-1.3em]
		\item[\rm{(iii)}] The solutions $P_-(x,y)$ and $-P_+(x,y)$ correspond to the negative and positive energy components of $P_{c}(x,y)$, respectively, and should not be confused with the advanced and retarded Green functions (see \cite[Section 2.1.3]{FF}).\\[-0.5em]
		\item[\rm{(iv)}] Using that $\cF\big(a\,\delta_y^{(4)}\big)(k)=(2\pi)^{-2}\,a\,e^{i\eta(y,k)}$ we can formally rewrite (\ref{xydistribution}) as:
		$$
		P_\pm(x,y)a=P_\pm\big(x,a\delta_y^{(4)}\big)\ \mbox{ for } a\in\bC^4,
		$$
		in agreement with  the notation used in $P_\pm(x,f)$ if we interpret $a\delta_y^{(4)}$ as a function.\\[-0.5em]
		\item[\rm{(v)}] The function $P_\pm(\,\cdot\,,(0,\V{y}))a$ can be interpreted as the the evolution in time of the initial data on $\Sigma_0$ given by
		\begin{equation}\label{loca}
		\begin{split}
		\Omega^\pm(\V{x};a,\V{y}):&=\pm\int_{\bR^3}\frac{d^3\V{k}}{(2\pi)^4}\,p_\pm(\V{k})\,(\gamma^0a)\, e^{i\V{k}\cdot(\V{x}-\V{y})}=\\
		&=\pm P_\pm\big((2\pi)^{-1}\,\gamma^0 a\,\delta_{\V{y}}\big)(\V{x}).
		\end{split}
		\end{equation}
		The integrals in (\ref{loca}) are actually ill-defined in the Lebesgue sense and have a meaning only as tempered distributions in the variable $\V{x}\in\bR^3$. 
		Also, the second identity, which is based on $\cF(b\,\delta_{\V{y}}^{(3)})(\V{k})=(2\pi)^{-3/2}\,b\,e^{-i\V{k}\cdot\V{y}}$, is purely formal.
		Accordingly, the corresponding distributional solution  $P_c(\,\cdot\,,(0,\V{y}))a$ is generated by the initial data:
		\begin{equation}\label{locatot}
		\Omega(\,\cdot\,;a,\V{y}):=\Omega^-(\,\cdot\,;a,\V{y})-\Omega^+(\,\cdot\,;a,\V{y})=-(2\pi)^{-1}\,\gamma^0 a\,\delta_{\V{y}},
		\end{equation}
		which corresponds to a localized state at position $\V{y}\in\bR^3$.
		The distributional solutions arising from (\ref{loca}) correspond then to its negative- and (minus) its positive-energy components, respectively. These can be interpreted as the negative- and positive-energy (distributional) solutions which  are as far as possible localized in space. 
		As will be shortly discussed in Section \ref{decayproperteis}, it is not possible to construct physical (normalizable) solutions in the positive or negative spectrum which have compactly supported initial data.\\[-0.5em]
		\item[\rm{(vi)}]
		It should be mentioned that the solutions in point (iv) differ from the eigenstates of the Newton-Wigner position operator by a factor $\sqrt{2\omega\cdot (\omega+m)^{-1}}$ in the momentum distribution (see for example \cite[Section 2]{NW}). 
	\\[-0.5em]
		\item[\rm{(vii)}] In the reminder of the paper we will make use of the following notation:
		\begin{equation*}
		\begin{split}
		\hat{P}_\pm(k)&:=\delta(k^2-m^2)\,\Theta(\pm k^0)\,(\slashed{k}+m),\\
		\hat{P}_c(k)&:=\hat{P}_-(k)-\hat{P}_+(k)= -\delta(k^2-m^2)\,\epsilon( k^0)\,(\slashed{k}+m).
		\end{split}
		\end{equation*}
		As distributions: $\cF(P_\pm(\,\cdot\,,f))= \hat{P}_\pm\cdot \cF(f)$ for every $f\in\mathcal{S}_x(\bR^{1,3},\bC^4)$.
	\end{itemize}
\end{remark}

We already know that to every physical solution in $\scH_m$ it is possible to assign a three-dimensional momentum distribution via the function $\hat{\mathrm{E}}^{-1}$ (see Definition \ref{defdistrib}). The above discussion shows that this is possibile, for smooth solutions, also in the four-dimensional momentum space, even though just in a distributional sense.

\begin{definition}\label{defdistrib4}
	For every $f\in\mathcal{S}_x(\bR^{1,3},\bC^4)$, the tempered distribution $\hat{P}_c\cdot \cF(f)$ is called the \textbf{four-dimensional momentum distribution} of $P_c(\,\cdot\,,f)$.
\end{definition}


To conclude this section, we  point out that there exists an additional and more detailed representation of the smooth solutions discussed so far. Indeed, making use of Proposition \ref{generalsolutionsmooth} and the basis vectors (\ref{fundamentalspinors}), we see that any positive- or negative-energy solution can be written in terms of spin up and spin down plane-wave solutions. 
\begin{proposition}\label{spindecomposition}
	The set $\hat{\mathrm{E}}(P_\pm(\mathcal{S}_p(\bR^3,\bC^4)))$ is spanned by the solutions
	\begin{equation}\label{basicsolution}
	\begin{split}
	u_\uparrow^\pm(t,\V{x})&:=\int_{\bR^3}\frac{d^3\V{k}}{(2\pi)^{3/2}}\, \lambda_{\uparrow}^{\pm}(\V{k})\,\chi_{\uparrow}^\pm(\V{k})\,e^{-i(\pm \omega(\V{k})t-\V{x}\cdot\V{k})}\\ u_\downarrow^\pm(t,\V{x})&:=\int_{\bR^3}\frac{d^3\V{k}}{(2\pi)^{3/2}}\, \lambda_{\downarrow}^{\pm}(\V{k })\,\chi_{\downarrow}^\pm(\V{k})\,e^{-i(\pm\omega(\V{k})t-\V{x}\cdot\V{k})}
	\end{split}
	\end{equation}
	for arbitrary functions $\lambda_{\uparrow,\downarrow}\in\mathcal{S}_p(\bR^3,\bC)$.
\end{proposition}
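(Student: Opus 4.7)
The plan is to reduce the statement to a pointwise basis decomposition in the spaces $W_\V{k}^\pm$, then verify the relevant Schwartz regularity on both sides of the identity.

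First, I would invoke Proposition \ref{generalsolutionsmooth}. Given $\varphi\in\scS_p(\bR^3,\bC^4)$ with $\hat{P}_\pm(\varphi)=\varphi$, the opposite-sign term in the Fourier expansion (\ref{fourierexpansion}) vanishes, leaving
$$
u_\varphi(t,\V{x})=\int_{\bR^3}\frac{d^3\V{k}}{(2\pi)^{3/2}}\,\varphi(\V{k})\,e^{-i(\pm\omega(\V{k})t-\V{k}\cdot\V{x})}.
$$
Pointwise in $\V{k}$, the condition $\hat{P}_\pm(\varphi)=\varphi$ says $\varphi(\V{k})\in W_\V{k}^\pm$, and by (\ref{fundamentalspinors}) the pair $\{\chi_\uparrow^\pm(\V{k}),\chi_\downarrow^\pm(\V{k})\}$ spans this two-dimensional subspace (they are manifestly linearly independent from their explicit form). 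Hence there exist unique coefficients $\lambda_{\uparrow,\downarrow}^\pm(\V{k})\in\bC$ with $\varphi(\V{k})=\lambda_\uparrow^\pm(\V{k})\chi_\uparrow^\pm(\V{k})+\lambda_\downarrow^\pm(\V{k})\chi_\downarrow^\pm(\V{k})$.

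Next, I would identify the coefficients explicitly. For the $+$ case the upper two $\bC^2$-components of $\chi_{\uparrow,\downarrow}^+(\V{k})$ are just $e_{\uparrow,\downarrow}$, so $\lambda_\uparrow^+(\V{k})$ and $\lambda_\downarrow^+(\V{k})$ coincide with the first two $\bC^4$-components of $\varphi(\V{k})$; for the $-$ case the same role is played by the lower two components. In either case the $\lambda$'s are obtained from $\varphi$ by projecting onto coordinate subspaces, so $\lambda_{\uparrow,\downarrow}^\pm\in\scS_p(\bR^3,\bC)$. Substituting back into the Fourier integral yields $u_\varphi=u_\uparrow^\pm+u_\downarrow^\pm$ with precisely these $\lambda$'s, establishing one inclusion.

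For the converse, given $\lambda_{\uparrow,\downarrow}^\pm\in\scS_p(\bR^3,\bC)$ I would set $\varphi:=\lambda_\uparrow^\pm\,\chi_\uparrow^\pm+\lambda_\downarrow^\pm\,\chi_\downarrow^\pm$, so that $u_\uparrow^\pm+u_\downarrow^\pm=\hat{\mathrm{E}}_0(\varphi)$, and must check that $\varphi\in\hat{P}_\pm(\scS_p(\bR^3,\bC^4))$. The pointwise inclusion $\varphi(\V{k})\in W_\V{k}^\pm$ is immediate from $\chi_{\uparrow,\downarrow}^\pm(\V{k})\in W_\V{k}^\pm$. The harder point, which I expect to be the main obstacle, is the Schwartz regularity: the entries of $\chi_{\uparrow,\downarrow}^\pm(\V{k})$ involve the non-polynomial factor $(\V{\sigma}\cdot\V{k})/(\omega(\V{k})+m)$. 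This is handled by the observations that $\omega\in\scC^\infty(\bR^3)$, that $\omega(\V{k})+m\ge 2m>0$ uniformly, and that all derivatives of $\omega$ are polynomially bounded (by induction, they are rational expressions in $k^i$ and $\omega$ with positive powers of $\omega$ in the denominator). Consequently the matrix-valued function $\V{k}\mapsto\chi_{\uparrow,\downarrow}^\pm(\V{k})$ is smooth with polynomially bounded derivatives, and multiplication by it preserves $\scS_p(\bR^3,\bC^4)$ (this is the same mechanism that gives $\hat{P}_\pm(\scS_p)\subset\scS_p$, already used implicitly in Lemma \ref{lemmaproizionieschwat}). Hence $\varphi\in\scS_p(\bR^3,\bC^4)\cap\hat{P}_\pm(\scL_p^2(\bR^3,\bC^4))=\hat{P}_\pm(\scS_p(\bR^3,\bC^4))$ by Lemma \ref{lemmaproizionieschwat}(ii), completing the proof.
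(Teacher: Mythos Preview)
Your proof is correct and follows essentially the same approach as the paper's: both directions rest on the observation that the coefficients $\lambda_{\uparrow,\downarrow}^\pm$ coincide with coordinate components of $\varphi$ (hence are Schwartz), and that multiplication by $\chi_{\uparrow,\downarrow}^\pm(\V{k})$ preserves $\scS_p$ because the entries $(\V{\sigma}\cdot\V{k})/(\omega(\V{k})+m)$ are smooth with polynomially bounded derivatives. The paper treats the two inclusions in the opposite order and, for the Schwartz stability of $\lambda\chi^\pm$, simply refers back to the mechanism used in Lemma~\ref{lemmaproizionieschwat} rather than spelling out the estimate on $\omega$, but the content is the same.
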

In the appendix (see Lemma \ref{estimate}) two estimates concerning these functions are carried out. These will prove useful later on.

\subsection{A Few Words on the Decay Properties of Dirac Solutions}\label{decayproperteis}
In the analysis of the smooth structure of the  vacuum causal fermion system the decay properties of the smooth solutions of (\ref{solutiondirac}) will become crucial.  

As a first remark, we  point out that the smooth solutions of the Dirac equation which lie either in $\sol^-$ or $\sol^+$ are spread all over spacetime: in particular  they cannot be localized in bounded regions. This is made mathematically precise by  the following theorem whose proof can be found in \cite[Corollary 1.7]{thaller}.

\begin{theorem}\label{nevervanishing}
	Let $u\in\hat{\mathrm{E}}(\hat{P}_\pm(\mathcal{S}_p(\bR^3,\bC^4)))$ be different from zero. Then
	$$
	\supp (u\restr_{\Sigma_t})=\bR^3\quad\mbox{for all $t\in\bR$}
	$$
\end{theorem}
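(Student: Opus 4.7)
The statement is equivalent to the contrapositive: if $u\in\hat{\mathrm{E}}_0(\hat{P}_\pm(\scS_p(\bR^3,\bC^4)))$ and $u(t_0,\cdot)$ vanishes on some non-empty open set $U\subset\bR^3$ at some $t_0\in\bR$, then $u\equiv 0$. I treat the positive-energy case; the negative one is analogous by reversing the time orientation (or working in the upper half-plane). Using Proposition~\ref{generalsolutionsmooth}, I write
$$u(t,\V{x})=\int\frac{d^3\V{k}}{(2\pi)^{3/2}}\,\varphi(\V{k})\,e^{-i(\omega(\V{k})\,t-\V{k}\cdot\V{x})},\qquad \varphi\in\hat{P}_+(\scS_p(\bR^3,\bC^4)).$$

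\emph{Step 1 (propagation along the time axis).} Fix $\V{x}_0\in U$ and set $F(t):=u(t,\V{x}_0)$. The key point is that the positive-energy bound $\omega(\V{k})\geq m>0$ yields a holomorphic extension of $F$ to the lower half-plane $\{\mathrm{Im}\,z<0\}$: for such $z$, the integrand is dominated by $|\varphi(\V{k})|\,e^{\omega(\V{k})\mathrm{Im}\,z}$, which decays exponentially in $|\V{k}|$. A coarea change of variables $s=\omega(\V{k})$ rewrites the representation as $F(t)=\int_m^\infty H(s)\,e^{-ist}\,ds$ for some rapidly decreasing $H\in L^2([m,\infty),\bC^4)$, so by Paley--Wiener $F$ is the boundary value of an element of the Hardy space $H^2$ of the lower half-plane. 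Meanwhile, iterating the Dirac equation $\partial_t u=-\gamma^0\gamma^i\partial_i u-im\gamma^0 u$ together with the vanishing of $u(t_0,\cdot)$ on a full spatial neighborhood of $\V{x}_0$ (which kills all spatial derivatives there), a straightforward induction yields $F^{(n)}(t_0)=0$ for every $n\geq 0$. Taylor's theorem then gives local bounds $|F(t)|\leq C_n|t-t_0|^n$ for each $n$, which force $\int \log|F(t)|/(1+t^2)\,dt=-\infty$; by the classical uniqueness theorem for non-zero $H^2$ functions (Szeg\H{o}'s theorem, equivalently vanishing of the outer factor), $F\equiv 0$. Since $\V{x}_0\in U$ was arbitrary, $u$ vanishes identically on the open four-dimensional cylinder $\bR\times U\subset\bR^4$.

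\emph{Step 2 (propagation across space).} Each component of $u$ satisfies the Klein--Gordon equation $(\Box+m^2)u=0$, a linear PDE with constant (hence real-analytic) coefficients whose only characteristics are null. The boundary of the vanishing cylinder $\bR\times U$ consists of timelike hypersurfaces $\bR\times\partial U$, which are non-characteristic. Holmgren's uniqueness theorem therefore propagates the vanishing across such hypersurfaces, and a standard iteration extends the zero set of $u$ to all of $\bR^4$, giving $u\equiv 0$.

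\emph{Main obstacle.} The delicate point is Step~1, where the ``flat-at-a-point implies identically zero'' conclusion for $F$ rests on Hardy-space theory. A naive attempt to use analytic continuation in the spatial variables fails, because Schwartz decay of $\varphi$ in $\V{k}$ does not suffice to extend $u(\,\cdot\,,\V{x})$ holomorphically in $\V{x}$. It is precisely the positive-energy condition $\omega\geq m$ that provides the one-sided holomorphic extension in $t$ and places $F$ in $H^2$, opening the way for the logarithmic-integral obstruction to close the argument.
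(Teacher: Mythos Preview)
Your two--step strategy (Hardy space uniqueness in the time variable, then Holmgren across timelike hypersurfaces) is essentially the standard route; the paper itself does not give a proof and simply cites Thaller, Corollary~1.7. Step~2 is fine. Step~1, however, contains a genuine gap.

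The implication ``$|F(t)|\le C_n|t-t_0|^n$ for every $n$ forces $\int\log|F(t)|/(1+t^2)\,dt=-\infty$'' is false in general. A smooth function can vanish to infinite order at a single point while $\log|F|$ stays locally integrable: take for instance $|F(t)|=\exp(-|t-t_0|^{-1/2})$ near $t_0$, for which $\log|F(t)|=-|t-t_0|^{-1/2}$ is integrable. Szeg\H{o}'s condition rules out vanishing on sets of positive Lebesgue measure (or non--log--integrable decay), not infinite--order vanishing at an isolated point; and for general Schwartz data~$\varphi$ you have no a~priori control on the growth of the Taylor constants $C_n\sim\|\omega^n\varphi\|_{L^1}$, so a direct analyticity argument is unavailable.

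The fix is small and already implicit in what you wrote. Your induction in fact yields $\partial_t^n u(t_0,\cdot)=0$ on all of $U$, but you do not need even that: since the Dirac equation is symmetric hyperbolic, finite propagation speed (Proposition~\ref{causalpropag}) gives $u\equiv 0$ on the domain of dependence of $\{t_0\}\times U$, which contains a slab $(t_0-\delta,t_0+\delta)\times B$ for some ball $B\subset U$. Hence for each $\V{x}_0\in B$ the function $F(t)=u(t,\V{x}_0)$ vanishes on the whole interval $(t_0-\delta,t_0+\delta)$. Now the $H^2$ argument is immediate---$F$ vanishes on a set of positive measure, so $F\equiv 0$ (alternatively, $F$ is holomorphic in the lower half--plane, continuous up to $\bR$, and zero on a boundary interval; Schwarz reflection plus the identity theorem gives $F\equiv 0$). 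This yields $u\equiv 0$ on the full cylinder $\bR\times B$, and your Step~2 then applies verbatim.
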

In particular this theorem has the following corollary:
$$
\supp u=\bR^{1,3}\quad \mbox{for all } u\in\hat{\mathrm{E}}(\hat{P}_\pm(\mathcal{S}_p(\bR^3,\bC^4)))\setminus\{0\}
$$
Indeed, if this were not the case, then it would be possible to find a non-empty open set $\Omega\subset\bR^{1,3}$ such that $u\restr_\Omega=0$. As a consequence, there must exist some $t\in\bR$ such that the open set $\Omega_t:=\Sigma_t\cap\Omega$ is not empty and $u\restr_{\Omega_t}=0$, contradicting Theorem \ref{nevervanishing}.

Note that this theorem does not state that the general solution in $\hat{\mathrm{E}}(\hat{P}_\pm(\mathcal{S}_p(\bR^3,\bC^4)))$ vanishes nowhere, but that the open set of points where it does not vanish is dense (the same is true for its restriction to Cauchy surfaces)

Nevertheless, it can be proved that these solutions decay to zero at space and time infinities, as one could expect. In order to see this, consider  the Dirac operator $\cD:=i\slashed{\partial}-m$ defined in (\ref{solutiondirac}) and fix any element $f\in \ker\cD$. Using the identity $\slashed{\partial}^2=\square$\footnote{The symbol $\square$ denotes the Laplacian operator $\partial_\mu\partial^\mu$}, we see that the components of $f$ satisfy the Klein-Gordon equation:
\begin{equation}\label{key}
\square f^\mu= -m^2 f^\mu.
\end{equation}
Therefore, we can apply \cite[Theorem 7.2.1]{horm} which is reported here for simplicity.
\begin{theorem}
	Let $\varphi\in\mathcal{S}(\bR^n,\bC)$  and  $N\in\bZ$ with $N<-(n+1)/2$ be such that
	\begin{equation}\label{polynbound}
	|D^\alpha \hat{\varphi}(\boldsymbol{\xi})|\le C_\alpha(1+|\boldsymbol{\xi}|)^{N-|\alpha|} \quad\mbox{for every multi-index $\alpha$}.
	\end{equation}
	Then the solution of the Klein-Gordon equation given by
	\begin{equation}\label{KGsol}
	f(t,\V{x}):=\frac{1}{(2n)^n}\int_{\bR^{n}}e^{i(t\sqrt{1+|\boldsymbol{\xi}|^2}+\V{x}\cdot\boldsymbol{\xi})}\,\hat{\varphi}(\boldsymbol{\xi})\,d^n\boldsymbol{\xi}
	\end{equation}
	 satisfies, for $|t|+|\V{x}|\ge 1$,
	\begin{equation}\label{generalinequality}
	|f(t,\V{x})|\le C(|t|+|\V{x}|)^{N+1}\,(1+(t^2-|\V{x}|^2)_+)^{M_+/2}\,(1+(|\V{x}|^2-t^2)_+)^{M_-},
	\end{equation}
where $M_-$ is arbitrary, $C=C(M_-)$ and $M_+:=\max\{0,-\frac{n}{2}-N-1\}$,
\end{theorem}
Of course, the same result applies if the sign of the energy is changed and the mass is different from one.

At this point, focus on solutions (\ref{basicsolution}) with negative energy (the positive-energy case is analogous). The functions $\hat{\varphi}:=\lambda_{\uparrow}^-\chi_{\uparrow}^-, \lambda_{\downarrow}^-\chi_{\downarrow}^-$ belong to $\mathcal{S}_p(\bR^3,\bC^4)$ and therefore the form of the four components of the functions $u^-_{\uparrow,\downarrow}$ matches  with \eqref{KGsol}. Inequality \eqref{polynbound} is fulfilled for any negative integer $N$ because $\hat{\varphi}$ is a Schwartz function. If we take $N= -3$, then $N< -(n+1)/2$ with $n=3$ and $M_+=1/2$. Finally, if we choose $M_-=0$, then inequality \eqref{generalinequality} gives, for $|t|+|\V{x}|\ge 1$,
\begin{equation*}
	|u_{\uparrow,\downarrow}^-(t,\V{x})|\le C\frac{\big(1+(t^2-|\V{x}|^2)_+\big)^{1/4}}{(|t|+|\V{x}|)^{2}}
\end{equation*}
The same argument applies to the positive-energy solutions. Since any solution can be written as linear composition of the solutions $u^\pm_{\uparrow,\downarrow}$ the following result follows directly.
\begin{proposition}\label{hormander}
	Let $u\in\hat{\mathrm{E}}(\mathcal{S}_p(\bR^3,\bC^4))$. Then there exists $C>0$ such that
	\begin{equation*}
		|u(t,\V{x})|\le C\frac{\big(1+(t^2-|\V{x}|^2)_+\big)^{1/4}}{(|t|+|\V{x}|)^{2}}\quad\mbox{if $\ |t|+|\V{x}|\ge 1$}.
	\end{equation*}
In particular,
	$
	\lim_{n\to \infty}|u(x_n)|= 0
	$
	whenever $\lim_{n\to\infty}|x_n|=\infty.$
\end{proposition}

\section{Microscopic Investigation:  Regularization}

In this chapter we discuss the regularization of the physical solutions. As anticipated in the introduction, led by the conjecture that the nature of spacetime is not continuous on a microscopic scale or, being more cautious, that it looks differently from what we expect from a macroscopic point of view, we can make an attempt and modify the theory on this scale by restricting its domain of validity up to a scale of the order of the Planck length. For the sake of generality, we denote this critical microscopic scale by $\varepsilon$ and let it vary within some interval $(0,\varepsilon_{max})$.

\subsection{Some Quantitative Assumptions on the Regularization Parameter}

Let us denote by $m_P$ the \textit{Planck mass}, whose value in natural units\footnote{In natural units:
	$
	[\mbox{energy}]=[\mbox{momentum}]=[\mbox{mass}]=[\mbox{length}]^{-1}=[\mbox{time}]^{-1}
	=\mbox{eV}$} reads: 
$$
m_P\sim 1.22 \times 10^{28} \mbox{ eV}
$$ 
As a comparison, the rest masses of the electron, the proton and the top quark are, respectively:
$$
m_e\sim 5.11\times 10^{5}\mbox{ eV}, \quad m_p\sim 9.38\times 10^{8}\mbox{ eV}, \quad m_t\sim 1.73\times 10^{11}\mbox{ eV},
$$
The corresponding mass ratios are given by
$$
\frac{m_e}{m_P}\sim 4.19\times 10^{-23},\quad \frac{m_p}{m_P}\sim 7.69\times 10^{-20},\quad \frac{m_t}{m_P}\sim 1.42\times 10^{-17}.
$$
Directly related with the Planck mass, we can define the \textit{Planck length} $l_P$ by the relation 
$$
l_P =\frac{1}{m_P}\sim 8.20 \times 10^{-29}\mbox{ eV}^{-1}\quad \left(\sim 1.62\times 10^{-35}m\right).
$$ 
The length $l_P$ is a good candidate for the microscopic scale $\varepsilon$, at least concerning the order of magnitude. Nevertheless, we make the following weaker and cautious assumption, which will be taken for granted in the rest of the paper, if not specified otherwise.

\begin{assumption}\label{assumptionepsilon}
	Let $m>0$ denote the positive mass of the system as in \eqref{solutiondirac}.  Then the microscopic scale $\varepsilon$ is bounded from above by
	$$
	m\varepsilon\le m\varepsilon_{max} = 10^{-15}
	$$
\end{assumption}

\subsection{Regularization: Momentum  Cutoff or Spacetime Mollification?}\label{sectionreg}
As discussed in the previous sections, to every solution $u\in \scH_m$ a three-dimensional momentum distribution $\hat{\mathrm{E}}^{-1}(u)\in\mathcal{L}_p^2(\bR^3,\bC^4)$ can be associated (see Definition \ref{defdistrib}). Similarly, restricting to the dense subspace spanned by the solutions $P(\,\cdot\,,f)$, also a four-dimensional momentum distribution can be defined (see Definition \ref{defdistrib4}). These representations of the solutions allow us to better understand how the minimal length can be implemented.

Roughly speaking, if a lower bound $\Delta x\gtrsim \varepsilon$ exists in spacetime , then we may expect an upper bound in momentum space of the order $\Delta k\lesssim \varepsilon^{-1}$. This can be accomplished by introducing a cutoff in momentum space. More precisely, the idea is to multiply the momentum distributions by a smooth function which decays fast enough to take out (or at least weaken) the contribution coming from momenta larger than $\varepsilon^{-1}$. This can be carried out in both the three- and four-dimensional momentum spaces.

It is possible to make some assumptions on the cutoff functions, led by physical intuition. For example, it is sensible to assume that these functions are spherically symmetric in the three-momentum coordinates, for there should  be no distinguished direction. Similarly, in the four-dimensional space, it sounds reasonable to assume that these functions are symmetric under inversion of the sign of energy. We want now to discuss this in detail.

\begin{remark}[A Warning on  Poincar\'e Invariance]
Before entering the discussion, we want to warn the reader that introducing a cutoff regularization in momentum space  (as is done  in this work and will become clear soon)  unavoidably breaks Poincar\'e symmetry. More precisely, all the constructions carried out in this paper are reference dependent.
This is evident, for example, if we compare the kernel of the  fermionic projector in \eqref{xydistribution} with its regularization in  Proposition \ref{rfp}: of the manifest Poincar\'e invariance of $P_\pm(x,y)$ only rotation and translation invariance survive the introduction of a cutoff.
Nevertheless,  although undesirable, a  loss of Poincar\'e invariance may not be too surprising if one keeps in mind that the microscopic nature of physical spacetime is completely unknown, and there is no reason why Poincar\'e symmetry should be preserved at scales of the order of the Planck length.  
\end{remark}

\begin{definition}\label{curoff4d}
	A function $\mathfrak{G}\in\mathcal{S}_p(\bR^{1,3},\bR)\setminus\{0 \}$ which fulfills:
	\begin{itemize}[leftmargin=2.5em]
		\vspace{0.09cm}
		\item[\rm{(i)}] $\mathfrak{G}(k)\ge 0$ for any $k\in\bR^{1,3}$,\\[-0.5em]
		\item[\rm{(ii)}] $\mathfrak{G}(-k^0, \mathrm{R}\V{k})=\mathfrak{G}(k^0,\V{k})$ for any $k\in\bR^{1,3}$ and $\mathrm{R}\in \mathbb{O}(3)$,
	\end{itemize}
	is called a \textbf{four-momentum cutoff.}
\end{definition}

\begin{remark}
	The paradigm of such a cutoff function is obviously the Gaussian:
	$$
	\gG(k):= A e^{-\frac{|k|^2}{2B^2}}\quad\mbox{for all }k\in\bR^{1,3}.
	$$
\end{remark}

As is clear from  \ref{abstractrestatement},  the four-momenta which contribute to the Fourier expansion of the solutions are  exclusively the four-vectors lying on the mass shell $|k^0|=\omega(\V{k})$.   This means that we can in fact focus on the three-momenta $\V{k}\in\bR^3$. More precisely, we can restrict our attention to the  function
\begin{equation}\label{restriction}
\mathfrak{g}:\bR^3\ni\V{k}\mapsto \mathfrak{G}(\omega(\V{k}),\V{k})\in\bR,
\end{equation}
which selects the values of the cutoff on the mass shell. Then the following holds.
\begin{proposition}\label{regul3d}
	Referring to (\ref{restriction}),  the function $\mathfrak{g}$ belongs to $\mathcal{S}_p(\bR^3,\bR)$ and satisfies
	\begin{itemize}[leftmargin=2.5em]
		\vspace{0.09cm}
		\item[\rm{(i)}] $\mathfrak{g}(\V{k})\ge 0$ for any $\V{k}\in\bR^3$\\[-0.5em]
		\item[\rm{(ii)}] $\mathfrak{g}(\mathrm{R}\,\V{x})=\mathfrak{g}(\V{k})$ for any $\V{k}\in\bR^3$ and $\mathrm{R}\in \mathbb{O}(3)$,
	\end{itemize}
\end{proposition}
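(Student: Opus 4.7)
The first two conclusions are essentially bookkeeping: property (i) is immediate from Definition \ref{curoff4d}(i), since $\mathfrak{g}(\V{k})=\mathfrak{G}(\omega(\V{k}),\V{k})\in\bR_+$. For (ii), I would first specialize Definition \ref{curoff4d}(ii) to $\mathrm{R}=\bI$ to obtain $\mathfrak{G}(-k^0,\V{k})=\mathfrak{G}(k^0,\V{k})$, i.e.\ $\mathfrak{G}$ is even in $k^0$; combining this with the full symmetry hypothesis yields separate invariance of $\mathfrak{G}$ under $\V{k}\mapsto\mathrm{R}\V{k}$ for every $\mathrm{R}\in\mathbb{O}(3)$. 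Since $\omega(\V{k})=\sqrt{|\V{k}|^2+m^2}$ depends only on $|\V{k}|$, one has $\omega(\mathrm{R}\V{k})=\omega(\V{k})$, so
$$
\mathfrak{g}(\mathrm{R}\V{k})=\mathfrak{G}(\omega(\mathrm{R}\V{k}),\mathrm{R}\V{k})=\mathfrak{G}(\omega(\V{k}),\V{k})=\mathfrak{g}(\V{k}).
$$

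The substantive part is showing $\mathfrak{g}\in\scS_p(\bR^3,\bR)$. Smoothness is clear: because $m>0$, the function $\omega$ is smooth on all of $\bR^3$, hence so is the composition $\mathfrak{g}=\mathfrak{G}\circ(\omega,\mathrm{id})$. For the Schwartz decay estimates, my plan is to apply the Fa\`a di Bruno chain rule: for every multi-index $\alpha$, $\partial^\alpha\mathfrak{g}(\V{k})$ is a finite sum of terms of the form $(\partial^\beta\mathfrak{G})(\omega(\V{k}),\V{k})$ multiplied by products of partial derivatives of $\omega$, with $|\beta|\le|\alpha|$. I would then check by induction that every partial derivative of $\omega$ of order $\ge 1$ is bounded on $\bR^3$ (they have the form polynomial$/\omega^N$ with $N\ge$ order, and $\omega\ge m>0$), so it suffices to control $(\partial^\beta\mathfrak{G})(\omega(\V{k}),\V{k})$.

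Since $\mathfrak{G}\in\scS_p(\bR^4,\bR)$, for any $M\in\bN$ there is $C_{\beta,M}$ with
$$
\bigl|(\partial^\beta\mathfrak{G})(k^0,\V{k})\bigr|\le C_{\beta,M}\bigl(1+|k^0|^2+|\V{k}|^2\bigr)^{-M}.
$$
Evaluating at $k^0=\omega(\V{k})$ and using $\omega(\V{k})^2+|\V{k}|^2\ge|\V{k}|^2$, this gives rapid decay in $|\V{k}|$. Multiplying by the bounded factors coming from the derivatives of $\omega$ and summing the finitely many Fa\`a di Bruno terms produces the seminorm bound $\sup_{\V{k}}(1+|\V{k}|)^N|\partial^\alpha\mathfrak{g}(\V{k})|<\infty$ for all $N,\alpha$. (Alternatively, one could simply invoke Lemma \ref{chiusuraschwartz}, which the paper uses for exactly such compositional stability of the Schwartz class.)

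The only mildly delicate point is the Fa\`a di Bruno bookkeeping for higher-order derivatives of $\omega$, but this is entirely routine given that $m>0$ keeps $\omega$ bounded away from zero; I do not expect any real obstacle.
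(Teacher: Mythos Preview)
Your proposal is correct and follows essentially the same approach as the paper. The paper's proof is terse: it simply says (i) and (ii) follow from the analogous properties of $\mathfrak{G}$, and invokes Lemma~\ref{chiusuraschwartz} for the Schwartz property --- precisely the alternative you mention at the end, and your Fa\`a di Bruno argument is just an unpacking of how that lemma is applied here.
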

\begin{proof}
	Points (i) and (ii) follow directly from the analogous properties of $\mathfrak{G}$. The fact that $\mathfrak{g}$ is a Schwartz function follows from the fact that $\mathfrak{G}$ is a Schwartz function itself and from Lemma \ref{chiusuraschwartz}.
\end{proof}
Of course, in order not to get a trivial regularization, we need  to assume that such a function does not vanish entirely. This can be achieved, for example, if $\gG$ is strictly positive on a sufficiently large ball intersecting the mass shell.
\begin{definition}\label{definitionreg}
	A function $\mathfrak{g}\in \mathcal{S}_p(\bR^3,\bR)\setminus\{0\}$ fulfilling (i) and (ii) of Proposition \ref{regul3d} is called a \textbf{three-momentum cutoff.}
\end{definition}

We have shown that a four-momentum cutoff induces a three-momentum cutoff. It turns out that also the other way around is possible, although not in a unique way. This shows that the two concepts are indeed interchangeable.
\begin{proposition}\label{extensiong}
	Let $\mathfrak{g}$ be a three-momentum cutoff. Then there exists a four-momentum cutoff $\mathfrak{G}$ such that 
	$$
	\mathfrak{g}(\V{k})=\mathfrak{G}(\omega(\V{k}),\V{k})\quad\mbox{for all }\V{k}\in\bR^3.
	$$
	In particular, $\mathfrak{G}$ does not vanish identically on the mass shell.
\end{proposition}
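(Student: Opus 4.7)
The plan is to construct $\mathfrak{G}$ explicitly by gluing together two copies of $\mathfrak{g}$ around the two sheets of the mass shell $k^0 = \pm\omega(\V{k})$, using a narrow bump in the energy coordinate. Fix a non-negative even function $\phi\in\scC_c^\infty(\bR,\bR)$ with $\phi(0)=1$ and $\supp\phi\subset[-m/2,m/2]$, and set
$$
\mathfrak{G}(k^0,\V{k}) \;:=\; \mathfrak{g}(\V{k})\,\bigl[\phi(k^0-\omega(\V{k}))+\phi(k^0+\omega(\V{k}))\bigr].
$$
Since $\omega(\V{k})\ge m$ for all $\V{k}\in\bR^3$, we have $|k^0\pm\omega(\V{k})|\ge 2m-m/2>m/2$ for the ``wrong'' sign when $k^0=\pm\omega(\V{k})$, so only one of the two bumps contributes on each mass-shell sheet. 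This yields $\mathfrak{G}(\omega(\V{k}),\V{k})=\mathfrak{g}(\V{k})\cdot\phi(0)=\mathfrak{g}(\V{k})$, as required; the same computation shows $\mathfrak{G}$ does not vanish identically on the mass shell.

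Next I would verify the two defining conditions of a four momentum cutoff. Non-negativity is immediate from $\mathfrak{g}\ge 0$ and $\phi\ge 0$. The symmetry property follows from the evenness of $\phi$ together with the rotational invariance of $\omega$ and of $\mathfrak{g}$: explicitly,
$$
\mathfrak{G}(-k^0,\mathrm{R}\V{k})=\mathfrak{g}(\V{k})\bigl[\phi(-(k^0+\omega(\V{k})))+\phi(-(k^0-\omega(\V{k})))\bigr]=\mathfrak{G}(k^0,\V{k}).
$$

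The main obstacle is proving that $\mathfrak{G}\in\scS_p(\bR^4,\bR)$. The map $\omega:\bR^3\to\bR$ is smooth, so $(k^0,\V{k})\mapsto\phi(k^0\mp\omega(\V{k}))$ is smooth on $\bR^4$; however, $\omega$ grows linearly at infinity, so this factor is not itself Schwartz and the decay in $k^0$ must be extracted from $\mathfrak{g}$. The crucial observation is that on the support of $\phi(k^0\mp\omega(\V{k}))$ one has $|k^0|\le\omega(\V{k})+m/2\le C(1+|\V{k}|)$, hence any polynomial in $k\in\bR^4$ is dominated by a polynomial in $\V{k}$ alone on this set. I would then compute the partial derivatives of $\mathfrak{G}$ using the chain rule: each derivative of $\phi(k^0\mp\omega(\V{k}))$ produces factors of derivatives of $\omega$, which are all bounded (the first derivatives $\partial_i\omega=k_i/\omega$ are bounded by $1$, and higher derivatives decay), so every partial derivative of $\mathfrak{G}$ is the product of a Schwartz function of $\V{k}$ and a factor whose support lies in the cone $|k^0|\le C(1+|\V{k}|)$. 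This is exactly the setting handled by the paper's Lemma \ref{chiusuraschwartz} on closure properties of the Schwartz space, which I would invoke to conclude rapid decay in all four variables.
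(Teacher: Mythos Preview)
Your construction is correct and is essentially the same as the paper's: the paper writes $\mathfrak{G}(k^0,\V{k})=\mathfrak{g}(\V{k})\,f(\omega(\V{k})-|k^0|)$ with $f\in\scC_c^\infty(\bR,\bR_+)$ supported in $[-m/2,m/2]$ and identically $1$ near $0$, which on the support agrees with your two-bump formula $\phi(k^0-\omega)+\phi(k^0+\omega)$ when $\phi$ is even. Your verification that $\mathfrak{G}\in\scS_p(\bR^4,\bR)$ via the support constraint $|k^0|\le C(1+|\V{k}|)$ and the boundedness of the derivatives of $\omega$ from Lemma~\ref{chiusuraschwartz} is exactly what the paper has in mind (and is in fact more explicit than the paper's one-line appeal to that lemma).
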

\begin{proof}
	Let $\mathfrak{g}$ be as in Definition \ref{definitionreg} and define $\mathfrak{G}(k^0,\V{k}):=\mathfrak{g}(\V{k})f(\omega(\V{k})-|k^0|)$ for some  arbitrarily fixed function $f\in\mathcal{C}_0^\infty(\bR,[0,\infty))$ which fulfills $\supp f\subset [-m/2,m/2]$ and $f\equiv 1$ on $[-m/4,m/4]$. Again, exploiting Lemma \ref{chiusuraschwartz}, is can be proved that $\mathfrak{G}$ belongs to $\mathcal{S}_p(\bR^{1,3},\bR)$. Moreover it fulfills points (i) and (ii) of Definition \ref{curoff4d}  and gives back the function $\mathfrak{g}$ if restricted to the mass shell.
\end{proof}

So far, we have not discussed the role played by the microscopic length scale $\varepsilon$ or, more precisely, how it enters the definition of a cutoff.
The leading idea is that a cutoff function should be concentrated around a ball of radius $\varepsilon^{-1}$. However, before making this  mathematically more precise, we need to make some considerations.

Even though from a physical point of view such a microscopic scale should be fixed a priori, from a purely mathematical perspective it is useful not to fix the value of the cutoff parameter $\varepsilon$, but instead to leave some freedom in the choice of the microscopic scale.  This allows us to consider the limit $\varepsilon\searrow 0$ of the various regularized quantities: this is the so-called \textit{continuum limit} (not analyzed here, see \cite{FF}). 
Therefore, in what follows we will merely assume that (compare with Assumption \ref{assumptionepsilon}):
$$
\mbox{\textit{The physical microscopic scale $\varepsilon$ ranges within} } (0,\varepsilon_{max}).
$$
Bearing this in mind, the above ideas can be implemented as follows.

Suppose we are given a four-dimensional cutoff function $\gG$ which, intuitively speaking, is concentrated around the unit ball. For example, we may consider a Gaussian function centered at the origin with unit variance, or a compactly supported function which equals the identity on the unit ball and rapidly decays to zero outside it. If we now define $\gG_\varepsilon(k):=\gG(\varepsilon\cdot k)$, we obtain a new cutoff function which is concentrated around the larger ball $B(0,\varepsilon^{-1})$. Led by this idea, we can give the following general definition.

\begin{definition}\label{definitionregularization}
	Let $\gG$ be a four-momentum cutoff. Then a regularization (cutoff) family is a collection $\{\gG_\varepsilon \}_{\varepsilon\in (0,\varepsilon_{max})}$ of four-momentum cutoffs defined by
	$$
	\gG_\varepsilon(k)=\gG(\varepsilon\cdot k)\quad\mbox{for all } k\in\bR^{1,3}.
	$$
	The corresponding restrictions to the mass shell (see (\ref{restriction})) are denoted by $\{\gg_\varepsilon \}_{\varepsilon\in (0,\varepsilon_{max})}$.
\end{definition}

It should be mentioned that there is an analogous but different way to implement a regularization cutoff family. One could, indeed, fix a \textit{three}-momentum cutoff $\gg$, define $\gg_\varepsilon(\V{k}):=\gg(\varepsilon\cdot \V{k})$ for every $\varepsilon\in (0,\varepsilon_{max})$ and  then take as $\gG_{\varepsilon}$ any extension of $\gg_\varepsilon$ as in Proposition \ref{extensiong}. 
This construction, however, is generally not equivalent to the one given in Definition \ref{definitionregularization}, because of the nonlinearity of  $\omega$. More precisely, a cutoff $\{\gG_{\varepsilon}\}_\varepsilon$ constructed in this way can generally not be written as $\{\hat{\gG}(\varepsilon\cdot k)\}_\varepsilon$  for some given $\hat{\gG}$ as in Definition \ref{definitionregularization}. In this paper we will focus mainly on the construction as in Definition \ref{definitionregularization}.

\begin{remark}\label{4momentumcutoff} A few remarks follow.
	\begin{itemize}[leftmargin=2.5em]
		\vspace{0.09cm}
		\item[\rm{(i)}]As already mentioned above, due to the factor $\delta(k^2-m^2)$, the four-momentum distribution of any solution of the Dirac equation is supported on the mass shell $k^2=m^2$. Therefore, when multiplying such distribution by a cutoff function, we see that the off-shell values of the latter do no contribute. With this in mind,
		 we can interchangeably refer to both $\gG_{\varepsilon}$ and $\gg_{\varepsilon}$ as regularization cutoff in momentum space.\\[-0.5em]
		\item[\rm{(ii)}] The intuitive picture is that $\gG_{\varepsilon}$ provides a smooth approximation of the characteristic function of the sphere $B(0,\varepsilon^{-1})$. 
		There is no need to make this assumption mathematically more precise, though, for we will mainly focus on the properties of the Fourier Transform of $\gG_\varepsilon$, as will be discussed shortly.
	\end{itemize}
	
\end{remark}

%
%

We are ready to \textit{regularize} the functions of $\scH_m$. 
As anticipated at the beginning of this section, the idea is to take out the large momenta in the momentum distributions of the solutions in $\scH_m$ by  multiplication against cutoff functions.

Take a regularization family $\{\gG_\varepsilon\}_\varepsilon$ and fix $\varepsilon\in (0,\varepsilon_{max})$. Every element $u$ of $\scH_m$ can be written as:
$$
u=u_\psi:=\mathrm{E}_0(\cF^{-1}(\psi))\mbox{ for some }\psi\in\mathcal{L}^2_p(\bR^3,\bC^4)
$$
The idea is then to replace the momentum distribution $\psi$ by the modified one $\gg_{\varepsilon}\psi$.

\begin{proposition}\label{defiregularization}
	Given a regularization family $\{\mathfrak{g}_\varepsilon\}_{\varepsilon}$,   \textbf{regularization operators} are defined by
	$$
	\gR_\varepsilon: \sol\ni \hat{\mathrm{E}}(\psi)\mapsto \hat{\mathrm{E}}(\mathfrak{g}_\varepsilon\psi)\in\sol\quad\mbox{for all }\varepsilon\in (0,\varepsilon_{max}).
	$$
	The following properties are fulfilled.
	\begin{itemize}[leftmargin=2.5em]
		\vspace{0.08cm}
		\item[\rm{(i)}] $\gR_{\varepsilon}$ is a bounded self-adjoint operator.\\[-0.5em]
		\item[\rm{(ii)}] $\gR_{\varepsilon}(\sol^\pm)\subset\sol^\pm$\\[-0.5em]
		\item[\rm{(iii)}] $u\in \ker \gR_\varepsilon\mbox{ if and only if }\mathfrak{g}_\varepsilon\cdot\hat{\mathrm{E}}^{-1}(u)=0 \mbox{ a.e.}.$
	\end{itemize}
\vspace{0.08cm}

	\noindent In particular, if the set of zeros of $\mathfrak{g}_\varepsilon$ has measure zero, then $\ker\gR_{\varepsilon}$ is trivial.
\end{proposition}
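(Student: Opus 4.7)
The strategy is to transport everything through the unitary isomorphism $\hat{\mathrm{E}}_0 : \scL^2_p(\bR^3,\bC^4) \to \sol$, so that $\gR_\varepsilon$ becomes the pointwise multiplication operator $M_{\mathfrak{g}_\varepsilon} : \psi \mapsto \mathfrak{g}_\varepsilon \cdot \psi$ on $\scL^2_p(\bR^3,\bC^4)$. Since $\mathfrak{g}_\varepsilon \in \scS_p(\bR^3,\bR)$ by Proposition \ref{regul3d}, it is in particular a bounded measurable function, so $M_{\mathfrak{g}_\varepsilon}$ is a bounded operator on $\scL^2_p(\bR^3,\bC^4)$ with norm at most $\|\mathfrak{g}_\varepsilon\|_\infty$. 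Conjugating by the unitary $\hat{\mathrm{E}}_0$ preserves boundedness, which proves (i) with the same operator norm bound.

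For (ii), I would use the characterization $\sol^\pm = \hat{\mathrm{E}}_0(\hat{P}_\pm(\scL^2_p(\bR^3,\bC^4)))$ from the previous subsection. The key observation is that multiplication by the scalar function $\mathfrak{g}_\varepsilon$ commutes with multiplication by the matrix-valued function $p_\pm(\V{k})$: for any $\psi \in \scL^2_p(\bR^3,\bC^4)$,
\begin{equation*}
\mathfrak{g}_\varepsilon(\V{k}) \, p_\pm(\V{k}) \psi(\V{k}) = p_\pm(\V{k}) \, \mathfrak{g}_\varepsilon(\V{k}) \psi(\V{k}),
\end{equation*}
so $M_{\mathfrak{g}_\varepsilon} \hat{P}_\pm = \hat{P}_\pm M_{\mathfrak{g}_\varepsilon}$. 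Consequently, if $u = \hat{\mathrm{E}}_0(\psi)$ with $\psi = \hat{P}_\pm \psi$, then $\gR_\varepsilon u = \hat{\mathrm{E}}_0(\mathfrak{g}_\varepsilon \hat{P}_\pm \psi) = \hat{\mathrm{E}}_0(\hat{P}_\pm(\mathfrak{g}_\varepsilon \psi)) \in \sol^\pm$.

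For (iii), since $\hat{\mathrm{E}}_0$ is a unitary (in particular, injective) operator, we have $\gR_\varepsilon u = \hat{\mathrm{E}}_0(\mathfrak{g}_\varepsilon \hat{\mathrm{E}}_0^{-1}(u)) = 0$ if and only if $\mathfrak{g}_\varepsilon \hat{\mathrm{E}}_0^{-1}(u) = 0$ in $\scL^2_p(\bR^3,\bC^4)$, i.e.\ almost everywhere. For the final assertion, assume $\mathfrak{g}_\varepsilon$ vanishes only on a set $N \subset \bR^3$ of Lebesgue measure zero. If $u \in \ker \gR_\varepsilon$, then by (iii) the function $\hat{\mathrm{E}}_0^{-1}(u)$ must vanish a.e.\ on $\bR^3 \setminus N$, hence a.e.\ on $\bR^3$, so $\hat{\mathrm{E}}_0^{-1}(u) = 0$ in $\scL^2_p$ and consequently $u = 0$.

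None of the steps presents a real obstacle; the only point requiring a moment of care is making sure that the scalar/matrix multiplication really do commute pointwise in (ii) and that the proof is phrased using the unitary $\hat{\mathrm{E}}_0$ rather than $\mathrm{E}_0$ alone, since the momentum-space description is where $\mathfrak{g}_\varepsilon$ acts naturally.
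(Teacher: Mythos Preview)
Your proof is correct and follows essentially the same approach as the paper: transport the problem through the unitary $\hat{\mathrm{E}}_0$ to the multiplication operator $M_{\mathfrak{g}_\varepsilon}$ on $\scL^2_p(\bR^3,\bC^4)$, then read off (i) from boundedness of $\mathfrak{g}_\varepsilon$ and (iii) from injectivity of $\hat{\mathrm{E}}_0$. For (ii) your commutation argument $M_{\mathfrak{g}_\varepsilon}\hat{P}_\pm = \hat{P}_\pm M_{\mathfrak{g}_\varepsilon}$ is actually slightly cleaner than the paper's route, which first checks the inclusion on the dense Schwartz subspace $\hat{\mathrm{E}}_0(\hat{P}_\pm(\scS_p(\bR^3,\bC^4)))$ via the explicit Fourier expansion and then extends by continuity; both arguments, however, rest on the same observation that a scalar multiplier does not disturb the pointwise spectral projections $p_\pm(\V{k})$.
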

\begin{proof}
	Point (i) follows  from the continuity of $\psi\mapsto \mathfrak{g}_\varepsilon \psi$ on $\mathcal{L}_p^2(\bR^3,\bC^4)$, the fact that $\gg_\varepsilon$ is real valued  and the fact that $\hat{\mathrm{E}}$ is a unitary operator. Point (iii) is a direct consequence of the injectivity of $\hat{\mathrm{E}}$. Let us prove point (ii). Exploiting identity (\ref{fourierexpansion}), we see that $\gR_{\varepsilon}(\hat{\mathrm{E}} ( \hat{P}_\pm(\mathcal{S}_p(\bR^3,\bC^4))))\subset \sol^\pm$, for $\mathfrak{g}_\varepsilon$ is a scalar function and does not affect the spectral characteristics of the momentum distributions. The statement follows from the density of $\hat{\mathrm{E}} ( \hat{P}_\pm(\mathcal{S}_p(\bR^3,\bC^4)))$ within $\sol^\pm$ (see Lemma \ref{lemmaproizionieschwat} and the definition of $\hat{\mathrm{E}}$) and the continuity of $\gR_\varepsilon$.
\end{proof}
\begin{remark}
	Depending on the choice of the regularization family, the regularization operators can be injective or not. For the moment, we prefer not to make any assumption on the form of the cutoff functions. 
	The set $\ker\gR_\varepsilon$ is to be interpreted as the set of \textit{high-momenta} solutions of $\scH_m$, and as such they turn unphysical once the regularization is assumed to be physically meaningful. In the specific case of a mollification regularization that will be introduced soon (see Theorem \ref{teoremaformaregol} and Assumption \ref{assumpformarego}), the kernel is in fact trivial.
\end{remark}

As should be, this regularization procedure is equivalent to modifying the \textit{four}-dimensional momentum distribution by means of $\gG_{\varepsilon}$, when this is possible. 
More precisely, if we restrict our attention to the smooth elements $u_\varphi=P_c(\,\cdot\,,f)$ (see Proposition \ref{mostgeneralsolution}), it is not difficult to see that (compare with (\ref{abstractrestatement})):
$$
\varphi\mapsto \mathfrak{g}_{\varepsilon}\cdot \varphi\ \mbox{\textit{ is tantamount to} }\ \hat{P}_c\cdot\cF(f)\mapsto \gG_\varepsilon\cdot\big(\hat{P}_c\cdot\cF(f)\big).
$$
\begin{proposition}
	Referring to Proposition \ref{defiregularization} and Proposition \ref{mostgeneralsolution}, it holds that:
\begin{equation}\label{regsmooth}
\gR_\varepsilon(P_\pm(\,\cdot\,,f))(x)=\int\frac{d^4 k}{(2\pi)^2}\, \hat{P}_-(k)\,\gG_{\varepsilon}(k)\,\cF(f)(k)\,e^{-i\eta(k,x)},
\end{equation}
	for any $f\in\mathcal{S}_x(\bR^{1,3},\bC^4)$ and $x\in\bR^{1,3}$.
\end{proposition}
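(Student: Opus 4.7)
The goal reduces to tracing how the 3D-momentum multiplication by $\gg_\varepsilon$ corresponds to a 4D-momentum insertion of $\gG_\varepsilon$ on the mass shell, so the calculation of (\ref{abstractrestatement}) can simply be redone with the cutoff present.

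First I would fix $f\in\scS_x(\bR^4,\bC^4)$ and invoke Proposition \ref{mostgeneralsolution}, together with the identification (\ref{identificationsolutions}), to write $\scP_\pm(\,\cdot\,,f)=\hat{\mathrm{E}}_0(\varphi_\pm)$ for some $\varphi_\pm\in\hat{P}_\pm(\scS_p(\bR^3,\bC^4))$. Concretely, tracing back the manipulations in (\ref{abstractrestatement}), one may take $\varphi_\pm(\V{k})=p_\pm(\V{k})\gamma^0\tilde{\varphi}(\pm\omega(\V{k}),\V{k})/\sqrt{2\pi}$ with $\tilde{\varphi}=\sqrt{2\pi}\gamma^0\cF(f)$ on the mass shell, which is indeed Schwartz by Lemma \ref{chiusuraschwartz}. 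By the very definition of $\gR_\varepsilon$ (Proposition \ref{defiregularization}), this gives
\begin{equation*}
\gR_\varepsilon(\scP_\pm(\,\cdot\,,f))=\hat{\mathrm{E}}_0(\gg_\varepsilon\cdot\varphi_\pm).
\end{equation*}

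Next I would observe that $\gg_\varepsilon$ is a scalar function, hence multiplication by $\gg_\varepsilon$ commutes with the matrix-valued projection $\hat{P}_\pm$; combined with Lemma \ref{chiusuraschwartz} this shows $\gg_\varepsilon\varphi_\pm\in\hat{P}_\pm(\scS_p(\bR^3,\bC^4))$, so Proposition \ref{generalsolutionsmooth} applies to the modified distribution. Writing out the resulting 3D plane-wave expansion gives
\begin{equation*}
\hat{\mathrm{E}}_0(\gg_\varepsilon\varphi_\pm)(x)=\pm\int_{\bR^3}\frac{d^3\V{k}}{(2\pi)^{3/2}}\,p_\pm(\V{k})\,\gg_\varepsilon(\V{k})\,\gamma^0\,\cF(f)(\pm\omega(\V{k}),\V{k})\,\sqrt{2\pi}^{-1}\,e^{-i\eta(k,x)}\bigl|_{k^0=\pm\omega(\V{k})},
\end{equation*}
after absorbing the normalization as in the step passing from the second to the third line of (\ref{abstractrestatement}).

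The last step is to rewrite this 3D integral as the announced 4D integral. This is the one place where a small verification is needed: by construction $\gg_\varepsilon(\V{k})=\gG_\varepsilon(\omega(\V{k}),\V{k})$, and the symmetry axiom (ii) of Definition \ref{curoff4d} applied with $\mathrm{R}=\mathbb{I}$ yields $\gG_\varepsilon(-\omega(\V{k}),\V{k})=\gG_\varepsilon(\omega(\V{k}),\V{k})$, so on both sheets of the mass shell one has $\gg_\varepsilon(\V{k})=\gG_\varepsilon(\pm\omega(\V{k}),\V{k})$. Hence $\gg_\varepsilon(\V{k})$ may be replaced by $\gG_\varepsilon(k)$ inside the integrand under the constraint $k^0=\pm\omega(\V{k})$, after which the delta-function identity
\begin{equation*}
\delta(k^2-m^2)=\frac{\delta(k^0-\omega(\V{k}))+\delta(k^0+\omega(\V{k}))}{2\omega(\V{k})}
\end{equation*}
(used in reverse, exactly as in (\ref{abstractrestatement})) converts the 3D integral into
\begin{equation*}
\pm\int\frac{d^4k}{(2\pi)^2}\,\delta(k^2-m^2)\,\Theta(\pm k^0)\,(\slashed{k}+m)\,\gG_\varepsilon(k)\,\cF(f)(k)\,e^{-i\eta(k,x)},
\end{equation*}
which is exactly (\ref{regsmooth}).

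I do not expect any genuine obstacle: the computation is essentially (\ref{abstractrestatement}) performed with the extra factor $\gG_\varepsilon$ carried along. The only point requiring a moment of care is the justification that $\gG_\varepsilon$ restricts to $\gg_\varepsilon$ on \emph{both} hyperboloids $k^0=\pm\omega(\V{k})$, which is precisely what the reflection symmetry of Definition \ref{curoff4d} guarantees.
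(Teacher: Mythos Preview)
Your proposal is correct and follows precisely the route the paper indicates in the discussion preceding the proposition: identify $\scP_\pm(\,\cdot\,,f)$ with $\hat{\mathrm{E}}_0(\varphi_\pm)$, apply $\gR_\varepsilon$ as multiplication by $\gg_\varepsilon$ in 3-momentum space, and then redo the 3D-to-4D conversion of (\ref{abstractrestatement}) with the cutoff carried along, using the reflection symmetry of $\gG_\varepsilon$ to match $\gg_\varepsilon$ on both mass-shell sheets. The only slip is the stray overall $\pm$ in your last displayed formula, which should be absorbed when undoing the step from line~2 to line~3 of (\ref{abstractrestatement}); compare directly with (\ref{fermionicprojdef}).
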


A well-known feature of the Fourier Transform is that it maps products into convolutions and vice versa\footnote{With our conventions: $\cF(f*g)=(2\pi)^{n/2}\, \cF(f)\cdot \cF(g)$ and $\cF^{-1}(u\cdot v)=(2\pi)^{-{n/2}}\, \cF^{-1}(u)*\cF^{-1}(v)$. This is true in the general case of a product or convolution between a Schwartz function and a tempered distribution.}.  This allows for a simple representation of the regularized operator in position space, at least for such smooth solutions. Indeed, given any $f\in\mathcal{S}_x(\bR^{1,3},\bC^4)$, we have:
\begin{equation}\label{regsmooth2}
\begin{split}
P_c(\,\cdot\,, f)\stackrel{\gR_\varepsilon}{\longmapsto} \cF^{-1}\big(\big(\hat{P}_c\cdot \cF(f)\big)\cdot \gG_\varepsilon\big)&= (2\pi)^{-2}\,\big(\cF^{-1}\big(\hat{P}_c\cdot \cF(f)\big)\big)* \cF^{-1}(\gG_\varepsilon)=\\
&= P_c(\,\cdot\,,f)*((2\pi)^{-2}\cF^{-1}(\gG_\varepsilon)).
\end{split}
\end{equation}
\begin{remark}
	It is important to notice that identities (\ref{regsmooth}) and (\ref{regsmooth2}) depend only on the values attained by $\gG_\varepsilon$ on the mass shell. In particular, if two regularization functions coincide on the mass shell, their Fourier Transform might be different, but their convolution with the solutions yield the same result.
\end{remark}

\begin{proposition}\label{convoltioncomesabout}
	The following statements are true.
	\begin{itemize}[leftmargin=2.5em]
		\vspace{0.1cm}
		\item[\rm{(i)}] Let $\{\gG_{\varepsilon}\}_\varepsilon$ be a regularization family and
		 $$
		 h_\varepsilon := (2\pi)^{-2}\,\cF^{-1}(\gG_\varepsilon)\in\mathcal{S}_x(\bR^{1,3},\bC).
		 $$
		 Then, for every $f\in\mathcal{S}_x(\bR^{1,3},\bC^4)$,
		\begin{equation}\label{regularizationP}
		\gR_\varepsilon(P_\pm(\,\cdot\,,f))= P_\pm(\,\cdot\,,f)*h_\varepsilon=P_\pm (\,\cdot\,,f*h_\varepsilon)\in \scH_m^\pm\cap\mathcal{C}^\infty(\bR^{1,3},\bC^4)
		\end{equation}\\[-2.1em]
		\item[\rm{(ii)}] There exists a regularization family $\{\gG_{\varepsilon}\}_\varepsilon$  such that $h_\varepsilon$ fulfills:
		\begin{itemize}[leftmargin=2.5em]
			\vspace{0.1cm}
			\item[\rm{(1)}] $h_\varepsilon\in \mathcal{C}^\infty_c(\bR^{1,3},\bR)$, $h\ge 0$ and $\|h_\varepsilon\|_{\mathcal{L}^1}=1$\\[-0.9em]
			\item[\rm{(2)}] $\supp h_\varepsilon \subset B(0,\varepsilon)$\\[-0.9em]
			\item[\rm{(3)}] $h_\varepsilon(-x_0,\mathrm{R}\V{x})=h_\varepsilon(x_0,\V{x})$ for every $(x_0,\V{x})\in\bR^{1,3}$ and $\mathrm{R}\in \mathbb{O}(3)$.\\[-0.8em]
		\end{itemize}
		In this case, the set of zeros of $\mathfrak{g}_\varepsilon$ is a Lebesgue null set and $\ker\gR_{\varepsilon}=\{0\}$.
	\end{itemize}
\end{proposition}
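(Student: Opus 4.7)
This is essentially Fourier bookkeeping, already set up in the discussion surrounding equations (\ref{regsmooth}) and (\ref{regsmooth2}). The identity $\cF(f*h_\varepsilon)=(2\pi)^{2}\,\cF(f)\cdot\cF(h_\varepsilon)=\cF(f)\cdot\gG_\varepsilon$ for $f\in\scS_x(\bR^4,\bC^4)$ follows immediately from the convolution theorem (in the Minkowski convention (\ref{fouriertransformM})) together with the defining relation $h_\varepsilon=(2\pi)^{-2}\cF^{-1}(\gG_\varepsilon)$. Inserting this into the defining formula (\ref{regsmooth}) reproduces $\scP_\pm(\,\cdot\,,f*h_\varepsilon)=\gR_\varepsilon(\scP_\pm(\,\cdot\,,f))$, while Fourier-transforming the other candidate $\scP_\pm(\,\cdot\,,f)*h_\varepsilon$ also yields $\hat\scP_\pm\,\cF(f)\,\gG_\varepsilon$, so by injectivity of $\cF$ all three expressions coincide. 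Since Schwartz is stable under convolution, $f*h_\varepsilon\in\scS_x(\bR^4,\bC^4)$, and Proposition~\ref{mostgeneralsolution} immediately places the common value in $\sol^\pm\cap\scC^\infty(\bR^4,\bC^4)$.

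\textbf{Plan for (ii).} My plan for the construction is to force non-negativity of the cutoff through a Bochner-type square trick $\cF(\phi*\phi)\propto\cF(\phi)^2$. I would pick $\phi\in\scC_c^\infty(\bR^4,\bR)$ with $\phi\ge 0$, $\phi\not\equiv 0$, $\supp\phi\subset B(0,1/2)$ and $\phi(-x^0,\mathrm{R}\V{x})=\phi(x^0,\V{x})$ for every $\mathrm{R}\in\mathbb{O}(3)$; a short check shows that the symmetry group thereby generated already forces $\phi(-x)=\phi(x)$, so $\cF(\phi)$ is real-valued. Set $h:=(\phi*\phi)/\|\phi*\phi\|_{\scL^1}$ and $h_\varepsilon(x):=\varepsilon^{-4}h(x/\varepsilon)$. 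Properties $(1)$--$(3)$ then follow directly: $\supp h_\varepsilon\subset B(0,\varepsilon)$ because $B(0,1/2)+B(0,1/2)\subset B(0,1)$, $\scL^1$-normalization is built in, and non-negativity together with the time-reversal/rotation symmetry descend from $\phi$. A change of variables in (\ref{fouriertransformM}) gives $(2\pi)^2\cF(h_\varepsilon)(k)=\gG(\varepsilon k)$ for $\gG:=(2\pi)^2\cF(h)$, matching Definition~\ref{definitionregularization}, and the square trick makes $\gG\propto\cF(\phi)^2\ge 0$ pointwise.

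\textbf{Plan for the zero-set claim and main obstacle.} For the statement that $\gg_\varepsilon$ vanishes on a null set, I would invoke real-analyticity: since $h_\varepsilon\in\scC_c^\infty$, the Paley--Wiener theorem extends $\gG_\varepsilon$ to an entire function on $\bC^4$, and since $\omega(\V{k})=\sqrt{\V{k}^2+m^2}$ is real analytic on $\bR^3$, the composition $\gg_\varepsilon(\V{k})=\gG_\varepsilon(\omega(\V{k}),\V{k})$ is real analytic on the connected set $\bR^3$; a non-trivial real analytic function on a connected open subset of $\bR^n$ has zero set of Lebesgue measure zero. So it is enough to verify $\gg_\varepsilon\not\equiv 0$, and evaluating at the origin gives $\gg_\varepsilon(\V{0})=\gG(\varepsilon m,\V{0})$, which by continuity and $\gG(0)=\int h=1$ is strictly positive once $\varepsilon m\le 10^{-15}$ (Assumption~\ref{assumptionepsilon}). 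Triviality of the kernel $\ker\gR_\varepsilon=\{0\}$ then drops out of Proposition~\ref{defiregularization}(iii). The main conceptual hurdle in the whole argument is arranging simultaneously compact support of $h_\varepsilon$ and pointwise non-negativity of $\gG_\varepsilon$; without the Bochner-square construction a naive choice of $h$ would leave the sign of its Fourier transform indefinite.
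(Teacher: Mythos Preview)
Your proof of (i) and your construction in (ii) via the convolution square $h\propto\phi*\phi$ are essentially identical to the paper's argument, which likewise builds $h:=h_1*h_1$ from a bump $h_1$ supported in $B(0,1/2)$ and uses the same Fourier bookkeeping.

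The genuine difference is in the zero-set analysis. The paper exploits the rotational symmetry of $\gg_\varepsilon$ to reduce to the one-variable function $G(r):=\gg_\varepsilon(r\V{e}_1)$ on $(0,\infty)$, observes that $G$ is real analytic (as a composition with the real analytic mass-shell embedding) and hence has isolated zeros, and concludes that the zero set of $\gg_\varepsilon$ consists of at most countably many spheres---a null set. You instead invoke directly that a nontrivial real analytic function on the connected set $\bR^3$ has Lebesgue-null zero set. Your route is shorter and does not use the $\mathbb{O}(3)$-symmetry; the paper's route gives the extra structural information that the zeros form a countable union of spheres, which is used later in the paper (e.g.\ in the proof of Theorem~\ref{theoreminjectivityparticle}, where one needs an annulus on which $\gg_\varepsilon>0$).

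One small point: your verification that $\gg_\varepsilon\not\equiv 0$ by computing $\gg_\varepsilon(\V{0})=\gG(\varepsilon m,\V{0})$ is a nice addition (the paper glosses over this), but the appeal to ``continuity and $\gG(0)=1$'' together with Assumption~\ref{assumptionepsilon} is not quite self-contained: continuity alone gives positivity on \emph{some} neighbourhood, not specifically on $|k|\le 10^{-15}$. This is easily repaired: from $\supp h\subset B(0,1)$ and $\|h\|_{\scL^1}=1$ one gets $|\gG(k)-\gG(0)|\le\int h(x)\,|e^{i\eta(x,k)}-1|\,d^4x\le |k|$, so $\gG(k)>0$ for $|k|<1$, which comfortably covers $\varepsilon m\le 10^{-15}$.
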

\begin{proof}
	Let us start with point (i). The first identity in (\ref{regularizationP}) was proven above, while the second equality comes from the fact that $\gG_\varepsilon\cdot\cF(f)=\cF(f*h_\varepsilon)$. From Proposition \ref{defiregularization} we know that $\gR_\varepsilon(P_\pm(\,\cdot\,,f))\in\scH_m^\pm$. Moreover, because the convolution of smooth functions is again a smooth function, we get $\gR_\varepsilon(P_\pm(\,\cdot\,,f))\in\mathcal{C}^\infty(\bR^3,\bC^4)$, concluding the proof of (1).
	
	Let us pass to the  proof of point (ii). Consider any function $h_1\in \mathcal{C}_0^\infty(\bR^{1,3},[0,\infty))$ which is supported within $B(0,1/2)$ and with the additional symmetry $h_1(-x_0,\mathrm{R}\V{x})=h_1(x_0,\V{x})$ for any $x\in\bR^{1,3}$ and $\mathrm{R}\in \mathbb{O}(3)$. It can be proven by direct inspection that the smooth function $h:=h_1*h_1$ is supported in the set $B(0,1)$, it takes values within the positive real line and has the same symmetries of $h_1$. Moreover, by means of a suitable renormalization, we can always suppose that $\|h\|_{\mathcal{L}^1}=1$.  Now, consider its Fourier Transform $\gG:=(2\pi)^2\cF(h)$. From the invariance of $h_1$ with respect to spacetime inversion $x\mapsto  -x$, it follows that $\cF(h_1)$ is real-valued and, therefore, since $\gG=(2\pi)^4\,\cF(h_1)\cdot\cF(h_1)$, we see that the function $\gG$ takes values within $[0,\infty)$.   Finally, it can be checked by direct inspection that the symmetries of $h$ are preserved in defining $\gG$. Therefore, $\gG$ defines a four-dimensional cutoff function and we can then consider the regularization family $\gG_\varepsilon(k):=\gG(\varepsilon k)$.  Now, it can be seen by direct inspection that the function $h_\varepsilon(x):=\varepsilon^{-4}h(x\varepsilon^{-1})$ fulfills point (i)-(iii) in the thesis and that $\cF(h_\varepsilon)=(2\pi)^{-2}\gG_{\varepsilon}$.

	To conclude we need to prove the last statement. Since $h_{\varepsilon}$ is compactly supported,  Paley-Wiener Theorem (see for example Theorem 4.9 of \cite{SW}) ensures that the Fourier Transform $\gG_{\varepsilon}$ is the restriction to $\bR^{1,3}$ of an entire function on $\bC^4$ and, as such, it must be real analytic. 
	Now, consider the function $G\in\mathcal{C}^\infty((0,\infty),\bR)$ defined by $G(r):=\gG_{\varepsilon}(\omega(r\V{e}_1),r\V{e}_1)=\mathfrak{g}_{\varepsilon}(r\V{e}_1)$. This is also real analytic, it being the composition of the real analytic function $\gG_{\varepsilon}$ and the real analytic function $(0,\infty)\ni r\mapsto (\sqrt{r^2+m^2},r,0,0)\in\bR^{1,3}$ (see Proposition 2.2.8 of \cite{KP}). As such, the function $G$ can be extended analytically to a complex analytic (homeomorphic) function on an open neighborhood of $(0,\infty)$ in the complex plane. It is well-known that such functions admit only isolated zeroes and therefore the same must hold true for the restriction $G$ on $(0,\infty)$, as well. In particular, the set $N$ of the zeroes of $G$ is at most countable. 
	At this point, bearing in mind the symmetries of $\mathfrak{g}_{\varepsilon}$ (or equivalently $\gG_{\varepsilon}$), it follows that such a function vanishes exactly on the spheres $\partial B(0,R)$ with $R\in N$ (and maybe at $\V{k}=0$). The union of all these spheres is a Lebesgue null measure set of $\bR^3$. Exploiting point (iii) of Proposition \ref{defiregularization}, we see that $\ker\gR_\varepsilon=\{0\}$
\end{proof}

The properties in (ii)-(1) of the previous proposition are exactly those characterizing \textit{mollifiers} in $\bR^{1,3}$ (see \cite[Section 1.6]{ziemer}). Let us denote the set of such functions by~$\mathscr{M}(\bR^{1,3})$.

\begin{lemma}\label{lemmamollifieroperator}
	Let $h\in \mathscr{M}(\bR^{1,3})$. Then the following statements are true.
	\begin{itemize}[leftmargin=2.5em]\vspace{0.08cm}
		\item[\rm{(i)}] Let $f\in\Sol$, then $f*h\in\Sol$\\[-0.8em]
		\item[\rm{(ii)}] Let $u\in\sol$, then $u*h\in \sol\cap\mathcal{C}^\infty(\bR^{1,3},\bC^4)$\\[-0.8em]
		\item[\rm{(iii)}] The linear operator $\sol\ni u\mapsto u*h\in\sol$ is continuous.
	\end{itemize} 
\end{lemma}
\begin{proof}
	See Appendix.
\end{proof}

At this point, if we stick to regularization families as  in (ii) of Proposition \ref{convoltioncomesabout}, it is possible to complete the discussion started with equation (\ref{regsmooth2}). Indeed, equation (\ref{regularizationP}) shows how the regularization operator acts in position space only for smooth solutions like $P(\,\cdot\,,f)$. What about the other elements of $\scH_m$?

In the special case when the Fourier Transform of the regularization cutoff is a mollifier, it is possible to extend the firt identity in (\ref{regularizationP}) to the whole space $\scH_m$.
Indeed, since we already know that $\gR_\varepsilon f = f*h_\varepsilon$ on $\Sol$ and both  $\gR_\varepsilon$ and  $\cdot*h_{\varepsilon}$ are linear continuous operators, they must coincide on the whole Hilbert space.
\begin{theorem}\label{teoremaformaregol}
	Let $\mathfrak{g}_\varepsilon$ be is as in point (ii) of Proposition \ref{convoltioncomesabout}. Then 
	$$
	\gR_\varepsilon u= u*h_{\varepsilon}\quad\mbox{for all}\quad u\in\sol
	$$  	
	In this case the regularization is said to be of \textbf{mollification type}.
\end{theorem}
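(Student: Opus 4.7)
The proof should be a routine density-and-continuity argument, since essentially all the work is already done in the preceding Proposition \ref{convoltioncomesabout} and Lemma \ref{lemmamollifieroperator}. The plan is to exhibit a dense subspace of $\sol$ on which the two operators $\gR_\varepsilon$ and $u\mapsto u*h_\varepsilon$ agree, verify that both operators are continuous on all of $\sol$, and conclude by uniqueness of continuous extensions.

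First, I would identify the dense subspace. By Proposition \ref{mostgeneralsolution}, the set $\{\scP_\pm(\,\cdot\,,f)\:|\:f\in\scS_x(\bR^4,\bC^4)\}$ coincides with $\hat{\mathrm{E}}_0(\hat{P}_\pm(\scS_p(\bR^3,\bC^4)))$, and by Lemma \ref{lemmaproizionieschwat} together with $\hat{P}_++\hat{P}_-=\bI$, the sum $\hat{P}_+(\scS_p)+\hat{P}_-(\scS_p)$ equals $\scS_p(\bR^3,\bC^4)$, which is dense in $\scL_p^2(\bR^3,\bC^4)$. Since $\hat{\mathrm{E}}_0$ is unitary, the linear span of $\{\scP_\pm(\,\cdot\,,f)\:|\:f\in\scS_x\}$ is dense in $\sol$. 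On this dense subspace, the identity $\gR_\varepsilon u=u*h_\varepsilon$ is precisely the content of Proposition \ref{convoltioncomesabout}(i), extended by linearity.

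Second, I would check continuity of both operators on $\sol$. The continuity of $\gR_\varepsilon$ is Proposition \ref{defiregularization}(i). The continuity of $u\mapsto u*h_\varepsilon$ on $\sol$, under the standing assumption that $h_\varepsilon\in\mathscr{M}(\bR^4)$, is Lemma \ref{lemmamollifieroperator}(iii) (and also guarantees that $u*h_\varepsilon$ lands in $\sol$, so the right-hand side of the claimed identity actually makes sense). With both operators being bounded linear maps on the Hilbert space $\sol$ that agree on a dense subspace, they must agree everywhere.

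I do not foresee a serious obstacle: the only point requiring a moment's care is matching the form of the equality provided by Proposition \ref{convoltioncomesabout}(i), which treats $\scP_+$ and $\scP_-$ separately, with the formulation on $\sol=\sol^+\oplus\sol^-$. This is handled by simply summing the two equalities and using $\hat{P}_++\hat{P}_-=\bI$ together with the density assertion in Lemma \ref{lemmaproizionieschwat}(i). Everything else is the textbook fact that two continuous operators coinciding on a dense subset are equal.
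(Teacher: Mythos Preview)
Your proposal is correct and follows essentially the same approach as the paper: both argue by density and continuity, invoking Proposition~\ref{convoltioncomesabout}(i) for the identity on a dense subspace, Proposition~\ref{defiregularization}(i) for continuity of $\gR_\varepsilon$, and Lemma~\ref{lemmamollifieroperator}(iii) for continuity of $u\mapsto u*h_\varepsilon$. The only cosmetic difference is that the paper names $\Sol$ as the dense subspace while you work directly with $\hat{\mathrm{E}}_0(\scS_p(\bR^3,\bC^4))$; since $\Sol\subset\hat{\mathrm{E}}_0(\scS_p(\bR^3,\bC^4))$ and the identity from Proposition~\ref{convoltioncomesabout}(i) is established on the latter, your version is in fact slightly more direct.
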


\begin{assumption}\label{assumpformarego}
	In the remainder of the paper we will always assume that a regularization family is fixed. Moreover, if not stated otherwise, we will always assume that it is of mollification type and thus of the form presented in Theorem \ref{teoremaformaregol}. 
\end{assumption}

Of course, the regularization should return the original function in the limit $\varepsilon\searrow 0$. This is true, even though in general just in a distributional sense. 
Moreover, if we apply the regularization operators to smooth functions which ``do not vary too much" on the scale $\varepsilon$, we expect these functions to change ``just slightly". 
In order to make this mathematically more precise and quantitative, we define for any $u\in \mathcal{C}^\infty(\bR^{1,3},\mathbb{C}^4)$
\begin{equation}\label{jacobian}
\|\mathfrak{J} u\|_{x,\infty}:=\sup_{z\in B(x,\varepsilon)}\left(\sum_{\mu=0}^3|\nabla \Re u_\mu(z)|_{\bR^{1,3}}+\sum_{\mu=0}^3|\nabla \Im u_\mu(z)|_{\bR^{1,3}}\right).
\end{equation}
\begin{proposition}\label{approximationpropertiesreg}
	The following properties hold for $x\in\bR^{1,3}$ and $u\in\sol$\footnote{By $\|f\|_\infty$ we denote the uniform norm, defined by $\sup\{|f(x)|\,|\, x\in \mbox{dom}(f)\}$}.
	\begin{itemize}[leftmargin=2.5em]
		\vspace{0.1cm}
		\item[\rm{(i)}]  $\lim_{\varepsilon\searrow 0}\gR_\varepsilon u(x)= u(x)$ if $u\in\mathcal{C}^\infty(\bR^{1,3},\bC^4)$\\[-0.6em]
		\item[\rm{(ii)}] $\lim_{\varepsilon\searrow 0 }\gR_{\varepsilon} u=u$\\[-0.6em]
		\item[\rm{(iii)}] $\lim_{\varepsilon\searrow 0}(\eta\,|\,\gR_\epsilon u)_{\mathcal{L}^2}=(\eta|u)_{\mathcal{L}^2}$ for any $\eta\in \mathcal{C}_0^\infty(\bR^{1,3},\bC^4)$\\[-0.6em]
		\item[\rm{(iv)}] $|\gR_\varepsilon  u(x)|\le \pi\varepsilon^{5/2}\|h_\varepsilon\|_\infty\|u\|_m$ for every $x\in\bR^{1,3}$\\[-0.6em]
		\item[\rm{(v)}] $|\gR_\varepsilon  u(x)-u(x)|\le \varepsilon \|\mathfrak{J} u\|_{x,\infty}$ if $u\in \mathcal{C}^\infty(\bR^{1,3},\bC^4)$\\[-0.6em]
		\item[\rm{(vi)}] $\|\gR_\varepsilon\|\le 1$.
	\end{itemize}
\vspace{0.1em}
	Point (ii) shows that $\gR_{\varepsilon}\to \bI$ in the strong topology as $\varepsilon\searrow 0$. Point (iv) shows that $\gR_\varepsilon $ defines a continuous functional  at every point $x\in\bR^{1,3}$.
\end{proposition}
\begin{proof}
	Point (i) follows from (ii) of \cite[Theorem 1.6.1]{ziemer}. Let us prove point (ii). 
	Fix any $u\in\scH_m$ and $\delta>0$ and $T=1/2$. By Lemma \ref{lemmaconverging2} we know that $u\!\!\restriction_{R_T}\in\mathcal{L}^2(R_T,\bC^4)$. By density of $\mathcal{C}_0^\infty(R_T,\bC^4)$ within $\mathcal{L}^2(R_T,\bC^4)$ there exists $v\in\mathcal{C}_0^\infty(R_T,\bC^4)$ such that $\|u\!\!\restriction_{R_T}-v\|_{\mathcal{L}^2}<\delta$. Now, exploiting Lemma \ref{lemmaconverging2}  and Lemma \ref{lemmatecnico}, we get
	\begin{equation*}
	\begin{split}
	\|\gR_{\varepsilon}u-u\|_m&=\|(\gR_{\varepsilon}u-u)\!\restriction_{R_T}\|_{\mathcal{L}^2}=\\
	&\le \|(\gR_{\varepsilon}(u-v)\!\restriction_{R_T}\|_{\mathcal{L}^2}+\|(\gR_{\varepsilon}v-v)\!\restriction_{R_T}\|_{\mathcal{L}^2}+\|(u-v)\!\restriction_{R_T}\|_{\mathcal{L}^2}\le\\
	&\le\|(u-v)\!\restriction_{R_{T+\varepsilon}}\|_{\mathcal{L}^2}+\|\gR_{\varepsilon} v-v\|_{\mathcal{L}^2}+\|u\!\!\restriction_{R_T}-v\|_{\mathcal{L}^2}<\\
	&<\|(u-v)\!\restriction_{R_{T+\varepsilon}}\|_{\mathcal{L}^2}+\|\gR_{\varepsilon} v-v\|_{\mathcal{L}^2}+\delta
	\end{split}
	\end{equation*}
	where  we used that both $v$ and $\gR_{\varepsilon}v$ belong to $\mathcal{L}^2(\bR^{1,3},\bC^4)$ (see again \cite{ziemer}).
	Now, since $\supp v\subset R_T$, we have:
	\begin{equation*}
	\begin{split}
	\|(u-v)\!\restriction_{R_{T+\varepsilon}}\|_{\mathcal{L}^2}^2&=\int_{R_T}|u(x)-v(x)|^2\, d^4x+\int_{R_{T+\varepsilon}\setminus R_T}|u(x)-v(x)|^2\, d^4x=\\
	&= \|(u-v)\!\restriction_{R_T}\|_{\mathcal{L}^2}^2+\int_{R_{T+\varepsilon}\setminus R_T}|u(x)|^2\, d^4x=\\
	&= \|(u-v)\!\restriction_{R_T}\|_{\mathcal{L}^2}^2+2\varepsilon \|u\|_m^2=\\
	&\le \|u\!\restriction_{R_T}-v\|_{\mathcal{L}^2}^2+2\varepsilon \|u\|_m^2 < \delta^2 + 2\varepsilon \|u\|_m^2 ,
	\end{split}
	\end{equation*}
	where we made again use of Lemma 2.7. Putting all together we have just proven that
	$$
	\|\gR_{\varepsilon}u-u\|_m\le 2\delta + \|\gR_{\varepsilon} v-v\|_{\mathcal{L}^2} + \sqrt{2\varepsilon} \|u\|_m.
	$$
	The arbitrariness of $\delta$ and the fact that  $\|\gR_{\varepsilon} v-v\|_{\mathcal{L}^2} \to 0$ (see Theorem 1.6.1 \cite{ziemer}) concludes the proof.
	The proof of points (iii) and (iv) (with minor adjustments) can be found in Example 1.2.4 of \cite{FF}. So, let us prove point (v). 	
	\begin{equation*}
	\begin{split}
	|\gR_\varepsilon  u(x)-u(x)|&=\left|\int_{B_\varepsilon(x)}h_\varepsilon(y)(u(x\!-\!y)\!-\!u(x))\, d^4y\right|\le\\ &\le\int_{B_\varepsilon(x)}h_\varepsilon(y)|u(x\!-\!y)\!-\!u(x)|\, d^4y\le \\
	&\le \int_{B_\varepsilon(x)}h_\varepsilon(y)\|\mathfrak{J} u\|_{x,\infty} | y|\, d^4y\le \varepsilon\|\mathfrak{J} u\|_{x,\infty}
	\end{split}
	\end{equation*}
	where we applied a multi-variable version of the mean value theorem (see \cite{apostol}, Example 2 after Theorem 12.9). To conclude, let us prove point (vi). Choose $T>0$, then from Lemma \ref{lemmaconverging2} and Lemma \ref{lemmatecnico} we get for any $u\in\scH_m$:
	$$
	\|\gR_\varepsilon u\|_m=\sqrt{2T}\|\gR_{\varepsilon} u \restr_{R_T}\|_{\mathcal{L}^2}\le \sqrt{2T}\|u\restr_{R_{T+\varepsilon}}\|_{\mathcal{L}^2}=\frac{\sqrt{2T}}{\sqrt{2(T+\varepsilon)}}\|u\|_m\le \|u\|_m.
	$$
\end{proof}

\subsection{The Regularized Fermionic Projectors}

Referring to point (i) of Proposition \ref{convoltioncomesabout}, we can now regularize the fermionic projectors introduced in Definition \ref{deffermionicprojectors}.

\begin{definition}
	The \textbf{regularized fermionic projectors onto the positive and negative spectrum} are defined as the linear mappings
	\begin{equation}
	P_{\pm}^\varepsilon: \mathcal{S}_x(\bR^{1,3},\bC^4)\ni f\mapsto P_\pm(\,\cdot\,,f*h_\varepsilon)\in \scH_m^\pm\cap \mathcal{C}^\infty(\bR^{1,3},\bC^4).
	\end{equation}
	Similarly, we define the \textbf{regularized causal propagator} by $P_{c}^\varepsilon:=P_{-}^\varepsilon-P_{+}^\varepsilon$.
\end{definition}

As in the unregularized case, it is possible to represent the regularized fermionic projectors as kernel operators, 
$$
P_{\pm}^\varepsilon(x,f)=\int_{\bR^4}P_{\pm}^\varepsilon(x,y)  f(y)\, d^4y,
$$
where, in this case, the kernel does define a regular function on $\bR^{1,3}\times\bR^{1,3}$.

\begin{proposition}\label{rfp}
	The \textbf{kernel of the regularized fermionic projector} $P_{\pm}^\varepsilon$ is given by
\begin{equation*}
P_{\pm}^\varepsilon(x,y):=\int_{\bR^4}\frac{d^4k}{(2\pi)^4}\, \hat{P}_-(k)\,\gG_\varepsilon(k)\,e^{-i\eta(x-y,k)}.
\end{equation*}
	and has the following representation:
	\begin{equation}
	P_{\pm}^\varepsilon(x,y)= \pm\int_{\bR^3}\frac{d^3\V{k}}{(2\pi)^{4}}\, \mathfrak{g}_\varepsilon(\V{k})\, p_\pm(\V{k})\,\gamma^0\, e^{-i(\pm\omega(\V{k})(t_x-t_y)-\V{k}\cdot(\V{x-y}))}.
	\end{equation}
	Moreover, for every $y\in\bR^{1,3}$ and $a\in\bC^4$,
	\begin{equation}\label{pickedsolutions}
	P_{\pm}^\varepsilon(\,\cdot\,,y)a= P_\pm(\,\cdot\,,T_y(h_\varepsilon\, a))\in \sol^\pm\cap \mathcal{C}^\infty(\bR^{1,3},\bC^4),
	\end{equation}
	where $T_y(f)(x):=f(x-y)$ is the displacement operator.
	
\end{proposition}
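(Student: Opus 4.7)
The plan is to reduce all three claims to the integral representation of the unregularized fermionic projector in Proposition \ref{mostgeneralsolution}, with the Fourier–convolution identity $\cF(f*h_\varepsilon)=\cF(f)\cdot\gG_\varepsilon$ doing most of the work; this identity follows from the paper's conventions \eqref{fouriertransformM} together with the defining relation $h_\varepsilon=(2\pi)^{-2}\cF^{-1}(\gG_\varepsilon)$ built into Proposition \ref{convoltioncomesabout}. Starting from the definition $\scP_{\pm,\varepsilon}(x,f)=\scP_\pm(x,f*h_\varepsilon)$ and inserting \eqref{fermionicprojdef}, I would replace $\cF(f*h_\varepsilon)(k)$ by $\gG_\varepsilon(k)\cF(f)(k)$, expand $\cF(f)(k)=(2\pi)^{-2}\int f(y)\,e^{i\eta(y,k)}\,d^4y$, and interchange the order of integration. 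The exchange is legitimate because $\gG_\varepsilon\cF(f)$ lies in $\scS_p(\bR^4,\bC^4)$ and, after restriction to the mass shell, the resulting $\V{k}$-integral converges absolutely uniformly in $y$. The inner integral in $k$ is then precisely the claimed kernel $\scP_{\pm,\varepsilon}(x,y)$.

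For the three–dimensional representation I would evaluate the delta as $\delta(k^2-m^2)=(2\omega(\V{k}))^{-1}\bigl[\delta(k^0-\omega(\V{k}))+\delta(k^0+\omega(\V{k}))\bigr]$; the factor $\Theta(\pm k^0)$ keeps only the branch $k^0=\pm\omega(\V{k})$. On that branch the matrix factor $(\slashed k+m)/(2\omega(\V{k}))$ equals $\pm p_\pm(\V{k})\gamma^0$ by the definition of $p_\pm$ in \eqref{localprojectorposneg}, which accounts for the overall sign $\pm$ appearing in the claim. On the mass shell $\gG_\varepsilon$ reduces to $\mathfrak{g}_\varepsilon$ by \eqref{restriction}, and the exponent reduces to $-i(\pm\omega(\V{k})(t_x-t_y)-\V{k}\cdot(\V{x}-\V{y}))$, matching the stated formula.

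For the last identity I would use the translation rule of the Fourier transform to compute $\cF(T_y(h_\varepsilon a))(k)=e^{i\eta(y,k)}\cF(h_\varepsilon)(k)\,a=(2\pi)^{-2}\gG_\varepsilon(k)\,e^{i\eta(y,k)}\,a$ and substitute this into the integral representation of $\scP_\pm$ from Proposition \ref{mostgeneralsolution}; the resulting expression for $\scP_\pm(x,T_y(h_\varepsilon a))$ coincides with the kernel formula for $\scP_{\pm,\varepsilon}(x,y)a$ just derived, which is \eqref{pickedsolutions}. Since $T_y(h_\varepsilon a)\in\scS_x(\bR^4,\bC^4)$ as the translate of a compactly supported smooth function multiplied by the constant spinor $a$, the membership $\scP_{\pm,\varepsilon}(\,\cdot\,,y)a\in\sol^\pm\cap\scC^\infty(\bR^4,\bC^4)$ follows directly from Proposition \ref{mostgeneralsolution}. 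I do not foresee any serious obstacle; the only step requiring care is the Fubini argument in the derivation of the kernel, which is resolved by the Schwartz decay that $\gG_\varepsilon$ introduces, and the remainder is essentially bookkeeping.
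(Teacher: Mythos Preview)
Your proposal is correct and follows essentially the same approach as the paper: the paper likewise obtains the kernel ``by direct inspection'' (i.e., your Fubini/Fourier computation), derives the three-dimensional form from the mass-shell reduction, and proves \eqref{pickedsolutions} by plugging $\cF(T_y(h_\varepsilon a))(k)=(2\pi)^{-2}\gG_\varepsilon(k)\,a\,e^{i\eta(y,k)}$ into \eqref{fermionicprojdef}. The only cosmetic difference is that the paper cites Proposition~\ref{generalsolutionsmooth} for smoothness and membership in $\sol^\pm$, whereas you cite Proposition~\ref{mostgeneralsolution}; since the latter contains this conclusion via \eqref{identificationsolutions}, both routes are equivalent.
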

\begin{proof}
	The fact that $P_{\pm}^\varepsilon(x,y)$ does define an integral kernel for $P_{\pm}^\varepsilon$ can be proved by direct inspection.  The smoothness of $P_{\pm}^\varepsilon(\,\cdot\,,y)$ and the fact that it solves the Dirac equation follow from the fact that $\gG_\varepsilon$ belongs to $\mathcal{S}_p(\bR^{1,3},\bC^3)$ together with Proposition \ref{generalsolutionsmooth}.  Finally, identity (\ref{pickedsolutions}) follows directly by plugging the identity $$\cF(T_y(h_\varepsilon\,a))(k)=(2\pi)^{-2}\,\gG_\varepsilon (k)a\,e^{i\eta(y,k)}$$ 
	into expression (\ref{fermionicprojdef}).
\end{proof}
\begin{remark}
	If we compare the proposition above with point (v) of Remark \ref{defPk}, we see that the ``maximally localized'' distributional solutions of (\ref{solutiondirac}) are now replaced by the regular  solutions determined by the initial data on $\Sigma_0$:
	\begin{equation}\label{loca2}
	\Omega_\varepsilon^\pm(a,\V{y};\V{x}):=\pm \int_{\bR^3}\frac{d^3\V{k}}{(2\pi)^4}\,\mathfrak{g}_\varepsilon(\V{k})\,p_\pm(\V{k})(\gamma^0 a)\, e^{i\V{k}\cdot(\V{x}-\V{y})}.
	\end{equation}
	Note that $\Omega_\varepsilon^\pm(a,\V{y};\,\cdot\,)\in\mathcal{S}_p(\bR^3,\bC^4)$. The corresponding solutions $P_{\pm}^\varepsilon(\,\cdot\,,y)a$ are  peaked around the light-cone centered at $(0,\V{y})$, but are regular and do not diverge anywhere. 
\end{remark}

%

\subsection{The Doubly-Regularized Kernel of the Fermionic Projector}

Given its importance in the rest of the paper, we introduce also the  double regularization of the fermionic projectors.
\begin{definition}\label{rfp2}
	The \textbf{doubly-regularized kernel of the fermionic projector} $P_\pm$ is defined by\footnote{In order to avoid confusion, we stress that the notation $2\varepsilon$ indicates that the regularization is performed two times on the kernel ($\gG_\varepsilon\to \gG_\varepsilon^2$) and it should not be confused with  $\gG_\varepsilon\to \gG_{2\varepsilon}$.}
	\begin{equation}\label{doubleregularizeddistribu}
	P_{\pm}^{2\varepsilon}(x,y):=\int_{\bR^4}\frac{d^4k}{(2\pi)^4}\, \hat{P}_-(k)\,\gG_\varepsilon(k)^2\,e^{-i\eta(x-y,k)}
	\end{equation}
	for any couple $x,y\in\bR^{1,3}$. As always, we define 
	$$
	P_{c}^{2\varepsilon}(x,y):=P_{-}^{2\varepsilon}(x,y)-P_{+}^{2\varepsilon}(x,y).
	$$ 
	The \textbf{doubly-regularized fermionic projectors} are defined for any $f\in\mathcal{S}_x(\bR^{1,3},\bC^4)$ by
	$$
	P_{\pm}^{2\varepsilon}(x,f):= \int_{\bR^4}P_{\pm}^{2\varepsilon}(x,y)f(y)\, d^4y,\quad P_{\pm}^{2\varepsilon}(\,\cdot\,,f)\in\scH_m^\pm\cap\mathcal{C}^\infty(\bR^{1,3},\bC^4).
	$$
\end{definition}

Let us analyze some features of these kernels.
\begin{proposition}\label{double regularazione}
	Referring to Definition \ref{rfp2} the following properties hold.
	\vspace{0.1cm}
	\begin{itemize}[leftmargin=2.5em]
		\item[\rm{(i)}] For any $a\in\bC^4$, 
		$$
		P_{\pm}^{2\varepsilon}(\,\cdot\,,y)a=\gR_{\varepsilon} (P_{\pm}^{2\varepsilon}(\,\cdot\,,y)a).
		$$\\[-2em]
		\item[\rm{(ii)}] The linear function $\bC^4\ni a\mapsto P_{\pm}^{2\varepsilon}(\,\cdot\,,y)a$ is injective.
	\end{itemize}
\end{proposition}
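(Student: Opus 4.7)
The plan for (i) is to unfold both sides in Fourier variables and match them. By the representation (\ref{pickedsolutions}) in Proposition \ref{rfp}, $\scP_{\pm,\varepsilon}(\,\cdot\,,y)a=\scP_\pm(\,\cdot\,,T_y(h_\varepsilon a))$ is a smooth element of $\scH_m^\pm$, so Proposition \ref{convoltioncomesabout}(i) applies and gives $\gR_\varepsilon(\scP_{\pm,\varepsilon}(\,\cdot\,,y)a)=\scP_\pm(\,\cdot\,,T_y(h_\varepsilon a)*h_\varepsilon)$. A short manipulation of the convolution shows that $T_y(h_\varepsilon a)*h_\varepsilon=T_y((h_\varepsilon*h_\varepsilon)a)$, whose Fourier transform equals $(2\pi)^{-2}\gG_\varepsilon(k)^2\,a\,e^{i\eta(y,k)}$, since $\cF(h_\varepsilon*h_\varepsilon)=(2\pi)^2\cF(h_\varepsilon)^2$ and $\cF(h_\varepsilon)=(2\pi)^{-2}\gG_\varepsilon$. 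Substituting into the defining formula (\ref{fermionicprojdef}) and comparing with (\ref{doubleregularizeddistribu}), all $(2\pi)$-factors will cancel as needed and the identity in (i) will follow.

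For (ii), my plan is to chain two injectivity statements. By Assumption \ref{assumpformarego} and Proposition \ref{convoltioncomesabout}(ii) we have $\ker\gR_\varepsilon=\{0\}$, so part (i) reduces the problem to showing that $a\mapsto \scP_{\pm,\varepsilon}(\,\cdot\,,y)a$ is itself injective. Assuming this solution vanishes identically, I would restrict it to $\Sigma_{t_y}$ and use the three-momentum representation in Proposition \ref{rfp} to rewrite $\V{x}\mapsto \scP_{\pm,\varepsilon}((t_y,\V{x}),y)a$, up to constants, as the inverse Fourier transform of the Schwartz map $\V{k}\mapsto \mathfrak{g}_\varepsilon(\V{k})\,p_\pm(\V{k})\,\gamma^0 a$ (Schwartz-ness follows from Lemma \ref{chiusuraschwartz} together with the boundedness of $p_\pm$). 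Fourier inversion then forces $\mathfrak{g}_\varepsilon(\V{k})\,p_\pm(\V{k})\,\gamma^0 a=0$ a.e.; and since $\mathfrak{g}_\varepsilon$ vanishes only on a Lebesgue null set by Proposition \ref{convoltioncomesabout}(ii), this upgrades to $p_\pm(\V{k})\,\gamma^0 a=0$ almost everywhere, and then everywhere by continuity of $p_\pm$.

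The final step will be to deduce $a=0$ from the pointwise identity $p_\pm(\V{k})\,\gamma^0 a=0$. Using the explicit form in Proposition \ref{proporthogonalityp}, this is equivalent to $(\pm\omega(\V{k})\gamma^0-\V{k}\cdot\boldsymbol{\gamma}+m\bI_4)a=0$ for every $\V{k}\in\bR^3$. Evaluating at $\V{k}=0$ gives $(\gamma^0\pm\bI_4)a=0$, and differentiating in $k^i$ at the origin gives $\gamma^i a=0$ for $i=1,2,3$; since any single $\gamma^i$ is invertible, this forces $a=0$. The main obstacle I anticipate is purely bookkeeping: carefully tracking the $(2\pi)$-factors when passing between three- and four-dimensional Fourier conventions, and checking the convolution identity $T_y(h_\varepsilon a)*h_\varepsilon=T_y((h_\varepsilon*h_\varepsilon)a)$ cleanly before invoking (\ref{fermionicprojdef}).
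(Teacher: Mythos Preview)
Your argument for (i) is correct and simply spells out in detail what the paper dismisses as ``exploiting the definitions given so far''; no substantive difference there.

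For (ii) your route is correct but genuinely different from the paper's. The paper does \emph{not} invoke $\ker\gR_\varepsilon=\{0\}$ or reduce to the singly-regularized kernel; instead it evaluates $\scP_{\pm,\varepsilon^2}(y,y)a=0$ directly, multiplies by $(\gamma^0 a)^\dagger$, and uses positivity of the integrand $\mathfrak{g}_\varepsilon(\V{k})^2(\gamma^0 a)^\dagger p_\pm(\V{k})(\gamma^0 a)\ge 0$ to force $p_\pm(\V{k})(\gamma^0 a)=0$ on an open ball where $\mathfrak{g}_\varepsilon\neq 0$. From there it concludes $a=0$ by expanding $\gamma^0 a$ in the explicit spinor basis $\chi^\pm_{\uparrow,\downarrow}$ and comparing components at $\V{k}=0$ and at one nearby momentum. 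Your approach via Fourier inversion on $\Sigma_{t_y}$ reaches the stronger conclusion $p_\pm(\V{k})\gamma^0 a=0$ for \emph{all} $\V{k}$, and your endgame (differentiating the relation $(\pm\omega\gamma^0-\V{k}\!\cdot\!\boldsymbol{\gamma}+m)a=0$ at the origin to get $\gamma^i a=0$) is cleaner than the paper's coordinate comparison. The trade-off is that your proof leans on Assumption \ref{assumpformarego} twice (for $\ker\gR_\varepsilon=\{0\}$ and for the null set of $\mathfrak{g}_\varepsilon$), whereas the paper's positivity argument only needs $\mathfrak{g}_\varepsilon$ to be nonzero on \emph{some} open set --- a much weaker requirement satisfied by any nonzero continuous cutoff.
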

\begin{proof}
	Point (i) can be proved by direct inspection exploiting the definitions given so far.  Let us pass to point (ii). Assume that $P_{\pm}^{2\varepsilon}(\,\cdot\,,y)a=0$ for some $a\in\bC^4$. If we evaluate this function at $y$, i.e. $P_{\pm}^{2\varepsilon}(y,y)a=0$, we get rid of the exponential in (\ref{doubleregularizeddistribu}). At this point, making use of the three-dimensional representation of the distributional integrals  (see (\ref{abstractrestatement})) and noticing that $(\gamma^0 a)^\dagger\, p_\pm(\V{k})\,\gamma^0a\ge 0$ as well as $\gG_{\varepsilon}(k)^2\ge 0$, we conclude that:
	$$
	\mathfrak{g}_\varepsilon(\V{k})^2\  (\gamma^0 a)^\dagger\, p_\pm(\V{k})(\gamma^0 a) =0 \mbox{ for any } \V{k}\in\bR^3.
	$$
 In the proof of Proposition \ref{convoltioncomesabout} it was shown that the zeros of $\mathfrak{g}_\varepsilon$ lie in the set $\cup_{n}\partial B(0,R_n)\cup\{0\}$, with $\{R_n\}_n$ isolated positive real numbers. Without loss of generality we can suppose that $\mathfrak{g}_\varepsilon\neq 0$ in an open neighborhood of the origin $B(0,r)$. As a consequence, we get $(\gamma^0 a)^\dagger p_\pm(\V{k})(\gamma^0 a)=0$ for any $\V{k}\in B(0,r)$. Since $p_\pm$ is a projector, it follows that $p_\pm(\V{k})(\gamma^0 a)=0$, or, equivalently, that $\gamma^0a\in W^\mp_{\V{k}}$ for any $\V{k}\in B(0,r)$. Let us now focus on the $-$-case, the other one being equivalent. Exploiting (\ref{fundamentalspinors}) we see that there exist scalars $\lambda_{\uparrow,\downarrow}$ such that:
	\begin{equation}
	\begin{split}
	\lambda_\uparrow(0)\chi_\uparrow^+(0)+\lambda_\downarrow(0)\chi_\downarrow^+(0)\stackrel{(1)}{=}\gamma^0a\stackrel{(2)}{=}\lambda_\uparrow(\V{k})\chi_\uparrow^+(\V{k})+\lambda_\downarrow(\V{k})\chi_\downarrow^+(\V{k})\quad\mbox{for all }\V{k}\in\bR^3
	\end{split}
	\end{equation}
	Comparing the 0th and 1st components, we get $\lambda_{\uparrow,\downarrow}(\V{k})=\lambda_{\uparrow,\downarrow}(0)=:\lambda_{\uparrow,\downarrow}$, which shows that the scalars do not depend on the point $\V{k}$. Thus, choosing for example $\V{k}=(r/2,0,0)$ and comparing the 2nd and 3rd components, we get $\lambda_{\uparrow,\downarrow}=0$. This gives $\gamma^0 a=0$ and therefore $a=0$, concluding the proof.
\end{proof}

To conclude this section, in view of what follows, it is useful to compute explicitly the form of the regularized kernel of the fermionic projector on the diagonal $x=y$.  Exploiting (\ref{abstractrestatement}), it follows that
\begin{equation*}
\begin{split}
P_{\pm}^{2\varepsilon}(x,x)&=\pm\int_{\bR^3}{\frac{d^3\V{k}}{(2\pi)^{4}}}\,\mathfrak{g}_\varepsilon(\V{k})^2\, p_\pm(\V{k})\gamma^0=\\ &=\frac{1}{2}\int_{\bR^3}{\frac{d^3\V{k}}{(2\pi)^4}}\,\mathfrak{g}_\varepsilon(\V{k})^2 \left(\pm\gamma^0-\frac{\V{k}\cdot \boldsymbol{\gamma}}{\omega(\V{k})}+\frac{m}{\omega(\V{k})}\right)
\end{split}
\end{equation*}
Exploiting the rotational invariance of the function $\mathfrak{g}_\varepsilon$, we see that the term involving the matrices $\gamma^i$ vanishes.
\begin{proposition}\label{propositionkerneldiagonal}
	For any $x\in\bR^3$, the doubly-regularized  kernels read
	\begin{equation}\label{diagonalformp}
	P_{\pm}^{2\varepsilon}(x,x)=\frac{1}{2(2\pi)^{4}}\left( m\left\|\frac{\mathfrak{g}_\varepsilon^2}{\omega}\right\|_{\mathcal{L}^1}\!\bI_4\pm\|\mathfrak{g}^2_\varepsilon\|_{\mathcal{L}^1}\gamma^0\right).
	\end{equation}
	The spectrum\footnote{In order to avoid confusion, we stress that here we are not referring to the spectrum of the integral operator $P_\pm^{2\varepsilon}$, but to the spectrum (set of eigenvalues) of the $4\times 4$ matrix $P^{2\varepsilon}_\pm(x,x)$ \textit{at fixed} $x\in\bR^{1,3}$.}  of the matrix $P_\pm^{2\varepsilon}(x,x)$ consists of two elements, both with multiplicity two:
	\vspace{-0.35cm}
	\begin{equation}\label{spectrumP}
	\begin{split}
	\nu^{\pm}_+(\varepsilon)&:= \frac{1}{2(2\pi)^{4}}\left(m\left\|\frac{\mathfrak{g}_\varepsilon^2}{\omega}\right\|_{\mathcal{L}^1}+\|\mathfrak{g}^2_\varepsilon\|_{\mathcal{L}^1} \right)>0\\
	\nu^{\pm}_-(\varepsilon)&:=\frac{1}{2(2\pi)^{4}}\left(m\left\|\frac{\mathfrak{g}_\varepsilon^2}{\omega}\right\|_{\mathcal{L}^1}- \|\mathfrak{g}^2_\varepsilon\|_{\mathcal{L}^1} \right)<0.
	\end{split}
	\end{equation}
In particular, the spectrum does not depend on $x$.
\end{proposition}
\begin{proof}
	The explicit form of the eigenvalues follow directly from the diagonal expression of the kernel. In order to check the sign of $\nu^{-}$, notice first that $1>m\omega^{-1}>0$ on $\bR^3\setminus\{0\}$ and therefore $\|\mathfrak{g}^2\|_{\mathcal{L}^1}\ge \|m\omega^{-1}\mathfrak{g}^2\|_{\mathcal{L}^1}$. We now claim that this inequality is in fact strict. Indeed, from $\|\mathfrak{g}^2\|_{\mathcal{L}^1}= \|m\omega^{-1}\mathfrak{g}^2\|_{\mathcal{L}^1}$ it follows that $\|(1-m\omega^{-1})\mathfrak{g}^2\|_{\mathcal{L}^1}=0$, and therefore $(1-m\omega^{-1})\mathfrak{g}^2=0$ almost everywhere. Now, since $(1-m\omega^{-1})>0$ almost everywhere, this implies that $\mathfrak{g}$ vanishes almost everywhere and therefore $\mathfrak{g}=0$, by continuity. This is not possible.
\end{proof}
\begin{remark} A few remarks follow.\\[-1em]
	\begin{itemize}[leftmargin=2.5em]
		\item[\rm{(i)}] Note that the limit $\varepsilon\searrow 0$ is generally ill-defined because the integrals diverge.\\[-0.5em]
		\item[\rm{(ii)}] In the intuitive picture of $\mathfrak{g}_\varepsilon$ as the characteristic function of $B(0,\varepsilon^{-1})$, the kernel (\ref{diagonalformp}) and the eigenvalues (\ref{spectrumP}) can be calculated explicitly using:
		\begin{equation*}\label{approssimazionecutof}
		\begin{split}
		m\left\|\frac{\mathfrak{g}_\varepsilon^2}{\omega}\right\|_{\mathcal{L}^1}&=2\pi\,m^3\left(\sqrt{\frac{1}{(m\varepsilon)^4}+\frac{1}{(m\varepsilon)^2}}-\sinh^{-1}\left[\frac{1}{m\varepsilon}\right]\right),\\ \|\mathfrak{g}_\varepsilon^2\|_{\mathcal{L}^1}&=\frac{4\pi\,m^3}{3}\frac{1}{(m\varepsilon)^3}.
		\end{split}
		\end{equation*}
		Expanding the first term of in powers of $m\varepsilon$ (compare with Assumption \ref{assumptionepsilon}), the leading-order terms of the kernel and its eigenvalues are, respectively,
		\begin{equation}\label{stime}
		\begin{split}
		&P_{\pm}^{2\varepsilon}(x,x)\cong \frac{1}{2(2\pi)^3}\left(\frac{m}{\varepsilon^2}\,\bI_4\pm \frac{2}{3\,\varepsilon^{3}}\,\gamma^0 \right),\\
		\nu^{\pm}_+(\varepsilon)\cong \frac{1}{2(2\pi)^3}&\left(\frac{m}{\varepsilon^2}+ \frac{2}{3\,\varepsilon^{3}}\right),\quad
		\nu^{\pm}_-(\varepsilon)\cong \frac{1}{2(2\pi)^3}\left(\frac{m}{\varepsilon^2}- \frac{2}{3\,\varepsilon^{3}} \right).
		\end{split}
		\end{equation}
		These approximate identities give some rough intuition on the dependence of the doubly-regularized kernel on the microscopic regularization length. It should be kept in mind, though, that in this work the regularization function $\mathfrak{g}_\varepsilon$ is chosen to be a smooth rapidly decaying function, and not as a brute discontinuous cutoff.
	\end{itemize}
	
\end{remark}

As a conclusion of this section, we point out another result, which will prove important later on. The leading question is:
\begin{center}
	\textit{How does the spectrum of $P_{-}^{2\varepsilon}(x,x)$ deviate from (\ref{spectrumP}) when a set of positive- (negative-)energy physical solutions is added to (removed from) the system?}
\end{center}
The full meaning of this question will become clear later on, once the concept of a vacuum causal fermion system has been introduced. For the moment, let us
state  the following result, whose proof can be found in the appendix.

We denote the usual non-degenerate
indefinite inner product\footnote{An \textit{indefinite inner product} on a complex vector space is a sesquilinear form $\phi$ s.t.$\phi(a,b)=\overline{\phi(b,a)}$.} on spinors by
\begin{equation}\label{spinscalarproduct}
\Sl \psi|\varphi\Sr:= \psi^\dagger \,\gamma^0\,\varphi.\footnote{In physics textbooks the spin inner product is often denoted by $\overline{\psi}\phi$.}
\end{equation}
and refer to it as the \textbf{spin scalar product}.
\begin{proposition}\label{representationP}
	Let $\{u^\pm_n\}_{n\in\bN}\subset \hat{\mathrm{E}}(\hat{P}_\pm(\mathcal{S}_p(\bR^3,\bC^4)))$ be a Hilbert basis of $\sol^\pm$. Then
	\begin{equation}\label{sumrepresentation}
	P_{\pm}^{2\varepsilon}(x,y)=\pm\frac{1}{2\pi}\sum_{n\in\bN}\gR_\varepsilon u^\pm_n(x)\,\Sl\gR_\varepsilon u^\pm_n (y)\,|\,\cdot\Sr
	\end{equation}
	for every couple of spacetime points $x,y\in\bR^{1,3}$. In particular, we have:
	\begin{equation}
	P_{c}^{2\varepsilon}(x,y)=-\frac{1}{2\pi}\sum_{n\in\bN}\gR_\varepsilon u^-_n(x)\,\Sl\gR_\varepsilon u^-_n (y)\,|\,\cdot\Sr-\frac{1}{2\pi}\sum_{n\in\bN}\gR_\varepsilon u^+_n(x)\,\Sl\gR_\varepsilon u^+_n (y)\,|\,\cdot\Sr
	\end{equation}
\end{proposition}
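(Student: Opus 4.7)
The plan is to reduce the identity to the completeness relation of the basis $\{\varphi_n^\pm\}_n$ in momentum space. Since $\hat{\mathrm{E}}_0$ is unitary and intertwines $\sP_\pm$ with $\hat{P}_\pm$, each $u_n^\pm$ can be written as $u_n^\pm=\hat{\mathrm{E}}_0(\varphi_n^\pm)$ with $\{\varphi_n^\pm\}_n$ an orthonormal basis of $\hat{P}_\pm(\scL^2_p(\bR^3,\bC^4))$ contained in $\hat{P}_\pm(\scS_p(\bR^3,\bC^4))$. Combining (\ref{fourierexpansion}) with the momentum-space action of $\gR_\varepsilon$ (Proposition \ref{defiregularization}), I get
$$
\gR_\varepsilon u_n^\pm(x)=\int_{\bR^3}\frac{d^3\V{k}}{(2\pi)^{3/2}}\,\mathfrak{g}_\varepsilon(\V{k})\,\varphi_n^\pm(\V{k})\,e^{-i(\pm\omega(\V{k})t_x-\V{k}\cdot\V{x})}.
$$

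Forming the spin outer product $\gR_\varepsilon u_n^\pm(x)\,\overline{\gR_\varepsilon u_n^\pm(y)}=\gR_\varepsilon u_n^\pm(x)\,\gR_\varepsilon u_n^\pm(y)^\dagger\,\gamma^0$ produces a double momentum integral whose integrand contains the outer product $\varphi_n^\pm(\V{k})\,\varphi_n^\pm(\V{k}')^\dagger$. Summing over $n$ and invoking the completeness relation of the ONB $\{\varphi_n^\pm\}_n$ in $\hat{P}_\pm(\scL^2_p)$, which translates into the distributional kernel identity
$$
\sum_n \varphi_n^\pm(\V{k})\,\varphi_n^\pm(\V{k}')^\dagger \;=\; p_\pm(\V{k})\,\delta^{(3)}(\V{k}-\V{k}')
$$
(because $\hat{P}_\pm$ acts pointwise by multiplication with the matrix $p_\pm(\V{k})$, see Proposition \ref{proporthogonalityp}), collapses the double integral into a single one. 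Using $p_\pm^2=p_\pm$ and $\mathfrak{g}_\varepsilon(\V{k}')\,\delta^{(3)}(\V{k}-\V{k}')=\mathfrak{g}_\varepsilon(\V{k})\,\delta^{(3)}(\V{k}-\V{k}')$, I obtain
$$
\sum_n \gR_\varepsilon u_n^\pm(x)\,\overline{\gR_\varepsilon u_n^\pm(y)} = \int_{\bR^3}\frac{d^3\V{k}}{(2\pi)^3}\,\mathfrak{g}_\varepsilon(\V{k})^2\,p_\pm(\V{k})\,\gamma^0\,e^{-i(\pm\omega(\V{k})(t_x-t_y)-\V{k}\cdot(\V{x}-\V{y}))}.
$$
Comparison with the three-dimensional representation of $\scP_{\pm,\varepsilon^2}(x,y)$ obtained from Definition \ref{rfp2} exactly as in Proposition \ref{rfp} (replacing $\gG_\varepsilon$ by $\gG_\varepsilon^2$, equivalently $\mathfrak{g}_\varepsilon$ by $\mathfrak{g}_\varepsilon^2$) produces the extra factor $\pm 1/(2\pi)$ and the claimed formula. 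The representation of $\scP_{c,\varepsilon^2}$ then follows directly from $\scP_{c,\varepsilon^2}=\scP_{-,\varepsilon^2}-\scP_{+,\varepsilon^2}$ and linearity of the sum.

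The main obstacle is the rigorous justification of the sum-integral interchange, since the completeness identity above is meaningful only as a distribution. I would control each scalar entry of the matrix identity by the Riesz representation theorem: for fixed $x,y\in\bR^4$ and $a,b\in\bC^4$, the linear functionals $\sol\ni u\mapsto a^\dagger\,\gR_\varepsilon u(x)$ and $u\mapsto b^\dagger\,\gR_\varepsilon u(y)$ are continuous by Proposition \ref{approximationpropertiesreg}(ii), hence are represented by vectors $h_{x,a}^\pm,k_{y,b}^\pm\in\sol$; Bessel's inequality and Cauchy--Schwarz then yield absolute convergence of the numerical sum, and the Parseval identity collapses it to $(h_{x,a}^\pm\,|\,k_{y,b}^\pm)_{\sol}$. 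This inner product, transported via $\hat{\mathrm{E}}_0$ to $\scL^2_p$, is precisely the single-integral expression above; varying $a,b$ yields the full matrix identity.
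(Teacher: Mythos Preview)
Your proposal is correct, and the rigorous justification you give in the second paragraph is essentially the paper's approach: the paper sandwiches the matrix identity between arbitrary $a,b\in\bC^4$, recognizes each summand as a product of two $\scL^2_p$ inner products with $\varphi_n$ (your Riesz representers, explicitly identified as $\mathfrak{g}_\varepsilon\, p_\pm\, a\, e^{i\eta(x,\cdot)}$ and $\mathfrak{g}_\varepsilon\, p_\pm\, \gamma^0 b\, e^{i\eta(y,\cdot)}$), and applies Parseval's identity for the Hilbert basis $\{\varphi_n\}$. The only difference is cosmetic: the paper writes down the $\scL^2_p$ vectors directly rather than invoking Riesz abstractly, and thereby bypasses the heuristic distributional-kernel step altogether.
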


Let us go back to the fermionic projector onto the negative spectrum. The addition of positive-energy (removal of negative-energy) physical solutions consists in the addition (subtraction) of terms of the form 
$$
-(2\pi)^{-1}\,\gR_\varepsilon u (x)\,\Sl\gR_\varepsilon u(x)\,|\,\cdot\Sr
$$ 
in (\ref{sumrepresentation}) (See Section \ref{sectioninterpretation}, in particular Proposition \ref{representationgenerickernel}).

\begin{proposition}\label{propositionperturb}
	Let $\{e_{i}^\pm\}_{i=1,\dots, N_\pm}$ be finite orthonormal sets  in $\sol^\pm$. Then the eigenvalues ${\nu}$ of the matrix 
	$$
	P_{-}^{2\varepsilon}(x,x) + \Delta P(x,x),\quad  
	$$
	with the perturbation
	$$
	\Delta P(x,x):= -\frac{1}{2\pi}\sum_{i=1}^{N_+} \gR_\varepsilon e_i^+(x)\,\Sl\gR_\varepsilon e_i^+(x)\,|\,\cdot\Sr+\frac{1}{2\pi}\sum_{i=1}^{N_-} \gR_\varepsilon e_i^-(x)\,\Sl\gR_\varepsilon e_i^-(x)\,|\,\cdot\Sr,
	$$
	fulfill the constraint\footnote{Given a matrix $A\in \sM(n,\bC)$ we define the norm $\|A\|_2:=\sup_{x\neq 0} |Ax|/|x|$.}
	$$
	\min\{|{\nu}-\nu_{-}^+(\varepsilon)|,|{\nu}-\nu_{-}^-(\varepsilon)|\}\le \|\Delta P(x,x)\|_2.
	$$
\end{proposition}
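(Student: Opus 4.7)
The plan is to recognize this as a Bauer–Fike-type eigenvalue perturbation statement for Hermitian matrices and derive it with the standard resolvent argument. The unperturbed matrix $A:=\scP_{-,\varepsilon^2}(x,x)$ is explicitly computed in Proposition \ref{propositionkerneldiagonal} as a real-linear combination of $\bI_4$ and $\gamma^0$, hence Hermitian with spectrum $\sigma(A)=\{\lambda_{-,\varepsilon^2}^{+},\lambda_{-,\varepsilon^2}^{-}\}$ (each of multiplicity two). So in the perturbed matrix $A+\Delta\scP(x,x)$, the base operator is normal in the Euclidean sense of $\bC^4$, which is exactly what Bauer–Fike needs.

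Concretely, I would fix any eigenvalue $\lambda$ of $A+\Delta\scP(x,x)$. If $\lambda\in\sigma(A)$ the claim is trivial, so assume $\lambda\notin\sigma(A)$, whence $A-\lambda\bI_4$ is invertible. Picking an associated eigenvector $v\neq 0$, the identity $(A-\lambda\bI_4)v=-\Delta\scP(x,x)v$ yields $v=-(A-\lambda\bI_4)^{-1}\Delta\scP(x,x)v$, and taking Euclidean norms gives
\begin{equation*}
1\le \|(A-\lambda\bI_4)^{-1}\|_{2}\,\|\Delta\scP(x,x)\|_{2}.
\end{equation*}
Because $A$ is Hermitian, the spectral theorem implies $\|(A-\lambda\bI_4)^{-1}\|_{2}=1/\mathrm{dist}(\lambda,\sigma(A))=\bigl(\min\{|\lambda-\lambda_{-,\varepsilon^2}^{+}|,|\lambda-\lambda_{-,\varepsilon^2}^{-}|\}\bigr)^{-1}$, and rearranging this inequality is precisely the claimed bound.

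The only point that truly needs checking is the normality of $A$, and that reduces to observing that $\gamma^0$ is Hermitian in the Dirac representation used throughout the paper, so the closed-form expression from Proposition \ref{propositionkerneldiagonal} is manifestly self-adjoint with respect to the Euclidean inner product. I would make this explicit at the start of the proof, since the rank-one summands $\gR_\varepsilon e_i^\pm(x)\overline{\gR_\varepsilon e_i^\pm(x)}=\gR_\varepsilon e_i^\pm(x)\gR_\varepsilon e_i^\pm(x)^\dagger\gamma^0$ making up $\Delta\scP(x,x)$ are \emph{not} Euclidean-Hermitian individually, which could confuse the reader about whether normality is actually available; all that matters is that $A$ itself is Hermitian.

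I expect no serious obstacle: the result is a direct application of a classical inequality and does not depend on any fine structural information about the solutions $e_i^\pm$ beyond what enters through $\|\Delta\scP(x,x)\|_2$. The only minor delicate point is handling the case $\lambda\in\sigma(A)$ separately (done in one line) and being explicit about the use of the Euclidean operator norm $\|\cdot\|_2$, since the paper works with several inner products on $\bC^4$.
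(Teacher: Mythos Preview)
Your proposal is correct and follows essentially the same approach as the paper: both recognize that $\scP_{-,\varepsilon^2}(x,x)$ is Hermitian (the paper notes it is real and diagonal) with spectrum $\{\lambda_{-,\varepsilon^2}^+,\lambda_{-,\varepsilon^2}^-\}$ and then apply the Bauer--Fike theorem. The only difference is that the paper simply cites Bauer--Fike (Theorem IIIa in \cite{BF}), whereas you spell out the standard resolvent argument that proves it.
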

\begin{proof}
	Exploiting identity (\ref{diagonalformp}) we see that the matrix $P_{-}^{2\varepsilon}(x,x)$ is symmetric. More precisely it is real and diagonal with two non-vanishing eigenvalues $\nu_{-}^\pm(\varepsilon)$ both with multiplicity two. The thesis follows then from Bauer-Fike Theorem (see Theorem IIIa in \cite{BF}).
\end{proof}

\begin{notation}
	For simplicity of notation, we will drop the subscript $-$ from  $P_{-}$ from now on, as we will focus exclusively on the negative spectrum.
\end{notation}

%
%
%
%
%

\section{The Emergence of Causal Fermion Systems}\label{sectionCFS}

Before entering the construction of causal fermion systems in Minkowski space, we need to introduce some preliminary definitions and results.

\subsection{The General Mathematical Set-Up of the Theory}\label{sectiondefcfs}

In this section we introduce the basic set-up of the theory.

\begin{definition}
	Let $(\mathscr{H},\langle\cdot|\cdot\rangle)$ be a Hilbert space. We denote by $\mathscr{F}(\mathscr{H})$ the family of bounded self-adjoint operators on $\mathscr{H}$ which have - counting  multiplicities - at most two positive and two negative eigenvalues. 
\end{definition}
Note that the set $\mathscr{F}(\mathscr{H})$ does not inherit any linear structure from $\scB(\scH)$: the only allowed operation is multiplication by real numbers.



\begin{theorem}\label{closednessF}
	The family $\mathscr{F}(\mathscr{H})$ is a closed double-cone\footnote{A double-cone of a linear space $V$ is a subset which is closed under multiplication by real scalars.} of $\mathcal{B}(\mathscr{H})$.
\end{theorem}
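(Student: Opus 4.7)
The statement splits naturally into two claims: closure under multiplication by real scalars, and closure in the operator-norm topology. The first is essentially bookkeeping: for $A \in \scF(\scH)$ and $\lambda \in \bR$, the operator $\lambda A$ is bounded self-adjoint, and its spectral projections onto $(0,\infty)$ and $(-\infty,0)$ coincide with those of $A$ when $\lambda>0$, with swapped roles when $\lambda<0$, and are trivial when $\lambda=0$; in every case each has rank at most $2$.

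For closedness, I would take $A_n \to A$ in operator norm with $A_n \in \scF(\scH)$. Self-adjointness passes to the limit because $\|A-A^*\| \le 2\|A-A_n\| \to 0$. The eigenvalue count I would control via a variational criterion, following directly from the spectral theorem: for a bounded self-adjoint $B$, one has $\dim E_B((0,\infty))\scH \ge k$ iff there exists a $k$-dimensional subspace $V \subset \scH$ with $\langle Bv,v\rangle > 0$ for all $v \in V\setminus\{0\}$. Assuming by contradiction that $A$ has at least three positive eigenvalues (the negative case is symmetric), I fix such a $V$ of dimension $3$ for $A$ and set $c := \inf_{v\in V,\;\|v\|=1} \langle Av,v\rangle > 0$ (the infimum is attained by compactness of the unit sphere in the finite-dimensional space $V$). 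For $n$ large enough that $\|A - A_n\| < c$, the estimate
$$
\langle A_n v, v\rangle \ge \langle Av, v\rangle - \|A - A_n\| \ge c - \|A - A_n\| > 0 \quad\text{for all unit } v \in V
$$
combined with the criterion forces $\dim E_{A_n}((0,\infty))\scH \ge 3$, contradicting $A_n \in \scF(\scH)$. Hence $A \in \scF(\scH)$.

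The only substantive step is the nontrivial direction of the variational criterion: if $B = B^*$ is strictly positive on a $k$-dimensional subspace $V$, then the positive spectral subspace has dimension at least $k$. I would prove this by contradiction using the spectral theorem: if $P := E_B((0,\infty))$ had rank strictly less than $k = \dim V$, then $P$ restricted to $V$ would fail to be injective, so there would exist $v \in V \setminus \{0\}$ with $Pv = 0$; such a $v$ lies in the range of $E_B((-\infty, 0])$, which gives $\langle Bv, v\rangle \le 0$ and contradicts the defining property of $V$. This is the main technical input, though it is standard; everything else reduces to a routine perturbation argument for quadratic forms on a finite-dimensional subspace, and the analogous argument on $-A$ and $-A_n$ takes care of the bound on negative eigenvalues.
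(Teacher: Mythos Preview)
Your proof is correct and takes a genuinely different route from the paper. The paper first invokes the closedness (in the weak operator topology) of the set of operators of rank at most four to conclude that the limit has finite rank, and then appeals to Kato's perturbation theory for isolated eigenvalues (Chapter~V, \S4.3 of \cite{kato}) to argue that if the limit had three or more positive eigenvalues then so would $A_n$ for $n$ large, using the stability of total algebraic multiplicity inside small disjoint intervals around the isolated eigenvalues. Your argument is more elementary and self-contained: the variational (min-max) characterization of $\dim E_B((0,\infty))\scH$ lets you bypass both the rank-closedness step and the external reference, and it handles the limit $A$ directly as a general bounded self-adjoint operator without first establishing that it has finite rank. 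The paper's approach has the advantage of plugging into a standard perturbation-theoretic framework, which may be convenient if one later needs finer spectral information; your approach is cleaner for this particular statement and requires nothing beyond the spectral theorem. One small point of phrasing: when you write ``$A$ has at least three positive eigenvalues'' you really mean $\dim E_A((0,\infty))\scH\ge 3$, which is exactly the negation (on the positive side) of membership in $\scF(\scH)$; your variational criterion already covers the possibility of continuous spectrum, so nothing is lost.
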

\begin{proof}\footnote{The proof of this theorem is based on  \cite[Chapter V Section 4.3]{kato}. I would like to thank Christoph Langer for pointing this out.  For a deeper analysis on this topic and its connections to theory of causal fermion systems the interested reader is referred to his forthcoming PhD thesis \cite{christoph}}
	The fact that $\scF(\mathscr{H})$ is a double-cone is obvious, as $\sigma(\lambda F)=\lambda\,\sigma(F)$ for any $\lambda\in\bR$ and $F\in \scF(\mathscr{H})$. So, let us pass to the proof of closedness.
	First, notice that the set of operators with rank at most four is closed in the weak operator topology and therefore, if we have any sequence $F_n\in \scF(\mathscr{H})$ which converges to $F\in\scB(\mathscr{H})$ in the uniform topology, then $\ran F\le 4$. Moreover, $F$ is self-adjoint, it being uniform limit of self-adjoint operators.
	Now, let $0<\epsilon<1$ be arbitrarily small and take $N\in\bN$ such that $\|F-F_n\|<\epsilon$ for any $n\ge N$. If we define $T:=F$, $A:=F_n-F$ and $S=T+A=F_n$, then we can follow the discussion in Chapter V, Section 4.3 of \cite{kato}. First of all, notice that the operator $A$ is $T$-bounded with vanishing $T$-bound:
	$$
	\|Au\|=\|(F_n-F)u\|\le \|F_n-F\|\|u\|<\epsilon\|u\|+0\|Tu\|\quad \mbox{for all }u\in\mathscr{H}.
	$$
	In order to match with the notation in \cite{kato}, we prefer not to repeat the eigenvalues according to their multiplicity. Therefore, we assume by contradiction that there exists $0<k\le 4$ strictly positive eigenvalues $\lambda_1,\dots,\lambda_k$ of $T=F$ with multiplicities $m_i$ such that $m_1+\cdots+m_k\ge 3$. Let $d_i>0$ denote the isolation distance (defined in \cite{kato}) for the eigenvalue $\lambda_i$. Choosing $\epsilon$ sufficiently small, we see that inequality (4.11) in \cite{kato} is trivially satisfied in our case (notice that the distance $d_i$ depends only on $F$ and not on $\epsilon$). The discussion therein shows that the total multiplicity of the eigenvalues of $S=F_n$ which lies in the interval $(\lambda_i-d_i/2,\lambda_i+d_i/2)$ is exactly $m_i$. Moreover, by definition of isolation distance, the intervals $(\lambda_i-d_i/2,\lambda_i+d_i/2)$ are disjoint from each other and $0$ does not belong to any of them, it being an eigenvalue of $T=F$, too. Putting all together we see that the total multiplicity of the eigenvalues of $S=F_n$ in the positive axis $(0,\infty)$ is at least three, which is impossible by definition of $\scF$. The case of negative eigenvalues is analogous.
\end{proof}

We are ready to give the fundamental definition. 

\begin{definition}\label{definitioncfs}
	A \textbf{causal fermion system} is a couple $(\mathscr{H},\varrho)$ where $\mathscr{H}$ is a Hilbert space and $\varrho$ is a Borel measure on $\scF(\mathscr{H})$.
\end{definition}

\begin{remark}
	The support of the measure $\varrho$ plays an important role in the theory of causal fermion systems, in that it is believed to describe  physical spacetime when the measure arises as a minimizer of a specific action (for more information see the \textit{causal action principle} in \cite{FF}). In this paper we will not enter the details of this, we will simply show that there exists a canonical example of measure whose support realizes a smooth manifold diffeomorphic to Minkowski space.
\end{remark}

In the following sections we will deal with causal fermion systems constructed out of closed subspaces of $\sol$. As will become clear later, it is useful to study the relations which occur among such objects. More precisely: given two Hilbert spaces $\mathscr{H}_0\subset\mathscr{H}_1$, how are the corresponding spaces $\scF$ related to each other? Notice that we can always write $\mathscr{H}_1$ as $\mathscr{H}_0\oplus \mathscr{H}_0^\perp$. Bearing this in mind, we can state and prove the following result.

\begin{lemma}\label{lemma1}
	Let $\mathscr{H}_0$ be a closed subspace of a Hilbert space $\mathscr{H}_1$. Then the function
	\begin{equation}\label{phi}
	\iota:\scF(\mathscr{H}_0)\ni x\mapsto x\oplus 0\in\scF(\mathscr{H}_1)
	\end{equation}
	is a well-defined, one-to-one, norm-preserving (thus continuous) closed map. 
	In particular, by means of this identification, the set $\scF(\mathscr{H}_0)$ defines a closed subset of $\scF(\mathscr{H}_1)$ and the map $\iota$ is a homeomorphism onto its image. 
\end{lemma}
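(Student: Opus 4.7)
The plan is to verify each listed property of $\iota$ in turn and then deduce the topological conclusions. For well-definedness, closedness of $\mathscr{H}_0$ yields the orthogonal decomposition $\mathscr{H}_1 = \mathscr{H}_0 \oplus \mathscr{H}_0^\perp$, so $x \oplus 0$ is a bounded self-adjoint operator on $\mathscr{H}_1$ with $\|\iota(x)\| = \|x\|$. The key observation is that the block-diagonal structure makes the spectrum of $x \oplus 0$ equal to $\sigma(x) \cup \{0\}$, with the algebraic multiplicities of all nonzero eigenvalues unchanged (only the kernel is enlarged by $\mathscr{H}_0^\perp$). Thus the constraint of at most two positive and two negative eigenvalues is preserved, so $\iota(x) \in \scF(\mathscr{H}_1)$. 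Injectivity is immediate since restricting $\iota(x)$ to $\mathscr{H}_0$ recovers $x$, and the norm identity makes $\iota$ an isometric embedding; in particular $\iota$ and its inverse on the image are continuous, which already gives a homeomorphism onto the image endowed with the subspace topology.

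The remaining claims reduce to showing that $\iota(\scF(\mathscr{H}_0))$ is closed in $\scF(\mathscr{H}_1)$. Letting $P_0$ denote the orthogonal projection onto $\mathscr{H}_0$, I would establish the characterization
\[
\iota(\scF(\mathscr{H}_0)) = \{T \in \scF(\mathscr{H}_1) : T(I - P_0) = 0\}.
\]
The inclusion $\subset$ is a direct check, while for $\supset$ one uses self-adjointness: if $T = T^*$ and $T(I-P_0) = 0$, then also $(I-P_0)T = (T(I-P_0))^* = 0$, which forces $T = P_0 T P_0$, so $T$ equals $\iota$ applied to its restriction to $\mathscr{H}_0$ (and that restriction lies in $\scF(\mathscr{H}_0)$ because the nonzero spectrum is unchanged). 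Now right multiplication by the fixed bounded operator $I - P_0$ is norm-continuous on $\scB(\mathscr{H}_1)$, so its zero set is norm-closed; intersecting with the closed cone $\scF(\mathscr{H}_1)$ from Theorem \ref{closednessF} gives the desired closedness.

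To conclude, combine the isometric embedding property with the closedness of the image: for any closed $C \subset \scF(\mathscr{H}_0)$, the set $\iota(C)$ is closed in $\iota(\scF(\mathscr{H}_0))$ by isometry, and $\iota(\scF(\mathscr{H}_0))$ is closed in $\scF(\mathscr{H}_1)$, so $\iota(C)$ is closed in $\scF(\mathscr{H}_1)$. Hence $\iota$ is a closed map, and the identification of $\scF(\mathscr{H}_0)$ with a closed subset of $\scF(\mathscr{H}_1)$ is exactly what the statement asserts. The only step that genuinely requires care is the image characterization, since it relies on promoting the one-sided relation $T(I-P_0)=0$ to full block-diagonality via self-adjointness; everything else is routine bookkeeping.
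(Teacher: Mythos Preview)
Your proof is correct and takes a genuinely different route from the paper. The paper establishes the closed-map property directly by a sequential argument: given a closed $C\subset\scF(\mathscr{H}_0)$ and a sequence $\iota(x_n)\to T$ in $\scF(\mathscr{H}_1)$, the isometry forces $\{x_n\}$ to be Cauchy, completeness of the closed set $C$ (inside the Banach space $\scB(\mathscr{H}_0)$) supplies a limit $x\in C$, and continuity gives $T=\iota(x)$. You instead first pin down the image algebraically as $\{T\in\scF(\mathscr{H}_1):T(I-P_0)=0\}$, which is visibly closed as the intersection of $\scF(\mathscr{H}_1)$ with the zero set of a norm-continuous linear map, and then combine ``homeomorphism onto image'' with ``image closed in the ambient space'' to conclude that $\iota$ sends closed sets to closed sets. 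The paper's argument is shorter and avoids introducing the projection $P_0$; your argument has the advantage of producing an explicit, intrinsic description of $\iota(\scF(\mathscr{H}_0))$ inside $\scF(\mathscr{H}_1)$, and it makes transparent why self-adjointness is the crucial ingredient (it is what upgrades $T(I-P_0)=0$ to full block-diagonality).
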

\begin{proof}
	It is not difficult to prove that the map is one-to-one and norm-preserving. In particular the map is continuous with respect to the uniform topology.
	Now we prove that the map is also closed. 
	Suppose first that $C$ is a closed subset of $\scF(\mathscr{H}_0)$ in the uniform topology and let  $\{\iota(x_n)\}\subset \iota(C)$ be any sequence converging to $T\in\scF(\mathscr{H}_1)$ in the uniform topology. It being convergent, the sequence $\{\iota(x_n) \}$ is of Cauchy type within $\scF(\mathscr{H}_1)$ and therefore the same holds true also for $\{x_n \}$, the map $\iota$ preserving the norm. Now, note that $\scF(\mathscr{H}_0)$ is complete in the uniform topology, it being a closed subset of $\scB(\mathscr{H}_0)$ (see Theorem \ref{closednessF}). Therefore, also $C$ is complete in the uniform topology, it being closed in $\scF(\mathscr{H}_0)$. As a consequence, there must exist some $x\in C$ such that $x_n\to x$. Since the map $\iota$ is continuous we get $T=\lim_{n\to\infty}\iota(x_n)=\iota(x)$ and so $T\in \iota(C)$, i.e. $\iota(C)$ is closed. The last point follows immediately from the fact that $\iota$ is injective, continuous and closed.
\end{proof}


Lemma \ref{lemma1} suggests that one can actually focus to the larger space $\scF(\mathscr{H}_1)$. This is indeed the case even in presence of a Borel measure, as we are going to explain. 
As already mentioned, not every operator in $\scF$ is of physical interest: what is relevant for the theory is what is contained in the support of some given Borel measure and this information is not loss when the identifications of Lemma \ref{lemma1} are taken into account. 

More precisely, consider a causal fermion system on the Hilbert spaces $\mathscr{H}_0$ with a measure $\varrho_0$ defined on the Borel $\sigma$-algebra of $\scF(\mathscr{H}_0)$. Suppose that $\mathscr{H}_0$ identifies itself as a Hilbert subspace of $\mathscr{H}_1$. We know by Lemma \ref{lemma1} that $\scF(\mathscr{H}_0)$ can be embedded within $\scF(\mathscr{H}_1)$, but what about the measure? How can we read $\varrho_0$ as a measure on $\scF(\mathscr{H}_1)$? The most natural thing to do is to push-forward it by means of the map $\iota$:
\begin{equation}\label{inclusion}
\iota_*\varrho_0: \mathfrak{Bor}(\scF(\mathscr{H}_1))\ni \Omega\mapsto \varrho_0(\iota^{-1}(\Omega))\in [0,\infty)
\end{equation}
Notice that this is well-defined,  as the map $\iota$ is continuous and therefore measurable.

\begin{lemma}\label{lemmasupp}
	$\iota(\supp\varrho_0)=\supp(\iota_*\varrho_0)$. 
\end{lemma}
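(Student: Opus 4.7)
The plan is to prove the two inclusions separately, using the definition of the support of a Borel measure (the complement of the largest open null set, equivalently the set of points all of whose open neighborhoods have positive measure) and the fact established in Lemma \ref{lemma1} that $\iota$ is an injective continuous closed map, hence a homeomorphism onto the closed subset $\iota(\scF(\mathscr{H}_0))\subset\scF(\mathscr{H}_1)$.

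For the inclusion $\iota(\supp\varrho_0)\subseteq\supp(\iota_*\varrho_0)$, I would take $x\in\supp\varrho_0$ and any open neighborhood $V\subset\scF(\mathscr{H}_1)$ of $\iota(x)$. By continuity of $\iota$, the preimage $\iota^{-1}(V)$ is an open neighborhood of $x$ in $\scF(\mathscr{H}_0)$, so $\varrho_0(\iota^{-1}(V))>0$ by the defining property of the support. By \eqref{inclusion} this means $(\iota_*\varrho_0)(V)>0$, so $\iota(x)\in\supp(\iota_*\varrho_0)$.

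For the reverse inclusion, which is the more delicate step, I would first exploit closedness: let $y\in\supp(\iota_*\varrho_0)$. Since $\iota(\scF(\mathscr{H}_0))$ is closed in $\scF(\mathscr{H}_1)$, its complement $V_0$ is open; if $y\notin\iota(\scF(\mathscr{H}_0))$, then $V_0$ is an open neighborhood of $y$ with $\iota^{-1}(V_0)=\emptyset$ and hence $(\iota_*\varrho_0)(V_0)=0$, contradicting $y\in\supp(\iota_*\varrho_0)$. Thus $y=\iota(x)$ for a unique $x\in\scF(\mathscr{H}_0)$, and it remains to show $x\in\supp\varrho_0$.

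For this last step, I would use that $\iota$ is a homeomorphism onto its image. Given an arbitrary open neighborhood $U\subset\scF(\mathscr{H}_0)$ of $x$, there exists an open $V\subset\scF(\mathscr{H}_1)$ with $V\cap\iota(\scF(\mathscr{H}_0))=\iota(U)$, and this $V$ is an open neighborhood of $y=\iota(x)$. By injectivity of $\iota$ we have $\iota^{-1}(V)=U$, so
\[
\varrho_0(U)=\varrho_0(\iota^{-1}(V))=(\iota_*\varrho_0)(V)>0,
\]
which shows $x\in\supp\varrho_0$ and hence $y\in\iota(\supp\varrho_0)$. The main (mild) obstacle is the second inclusion, where one must combine the closedness of $\iota(\scF(\mathscr{H}_0))$ (to land in the image at all) with the homeomorphism property (to lift arbitrary open neighborhoods of $x$ to open neighborhoods of $\iota(x)$); both are furnished by Lemma \ref{lemma1}, so no new ingredient is required.
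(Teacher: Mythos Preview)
Your proof is correct and uses the same ingredients as the paper (continuity, injectivity, and closedness of $\iota$ from Lemma~\ref{lemma1}), only organized slightly differently: you argue both inclusions pointwise, whereas the paper handles $\supp(\iota_*\varrho_0)\subset\iota(\supp\varrho_0)$ globally by observing directly that $\scF(\mathscr{H}_1)\setminus\iota(\supp\varrho_0)$ is an open $\iota_*\varrho_0$-null set. The two arguments are essentially equivalent.
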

\begin{proof}
	Thanks to Lemma \ref{lemma1}, the set $\iota(\supp\varrho_0)$ is closed within $\scF(\mathscr{H})$. Equivalently, $\scF(\mathscr{H})\setminus\iota(\supp\varrho_0)$ is open and
	\begin{equation*}
	\begin{split}
	\iota_*\varrho_0(\scF(\mathscr{H})\setminus\iota(\supp\varrho_0))&=\varrho_0(\iota^{-1}(\scF(\mathscr{H})\setminus\iota(\supp\varrho_0)))=\\
	&=\varrho_0(\iota^{-1}(\iota(\scF(\mathscr{H}_0))\setminus\iota(\supp\varrho_0)))=\\
	&=\varrho_0(\scF(\mathscr{H}_0)\setminus\supp\varrho_0))=0,
	\end{split}
	\end{equation*}
	where we used the fact that $\iota$ is injective.
	By definition of support of a measure, this means that all the points belonging to $\scF(\mathscr{H})\setminus \iota(\supp\varrho_0)$ cannot belong to the support of $\iota_*\varrho_0$, for the given set is open and has vanishing measure. More precisely, we have
	$
	\scF(\mathscr{H})\setminus\iota(\supp\varrho_0)\subset \scF(\mathscr{H})\setminus \supp\iota_*\varrho_0
	$
	or, equivalently, $\supp\iota_*\varrho_0\subset \iota(\supp\varrho_0)$. 
	On the contrary, take any $\iota(x)\in \iota(\supp\varrho_0)$ and let $A\subset \scF(\mathscr{H})$ be any open set such that $\iota(x)\in A$. Since the map $\iota$ is continuous, the set $\iota^{-1}(A)$ must be an open neighborhood of $x$ in $\scF(\mathscr{H}_0)$. Thus
	$
	\iota_*\varrho_0(A)=\varrho_0(\iota^{-1}(A))\neq 0.
	$
	Since the set $A$ is arbitrary, it must hold that $\iota(x)\in \supp\iota_*\varrho_0$.	
\end{proof}

This results makes the identification complete: also the measure can be lifted to a measure on the larger space and no information is lost. 
\begin{center}
\textit{In what follows, we will always make use of these identifications, if not stated otherwise.}
\end{center}

\subsection{The Emergence of Causal Fermion Systems}\label{emergencecfs}

We are now ready to see  how these structures arise in relativistic quantum mechanics, when a regularization is introduced.  In words, the fundamental result reads:
\begin{center}
	\textit{Every ensemble of physical solutions of (\ref{solutiondirac}) gives rise to a representation of  spacetime in terms of finite-rank self-adjoint operators.}
\end{center}
By \textit{ensemble of physical solutions} we mean a Hilbert spaces $(\scH,\langle \cdot|\cdot\rangle )$, where
$$
\scH\ \text{ a closed subspace of }\ \scH_m\quad\text{and}\quad \langle\cdot|\cdot\rangle:=(\cdot|\cdot)_m\!\!\restriction\!_{\scH\times \scH}.
$$
A \textit{representation} of spacetime points in terms of operators is then realized as follows.

\begin{theorem}\label{teoremaesistenzaF}
	Let $(\mathscr{H},\langle\cdot|\cdot\rangle)$ be an Hilbert subspace of $\sol$. Then, for every $x\in\bR^{1,3}$ there exists a unique operator
	$\rF^\varepsilon(x)\in\scF(\mathscr{H})$ such that
	\begin{equation}\label{defiF}
	\langle u|\rF^\varepsilon(x)v\rangle =-\Sl\gR_\varepsilon u(x)|\gR_\varepsilon v(x)\Sr \quad \mbox{for all }u,v\in\mathscr{H}.
	\end{equation}
	Moreover, the function $\rF^\varepsilon:\bR^{1,3}\ni x\mapsto \rF^\varepsilon(x)\in\scF(\mathscr{H})$ is continuous.
\end{theorem}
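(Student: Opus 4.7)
The plan is to use the Riesz representation theorem to convert the given sesquilinear form into a bounded operator, and then to exploit the indefinite signature of $\gamma^0$ to control the spectrum.

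\medskip

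\textbf{Existence and uniqueness.} By Proposition \ref{approximationpropertiesreg}(ii), the pointwise evaluation
\begin{equation*}
\phi_x:\mathscr{H}\to\bC^4,\quad \phi_x(u):=\gR_\varepsilon u(x)
\end{equation*}
is a bounded linear operator with $\|\phi_x\|\le\pi\varepsilon^{5/2}\|h_\varepsilon\|_\infty$ uniformly in $x$. Let $\phi_x^*:\bC^4\to\mathscr{H}$ denote its Hilbert adjoint, and set $\rF^\varepsilon(x):=-\phi_x^*\gamma^0\phi_x\in\scB(\mathscr{H})$. Then
\begin{equation*}
(u\,|\,\rF^\varepsilon(x)v)=-\phi_x(u)^\dagger\gamma^0\phi_x(v)=-\overline{\gR_\varepsilon u(x)}\,\gR_\varepsilon v(x),
\end{equation*}
and $\rF^\varepsilon(x)^*=\rF^\varepsilon(x)$ because $\gamma^0$ is Hermitian. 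Uniqueness follows from the polarization identity: a bounded self-adjoint operator is determined by its associated quadratic form.

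\medskip

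\textbf{Spectral constraint.} The inclusion $\ran\rF^\varepsilon(x)\subset\ran\phi_x^*$ forces $\rF^\varepsilon(x)$ to have rank at most four. To bound the number of positive eigenvalues, suppose for contradiction that there exist three orthonormal eigenvectors $v_1,v_2,v_3\in\mathscr{H}$ with strictly positive eigenvalues. On $V:=\mathrm{span}(v_1,v_2,v_3)$ the operator $\rF^\varepsilon(x)$ is positive definite, so for every nonzero $v\in V$
\begin{equation*}
0<(v\,|\,\rF^\varepsilon(x)v)=\phi_x(v)^\dagger(-\gamma^0)\phi_x(v).
\end{equation*}
In particular, $\phi_x|_V$ is injective, so $\phi_x(V)$ is a three-dimensional subspace of $\bC^4$ on which the matrix $-\gamma^0$ is positive definite. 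Since $\gamma^0$ has eigenvalues $\pm 1$ each with multiplicity two, Sylvester's law of inertia yields a contradiction. The analogous argument with negative eigenvalues gives the second bound, hence $\rF^\varepsilon(x)\in\scF(\mathscr{H})$.

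\medskip

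\textbf{Continuity.} From the decomposition
\begin{equation*}
\rF^\varepsilon(x)-\rF^\varepsilon(y)=-(\phi_x^*-\phi_y^*)\gamma^0\phi_x-\phi_y^*\gamma^0(\phi_x-\phi_y)
\end{equation*}
and the uniform bound on $\|\phi_x\|$, it suffices to prove that $\|\phi_x-\phi_y\|\to 0$ as $y\to x$. Using $\gR_\varepsilon u=u*h_\varepsilon$ from Theorem \ref{teoremaformaregol}, choosing $T>0$ so that $R_T\supset B(x,\varepsilon)\cup B(y,\varepsilon)$, and applying Cauchy--Schwarz together with Lemma \ref{lemmaconverging2},
\begin{equation*}
|\phi_x(u)-\phi_y(u)|\le\|h_\varepsilon(x-\cdot)-h_\varepsilon(y-\cdot)\|_{\scL^2}\,\|u|_{R_T}\|_{\scL^2}\le\sqrt{2T}\,\|h_\varepsilon(x-\cdot)-h_\varepsilon(y-\cdot)\|_{\scL^2}\,\|u\|.
\end{equation*}
The right-hand side tends to zero as $y\to x$ by uniform continuity of the compactly supported smooth function $h_\varepsilon$.

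\medskip

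\textbf{Main obstacle.} The delicate step is the signature bound: one must translate a property of operators on the possibly infinite-dimensional space $\mathscr{H}$ into a statement about the fixed $4\times 4$ indefinite form $-\gamma^0$. The factorization $\rF^\varepsilon(x)=-\phi_x^*\gamma^0\phi_x$ together with the injectivity of $\phi_x$ restricted to a hypothetical three-dimensional positive spectral subspace is what allows Sylvester's law of inertia to close the argument; everything else is essentially a Riesz-type computation.
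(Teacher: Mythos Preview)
Your proof is correct and complete. The paper itself does not spell out the argument but defers existence and continuity to \cite{FF}, Sections~1.2.2--1.2.3, and dispatches uniqueness in one line via non-degeneracy of the inner product; what you have written is essentially the standard construction one would find there. The factorization $\rF^\varepsilon(x)=-\phi_x^*\gamma^0\phi_x$ via the bounded evaluation map, the Sylvester-type signature count, and the mollifier-based continuity estimate are exactly the expected ingredients, and your use of Proposition~\ref{approximationpropertiesreg}(ii), Theorem~\ref{teoremaformaregol}, and Lemma~\ref{lemmaconverging2} is legitimate since all of them precede the statement in the paper.
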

\begin{proof}
	The proof of  existence and  continuity can be found in \cite{FF}, respectively in Sections 1.2.2 and 1.2.3. The uniqueness follows directly from the arbitrariness of $u,v$ and the non-degeneracy of the scalar product.
\end{proof}

\begin{remark}
	The operator $\rF^\varepsilon(x)$ gives information on the densities and correlations of the physical solutions at the spacetime point $x\in\bR^{1,3}$. For this reason, the operator $\rF^\varepsilon(x)$ is called the \textbf{local correlation operator}, while the function $\rF^\varepsilon$ is referred to as the \textbf{local correlation function.}
\end{remark}

The function $\rF^\varepsilon$ provides a representation of spacetime in terms of operators. As will become clear later, each $\rF^\varepsilon(x)$ collects in its image a few distinguished physical solutions which are relevant at the spacetime point $x$.  We will see that, for specific closed subspaces $\mathscr{H}$, the corresponding local correlation functions are even homeomorphisms onto their image $\rF^\varepsilon(\bR^{1,3})$. \textit{In such a way,  spacetime is realized within $\scF$ not only as a set, but also as a differentiable manifold.}

In order to read everything in terms of causal fermion systems, it is necessary to single out suitable measures $\varrho$ on $\scF$ which realize $\rF^\varepsilon(\bR^{1,3})$ as their support.
The most natural choice for this measure is the push-forward of the Lebesgue-Borel\footnote{By Lebesgue-Borel measure we mean the restriction of the Lebesgue measure to $\bR^{1,3}$ to the Borel sets.} measure~$\mu$ on $\bR^{1,3}$ to $\scF$ through $\rF^\varepsilon$:
$$
(\rF^\varepsilon)_*\mu:  \mathfrak{Bor}(\scF(\mathscr{H}))\ni \Omega\mapsto \mu(({\rF^\varepsilon})^{-1}(\Omega))\in[0,\infty)
$$ 
Notice that the function $\rF^\varepsilon$ is continuous and therefore the push-forward is well-defined. 
\begin{definition}\label{defCFS}
	The pair $(\mathscr{H},(\rF^\varepsilon)_*\mu)$ is called the \textbf{(regularized) causal fermion system associated with} $\mathscr{H}$.
\end{definition}

As will be discussed at the end of this paper (see Proposition \ref{remarkunicitymeasure}), the choice of $(\rF^\varepsilon)_*\mu$ as the physical measure is a sensible choice in the case of $\mathscr{H}=\sol^-$ (which describes the Minkowski vacuum). In presence of particles or antiparticles, this choice might not be optimal, though. Nevertheless, these questions go beyond the scope of this paper and will not be investigated here.

\begin{remark} A few remarks follow.
	\vspace{0.1cm}
	\begin{itemize}[leftmargin=2.5em]
		\item[\rm{(i)}] From the continuity of $\rF^\varepsilon$ and the definition of $(\rF^\varepsilon)_*\mu$, it is easy to see that
		\begin{equation}\label{closureImF}
		\supp(\rF^\varepsilon)_*\mu =\overline{\rF^\varepsilon(\bR^{1,3})},
		\end{equation} 
		where the closure can be taken indifferently in $\scF(\mathscr{H})$ or $\scB(\mathscr{H})$, the former being a closed subset of the latter (see Theorem \ref{closednessF}). In the special cases we will study later on (e.g. the vacuum), the function $\rF^\varepsilon$ is injective and closed and, therefore, 
		$$
		\rF^\varepsilon(\bR^{1,3})=\overline{\rF^\varepsilon(\bR^{1,3})},
		$$ 
		showing that the support  of the measure faithfully realizes $\bR^{1,3}$.\\[-0.2cm]
		\item[\rm{(ii)}] The construction of $\rF^\varepsilon$ depends heavily on the chosen regularization. This is clearly an issue that needs to be clarified if the regularization is introduced as a mere tool which has to be removed afterwards. However, the situation is different if the regularization is believed to play a physical role itself, as can be in modeling the microscopic structure of  spacetime.\\[-0.2em]
		\item[\rm{(iii)}] The function $\rF^\varepsilon$ depends on the chosen closed subspace $\mathscr{H}\subset\sol$ and as such the realization of Minkowski space through $\rF^\varepsilon$ depends on the specific choice of solutions of the Dirac equation. Such solutions carry information about  spacetime itself and when specific and sufficiently large ensembles of them are gathered together, the said representation becomes indeed faithful, as we we will see in Section \ref{subsectininjective}.\\[-0.2cm]
		\item[\rm{(iv)}] Referring to Section \ref{sectiondefcfs}, everything can be understood in the larger set $\scF(\sol)$, and this will always be the case from now on. For simplicity of notation, the set $\scF(\sol)$ will be denoted simply by $\scF$.\\[-0.2cm]
		\item[\rm{(v)}] For the sake of compactness, we denote the (regularized) causal fermion system  simply by $(\mathscr{H},\rF^\varepsilon)$, for the measure is always understood to be $(\rF^\varepsilon)_*\mu$. Also, notice that the function $\rF^\varepsilon$ depends uniquely on the choice of the subspace $\mathscr{H}$, and therefore the causal fermion system depends uniquely on $\mathscr{H}$, too.
	\end{itemize}
	
\end{remark}


At this point, it is useful to analyze the relations between causal fermion systems which arise from different ensembles of physical solutions.
Consider two different Hilbert subspaces $\mathscr{H}_0\subset \mathscr{H}_1$ of $\sol$. As established with Theorem \ref{teoremaesistenzaF}, we can construct two causal fermion systems, by means of the local correlation functions:
\begin{equation}\label{restrictionF}
\mathrm{F}^\varepsilon_0:\bR^{1,3}\rightarrow \scF,\quad \mathrm{F}^\varepsilon_1:\bR^{1,3}\rightarrow \scF.
\end{equation}

These two functions are strictly connected to each other. To see this, consider the orthogonal projector $\Pi_0$ onto the subspace $\mathscr{H}_0$. Then, for every $u,v\in\mathscr{H}_0$ we have:
\begin{equation*}
\begin{split}
\langle u|\Pi_0\,\mathrm{F}^\varepsilon_1(x)\,\Pi_0v\rangle =\langle u|\mathrm{F}^\varepsilon_1(x)v\rangle=-\Sl\gR_\varepsilon u(x)|\gR_\varepsilon v(x)\Sr = \langle u|\rF^\varepsilon_0(x)v\rangle.
\end{split}
\end{equation*}
Since the restriction of $\langle\cdot|\cdot\rangle$ to any closed subspace defines again a Hermitian inner product and $u,v\in\mathscr{H}_0$ are arbitrary, we have just proven the following result.
%
\begin{proposition}\label{propprojcetion}
	Let $\mathscr{H}_0\subset \mathscr{H}_1$ be two Hilbert subspaces of $\sol$. Then the corresponding local correlation functions (\ref{restrictionF}) satisfy
	$$
	\mathrm{F}^\varepsilon_0(x)=\Pi_0\, \mathrm{F}^\varepsilon_1(x)\, \Pi_0
	$$
	for any $x\in\bR^{1,3}$, where $\Pi_0$ is the orthogonal projector on $\mathscr{H}_0$. 
\end{proposition}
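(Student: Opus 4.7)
The plan is to exploit the uniqueness clause in Theorem \ref{teoremaesistenzaF} together with the defining identity \eqref{defiF}. First, I would introduce the candidate operator $A(x) := \Pi_0\, \rF^\varepsilon_1(x)\, \Pi_0$, viewed as a bounded self-adjoint operator on $\sol$. Because $\Pi_0$ has range $\mathscr{H}_0$ and vanishes on $\mathscr{H}_0^\perp$, the operator $A(x)$ maps $\sol$ into $\mathscr{H}_0$ and kills $\mathscr{H}_0^\perp$. Hence, under the identification $\iota$ of Lemma \ref{lemma1}, proving the proposition amounts to showing that the restriction $A(x)\restr_{\mathscr{H}_0}$ coincides with $\rF^\varepsilon_0(x)$ as an element of $\scF(\mathscr{H}_0)$.

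The key computation is the matrix-element identity that was already sketched before the statement: for any $u,v\in\mathscr{H}_0$ one has $\Pi_0 u = u$ and $\Pi_0 v = v$, so
\begin{equation*}
(u\,|\,A(x)\,v) \;=\; (\Pi_0 u\,|\,\rF^\varepsilon_1(x)\,\Pi_0 v) \;=\; (u\,|\,\rF^\varepsilon_1(x)\,v) \;=\; -\,\overline{\gR_\varepsilon u(x)}\,\gR_\varepsilon v(x),
\end{equation*}
where the last equality uses \eqref{defiF} applied to the local correlation operator of $\mathscr{H}_1$ (noting that $u,v\in\mathscr{H}_0\subset\mathscr{H}_1$). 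This is exactly the defining relation of $\rF^\varepsilon_0(x)$.

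To close the argument, one needs to verify that $A(x)\restr_{\mathscr{H}_0}$ genuinely lies in $\scF(\mathscr{H}_0)$, so that Theorem \ref{teoremaesistenzaF} applies and uniqueness forces $A(x)\restr_{\mathscr{H}_0}=\rF^\varepsilon_0(x)$. This is the only subtle step and is readily handled by the min--max principle: compressing a self-adjoint operator by an orthogonal projection cannot increase the number of positive (respectively negative) eigenvalues, so the compression $A(x)\restr_{\mathscr{H}_0}$ inherits from $\rF^\varepsilon_1(x)\in\scF(\mathscr{H}_1)$ the bound of at most two positive and two negative eigenvalues. Once this is in hand, uniqueness gives $A(x)\restr_{\mathscr{H}_0}=\rF^\varepsilon_0(x)$, and re-embedding through $\iota$ yields $\Pi_0\,\rF^\varepsilon_1(x)\,\Pi_0 = \rF^\varepsilon_0(x)$ as identities in $\scF(\sol)$, which is exactly the statement. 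I do not expect any genuine obstacle here: the entire argument is essentially a one-line verification of the characterizing identity plus bookkeeping of the embedding of Lemma \ref{lemma1}.
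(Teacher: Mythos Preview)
Your proof is correct and follows essentially the same route as the paper: the core is the matrix-element computation $(u\,|\,\Pi_0\rF^\varepsilon_1(x)\Pi_0\,v)=-\overline{\gR_\varepsilon u(x)}\,\gR_\varepsilon v(x)=(u\,|\,\rF^\varepsilon_0(x)\,v)$ for all $u,v\in\mathscr{H}_0$, which the paper carries out verbatim just before stating the proposition. The only difference is that you route the conclusion through the uniqueness clause of Theorem~\ref{teoremaesistenzaF} and therefore feel obliged to verify $\scF$-membership via min--max; the paper instead appeals directly to the non-degeneracy of the inner product on $\mathscr{H}_0$, which makes that detour unnecessary (equality of all matrix elements already forces equality of bounded operators, regardless of whether the candidate lies in $\scF$).
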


To summarize, to every spacetime point $x\in\bR^{1,3}$ we have associated a bounded operator which encodes  information on the densities and correlations of the physical solutions generating $\mathscr{H}$ \textit{at the space point $x\in\bR^{1,3}$}.

A related and fundamental concept is given by the \textit{kernel of the fermionic operator}, which carries information on the correlations among wave functions \textit{at different spacetime points $x,y\in\bR^{1,3}$}. Of course, the two concepts have to agree when $x=y$.
The most natural way to do this is to project the action of the local correlation operator at one point onto the image of the other. 

\begin{definition}\label{projdef}
	Let $(\mathscr{H},\rF^\varepsilon)$ be a causal fermion system and $x,y\in\bR^{1,3}$. The associated \textbf{kernel of the fermionic operator} is defined as the operator
	$$
	\mathrm{P}^\varepsilon(x,y):=\pi_x \rF^\varepsilon(y):\scH_m\to \scH_m,\footnote{Everything is embedded within $\scH_m$. Note that $\mathrm{P}^\varepsilon(x,y)$ vanishes on the orthogonal of $\ran\rF^\varepsilon(x)$.}
	$$
	with $\pi_x$ the orthogonal projector on $\ran\rF^\varepsilon(x)$.\hspace{-0,05cm} In particular $\mathrm{P}^\varepsilon(x,x)=\rF^\varepsilon(x)$.
\end{definition} 

As the name suggests, this operators is strictly connected to the kernel of the fermionic projector of Definition \ref{kernelP} (more precisely, to the doubly-regularized kernel of Proposition \ref{rfp2}). This will be analyzed in detail in Section \ref{sectioninterpretation}, where the reasons behind the above definition of $\mathrm{P}^\varepsilon$ will become more evident.

\subsection{Correspondence to the Four-Dimensional Spinor Space}\label{correpondences}

In the previous sections we saw how spacetime can be realized in terms of operators starting from given ensembles of physical solutions of the Dirac equation. 
At this point, it is interesting to understand whether or not (and how, in case) the spinorial structure of the physical solutions can be retrieved by studying the operators $\rF^\varepsilon(x)$.  This is indeed possible to some extent, as we will show in this section.

Let us choose a closed subspace $\mathscr{H}$ of $\sol$, together with its local correlation function $\rF^\varepsilon$ (bear in mind that everything is embedded in the larger space $\sol$, also $\rF^\varepsilon$). 

Let us start our analysis with the spinor space $\bC^4$, equipped with the spin scalar product \eqref{spinscalarproduct}.
The closest structure to $\bC^4$ that we have at our disposal is the set $\ran\rF^\varepsilon(x)$, whose dimension is indeed not larger than four,
\begin{equation}
S_x:=\ran \mathrm{F}^\varepsilon(x)\subset\mathscr{H},\quad N_x:=\ker\rF^\varepsilon(x)\subset\sol.
\end{equation}
Exploiting the self-adjointness of $\mathrm{F}(x)$, we have the orthogonal decomposition
\begin{equation}\label{decomposS}
\sol=S_x\oplus N_x.
\end{equation}
Moreover, it holds that $\mathrm{F}^\varepsilon(x)( S_x)\subset  S_x$.

The set $S_x$ is a good candidate to represent $\bC^4$. However, in order to reach a full identification, we need to equip $S_x$ with an inner product of signature $(2,2)$  and find a canonical unitary mapping connecting the two spaces. The first claim is indeed always true, as shown in the next proposition. The second statement is true only under suitable assumptions.

\begin{proposition}\label{innerproductdef}
	Let $x\in\bR^{1,3}$. Then the linear space $ S_x$ can be equipped with a non-degenerate indefinite inner product of signature $(2,2)$ by defining:  
	\begin{equation}\label{spinscalarproductx}
	\Sl u| v\Sr_x:=-\langle u|\mathrm{F}^\varepsilon(x)v\rangle\quad \mbox{for all }u,v\in S_x
	\end{equation}
	The couple $(S_x,\Sl \cdot| \cdot\Sr_x)$ is called the \textbf{spin space at $x$}.
\end{proposition}
\begin{proof}
	It is clear that $\Sl \cdot| \cdot\Sr_x$ is a sesquilinear form. Moreover, 
	$$
	\overline{\Sl u|v\Sr_x}=-\overline{\langle u|\rF^\varepsilon(x)v\rangle}=-\langle\rF^\varepsilon(x)v|u\rangle=-\langle v|\rF^\varepsilon(x)u\rangle =\Sl v|u\Sr_x.
	$$ 
 Let us now check the non-degeneracy. Fix any $u\in S_x$ and suppose that $\Sl v| u\Sr_x=0$ for any $v\in S_x$. Consider any $\omega\in \mathcal{\mathscr{H}}$, then $\omega=s+t$, with $s\in S_x$ and $t\in N_x$, and thus
	$
	\langle\omega|\mathrm{F}(x)u\rangle=\langle s| \mathrm{F}(x)u\rangle=0.
	$
	Since $\omega$ is arbitrary, this implies $\rF^\varepsilon(x)u=0$. This calculation shows that $u\in \ran \mathrm{F}(x)\cap \ker \mathrm{F}(x)=\{0\}$. The signature of the inner product follows from the assumptions on  $\scF$.
\end{proof}

\begin{notation}
	The adjoint with respect to any $\Sl \cdot| \cdot\Sr_x$ will be denoted by~$^*$.
\end{notation}

The realization of a unitary mapping between $S_x$ and $\bC^4$ is unfortunately not always possible, as the dimension of $S_x$ is not always four. This may happen, for example, when the causal fermion systems is generated by an ensemble of physical solutions which is not large enough or, intuitively speaking, is not carrying enough information (more on this later on). 
On the other hand, when the dimension equals four, the causal fermion system is rich of interesting features. In the remainder of the paper we will focus mainly on such cases.

\begin{definition}\label{definitionregular}
	A point $x\in\bR^{1,3}$ is said to be \textbf{regular} for a causal fermion system (or the causal fermion system is regular at $x$) if the dimension is maximal, i.e. $\dim S_x=4$. The causal fermion system is said to be regular if it is regular at every point.
\end{definition}
\noindent Given this definition, we can now prove the following theorem: 
\begin{center}
	\textit{It is always possible to embed (canonically and) isometrically the space $S_x$\\ into $\bC^4$ (as non-degenerate indefinite inner-product spaces),\\ but unitarily only for regular points.}
\end{center}
\begin{theorem}\label{isometry}
	For any $x\in\bR^{1,3}$, referring to the corresponding indefinite inner products \eqref{spinscalarproduct} and \eqref{spinscalarproductx}, the function
	\begin{equation}\label{functionPhi}
	\Phi_x:S_x\ni u\mapsto \gR_\varepsilon u(x)\in\bC^4
	\end{equation}
	is a linear isometry, i.e.
	$$
	\Sl \Phi_x(u)|\Phi_x(v)\Sr=\Sl u|v\Sr_x\quad\mbox{for all }u,v\in S_x.
	$$ 
	The point is regular if and only if \eqref{functionPhi} is surjective.
\end{theorem}
\begin{proof}
	Let $u,v\in S_x$, then $$\Sl u|v\Sr_x =-(u|\mathrm{F}^\varepsilon(x)v)= \Sl\gR_\varepsilon u(x)|\,\gR_\varepsilon v(x)\Sr = \Sl\Phi_x u|\,\Phi_x v\Sr.$$ The non-degeneracy of $\Sl\cdot|\cdot\Sr_x$ guarantees that $\Phi_x$ is injective. The last statement follows from the rank-nullity theorem.
\end{proof}


\begin{lemma}\label{lemmanucleoregolare}
	Let $(\mathscr{H},\rF^\varepsilon)$ be a causal fermion system and $x\in\bR^{1,3}$. Then the following statements hold.
	\vspace{0.1cm}
	\begin{itemize}[leftmargin=2.5em]
		\item[\rm{(i)}] $\{u\in\mathscr{H}\:|\: \gR_\varepsilon u(x)=0 \}\subset N_x\cap\mathscr{H}$.\\[-0.7em]
		\item[\rm{(ii)}] If $x$ is regular, then the inclusion in point (i) can be replaced by an equality.\\[-0.7em]
		\item[\rm{(iii)}] If $x$ is regular, then $\gR_\varepsilon u(x)=\gR_\varepsilon (\pi_xu) (x)$ for every $u\in\mathscr{H}$.\\[-0.7em]
		\item[\rm{(iv)}] The point $x$ is regular if and only if there exist $\{ u_\mu\}_{\mu=0,1,2,3}\subset\mathscr{H}$ such that the vectors 
		$$
		\{\gR_\varepsilon u_\mu(x)\:|\: \mu=0,1,2,3\}
		$$
		are linearly independent.\\[-0.7em]
	\item[\rm{(v)}] The point $x$ is regular for the causal fermion system associated with any 
	  $\scH_1\supset\scH$.
	\end{itemize}
\end{lemma}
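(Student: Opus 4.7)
The plan is to derive all five items from the defining identity $(u|\rF^\varepsilon(x)v)=-\overline{\gR_\varepsilon u(x)}\,\gR_\varepsilon v(x)$ of Theorem~\ref{teoremaesistenzaF}, together with the convention of Section~\ref{sectiondefcfs}: $\rF^\varepsilon(x)\in\scF(\sol)$ is the extension by zero on $\mathscr{H}^\perp$ of the original operator on $\mathscr{H}$, so that $\ran\rF^\varepsilon(x)\subset\mathscr{H}$ and $\mathscr{H}^\perp\subset\sN(x)$. For (i), I would take $u\in\mathscr{H}$ with $\gR_\varepsilon u(x)=0$ and test the defining identity against arbitrary $v\in\mathscr{H}$ to obtain $\rF^\varepsilon(x)u\perp\mathscr{H}$; combined with $\rF^\varepsilon(x)u\in\mathscr{H}$, this forces $\rF^\varepsilon(x)u=0$, hence $u\in\sN(x)\cap\mathscr{H}$. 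For (ii) I would note the converse: if $u\in\sN(x)\cap\mathscr{H}$ then $0=(u|\rF^\varepsilon(x)u)=-|\gR_\varepsilon u(x)|^2$, so $\gR_\varepsilon u(x)=0$; incidentally, this reverse inclusion does not actually require regularity. For (iii) I would decompose $u=\pi_x u+(u-\pi_x u)$ with $\pi_x$ the projector onto $\sS(x)$; self-adjointness of $\rF^\varepsilon(x)$ on $\sol$ gives $\sS(x)^\perp=\sN(x)$, and since both $u$ and $\pi_x u$ lie in $\mathscr{H}$, the remainder belongs to $\sN(x)\cap\mathscr{H}$, whereupon (ii) yields $\gR_\varepsilon u(x)=\gR_\varepsilon(\pi_x u)(x)$.

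\medskip

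For (iv), the forward direction is immediate from Theorem~\ref{isometry}: the surjective isometry $\Phi_x:\sS(x)\to\bC^4$ sends any basis of $\sS(x)\subset\mathscr{H}$ to a basis of $\bC^4$. For the converse, I would consider the linear evaluation map $T:\mathscr{H}\to\bC^4$, $Tu:=\gR_\varepsilon u(x)$; the hypothesis makes $T$ surjective, while (i)--(ii) identify $\ker T=\sN(x)\cap\mathscr{H}$, which is the orthogonal complement of $\sS(x)$ inside $\mathscr{H}$. Hence $T$ descends to an isomorphism $\sS(x)\cong\mathscr{H}/\ker T\cong\bC^4$, forcing $\dim\sS(x)=4$. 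For (v), I would simply apply (iv) twice: any family $\{u_\mu\}\subset\mathscr{H}\subset\mathscr{H}_1$ of regularity witnesses for $(\mathscr{H},\rF^\varepsilon)$ at $x$ remains a family of witnesses for $(\mathscr{H}_1,\rF_1^\varepsilon)$, because the quantity $\gR_\varepsilon u_\mu(x)$ depends only on $u_\mu$ itself and not on the ambient Hilbert space.

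\medskip

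I do not anticipate a substantial analytic obstacle: the whole argument is elementary linear algebra once one tracks the embedding $\iota$ of Lemma~\ref{lemma1} carefully. The only point that will require genuine attention is distinguishing the inner products on $\mathscr{H}$ and on $\sol$ and keeping in mind that $\sN(x)\supset\mathscr{H}^\perp$, so that the intersection $\sN(x)\cap\mathscr{H}$ encodes precisely the kernel of the restriction of $\rF^\varepsilon(x)$ to $\mathscr{H}$, which is what makes the identifications in (iii) and (iv) work cleanly.
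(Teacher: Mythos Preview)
Your argument for (ii) contains a genuine error that propagates into (iv). You write
\[
0=(u\mid\rF^\varepsilon(x)u)=-|\gR_\varepsilon u(x)|^2,
\]
but the defining identity of Theorem~\ref{teoremaesistenzaF} involves the \emph{spin} inner product $\overline{a}\,b:=a^\dagger\gamma^0 b$, not the Euclidean one. Thus what you actually obtain is
\[
0=-\,\overline{\gR_\varepsilon u(x)}\,\gR_\varepsilon u(x)=-\big(\gR_\varepsilon u(x)\big)^\dagger\gamma^0\,\gR_\varepsilon u(x),
\]
which is an \emph{indefinite} quadratic form of signature $(2,2)$ and can vanish on nonzero vectors. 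Concretely, if $\mathscr{H}$ is one-dimensional and $\gR_\varepsilon u(x)$ happens to be spin-null, then $\rF^\varepsilon(x)=0$, $u\in\sN(x)\cap\mathscr{H}$, yet $\gR_\varepsilon u(x)\neq 0$. So your claim that the reverse inclusion ``does not actually require regularity'' is false; regularity is exactly what guarantees enough test vectors $v_\mu\in\sS(x)$ with $\gR_\varepsilon v_\mu(x)=e_\mu$ to separate points via the non-degenerate (but indefinite) pairing.

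This error undermines your converse argument in (iv): to identify $\ker T=\sN(x)\cap\mathscr{H}$ you invoke (ii), but at that stage regularity is precisely what you are trying to prove. The paper handles this by re-running the computation of (ii) with the \emph{hypothesised} family $\{u_\mu\}$ in place of the (not yet established) regularity: from $(e_\mu)^\dagger\gR_\varepsilon v(x)=\overline{\gR_\varepsilon u_\mu(x)}\,\gR_\varepsilon v(x)=-(u_\mu\mid\rF^\varepsilon(x)v)=0$ for every $v\in\sN(x)\cap\mathscr{H}$ one obtains $\gR_\varepsilon v(x)=0$, whence $\gR_\varepsilon u_\mu(x)=\gR_\varepsilon(\pi_x u_\mu)(x)$ and $\Phi_x$ is surjective. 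Your quotient argument can be salvaged the same way, but as written it rests on a false premise. Points (i), (iii), and (v) are fine.
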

\begin{proof}
	Let us start with point (i). Consider any $u\in \mathscr{H}$ such that $\gR_\varepsilon u(x)=0$, then we have 
	$
	\langle v|\mathrm{F}(x)u\rangle=-\Sl\gR_\varepsilon v(x)|\gR_\varepsilon u(x) \Sr=0 
	$
	for any $v\in\mathscr{H}$.  Since the vector $v$ is arbitrary, this implies that $\mathrm{F}(x)u=0$, i.e. $u\in N_x$.  Let us now prove point (ii). Suppose the spacetime point is regular and consider $u\in N_x\cap\mathscr{H}$. Since the point is regular, we can always find vectors $v_\mu\in S_x\subset\mathscr{H}$ such that $\gR_\varepsilon v_\mu (x)=e_\mu$. Thus,
	 $$
	\Sl\gR_\varepsilon u(x)|e_\mu\Sr=\Sl\gR_\varepsilon u(x)|\gR_\varepsilon v_\mu(x)\Sr = -\langle u|\mathrm{F}(x)v_\mu\rangle=-\langle\rF(x)u|v_\mu\rangle=0,
	 $$ 
	 which gives $\gR_\varepsilon u(x)=0$. Point (iii) is obvious by linearity of $\gR_\varepsilon$ and points (i),(ii).  
	Let us now prove point (iv).	If the point is regular, then the thesis follows immediately, by simply taking $u_\mu:=\Phi_x^{-1}(e_\mu)$. On the contrary, suppose there are functions $u_\mu\in\mathscr{H}$ as in the assumption. Without loss of generality, we can always assume that $\gR_\varepsilon u_\mu(x)=\gamma^0 e_\mu$.
	Take any $v\in N_x\cap\mathscr{H}$, then 
	$$
	(e_\mu)^\dagger\,\gR_{\varepsilon} v(x)=\Sl\gR_{\varepsilon} u_\mu(x)|\gR_{\varepsilon} v(x)\Sr=-\langle u_\mu|\rF^\varepsilon(x)v\rangle=0,
	$$ 
	which implies $\gR_{\varepsilon} v(x)=0$. 
	At this point, notice that $u_\mu = \pi_x u_\mu + n_\mu$ for some $n_\mu\in N_x\cap\mathscr{H}$ (see (\ref{decomposS})) and therefore,
	$$
	\gamma^0 e_\mu=\gR_{\varepsilon} u_\mu(x) = \gR_{\varepsilon}  (\pi_x u_\mu) (x) +\gR_{\varepsilon} n_\mu (x)= \gR_{\varepsilon} (\pi_x u_\mu) (x).	
	$$ 
	 Since $\pi_x u_\mu\in S_x$, the thesis follows by  definition of the isometry $\Phi_x$. To conclude, notice that point (v) follows directly from point (iv).
\end{proof}

%

Given a causal fermion system $(\mathscr{H},\rF^\varepsilon)$, consider the closed subspace  generated by all the spin spaces $S_x$. 
It is not difficult to see that
\begin{equation}\label{SperpS}
S:=\overline{\mathrm{span} \bigcup_{x\in\bR^{1,3}}S_x}\subset\mathscr{H}\subset\sol,\quad S^\perp=\bigcap_{x\in\bR^{1,3}}N_x\subset\sol.
\end{equation}
If a vector $v\in\mathscr{H}$ belongs to the orthogonal of $S$, then $\rF^\varepsilon(x)v=0$ for every $x\in\bR^{1,3}$. In this case, the vector $v$ has no physical relevance in our settings, because its (regularized) local density and its correlation with other states are zero everywhere in spacetime. Therefore, the space $S$ should be interpreted as the (effective) physical Hilbert space of the system. 
\begin{definition}
	Given a causal fermion system $(\mathscr{H},\rF^\varepsilon)$, the corresponding subspace $S$ is called the \textbf{physical Hilbert space}.
\end{definition}
For regular systems, every physical solution in the  ensemble $\mathscr{H}$ is physically relevant and the spin spaces $S_x$ generate back the whole subspace $\scH$.

\begin{lemma}\label{lemmakerR}
	If the causal fermion system is regular, then $S=\mathscr{H}$.
\end{lemma}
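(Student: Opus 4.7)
The plan is to show that the orthogonal complement of $\sS$ inside $\mathscr{H}$ is trivial, which, since $\sS$ is closed in $\mathscr{H}$ by construction, immediately yields $\sS=\mathscr{H}$.

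First, I would pick an arbitrary $v\in \sS^\perp\cap\mathscr{H}$ and use the identity from (\ref{SperpS}), which gives $\sS^\perp=\bigcap_{x\in\bR^4}\sN(x)$; hence $v\in \sN(x)\cap\mathscr{H}$ for every spacetime point $x\in\bR^4$. This is the place where the regularity assumption enters crucially: by point (ii) of Lemma \ref{lemmanucleoregolare}, at a regular point the kernel $\sN(x)\cap\mathscr{H}$ coincides with the vanishing set $\{u\in\mathscr{H}\mid \gR_\varepsilon u(x)=0\}$. Since, by hypothesis, every spacetime point is regular, one concludes that $\gR_\varepsilon v(x)=0$ for all $x\in\bR^4$.

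Next, I would promote this pointwise vanishing to vanishing in $\sol$. The regularized solution $\gR_\varepsilon v=v*h_\varepsilon$ is smooth on $\bR^4$ (this follows from the mollification form fixed in Assumption \ref{assumpformarego} and from the first part of Proposition \ref{convoltioncomesabout}), so its vanishing at every point is the same as its vanishing as an element of $\sol$. At this stage I would invoke the last statement of Proposition \ref{convoltioncomesabout}(ii), guaranteeing that $\ker\gR_\varepsilon=\{0\}$ for a mollification-type regularization; this forces $v=0$.

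The only genuinely delicate point in this argument is the transfer from the pure spectral statement $v\in\sN(x)$ for all $x$ to the analytical statement $\gR_\varepsilon v(x)=0$ for all $x$: without regularity one only has the one-sided inclusion $\{u\mid \gR_\varepsilon u(x)=0\}\subset \sN(x)\cap\mathscr{H}$ (point (i) of Lemma \ref{lemmanucleoregolare}), which does not suffice. Once regularity supplies the reverse inclusion and the standing mollification assumption supplies the injectivity of $\gR_\varepsilon$, the proof closes in a single line: $\sS^\perp\cap\mathscr{H}=\{0\}$, hence $\sS=\mathscr{H}$.
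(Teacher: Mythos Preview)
Your proof is correct and follows essentially the same approach as the paper: take $v\in\mathscr{H}$ orthogonal to all the spin spaces, use regularity to deduce $\gR_\varepsilon v(x)=0$ at every spacetime point, and conclude $v=0$ by injectivity of the mollification-type regularization. The only cosmetic difference is that you invoke Lemma~\ref{lemmanucleoregolare}(ii) as a packaged statement, whereas the paper unfolds that step inline by directly writing $\overline{\gR_\varepsilon v(x)}\,\gR_\varepsilon u(x)=-(v|\rF^\varepsilon(x)u)=0$ and then choosing $u=u_\mu$ with $\gR_\varepsilon u_\mu(x)=e_\mu$.
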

\begin{proof}
	Let $v\in\mathscr{H}$ be orthogonal to every subspace $S_x$, then for any $x\in\bR^{1,3}$:
	$$
	\Sl\gR_\varepsilon  v(x)|\,\gR_\varepsilon  u(x)\Sr=-\langle v|\mathrm{F}^\varepsilon(x)u\rangle =0\quad\mbox{for all }u\in\mathscr{H}.
	$$
	Since the causal fermion system is regular, there always exist $u_\mu\in\mathscr{H}$ such that $\gR_\varepsilon  u_\mu(x)=e_\mu$ for any $\mu=0,1,2,3$. Replacing $u$ by $u_\mu$ in the identity above, we see that $\gR_\varepsilon  v(x)=0$. As  $x$  is arbitrary, we have $v\in\ker\gR_\varepsilon=\{0\} $, i.e. $v=0$.
\end{proof}

\begin{remark}
	It should be mentioned that in the general case where $\ker\gR_\varepsilon$ is not trivial, the physical space $S$ does not coincide with the original space $\scH$. Not even for regular systems.
\end{remark}

\subsection{Some Further Correspondences to Spinors for Regular Systems}

Let a causal fermion system $(\mathscr{H},\rF^\varepsilon)$ and a regular point $x\in\bR^{1,3}$ be given. In the previous section we saw that the function $\Phi_x$ defined in (\ref{functionPhi}) determines an isometry between $S_x$ and $\bC^4$, as indefinite inner product spaces. What can be said if we equip $\bC^4$ with the Euclidean positive definite inner product $(\lambda,\sigma)\mapsto \lambda^\dagger\sigma$? Remember that the two inner products are related by
$
\Sl \lambda|\sigma\Sr = \lambda^\dagger\, \gamma^0 \sigma,
$
where $\gamma^0$ is the zeroth Dirac matrix, which satisfies
$$
\gamma^0\gamma^0 = \bI_4,\quad (\gamma^0)^\dagger=\gamma^0,\quad(\gamma^0)^*=\gamma^0.
$$
We have at our disposal also other three Dirac matrices $\gamma^i$, which fulfill 
$$
\gamma^i\gamma^i=-\bI_4,\quad (\gamma^i)^\dagger=-\gamma^i,\quad \gamma^0\gamma^i=-\gamma^i\gamma^0,\quad (\gamma^i)^*=\gamma^i.
$$
The matrices $\gamma^\mu$ can be lifted to operators on $S_x$ by defining
$$
\Gamma^\mu(x):=\Phi_x^{-1}\, \gamma^\mu\,\Phi_x.
$$
It follows directly from the properties of the Dirac matrices that
\begin{equation}
\Gamma^0(x)\, \Gamma^0(x)=\bI_{S_x}=-\Gamma^i(x)\, \Gamma^i(x),\quad \Gamma^\mu(x)^*=\Gamma^\mu(x).
\end{equation}
This linear operator can always be understood as an operator on the whole space $\mathscr{H}$, simply by rewriting it as
$
\hat{\Gamma}^\mu(x):=\Gamma^\mu(x)\, \pi_x.
$
At this point, we can consider a new set of operators, namely
$$
\mathrm{F}^{\varepsilon,\mu}(x):=\mathrm{F}^\varepsilon(x)\hat{\Gamma}^\mu(x).
$$

\begin{proposition}\label{propertiesgamma}
	Let $x\in\bR^{1,3}$ be regular for $(\mathscr{H},\rF^\varepsilon)$. Then the maps $\hat{\Gamma}^\mu(x)$  satisfy
	\begin{itemize}[leftmargin=2.5em]
		\item[\rm{(i)}] $\pi_x=\hat{\Gamma}^0(x)^2=-\hat{\Gamma}^i(x)^2$\\[-0.7em]
		\item[\rm{(ii)}] $\mathrm{F}^{\varepsilon,\mu}(x)$ is self-adjoint \\[-0.7em]
		\item[\rm{(iii)}] $\mathrm{F}^{\varepsilon}(x)=\mathrm{F}^{\varepsilon,0}(x)\,\hat{\Gamma}^{0}(x)=-\mathrm{F}^{\varepsilon,i}(x)\,\hat{\Gamma}^{i}(x)$\\[-0.7em]
		\item[\rm{(iv)}] $\ran \mathrm{F}^{\varepsilon,\mu}(x)=S_x\ $ and $\ \ker\mathrm{F}^{\varepsilon,\mu}(x)= (S_x)^\perp$. 
	\end{itemize}
\end{proposition}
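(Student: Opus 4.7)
Throughout, I would use two basic facts repeatedly: first, $\Gamma^\mu(x)=\Phi_x^{-1}\gamma^\mu\Phi_x$ maps $\sS(x)$ into itself and, being the conjugate of a bijection of $\bC^4$, is a bijection of $\sS(x)$; second, the orthogonal decomposition $\mathscr{H}\supset\sS(x)\oplus(\sN(x)\cap\mathscr{H})$ (and globally $\sol=\sS(x)\oplus \sN(x)$) coming from the self-adjointness of $\mathrm{F}^\varepsilon(x)$, so that $\pi_x$ is the identity on $\sS(x)$ and vanishes on its orthogonal complement $\sN(x)$.

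\textbf{(i)} I would simply compute $\hat{\Gamma}^\mu(x)^2=\Gamma^\mu(x)\pi_x\Gamma^\mu(x)\pi_x$. Since $\Gamma^\mu(x)\pi_x u\in\sS(x)$ for every $u$, the inner $\pi_x$ acts as the identity on it, giving $\hat{\Gamma}^\mu(x)^2=\Gamma^\mu(x)^2\pi_x=\Phi_x^{-1}(\gamma^\mu)^2\Phi_x\,\pi_x$. Then $(\gamma^0)^2=\bI_4$ yields $\hat{\Gamma}^0(x)^2=\pi_x$, while $(\gamma^i)^2=-\bI_4$ yields $\hat{\Gamma}^i(x)^2=-\pi_x$.

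\textbf{(ii)} This is the main technical point. For $u,v\in\mathscr{H}$ the key is to evaluate $(u|\mathrm{F}^{\varepsilon,\mu}(x)v)$ in terms of $\bC^4$-vectors. Using the defining relation (\ref{defiF}), the identity $\Phi_x(\Gamma^\mu(x)\pi_x v)=\gamma^\mu\Phi_x(\pi_x v)=\gamma^\mu\gR_\varepsilon(\pi_x v)(x)$, and Lemma \ref{lemmanucleoregolare}(iii) to replace $\gR_\varepsilon(\pi_x v)(x)$ by $\gR_\varepsilon v(x)$, I would obtain
\[
(u|\mathrm{F}^{\varepsilon,\mu}(x)v)=-\,\overline{\gR_\varepsilon u(x)}\,\gamma^\mu\,\gR_\varepsilon v(x)=-(\gR_\varepsilon u(x))^\dagger\gamma^0\gamma^\mu\,\gR_\varepsilon v(x).
\]
Taking complex conjugates and using the fundamental Dirac matrix identity $(\gamma^\mu)^\dagger\gamma^0=\gamma^0\gamma^\mu$ (which follows at once from $(\gamma^0)^\dagger=\gamma^0$, $(\gamma^i)^\dagger=-\gamma^i$ and the anticommutation relations), this right-hand side equals $\overline{(v|\mathrm{F}^{\varepsilon,\mu}(x)u)}$, proving self-adjointness. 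This is the step I expect to be the genuine content, since it is where the regularity assumption enters through Lemma \ref{lemmanucleoregolare}(iii) and where the transition between the Hilbert adjoint on $\mathscr{H}$ and the spin adjoint on $\bC^4$ has to be made.

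\textbf{(iii)} and \textbf{(iv)} follow as quick corollaries. For (iii), by definition $\mathrm{F}^{\varepsilon,\mu}(x)\hat{\Gamma}^\mu(x)=\mathrm{F}^\varepsilon(x)\hat{\Gamma}^\mu(x)^2$; by (i) this equals $\pm\mathrm{F}^\varepsilon(x)\pi_x$, and since $\mathrm{F}^\varepsilon(x)$ vanishes on $\sN(x)$ we have $\mathrm{F}^\varepsilon(x)\pi_x=\mathrm{F}^\varepsilon(x)$. For (iv), write $\mathrm{F}^{\varepsilon,\mu}(x)u=\mathrm{F}^\varepsilon(x)\,\Gamma^\mu(x)\,\pi_x u$ with the inner vector lying in $\sS(x)$; the inclusion $\ran\mathrm{F}^{\varepsilon,\mu}(x)\subset\sS(x)$ is then immediate, while the reverse inclusion uses that both $\mathrm{F}^\varepsilon(x)|_{\sS(x)}$ and $\Gamma^\mu(x)|_{\sS(x)}$ are bijections of $\sS(x)$. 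For the kernel, the same bijectivity on $\sS(x)$ reduces $\mathrm{F}^{\varepsilon,\mu}(x)u=0$ to $\pi_x u=0$, i.e.\ $u\in\sS(x)^\perp=\sN(x)$.
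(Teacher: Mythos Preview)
Your proof is correct and follows essentially the same line as the paper's for points (i), (iii) and (iv). For (ii) there is a small but noteworthy difference: the paper stays inside $\sS(x)$ and uses directly the already-stated identity $(\Gamma^\mu(x))^*=\Gamma^\mu(x)$ with respect to the spin inner product $\langle\cdot|\cdot\rangle_x$, writing
\[
(u|\mathrm{F}^{\varepsilon,\mu}(x)v)=-\langle \pi_x u|\Gamma^\mu(x)\pi_x v\rangle_x=-\langle \Gamma^\mu(x)\pi_x u|\pi_x v\rangle_x=(\hat{\Gamma}^\mu(x)u|\mathrm{F}^\varepsilon(x)v),
\]
which is a bit shorter and avoids the appeal to Lemma~\ref{lemmanucleoregolare}(iii). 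Your route instead pushes everything down to $\bC^4$ and invokes the Dirac-matrix identity $(\gamma^\mu)^\dagger\gamma^0=\gamma^0\gamma^\mu$ explicitly; this is of course the same underlying fact (it is exactly what encodes $\overline{\gamma^\mu}=\gamma^\mu$), and it has the bonus of yielding the representation $(u|\mathrm{F}^{\varepsilon,\mu}(x)v)=-\overline{\gR_\varepsilon u(x)}\,\gamma^\mu\,\gR_\varepsilon v(x)$ along the way, which the paper proves separately in the next proposition. For (iv) the paper simply observes that self-adjointness of $\mathrm{F}^{\varepsilon,\mu}(x)$ gives $\ker=(\ran)^\perp$, whereas you argue the kernel directly via bijectivity on $\sS(x)$; both are fine.
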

\begin{proof}
	Point (i) follows directly from the definition. To prove point (ii) choose (dropping the index $\varepsilon$ for simplicity)
	any $u,v\in\mathscr{H}$, then
	\begin{equation*}
	\begin{split}
	\langle u|\rF^\mu(x)v\rangle&=\langle u|\mathrm{F}(x)\hat{\Gamma}^\mu(x) v\rangle=-\Sl \pi_x u|\Gamma^\mu(x) \pi_x v\Sr_x=\\
	&=-\Sl \Gamma^\mu(x)\pi_x u| \pi_x v\Sr_x=\langle \hat{\Gamma}^\mu(x) u|\mathrm{F}(x)v\rangle=\\
	&=\langle u|\hat{\Gamma}^\mu(x)^\dagger\,\mathrm{F}(x)v\rangle=\langle u|\rF^\mu(x)^\dagger v\rangle.
	\end{split}
	\end{equation*}
	Now, note that $(\mathrm{F}(x)\,\hat{\Gamma}^\mu(x))\, \hat{\Gamma}^\mu(x)=\eta^{\mu\mu}\,\mathrm{F}(x)\pi_x=\eta^{\mu\mu}\,\mathrm{F}(x)$, from which point (iii) follows. Point (iv) follows by noticing that $\Gamma^\mu(x)$ and $\rF^\varepsilon(x)$ are bijections on $S_x$  and that $\mathrm{F}^\mu(x)$ is self-adjoint.
\end{proof}

We are ready to state and prove the following important representation result, which can be seen as the Euclidean counterpart of Theorem \ref{teoremaesistenzaF}, Proposition  \ref{innerproductdef} and Theorem \ref{isometry}.
\begin{proposition}
	Let $x\in\bR^{1,3}$ be regular. Then, for every   $u,v\in \mathscr{H}$,
	\begin{equation}\label{positiveprod}
	\langle u|\mathrm{F}^{\varepsilon,\mu}(x) v\rangle =-\Sl\gR_\varepsilon u(x)|\, \gamma^\mu\,\gR_\varepsilon v(x)\Sr.
	\end{equation}
	In particular, $S_x$ can be endowed with a  positive-definite inner product
	\begin{equation}\label{positiveinnerproduct}
	S_x\times S_x\ni (u,v)\mapsto -\langle u|\rF^{\varepsilon,0}(x)v\rangle\in\bC\quad \mbox{for all } u,v\in S_x,
	\end{equation}
	which makes  $\Phi_x$ a unitary mapping onto the spinor space $\bC^4$, if equipped with the Euclidean inner product.
\end{proposition}
\begin{proof}
	Let us start with the first identity. Tke any $u,v\in\mathscr{H}$, then
	\begin{equation*}
	\begin{split}
	\Sl\gR_\varepsilon u(x)|\, \gamma^\mu\,\gR_\varepsilon v(x)\Sr&=\Sl \Phi_x^{-1}(\gR_\varepsilon u(x))|\Phi_x^{-1}(\gamma^\mu\,\gR_\varepsilon v(x))\Sr_x=\Sl \pi_x u|\Gamma^\mu(x) \pi_x v\Sr_x=\\
	&=-\langle\pi_x u|\mathrm{F}^{\varepsilon}(x)\Gamma^\mu(x)\pi_x v\rangle=-\langle u|\mathrm{F}^{\varepsilon,\mu}(x) v\rangle
	\end{split}
	\end{equation*}
	Now, if we manage to show that \eqref{positiveinnerproduct} is a positive definite inner product on $S_x$, then the last statement will follow directly from (\ref{positiveprod}). The sesquilinearity and Hermiticity come from the analogous properties of $\langle\cdot|\cdot\rangle$ and the self-adjointness of $\mathrm{F}^{\varepsilon,0}(x)$. The positivity follows from (\ref{positiveprod}) and the positivity of the Euclidean inner product.
\end{proof}
\begin{remark}
	Notice that the positive inner product \eqref{positiveinnerproduct} can be defined on $S_x$ only for regular points, while the indefinite inner product defined in Proposition \ref{innerproductdef} always exists. Nevertheless, if we focus on regular causal fermion systems, this results shows that all the algebraic properties of the spin spaces are encoded in the local correlation function.
\end{remark}

\subsection{Some Physical Interpretations}\label{sectioninterpretation}

Consider a finite-dimensional subspace $\sU\subset\sol$ and let $\{u_i\}_{i=1,\dots,n},\{v_i\}_{i=1,\dots,n}$ be any two orthonormal bases of it. The corresponding wedge products (remember that the particles are fermions) are connected~by
$$
u_1\wedge \dots \wedge u_n = (\det\sM)\ v_1\wedge \dots \wedge v_n,
$$
where $\sM$ is the unitary matrix transforming one basis into the other, in particular $\det\sM\in \mathbb{U}(1)$. Therefore,  being equal up to a phase, the multi-particle states are physically equivalent. This shows that there exists a one-to-one identification between  $n$-particle (pure wedge products) states and $n$-dimensional subspaces of $\sol$. 

Bearing this in mind, 
one can imagine to describe an infinitely large ensemble of fermions by the infinite dimensional subspace of $\scH_m$ generated by the corresponding orthonormal physical solutions.\footnote{A rigorous construction of wedge products of infinitely many orthonormal states is developed in \cite[Section 2]{DDMS}. I am grateful to an anonymous referee for drawing my attention to this work.}
Following the original idea of Dirac, one can then interpret the whole subset $\sol^-$ as the ``multi-particle state" formed by all  negative-energy particle states and interpret it as the vacuum. The addition of positive-energy states and the removal of negative-energy states correspond instead to the presence of particles or antiparticles. 
This interpretation is revived in the context of causal fermion systems. 

The subspace $\sol^-$ is the smallest (non-trivial) ensemble of physical solutions whose causal fermion system  carries a unitary representation of the translation group, see Proposition  \ref{propositioninvariancetransl} and Proposition \ref{spectrumF}. 
This is no longer true when states are added or deleted from the system, i.e. when the ensemble is modified in a non trivial way.
With this in mind, the causal fermion system arising from $\sol^-$ is a sensible choice for describing a vacuum \footnote{The same argument holds for $\scH_m^+$, but the two constructions are unitarily equivalent.}.

We now introduce the following notation.
Given any finite-dimensional subspaces $\sU_+\subset\sol^+$  and $\sU_-\subset\sol^- $, we define
\begin{equation}\label{aggiuntotoglio}
a_+(\sU_+):=\sol^- \oplus\sU_+\subset \sol,\quad a_-(\sU_-):=\sol^- \,\cap\sU_-^\perp\subset\sol.
\end{equation}
Similarly, given a combined system with some states in the negative spectrum and some states in the positive spectrum, we write:
\begin{equation}\label{entramne}
a(\sU_-,\sU_+):= \left(\sol^-  \cap \sU_-^\perp\right) \oplus\sU_+\subset\sol.
\end{equation}

\begin{remark}\label{remarkentang}
	It should be stressed that the theory here presented does not encompass the description of entangled states, in that the identifications discussed above hold only for pure products of single-particle states.
\end{remark}
The following result follows directly from Proposition \ref{propertiesgamma}, bearing in mind that the definition of trace is independent from the Hilbert basis and that $\rF^\varepsilon(x)$ vanishes on the orthogonal of $S_x$. We anticipate (it will be proved in Section \ref{subsectionregularityvacuum}) that the causal fermion system arising from $\sol^-$ is indeed regular at any point $x\in\bR^{1,3}$.
\begin{proposition}\label{density}
	Let $(\scH,\rF^\varepsilon)$ be a causal fermion system, $x\in\bR^{1,3}$ a regular point and  $\sN_x,\sN$ Hilbert bases of $S_x$ and $\mathscr{H}$, respectively. Then
	\begin{equation}\label{infinitedistrib}
	\sum_{u\in\sN}\Sl\gR_\varepsilon u(x)|\, \gamma^\mu\,\gR_\varepsilon u(x)\Sr= 
	-\mathsf{tr}\, \mathrm{F}^{\varepsilon,\mu}(x)=\sum_{w\in\sN_x}\Sl\gR_\varepsilon w(x)|\, \gamma^\mu\,\gR_\varepsilon w(x)\Sr.
	\end{equation}
	In particular, the left-hand series converges and does not depend on the Hilbert~basis. 
\end{proposition}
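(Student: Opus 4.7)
The plan is to exploit the representation identity (\ref{positiveprod}) together with the fact that $\rF^{\varepsilon,\mu}(x)$ is a self-adjoint operator of rank at most four, so that the trace may be computed on any Hilbert basis of $\mathscr{H}$ and a basis adapted to the decomposition $\mathscr{H}=\sS(x)\oplus\sS(x)^\perp$ yields the rightmost sum automatically. Concretely, from (ii) and (iv) of Proposition \ref{propertiesgamma} we already know that $\rF^{\varepsilon,\mu}(x)$ is self-adjoint, has range contained in $\sS(x)$ (hence rank at most four by regularity of $x$) and kernel equal to $\sS(x)^\perp$.

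First, for any Hilbert basis $\sN$ of $\mathscr{H}$, I would record that the trace admits the representation
\[
\mathsf{tr}\,\rF^{\varepsilon,\mu}(x)\,=\,\sum_{u\in\sN}\bigl(u\,\big|\,\rF^{\varepsilon,\mu}(x)\,u\bigr),
\]
with absolute convergence of the series and independence from the chosen basis. This is standard for trace-class operators, but for completeness one can argue directly by writing $\rF^{\varepsilon,\mu}(x)=\sum_{i=1}^{r}\lambda_i\,|\phi_i\rangle\langle\phi_i|$ via its finite spectral decomposition (with $r\le 4$ and $\{\phi_i\}$ orthonormal); one then has
\[
\sum_{u\in\sN}\bigl|(u\,|\,\rF^{\varepsilon,\mu}(x)\,u)\bigr|\,\le\,\sum_{i=1}^{r}|\lambda_i|\sum_{u\in\sN}|(\phi_i|u)|^2\,=\,\sum_{i=1}^{r}|\lambda_i|\,<\,\infty
\]
by Parseval, and the same computation without the absolute values yields $\sum_{i=1}^r\lambda_i=\mathsf{tr}\,\rF^{\varepsilon,\mu}(x)$, with the interchange of sums justified by this bound. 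The first equality in (\ref{infinitedistrib}) then follows by applying (\ref{positiveprod}) with $v=u$ to each term: $(u\,|\,\rF^{\varepsilon,\mu}(x)\,u)=-\overline{\gR_\varepsilon u(x)}\,\gamma^\mu\,\gR_\varepsilon u(x)$, the global sign being absorbed by the minus in front of the trace.

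For the second equality I would choose a Hilbert basis of $\mathscr{H}$ of the special form $\sN_x\cup\sN_x^\perp$, where $\sN_x^\perp$ is any Hilbert basis of the orthogonal complement $\sS(x)^\perp$ (which is closed since $\sS(x)$ is finite-dimensional). Using once more that $\rF^{\varepsilon,\mu}(x)$ vanishes on $\sS(x)^\perp$, every contribution coming from $\sN_x^\perp$ drops out, so the trace reduces to the finite sum $\sum_{w\in\sN_x}(w\,|\,\rF^{\varepsilon,\mu}(x)\,w)$, and a last application of (\ref{positiveprod}) delivers the right-hand side of (\ref{infinitedistrib}).

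No substantial obstacle is expected: the only delicate point is the basis-independent evaluation of the trace as an absolutely convergent series, which is entirely standard for finite-rank self-adjoint operators. Everything else reduces to direct translations of Proposition \ref{propertiesgamma} and of the representation (\ref{positiveprod}); in particular the ``in particular'' clause of the statement (existence of the leftmost series and independence from the basis) is an automatic by-product of the first equality, since the right-hand side $-\mathsf{tr}\,\rF^{\varepsilon,\mu}(x)$ is manifestly a well-defined number depending only on $x$ and $\mu$.
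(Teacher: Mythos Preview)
Your proposal is correct and follows essentially the same approach as the paper, which only sketches the argument in the single sentence preceding the statement (invoking Proposition \ref{propertiesgamma}, the basis-independence of the trace, and the vanishing of $\rF^{\varepsilon,\mu}(x)$ on $\sS(x)^\perp$). You simply fill in the details the paper leaves implicit, in particular the absolute convergence via the finite spectral decomposition and the explicit use of (\ref{positiveprod}) to translate trace terms into spinor expressions.
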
 

The result above has a direct physical interpretation. Sticking to the standard interpretation of quantum mechanics, given a normalized smooth physical solution $\psi$, the value 
$$
J_\psi(t,\V{x}):=\Sl\psi(t,\V{x})\,|\,\gamma^\mu\, \psi(t,\V{x})\Sr
$$
defines the \textbf{current density} at time $t\in\bR^3$. A regularization consists in the replacement
$$
J_\psi(t,\V{x})\stackrel{\gR_{\varepsilon}}{\longmapsto} J_{\gR_{\varepsilon}\psi}(t,\V{x}):=\Sl\gR_{\varepsilon}\psi (t,\V{x})\,|\,\gamma^\mu\, \gR_{\varepsilon}\psi (t,\V{x})\Sr,
$$
which is to be interpreted as the (regularized) current density at time $t$. If we are given a system with more then one particle, the total current density consists in the sum of the individual densities. In particular, the series on the left-hand side of (\ref{infinitedistrib}) can be understood as \textit{the current density of the system where all the states (a Hilbert basis to more precise) in $\mathscr{H}$ are occupied.} Notice that this is finite also for infinite dimensional subspaces like $\sol^- $. Nevertheless, in the limit $\varepsilon\searrow 0$, this sum generally diverges. In order to get a finite quantity, one proceeds as follows.

\begin{theorem}
	Let $\sU_\pm\subset \sol^\pm$ be finite-dimensional subspaces with Hilbert bases $\sN_\pm$. Then
	\begin{equation}
	\mathsf{tr}(\mathrm{F}^{\varepsilon,\mu}_{vac}(x)-\mathrm{F}^{\varepsilon,\mu}(x))=\sum_{z\in\sN_+}J^\mu_{\gR_\varepsilon z}(x) -\sum_{w\in\sN_-}J^\mu_{\gR_\varepsilon w}(x).
	\end{equation}
	where  $\mathrm{F}$ is the causal fermion system relative to $a(\sU_-,\sU_+)$ and $\rF^\varepsilon_{vac}$ to $\sol^-$.
\end{theorem}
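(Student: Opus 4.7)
The plan is to apply Proposition \ref{density} to both causal fermion systems after choosing Hilbert bases that share a common ``tail'' in $\sol^-$, so that subtraction collapses an a priori infinite sum onto the finite particle/anti-particle contributions. Concretely, since $\sU_-$ is finite-dimensional (hence closed) inside $\sol^-$, one has the orthogonal decomposition
$$\sol^- = \sU_- \oplus (\sol^- \cap \sU_-^\perp).$$
Fix any Hilbert basis $\sN_0$ of the closed subspace $\sol^- \cap \sU_-^\perp$. Then $\sN_- \sqcup \sN_0$ is a Hilbert basis of $\sol^-$, and since $\sU_+ \subset \sol^+$ is orthogonal to every element of $\sol^-$, the union $\sN_0 \sqcup \sN_+$ is a Hilbert basis of $a(\sU_-,\sU_+)= (\sol^- \cap \sU_-^\perp) \oplus \sU_+$.

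Granted regularity of both systems at $x$ (for the vacuum this is to be established in Section \ref{subsectionregularityvacuum}; for the perturbed system the point is regular by Lemma \ref{lemmanucleoregolare}(iv)--(v) once one exhibits four solutions in $a(\sU_-,\sU_+)$ with linearly independent regularized values at $x$, which is possible provided $\sU_-$ does not deplete the vacuum too drastically, cf.\ Sections \ref{sectionholes}--\ref{sectionsmooth}), Proposition \ref{density} evaluated on the two bases above gives
\begin{align*}
-\mathsf{tr}\,\mathrm{F}^{\varepsilon,\mu}_{vac}(x) &= \sum_{w\in\sN_-} J^\mu_{\gR_\varepsilon w}(x) + \sum_{u\in\sN_0} J^\mu_{\gR_\varepsilon u}(x), \\
-\mathsf{tr}\,\mathrm{F}^{\varepsilon,\mu}(x) &= \sum_{z\in\sN_+} J^\mu_{\gR_\varepsilon z}(x) + \sum_{u\in\sN_0} J^\mu_{\gR_\varepsilon u}(x),
\end{align*}
where both series over $\sN_0$ converge thanks to the basis-independence assertion of Proposition \ref{density}. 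Subtracting the second identity from the first, the $\sN_0$-contributions cancel term by term, yielding the claimed identity
$$\mathsf{tr}\bigl(\mathrm{F}^{\varepsilon,\mu}_{vac}(x)-\mathrm{F}^{\varepsilon,\mu}(x)\bigr) = \sum_{z\in\sN_+} J^\mu_{\gR_\varepsilon z}(x) - \sum_{w\in\sN_-} J^\mu_{\gR_\varepsilon w}(x).$$

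The only non-routine point is justifying that $a(\sU_-,\sU_+)$ is regular at $x$, since removing the finite-dimensional subspace $\sU_-$ could in principle drop the rank of the local correlation operator below four at some spacetime points. This is not an issue for the vacuum by assumption, and one might expect it to fail only on a ``thin'' set; the careful treatment is deferred to the sections addressing anti-particle states with sufficient spatial delocalization. Once regularity is in place, the argument above is an essentially algebraic manipulation that relies on nothing beyond Proposition \ref{density} and the orthogonality between $\sol^+$ and $\sol^-$.
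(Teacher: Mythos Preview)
Your proof is correct and follows exactly the same line as the paper: pick a Hilbert basis of $a_-(\sU_-)$ (your $\sN_0$, the paper's $\sM_-$), complete it to bases of $\sol^-$ and of $a(\sU_-,\sU_+)$ by adjoining $\sN_-$ and $\sN_+$ respectively, apply Proposition \ref{density} to each, and subtract so the common tail cancels. Your explicit caveat about the regularity of $a(\sU_-,\sU_+)$ at $x$ is well taken---the paper's proof simply applies Proposition \ref{density} without comment, implicitly assuming regularity, so you are being slightly more careful than the paper on this point.
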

\begin{proof}
	Let us start with the pure negative-energy case first. Let $\sM_-$ be any Hilbert basis of $a_-(\sU_-)$, then $\sM_+:=\sM_-\cup \sN_+$ is a Hilbert basis of $a(\sU_+,\sU_-)$ and $\sM_0:=\sM_-\cup \sN_-$ is a Hilbert basis of $\sol^- $. Applying Proposition \ref{density} we get 
	\begin{equation*}
	\begin{split}
	-\mathsf{tr}\,\mathrm{F}^\mu(x)+\mathsf{tr}\,\mathrm{F}^\mu_{vac}(x) =&\sum_{u\in\sM_-}J^\mu_{\gR_\varepsilon u}(x)+\sum_{z_+\in\sN_+}J^\mu_{\gR_\varepsilon z+}(x)-\\
	&-\sum_{u\in\sM_-}J^\mu_{\gR_\varepsilon u}(x)-\sum_{z_-\in\sN_-}J^\mu_{\gR_\varepsilon z_-}(x),
	\end{split}
	\end{equation*}
	and the thesis follows.
\end{proof}
\begin{remark}
	After spacetime $\bR^{1,3}$ has been realized in terms of the local correlation operators $\rF^\varepsilon(x)$, the particle current density is  addressed as a property of spacetime alone, no longer of the physical solutions.
	
\end{remark}

To conclude this section, we compute the explicit form of the kernel of the fermionic operator when lifted to $\bC^4$ through the action of the isometries $\Phi_x$.

Consider an arbitrary Hilbert basis $\{u_n\}_{n\in\bN}$ of $\mathscr{H}$, then the completeness relation gives
\begin{equation}\label{expansionP}
\mathrm{P}^\varepsilon(x,y)=\pi_{x}\rF^\varepsilon(y)=\sum_{n\in\bN} \pi_x u_n \langle u_n|\rF^\varepsilon(y)\,\cdot\,\rangle=-\sum_{n\in\bN} \pi_x u_n \Sl \pi_y u_n|\pi_y\,\cdot\,\Sr_y
\end{equation}
From this equation we see that the correlation between the points $x,y$ is the result of adding up all the individual contributions coming from the basis elements at the given points. 
More precisely, exploiting the definition of the isometry (\ref{functionPhi}) we have the following result.
\begin{proposition}\label{representationgenerickernel}
	Let $\mathscr{H}\subset\sol$ and let $\{u_n\}_{n\in\bN}$ be any Hilbert basis of $\scH$. Then
	\begin{equation}\label{expansionP2}
	\Phi_x\,\mathrm{P}^\varepsilon(x,y)\,\Phi_y^{-1}\restr_{\Phi_z(S_y)}=-\sum_{n=0}^\infty\gR_{\varepsilon} u_n (x)\,\Sl\gR_{\varepsilon}u_n(y)|\cdot\Sr\,\restr_{\Phi_z(S_y)},
	\end{equation}
	for any spacetime points $x,y\in\bR^{1,3}$.
\end{proposition}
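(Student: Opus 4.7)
The plan is to sandwich the already-established expansion (\ref{expansionP}) between the two isometries $\Phi_x,\Phi_y^{-1}$ and then unpack the inner products. Fix an arbitrary $\lambda\in\Phi_y(\sS(y))$ and let $w:=\Phi_y^{-1}(\lambda)\in\sS(y)$. Since $w$ already lives in $\sS(y)$, we have $\pi_y w=w$, so (\ref{expansionP}) specialises to
\begin{equation*}
\mathrm{P}_\varepsilon(x,y)\,w=-\sum_{n\in\bN}\pi_x u_n\,\langle\pi_y u_n\,|\,w\rangle_y,
\end{equation*}
the series converging in $\mathscr{H}$ by the very derivation of (\ref{expansionP}) from the completeness relation and the boundedness of $\pi_x\rF^\varepsilon(y)$.

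Next I apply $\Phi_x$ termwise; this is legitimate because $\Phi_x u=\gR_\varepsilon u(x)$ extends to a continuous linear functional on $\sol$ by (ii) of Proposition \ref{approximationpropertiesreg}. The result is
\begin{equation*}
\Phi_x\,\mathrm{P}_\varepsilon(x,y)\,w=-\sum_{n\in\bN}\gR_\varepsilon(\pi_x u_n)(x)\,\langle\pi_y u_n\,|\,w\rangle_y.
\end{equation*}
Since both $\pi_y u_n$ and $w$ belong to $\sS(y)$, the isometry property of $\Phi_y$ (Theorem \ref{isometry}) rewrites the scalar factor in terms of the spin inner product on $\bC^4$,
\begin{equation*}
\langle\pi_y u_n\,|\,w\rangle_y=\overline{\Phi_y(\pi_y u_n)}\,\Phi_y(w)=\overline{\gR_\varepsilon(\pi_y u_n)(y)}\,\lambda.
\end{equation*}
It remains to remove the projections from the arguments of $\gR_\varepsilon$. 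This is precisely the content of Lemma \ref{lemmanucleoregolare}(iii), applicable at any regular point: at regular $x$ and $y$, $\gR_\varepsilon(\pi_x u_n)(x)=\gR_\varepsilon u_n(x)$ and $\gR_\varepsilon(\pi_y u_n)(y)=\gR_\varepsilon u_n(y)$. Substituting and noting that $\lambda\in\Phi_y(\sS(y))$ was arbitrary yields the claimed equality on $\Phi_y(\sS(y))$.

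The main obstacle I expect is the last, seemingly innocuous, substitution: without regularity one has only the inclusion $\{u:\gR_\varepsilon u(x)=0\}\subset\sN(x)\cap\mathscr{H}$ from Lemma \ref{lemmanucleoregolare}(i), and the decomposition $u_n=\pi_x u_n+n_n$ with $n_n\in\sN(x)\cap\mathscr{H}$ could contribute a spurious term $\gR_\varepsilon n_n(x)$. Regularity of the causal fermion system — the setting under which the proposition is actually applied, cf.\ the upcoming vacuum analysis — is precisely what kills this contribution via Lemma \ref{lemmanucleoregolare}(ii)--(iii). The other delicate point, the termwise interchange of $\Phi_x$ with the infinite sum, is handled entirely by continuity of the functional $u\mapsto\gR_\varepsilon u(x)$ together with the $\mathscr{H}$-convergence already contained in (\ref{expansionP}); the rest is routine manipulation of the two inner products (indefinite on $\sS(y)$ versus spin on $\bC^4$) using the single identity of Theorem \ref{isometry}.
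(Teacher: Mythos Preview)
Your proof is correct and follows exactly the route the paper indicates: it merely says the result follows ``exploiting the definition of the isometry (\ref{functionPhi})'' applied to the already-derived expansion (\ref{expansionP}), and you have unpacked precisely that. Your explicit flagging of the regularity hypothesis needed to invoke Lemma \ref{lemmanucleoregolare}(iii) (so that $\gR_\varepsilon(\pi_x u_n)(x)=\gR_\varepsilon u_n(x)$) is a genuine clarification of an assumption the paper leaves implicit in the statement but relies on throughout the surrounding discussion.
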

This result clarifies the statement above: all the physical solutions generating $\mathscr{H}$ contribute to the correlation between $x,y\in\bR^{1,3}$ described by $\mathrm{P}^\varepsilon(x,y)$. We will go back to this representation in Section \ref{sectionvacuumprojector}.

\begin{remark}
	Equation (\ref{expansionP2})  shows that the choice for $\mathrm{P}^\varepsilon$ made in Definition \ref{projdef} was indeed sensible, as it does realize a correlation function between two points, once represented in $\bC^4$. 
\end{remark}

\section{On the Regularity of Causal Fermion Systems}

In section \ref{correpondences} we introduced the concept of a regular causal fermion system (see Definition \ref{definitionregular}). It is interesting at this point to study in which situations a given causal fermion system is indeed regular or, more precisely, how regularity is related to the choice of the wave functions forming the ensemble.
As one could expect, the vacuum realizes a regular causal fermion system. Moreover, this feature is not lost when additional positive-energy physical solutions are added to the system. In line with the concept of a Dirac sea, one may interpret this process as an enrichment of the system and, therefore, it is reasonable that regularity is preserved. On the other hand, the removal of negative-energy solutions can break regularity, as we will see in a straightforward example. Only in very special cases regularity is preserved, namely when the physical solutions do not vary too much on the microscopic scale $\varepsilon$. This shows an \textit{asymmetry between particles and antiparticles}, at least within the limits of validity of this interpretation.

\subsection{The Regularity of the Vacuum}\label{subsectionregularityvacuum}

The aim of this section is to show that the causal fermion system associated with $\sol^-$ (that is, the vacuum) is regular. 

Let us choose any arbitrary $x_0=(t_0,\V{x}_0)\in\bR^{1,3}$. Then, as stated in Lemma \ref{lemmanucleoregolare}, we need to find four elements  $u_\mu\in\sol^-$ such that $\gR_\varepsilon u_\mu(x_0)=e_\mu$ for every $\mu=0,1,2,3$.
\begin{itemize}[leftmargin=2.5em]
	\item[\rm{(a)}] Let us start by considering the smooth function $G^{(a)}\in\mathcal{S}_p(\bR^3,\bC)$ defined by:
	\begin{equation}
	G^{(a)}(\V{k}):=A\,e^{-\frac{\V{k}^2}{4\sigma^2}}e^{ -i(\omega(\V{k})t_0+\V{k}\cdot\V{x}_0)},\quad A\in\mathbb{R}^0_+.
	\end{equation}
	The $\mathcal{L}^2$ and $\mathcal{L}^1$ norms can be easily computed:
	\begin{equation}\label{equalityA}
	\|G^{(a)}\|_{\mathcal{L}^2}=A(\sqrt{2\pi}\sigma)^{3/2},\,\quad \|G^{(a)}\|_{\mathcal{L}^1}=A(2\sqrt{\pi}\sigma)^3=:\beta(\sigma).
	\end{equation}
	Let us now replace $\lambda_{\uparrow,\downarrow}:=(2\pi)^{3/2}G^{(a)}$ into (\ref{basicsolution}). Then, taking into account (\ref{fundamentalspinors}) and the fact that the Gaussian and the energy are even functions,  the corresponding physical solutions $u_{\uparrow}^{(a)}$ and $u_{\downarrow}^{(a)}$  fulfill
	\begin{equation}\label{valorein0}
	u_\uparrow^{(a)}(x_0)=\beta(\sigma) e_2,\quad u_\downarrow^{(a)}(x_0)=\beta(\sigma) e_3.
	\end{equation}
	
	\item[\rm{(b)}] Now, consider the smooth function $G^{(b)}\in\mathcal{S}_p(\bR^3,\bC)$ defined by
	$$
	G^{(b)}(\V{k}):=B e^{-\frac{\V{k}^2}{4\sigma^2}}e^{ -i(\omega(\V{k})t_0+\V{k}\cdot\V{x}_0)}(\omega(\V{k})+m)k_3,\quad B>0.
	$$
	Let us discuss $\|G^{(b)}\|_{\mathcal{L}^2}$ first. Exploiting the inequalities 
	$$
	(k_3)^2\le |\V{k}|^2\quad\mbox{and}\quad m\le \omega(\V{k})+m\le 2\omega(\V{k}),
	$$ 
	it can be shown by direct computation that
	\begin{equation}\label{inequalityB}
	mB(2\pi)^{3/4}\sigma^{5/2}\le \|G^{(b)}\|_{\mathcal{L}^2}\le 2(2\pi)^{3/4} B \sqrt{15\sigma^7+3m^2 \sigma^5}.
	\end{equation}
	Now, as done in point (a), we replace $\lambda_{\uparrow,\downarrow}:= (2\pi)^{3/2} G^{(b)}$ into (\ref{basicsolution}) and evaluate the corresponding physical solutions  $u_{\uparrow}^{(b)}$ and $u_{\downarrow}^{(b)}$   at $x=x_0$. Exploiting the form of the fundamental solutions (\ref{fundamentalspinors}), it is not difficult to see that the first, second and third components of $u^{(b)}_{\uparrow}$ as well as the zeroth, second and third components of $u^{(b)}_\downarrow$ vanish, for the integrands are odd with respect to $k_3$. We are left with the integral $$\gamma(\sigma):=\int_{\bR^3}  e^{-\frac{\V{k}^2}{4\sigma^2}}(k_3)^2\,d^3\V{k}=2^4 \pi^{3/2} B\sigma^5.$$ More precisely,
	\begin{equation}\label{valorein02}
	u^{(b)}_\uparrow(x_0)=-\gamma(\sigma) e_{0},\quad u^{(b)}_\downarrow(x_0)=+\gamma(\sigma) e_1.
	\end{equation}
\end{itemize}

Concluding, we see that the vectors $
u_{\downarrow,\uparrow}^{(a),(b)}(x_0)$ are orthogonal to each other and, therefore, linearly independent.
%
We now show that this feature is preserved if we regularize the corresponding solutions, that is we prove that the vectors $\gR_\varepsilon u_{\downarrow,\uparrow}(x_0)$ are linearly independent.

In order to do this, we need  to carry out some estimates on the derivatives of the physical solutions. Exploiting Lemma \ref{estimate}-(ii) and  inequalities 
$$
|k_\mu|\le |\V{k}|+m\quad\mbox{and}\quad(\omega(\V{k})+m)(|\V{k}|+m)|k_3|\le (|\V{k}|+2m)^2|\V{k}|,
$$ 
it can be shown by direct computation that, for any $x\in\bR^{1,3}$,
\vspace{0.1cm}
\begin{align}
	\mbox{(a)}\qquad |\partial_\mu u^{(a)}_{\uparrow,\downarrow}(x)|&\le \sqrt{2}\pi A(32\sigma^4+8m\sqrt{\pi}\sigma^3),\label{estimatea} \\[0.3em]
	\mbox{(b)}\qquad	|\partial_\mu u^{(b)}_{\uparrow,\downarrow}(x)|&\le 2^6\sqrt{2} \pi B(4\sigma^6+2m^2\sigma^4+3m\sqrt{\pi} \sigma^5) \label{estimateb}.
\end{align}
\vspace{0.1cm}
Exploiting these estimates, we can finally prove the regularity of the vacuum.

\begin{proposition}\label{regularlinearlyindep1}
	Let $u^{(a)}, u^{(b)}$ be the special solutions defined above with $\sigma=m$. Then
	$$
	\gR_\varepsilon u_{\uparrow}^{(a)}(x_0),\quad \gR_\varepsilon u_{\downarrow}^{(a)}(x_0),\quad \gR_\varepsilon u_{\uparrow}^{(b)}(x_0),\quad 
	\gR_\varepsilon u_{\downarrow}^{(b)}(x_0)
	$$
	are linearly independent vectors of $\bC^4$.
\end{proposition}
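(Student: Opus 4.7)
The strategy is a straightforward perturbation argument, exploiting the fact that the \emph{unregularized} values at $x_0$ listed in \eqref{valorein0} and \eqref{valorein02},
\begin{equation*}
u^{(a)}_\uparrow(x_0)=\beta(m)\,e_2,\quad u^{(a)}_\downarrow(x_0)=\beta(m)\,e_3,\quad u^{(b)}_\uparrow(x_0)=-\gamma(m)\,e_0,\quad u^{(b)}_\downarrow(x_0)=\gamma(m)\,e_1,
\end{equation*}
are, up to permutation and rescaling by the nonzero constants $\beta(m)$ and $\gamma(m)$ (see \eqref{equalityA} and the definition of $\gamma$), precisely the standard basis of $\bC^4$, hence trivially linearly independent. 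By Lemma \ref{lemmanucleoregolare}(iv) the claim therefore reduces to showing that the regularization perturbs this configuration by a quantity small enough to preserve linear independence.

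The first step is to control the size of the perturbation. By Proposition \ref{approximationpropertiesreg}(iii) one has, for any $u\in\sol\cap\scC^\infty(\bR^4,\bC^4)$,
\begin{equation*}
|\gR_\varepsilon u(x_0)-u(x_0)|\le \varepsilon\,\|\mathfrak{J}u\|_{x_0,\infty}.
\end{equation*}
Since $\|\mathfrak{J}u\|_{x,\infty}$ is dominated (up to a fixed combinatorial factor accounting for the eight real/imaginary component gradients in its definition \eqref{jacobian}) by $\max_\mu|\partial_\mu u|$ evaluated on $B(x_0,\varepsilon)$, the pointwise derivative estimates (a) and (b) stated just before the proposition yield, after substituting $\sigma=m$ and comparing with $\beta(m)=8\pi^{3/2}Am^3$ and $\gamma(m)=2^4\pi^{3/2}Bm^5$,
\begin{equation*}
|\gR_\varepsilon u^{(a)}_{\uparrow,\downarrow}(x_0)-u^{(a)}_{\uparrow,\downarrow}(x_0)|\le C_a\,(m\varepsilon)\,\beta(m),\qquad |\gR_\varepsilon u^{(b)}_{\uparrow,\downarrow}(x_0)-u^{(b)}_{\uparrow,\downarrow}(x_0)|\le C_b\,(m\varepsilon)\,\gamma(m),
\end{equation*}
for absolute dimensionless constants $C_a,C_b$. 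The dependence on $A,B$ cancels out and only the dimensionless ratio $m\varepsilon$ survives, which is essential.

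The second step is a simple determinant argument. Arrange the four regularized vectors as columns of a matrix $M_\varepsilon\in\sM(4,\bC)$; by the first step, $M_\varepsilon=D+E_\varepsilon$, where $D$ is, up to column permutation and signs, $\mathrm{diag}(\beta(m),\beta(m),\gamma(m),\gamma(m))$, and each column of $E_\varepsilon$ has norm bounded by $C(m\varepsilon)$ times the corresponding diagonal entry of $D$. Multilinear expansion of the determinant then gives
\begin{equation*}
|\det M_\varepsilon|\ge |\det D|\,\bigl(1-C'\,m\varepsilon\bigr)=\beta(m)^2\gamma(m)^2\,\bigl(1-C'\,m\varepsilon\bigr),
\end{equation*}
for some explicit constant $C'$ (it suffices for instance to expand $\det(D+E_\varepsilon)$ and bound each of the $15$ non-diagonal multilinear terms by a power of $C(m\varepsilon)$ times $|\det D|$).

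The third and final step is to invoke Assumption \ref{assumptionepsilon}, which guarantees $m\varepsilon\le 10^{-15}$. This comfortably ensures $C'\,m\varepsilon<1$, so $\det M_\varepsilon\neq 0$ and the four vectors are linearly independent. The \emph{main obstacle} is purely bookkeeping: one must check that the constants $C_a,C_b,C'$ are indeed absolute (independent of $A,B,\sigma,\varepsilon$), so that the smallness of $m\varepsilon$ alone controls the argument. Conceptually there is no difficulty, since the gap between the unperturbed determinant and the perturbation is $15$ orders of magnitude.
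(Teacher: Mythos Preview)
Your proposal is correct and follows essentially the same route as the paper: use the derivative estimates (a), (b) together with Proposition~\ref{approximationpropertiesreg}(iii) to show that the regularized values differ from the unregularized ones (which are, up to nonzero scalars, the standard basis) by a quantity of order $m\varepsilon$, then invoke Assumption~\ref{assumptionepsilon}. The only difference is cosmetic: the paper first normalizes $A,B$ so that $\beta(m)=\gamma(m)=1$ and then applies Lemma~\ref{lemmalinearlyinde} (a small perturbation of an orthonormal set remains linearly independent), whereas you keep the scaling constants and use a determinant expansion instead; both conclusions are equivalent elementary linear algebra.
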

In order to prove this, we need the following elementary result whose proof can be found in the appendix.
\begin{lemma}\label{lemmalinearlyinde}
	Let $\mathscr{H}$ be a Hilbert space, $\{e_1,\dots,e_n \}$ an orthonormal set and $0<\epsilon<\frac{1}{n}$. Then any set $\{v_1,\dots,v_n \}\subset\mathscr{H}$ which fulfills $\|e_i-v_i\|<\epsilon$ for every $i=1,\dots,n$ is linearly independent.
\end{lemma}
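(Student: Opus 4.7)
The plan is to proceed by contradiction, assuming a nontrivial linear relation and extracting a numerical inequality that is incompatible with the bound $\epsilon<1/n$.

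Suppose $\sum_{i=1}^n c_i v_i=0$ for some scalars $c_1,\ldots,c_n\in\bC$ that are not all zero. Writing $w_i:=v_i-e_i$, so that $\|w_i\|<\epsilon$ by hypothesis, the relation rearranges as
\[
-\sum_{i=1}^n c_i e_i=\sum_{i=1}^n c_i w_i.
\]
The idea is that the left-hand side belongs to the orthonormal span of the $e_i$'s, so its components are easy to read off, while the right-hand side is controlled in norm by $\epsilon$ times the size of the coefficients. Concretely, taking the inner product with $e_j$ and using $\langle e_i,e_j\rangle=\delta_{ij}$, we obtain
\[
-c_j=\sum_{i=1}^n c_i\langle w_i,e_j\rangle,\qquad j=1,\ldots,n,
\]
and hence, by Cauchy--Schwarz combined with $\|e_j\|=1$ and $\|w_i\|<\epsilon$,
\[
|c_j|\le \sum_{i=1}^n|c_i|\,\|w_i\|<\epsilon\sum_{i=1}^n|c_i|.
\]

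Now let $M:=\max_{1\le i\le n}|c_i|$, which is strictly positive by assumption, and pick $j$ with $|c_j|=M$. The previous inequality gives $M<\epsilon\,n\,M$, i.e.\ $1<\epsilon n$, contradicting the standing hypothesis $\epsilon<1/n$. Hence all the $c_i$ must vanish, proving linear independence.

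The argument is essentially a perturbation of the identity, with no real obstacle; the only point worth double-checking is the uniform bound on $|c_j|$ in terms of $M$, which relies on summing $n$ terms each bounded by $\epsilon M$ and therefore uses the strict inequality $\epsilon<1/n$ in exactly the form stated. (An alternative phrasing would be to note that the Gram-type matrix $A_{ij}=\langle v_i,e_j\rangle$ satisfies $\|A-\bI\|_\infty<n\epsilon<1$, hence is invertible, forcing independence of the $v_i$; but the direct computation above is shorter.)
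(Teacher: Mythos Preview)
Your proof is correct. The paper itself leaves this lemma to the reader (``the proof is left to the reader''), so there is no in-text argument to compare against; your contradiction via $M<\epsilon n M$ is exactly the kind of short, standard perturbation-of-the-identity argument one expects here.
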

\begin{proof}[Proof of Proposition \ref{regularlinearlyindep1}]
	Take $A:=(2\sqrt{\pi}\sigma)^{-3}$, then $\beta(\sigma)=1$ and $u_{\uparrow,\downarrow}^{(a)}(x_0)=e_{2,3}$. Take $B=2^{-4}\pi^{-3/2}\sigma^{-5}$, then $\gamma(\sigma)=1$ and $u_{\uparrow,\downarrow}^{(b)}(x_0)=\pm e_{0,1}$. This choice of $A$ and $B$, together with estimates (\ref{estimatea}) and (\ref{estimateb}) above, gives (see Equation (\ref{jacobian}))
	\begin{equation}\label{estimatesJu}
	\begin{split}
	\|\mathfrak{J} u_{\uparrow,\downarrow}^{(a)}\|_{x_0,\infty}&\le\! \sup_{B(x_0,\varepsilon)}\!  \left(2^3\cdot \sum_{\mu=0}^3|\partial_\mu u_{\uparrow,\downarrow}^{(a)}(x)|\right) \le 2^2\sqrt{\frac{2}{\pi}}(32\sigma + 8m\sqrt{\pi})\le 10^3\, m,\\
	\|\mathfrak{J} u_{\uparrow,\downarrow}^{(b)} \|_{x_0,\infty}&\le\! \sup_{B(x_0,\varepsilon)}\!\left( 2^3\cdot\sum_{\mu=0}^3|\partial_\mu u_{\uparrow,\downarrow}^{(b)}(x)| \right) \le  2^7\sqrt{\frac{2}{\pi}}(4\sigma+2m^2\sigma^{-1}+3\sqrt{\pi}m)\le 10^4\, m,
	\end{split}
	\end{equation} 
	where the last inequalities follows by setting $\sigma=m$.
	At this point, exploiting Assumption \ref{assumptionepsilon} we get
	$$
	10^3\, m\varepsilon< 10^{4}\,m\varepsilon  <\frac{1}{4}.
	$$
	This, together with Lemma \ref{lemmalinearlyinde}, Lemma \ref{approximationpropertiesreg}-(v) and identities (\ref{valorein0}), (\ref{valorein02}) ends the proof.
\end{proof}

Since the point $x_0\in\bR^{1,3}$ was chosen arbitrarily, the following theorem follows.

\begin{theorem}
	The causal fermion system associated with $\sol^- $ (the vacuum) is regular.
\end{theorem}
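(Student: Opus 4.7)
The plan is to reduce the global regularity of the vacuum causal fermion system to a pointwise question and then invoke the explicit construction carried out just before the theorem. By Definition~\ref{definitionregular}, I must show that $\dim \sS(x) = 4$ for every $x \in \bR^4$. By Lemma~\ref{lemmanucleoregolare}(iv), it suffices to exhibit, at each $x \in \bR^4$, four physical solutions $u_0, u_1, u_2, u_3 \in \sol^-$ whose regularized values $\gR_\varepsilon u_\mu(x)$ form a linearly independent set in $\bC^4$.

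First I would fix an arbitrary point $x_0 \in \bR^4$ and apply Proposition~\ref{regularlinearlyindep1}, which produces exactly four negative-energy physical solutions $u^{(a)}_\uparrow, u^{(a)}_\downarrow, u^{(b)}_\uparrow, u^{(b)}_\downarrow$ — Gaussian-modulated spin up and spin down wave packets with $\sigma = m$ and normalisation constants $A, B$ chosen so that the unregularised values at $x_0$ are precisely $\pm e_0, \pm e_1, e_2, e_3$. These solutions belong to $\sol^-$ by construction, since they arise via Proposition~\ref{spindecomposition} from Schwartz three-momentum distributions $\lambda_{\uparrow,\downarrow}$ paired with the negative-energy spinors $\chi^-_{\uparrow,\downarrow}$. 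Proposition~\ref{regularlinearlyindep1} then guarantees that their $\gR_\varepsilon$-regularised values at $x_0$ remain linearly independent in $\bC^4$.

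Since the point $x_0$ was arbitrary, the pointwise criterion of Lemma~\ref{lemmanucleoregolare}(iv) is met at every spacetime point, and the vacuum causal fermion system $(\sol^-,\rF^\varepsilon)$ is therefore regular. No genuine obstacle remains at this stage: all the analytical content — the explicit choice of Gaussian profiles, the control of $\|\mathfrak{J} u\|_{x_0,\infty}$ through estimates~(a) and~(b), and the passage from unregularised to regularised linear independence via Lemma~\ref{lemmalinearlyinde} combined with Proposition~\ref{approximationpropertiesreg}(iii) under the smallness Assumption~\ref{assumptionepsilon} — has already been carried out inside the proof of Proposition~\ref{regularlinearlyindep1}. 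The only feature of the construction that requires comment is its manifest translation covariance in $x_0$: the dependence on $x_0$ enters the momentum-space profiles $G^{(a)}, G^{(b)}$ only through the phase $e^{-i(\omega(\V{k}) t_0 + \V{k}\cdot \V{x}_0)}$, so the bounds on $u_{\uparrow,\downarrow}^{(a),(b)}$ and their derivatives, as well as the target values at $x_0$, are uniform in $x_0$. Hence the invocation of Proposition~\ref{regularlinearlyindep1} is legitimate at every point, and the theorem follows.
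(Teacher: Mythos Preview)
Your proposal is correct and follows essentially the same approach as the paper: fix an arbitrary $x_0$, invoke Proposition~\ref{regularlinearlyindep1} to obtain four negative-energy solutions whose regularized values at $x_0$ are linearly independent, and conclude via Lemma~\ref{lemmanucleoregolare}(iv) by arbitrariness of $x_0$. The paper's own proof is even more terse (it simply notes that $x_0$ was arbitrary), and your additional remark on the uniformity in $x_0$ of the estimates is a welcome clarification of why Proposition~\ref{regularlinearlyindep1} applies at every point.
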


To conclude this section, we show that the local correlation operators of the vacuum do not only have full-rank, but are also unitary equivalent. This is a manifestation of translation invariance. 
\begin{proposition}\label{propositioninvariancetransl}
	For any $a\in\bR^{1,3}$ the operator defined on the solutions space $\sol$ by
	$$
	\mathrm{U}_a:\sol \ni u\mapsto u_a:=u(\,\cdot+a)\in\sol
	$$
	is well-defined and unitary. Moreover, \\[-0.9em]
	\begin{itemize}
		\item[(i)]	$\mathrm{U}_a(\sol^-)\subset\sol^-$ and $\mathrm{U}_a(\sol^+)\subset\sol^+$\\[-0.8em]
		\item[(ii)] $\gR_\varepsilon\,\mathrm{U}_a=\mathrm{U}_a\,\gR_\varepsilon$.
	\end{itemize}
\end{proposition}
\begin{proof}
	The proof of (i) can be carried out working on the dense subspace $\Sol$ and then taking the unique continuous extension. Concerning point (ii), pick any $u\in\sol^- $ and $a\in\bR^{1,3}$, then by definition we have:
	\begin{equation}\label{commutation}
		\begin{split}
			(\gR_\varepsilon u)(x+a)&=\int_{\bR^4} h_\varepsilon(x+a-z)u(z)\,d^4z\stackrel{w=z-a}{=}\int_{\bR^4}h_\varepsilon(x-w)u(w+a)\,d^4w=\\
			&=\int_{\bR^4} h_\varepsilon(x-w)u_a(w)\, d^4w=\gR_\varepsilon u_a (x).
		\end{split}
	\end{equation}
The claim follows by definition of $\mathrm{U}$.
\end{proof}
\begin{proposition}\label{spectrumF}
	Let $\rF_{vac}^\varepsilon$ be the local correlation function associated with $\sol^-$. Then, for every $x\mbox{ and }a\in\bR^{1,3}$,
	$$
	\rF_{vac}^\varepsilon(x+a)=(\mathrm{U}_a)^\dagger\,\rF_{vac}^\varepsilon(x)\,\mathrm{U}_a.
	$$
	In particular, 
	$$
	\|\rF_{vac}^\varepsilon(x)\|=\|\rF_{vac}^\varepsilon(y)\|\quad\text{and}\quad
	\sigma(\mathrm{F}_{vac}^\varepsilon(x))=\sigma(\mathrm{F}^\varepsilon_{vac}(y))=:\sigma_{vac}.
	$$
\end{proposition}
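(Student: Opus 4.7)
The approach is to exploit the compatibility between the mollification-type regularization and spacetime translations, together with the defining relation for $\rF_{vac}^\varepsilon$. The main identity rests on a single observation: if $\mathrm{U}_a u = u(\,\cdot + a)$, then for $u\in\sol$ the convolution representation $\gR_\varepsilon u = u*h_\varepsilon$ from Theorem \ref{teoremaformaregol} immediately yields
\[
(\gR_\varepsilon \mathrm{U}_a u)(x) \;=\; \int_{\bR^4} h_\varepsilon(y)\,u(x-y+a)\,d^4y \;=\; (\gR_\varepsilon u)(x+a),
\]
i.e.\ regularization intertwines the translation action with the pointwise shift on spacetime.

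With this in hand, I first verify the operator identity on the full space $\sol$ (into which $\scF(\sol^-)$ is embedded, as explained after Definition \ref{defCFS}). For arbitrary $u,v\in\sol$, the definition (\ref{defiF}) together with the unitarity of $\mathrm{U}_a$ (Proposition \ref{propositioninvariancetransl}) gives
\[
\bigl(u\,\big|\,(\mathrm{U}_a)^\dagger \rF_{vac}^\varepsilon(x)\,\mathrm{U}_a v\bigr)
\;=\;\bigl(\mathrm{U}_a u\,\big|\,\rF_{vac}^\varepsilon(x)\,\mathrm{U}_a v\bigr)
\;=\;-\overline{\gR_\varepsilon(\mathrm{U}_a u)(x)}\,\gR_\varepsilon(\mathrm{U}_a v)(x).
\]
Applying the intertwining identity, the right-hand side equals $-\overline{\gR_\varepsilon u(x+a)}\,\gR_\varepsilon v(x+a) = (u\,|\,\rF_{vac}^\varepsilon(x+a)v)$. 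Since $u,v$ are arbitrary and both sides are bounded self-adjoint operators, the non-degeneracy of the inner product forces $\rF_{vac}^\varepsilon(x+a) = (\mathrm{U}_a)^\dagger \rF_{vac}^\varepsilon(x)\,\mathrm{U}_a$. (Under the embedding viewpoint one uses that $\mathrm{U}_a$ preserves $\sol^-$, so the identity descends from $\scB(\sol)$ to $\scB(\sol^-)$ without modification.)

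The remaining statements are then immediate consequences of unitary equivalence: the spectrum is invariant under conjugation by a unitary, so $\sigma(\rF_{vac}^\varepsilon(x+a)) = \sigma(\rF_{vac}^\varepsilon(x))$; applying this with $a = y-x$ shows $\sigma(\rF_{vac}^\varepsilon(x)) = \sigma(\rF_{vac}^\varepsilon(y))$ for all $x,y$. Likewise the operator norm is unitarily invariant, giving $\|\rF_{vac}^\varepsilon(x)\| = \|\rF_{vac}^\varepsilon(y)\|$. There is no genuine obstacle here; the only point requiring care is the intertwining relation for the mollifier, but this is transparent from the translation invariance of Lebesgue measure in the convolution integral.
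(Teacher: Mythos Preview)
Your proof is correct and follows essentially the same route as the paper's: establish the intertwining identity $\gR_\varepsilon(\mathrm{U}_a u)(x)=\gR_\varepsilon u(x+a)$ via the convolution representation, then plug into the defining relation~(\ref{defiF}) and conclude by unitary invariance of spectrum and norm. One small imprecision: the relation~(\ref{defiF}) is only given for $u,v\in\mathscr{H}=\sol^-$, so the sesquilinear computation should be run for $u,v\in\sol^-$ (using that $\mathrm{U}_a$ preserves $\sol^-$, Proposition~\ref{propositioninvariancetransl}) and then extended to $\sol$ by noting both operators vanish on $(\sol^-)^\perp$---your parenthetical gestures at this but has the direction slightly reversed.
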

\begin{proof}
	First, notice that $\mathrm{U}_a(\sol^-)\subset\sol^-$, as proved in Proposition \ref{propositioninvariancetransl}-(i). 
	Exploiting \eqref{commutation}, it follows that
	\begin{equation*}
	\begin{split}
	\langle u|\rF_{vac}^\varepsilon(x+a)v\rangle &=-\Sl\gR_\varepsilon u(x+a)|\gR_\varepsilon v(x+a)\Sr= -\Sl\gR_\varepsilon u_a (x)|\gR_\varepsilon v_a(x)\Sr=\\
	&=\langle u_a|\rF_{vac}^\varepsilon(x)v_a\rangle=\langle u|(\mathrm{U}_a)^\dagger\,\rF_{vac}^\varepsilon(x)\,\mathrm{U}_a v\rangle,
	\end{split}
	\end{equation*}
	from which $\rF_{vac}^\varepsilon(x+a)=(\mathrm{U}_a)^\dagger\,\rF_{vac}^\varepsilon(x)\,\mathrm{U}_a$.
\end{proof}

\subsection{The Presence of Holes in the Dirac Sea}\label{sectionholes}

Before studying this case in detail, we show that, in the general case, the removal of negative-energy solutions may cause a loss of regularity.
Consider the causal fermion system generated by the vacuum $(\sol^- ,\rF_{vac}^\varepsilon)$ and choose any spacetime point $x\in \bR^{1,3}$. We already know that the linear subspace $S_x$ has dimension four, because of regularity. At this point, we can think of removing exactly the physical solutions corresponding to the finite dimensional subspace $S_x$ itself:
$$
\sol^- \longmapsto a_-(S_x):=\sol^-\cap S_x^\perp,
$$
and construct its causal fermion system $(a_-(S_x),\mathrm{F}_0^\varepsilon)$. From Proposition \ref{propprojcetion} we know that
$$
\mathrm{F}^\varepsilon_0(x)=\Pi_0\, \mathrm{F}^\varepsilon_{vac}(x)\, \Pi_0,
$$
where $\Pi_0$ is the orthogonal projector onto $a_-(S_x)$. Since the operator $\mathrm{F}^\varepsilon_{vac}(x)$ is self-adjoint, it satisfies
$$
\ker \mathrm{F}^\varepsilon_{vac}(x)=\big(\ran \mathrm{F}_{vac}(x)\big)^\perp = S_x^\perp\supset a_-(S_x),
$$
which implies $\mathrm{F}^\varepsilon_0(x)=0$. Indeed, let $u\in\sol$ be any arbitrary vector, then $\Pi_0 u\in a_-(S_x)\subset \ker \mathrm{F}^\varepsilon_{vac}(x)$ and therefore,
$$
\mathrm{F}^\varepsilon_0(x)u=\Pi_0\, \mathrm{F}^\varepsilon_{vac}(x)\, \Pi_0u=0.
$$
In particular, the causal fermion system $(a_-(\sU_0),\mathrm{F}^\varepsilon_0)$ is \textit{not regular} at $x$.
Removing the physical solutions which are relevant at the given spacetime point generates a critical defect in the local correlation function.

\begin{remark}
	Note that this argument does not apply to the positive-energy case. Indeed, consider the causal fermion system constructed out of $a_+(\sU)$ for some $\sU\subset\sol^+ $ and choose any $x\in\bR^{1,3}$. Then, the spin-space $S_x$ \textit{is not} a subset of $\sol^+ $ in general, but of the whole space $\sol^- \oplus\sU$, because its vectors have a component also in the negative-energy spectrum. Therefore, their removal from $a_+(\sU)$ does not correspond to the annihilation of positive-energy solutions.
\end{remark}

This simple example shows that the removal of even a finite number of particles from the Dirac sea can perturb the system in a way that regularity is loss.  The physical solutions belonging to $S_x$ are nevertheless quite special: they are very peaked around the light-cone centered at $x$ and diverge on it in the limit $\varepsilon\searrow 0$. One may then wonder whether the removal of solutions which vary very slowly on the scale $\varepsilon$ everywhere in spacetime preserves regularity. The discussion in the following section shows that this is indeed the case and provides sufficient conditions under which the creation of holes in the Dirac sea does not affect regularity.

Mathematically speaking, we are interested in the causal fermion system constructed out of a finite dimensional closed subspace $\sU\subset\sol^-$,
$$
a_-(\sU)=\sol^- \cap\sU^\perp.
$$
The strategy is the same as above: we need to find four elements of $a_-(\sU)$ whose values at $x_0$ are linearly independent. This is more complicated than the vacuum case, since the constraint $u\in \sU^\perp$ is tricky to control while carrying out an argument like the one above. On the other hand, given their convenience, we do not want to give up Gaussian-like solutions, although they generally do not belong to $a_-(\sU)$. In what follows, we show how all this can be put together. The proof of the following three results can be found in the appendix.
\vspace{0.2cm}

\begin{itemize}[leftmargin=2.5em]	
	\item[\rm{(1)}] \textit{In this first part we exploit the density of the Schwartz functions to construct a set of orthonormal smooth vectors which span a subspace which is ``as close as possible" to $\sU$.}
	\begin{proposition}\label{part1}
		Let $\{u_1,\dots,u_n \}$ be an orthonormal basis of $\sU$ and $0<\epsilon<n^{-1/2}$. Then there exists an orthonormal set $\{\psi_1,\dots,\psi_n\}\subset \hat{\mathrm{E}}(\hat{P}_-(\mathcal{S}_p(\bR^3,\bC^4))$ such that:
		\vspace{0.2cm}
		\begin{itemize}[leftmargin=2.5em]
			\item[\rm{(i)}] $\|u_i-\psi_i\|<\epsilon$ and $|(u_i|\psi_j)-\delta_{ij}|\le \epsilon$ for every $i,j=1,\dots,n$,\\[-0.5em]
			\item [\rm{(ii)}] $\mathrm{span}\{\psi_1,\dots,\psi_n\}\cap \sU^\perp=\{0\}$.
		\end{itemize}
	\end{proposition}

	\begin{remark} A few remarks follow.
		\vspace{0.2cm}
		\begin{itemize}[leftmargin=2.5em]
			\item[\rm{(i)}]  Point (a) ensures that  $$\|u_1\wedge\dots\wedge u_n-\psi_1\wedge\dots\wedge \psi_n\|<n^2\epsilon.$$ Therefore, the multi-particle states are close to each other.\\[-0.7em]
			\item[\rm{(ii)}] 	For the sake of simplicity, let us denote the vector of functions $\psi_i$ by $\bpsi$. In particular, we can define
			$$
			\bpsi(x):=(\psi_1(x),\dots,\psi_n(x)).
			$$
			With the symbol $|\bpsi(x)|$ we denote the $\bC^{4n}$-norm of the vector $\bpsi(x)$. More precisely,
			\begin{equation}
			|\bpsi(x)|^2=\sum_{i=1}^n|\psi_i(x)|^2=\sum_{i=1}^n\varrho_{\psi_i}(x)=:\varrho_{\bpsi}(x),
			\end{equation}
			where $\varrho_{\psi_i}$ is the \textit{probability density} of  $\psi_i$. The quantity $\varrho_{\bpsi}$ is to be understood as the \textit{particle-number density} associated with the multi-particle state~$\bpsi$.
		\end{itemize}
	\end{remark}
\vspace{0.3cm}
	\item[\rm{(2)}] \textit{In this second part we construct smooth solutions which are orthogonal to $\sU$. More precisely, we define a specific way of projecting the space $\hat{\mathrm{E}}(\hat{P}_-(\mathcal{S}_p(\bR^3,\bC^4))$ onto the space $\hat{\mathrm{E}}(\hat{P}_-(\mathcal{S}_p(\bR^3,\bC^4))\cap a_-(\sU)$.}
	\begin{proposition}\label{part2}
		Let $\{\psi_1,\dots,\psi_n \}$ be as in Proposition \ref{part1}. Then, for every solution $\varphi\in\hat{\mathrm{E}}(\hat{P}_-(\mathcal{S}_p(\bR^3,\bC^4))$ there exist unique scalars $\lambda_i(\varphi)\in\bC$, with $i=1,\dots,n$, such that
		\begin{equation}\label{projectionoverU}
		\Psi[\varphi]:=\varphi-\sum_{i=1}^n\lambda_i(\varphi)\psi_i\in a_-(\sU).
		\end{equation}
		In particular, $\lambda_i(\varphi)=0$ for every $i=1,\dots,n$ if and only if $\varphi\perp\sU$.
	\end{proposition}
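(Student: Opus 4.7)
The plan is to reduce the statement to a finite-dimensional linear algebra problem and then invoke the transversality property (ii) of Proposition \ref{part1} to ensure invertibility.

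First I would observe that since $\varphi$ and each $\psi_i$ lie in $\sol^-$, the combination $\Psi[\varphi]$ automatically belongs to $\sol^-$, regardless of the choice of coefficients $\lambda_i$. Therefore the condition $\Psi[\varphi]\in a_-(\sU)=\sol^-\cap\sU^\perp$ is equivalent to the $n$ orthogonality conditions $(u_j|\Psi[\varphi])=0$ for $j=1,\dots,n$, where $\{u_1,\dots,u_n\}$ is the orthonormal basis of $\sU$ fixed in Proposition \ref{part1}. Expanding, this is the finite-dimensional linear system
\begin{equation*}
\sum_{i=1}^n M_{ji}\,\lambda_i=(u_j|\varphi),\qquad j=1,\dots,n,\qquad M_{ji}:=(u_j|\psi_i).
\end{equation*}
Thus existence and uniqueness of the $\lambda_i(\varphi)$ are equivalent to the invertibility of the $n\times n$ matrix $M$.

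The key step is proving that $M$ is invertible, and here point (ii) of Proposition \ref{part1} provides the cleanest route. Suppose $M\lambda=0$ for some $\lambda\in\bC^n$. Then the vector $w:=\sum_{i=1}^n \lambda_i\,\psi_i$ satisfies $(u_j|w)=0$ for every $j$, so $w\in\sU^\perp$. But $w\in\langle\{\psi_1,\dots,\psi_n\}\rangle$ by construction, and property (ii) of Proposition \ref{part1} forces $w=0$. Since $\{\psi_1,\dots,\psi_n\}$ is an orthonormal set, this implies $\lambda_i=0$ for all $i$. Hence $M$ is injective, and being a square matrix, it is invertible. (As a sanity check, one could alternatively derive the invertibility from (i) of Proposition \ref{part1} via a Neumann series argument, since $\|M-\bI\|<1$ when $n\epsilon<1$, but the route via (ii) is more direct.)

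Finally, for the last statement I would argue as follows. If $\lambda_i(\varphi)=0$ for every $i$, then $\varphi=\Psi[\varphi]\in a_-(\sU)\subset\sU^\perp$. Conversely, if $\varphi\perp\sU$, then the choice $\lambda_i=0$ already gives $\varphi-\sum_i 0\cdot\psi_i=\varphi\in\sol^-\cap\sU^\perp=a_-(\sU)$, and by the uniqueness established above this must coincide with $\lambda_i(\varphi)$. No substantial obstacle is expected: the entire proof is a straightforward application of the properties carefully built into the set $\{\psi_1,\dots,\psi_n\}$ in Proposition \ref{part1}, with the transversality clause (ii) being precisely what is needed to guarantee that projection onto the span along $\sU^\perp$ is well-defined.
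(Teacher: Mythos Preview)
Your proof is correct and follows essentially the same approach as the paper: reduce to the linear system $M\lambda=((u_j|\varphi))_j$ with $M_{ji}=(u_j|\psi_i)$, and establish invertibility of $M$ by invoking property (ii) of Proposition~\ref{part1}. The paper organizes the argument slightly differently by treating the case $\varphi\perp\sU$ separately before proving $M$ is non-singular, but the substance is identical.
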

\vspace{0.3cm}

	\item[\rm{(3)}] \textit{In this third part we estimate the norm of the vector $\lambda(\varphi)\in\bC^n$ in (\ref{projectionoverU}). 
	}
	\begin{proposition}\label{part3}
		There exists $\{\psi_1,\dots,\psi_n \}$ as in Proposition \ref{part1} such that the scalars $\lambda_i(\varphi)$  in (\ref{projectionoverU}) satisfy
		\begin{equation}\label{stimaambda}
		|\lambda(\varphi)|\le 2\|\varphi\|
		\end{equation}
		for every function $\varphi\in\hat{\mathrm{E}}(P_-(\mathcal{S}_p(\bR^3,\bC^4))$.
	\end{proposition}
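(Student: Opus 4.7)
The plan is to characterize the vector $\lambda(\varphi)\in\bC^n$ as the unique solution of a linear system whose coefficient matrix is a small perturbation of the identity, and to control this solution via a Neumann series combined with Bessel's inequality. Writing out the definition of $\Psi[\varphi]$, the condition $\Psi[\varphi]\in\sU^\perp$ is equivalent to testing against the chosen orthonormal basis $\{u_1,\dots,u_n\}$ of $\sU$, which yields
\[
M\,\lambda(\varphi)=b(\varphi),\qquad M_{ij}:=(\psi_j|u_i),\quad b_i(\varphi):=(\varphi|u_i),
\]
so the problem reduces to bounding $|M^{-1}b(\varphi)|$.

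Next, I would exploit the freedom in Proposition \ref{part1} by selecting $\epsilon\le (2n)^{-1}$, which is admissible since $(2n)^{-1}<n^{-1}$. The pointwise estimate $|M_{ij}-\delta_{ij}|\le\epsilon$ in part (i) of Proposition \ref{part1} yields, via the Frobenius bound, $\|M-\bI_n\|_{\mathrm{op}}\le n\epsilon\le 1/2$. Hence $M$ is invertible and the Neumann series gives $\|M^{-1}\|_{\mathrm{op}}\le (1-n\epsilon)^{-1}\le 2$. To bound $b(\varphi)$, I would apply Bessel's inequality to the orthonormal set $\{u_i\}_{i=1}^n$ in $\sol^-$, obtaining
\[
|b(\varphi)|^2=\sum_{i=1}^n |(\varphi|u_i)|^2\le \|\varphi\|^2,
\]
and combining the two estimates produces $|\lambda(\varphi)|\le \|M^{-1}\|_{\mathrm{op}}\,|b(\varphi)|\le 2\|\varphi\|$, as required.

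The only subtle point will be to use Bessel's inequality rather than the naive termwise estimate $|b_i|\le\|\varphi\|$, which would degrade the final constant to $2\sqrt{n}$; combined with the sharp choice $\epsilon\le (2n)^{-1}$, Bessel's inequality produces the clean constant $2$ appearing in the thesis on the nose. Since all remaining ingredients are standard finite-dimensional linear algebra, I do not anticipate any further obstacle.
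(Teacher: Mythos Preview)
Your proposal is correct and follows essentially the same route as the paper: set up the linear system for $\lambda(\varphi)$, bound $\|M-\bI_n\|$ via the Frobenius estimate coming from Proposition \ref{part1}(i), control $\|M^{-1}\|$ by $2$, and finish with Bessel's inequality on $b(\varphi)$. The only cosmetic difference is that you invoke the Neumann series bound $\|M^{-1}\|\le(1-\|M-\bI_n\|)^{-1}$ directly with the explicit choice $\epsilon\le(2n)^{-1}$, whereas the paper argues non-constructively via continuity of the matrix inverse to force $\|M^{-1}-\bI_n\|<1$ and then applies the triangle inequality; both yield the same constant $2$.
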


\end{itemize}

Given a finite-dimensional subspace $\sU$ of $\sol^-$, the states $\psi_i$  give good smooth approximations of the states within $\sU$. We then give a new definition.
\begin{definition}
	Let $\sU$ be a finite-dimensional subspace of $\sol^-$. Then a family of physical solutions $\{\psi_1,\dots,\psi\}\subset\hat{\mathrm{E}}(\hat{P}_-(\mathcal{S}_p(\bR^3,\bC^4)))$ as in Proposition \ref{part3} is called an \textbf{approximating set of smooth states} for $\sU$.
\end{definition}

We are ready to discuss the regularity of the causal fermion system in presence of holes. To this aim, choose an approximating set of smooth states for $\sU$. The properties of the operator $\Psi$ and the boundedness of the scalars $\lambda_i$ allow us to repeat the discussion carried out for the vacuum in the previous section. The idea is to consider the functions $\Psi[u]$, with $u$ the solutions defined in (a) and (b) of Section \ref{subsectionregularityvacuum}, apply the regularization operator and find some suitable $\sigma$ such that their values at $x_0$ are linearly independent.

\vspace{0.3cm}

For simplicity of notation, let $u_\alpha$ with $\alpha=0,1,2,3$ denote the solutions $u_{\uparrow,\downarrow}^{(a),(b)}$ defined in Section \ref{subsectionregularityvacuum} (which depend on a parameter $\sigma$), $C_\alpha(\sigma)$ the associated coefficient $\beta(\sigma)$ or $\pm\gamma(\sigma)$ and $\chi_\alpha$ the corresponding correction term $-\sum_{i=1}^n\lambda_{i}(u_\alpha)\psi_i$, as given in Proposition \ref{part2}.
We need to check whether the estimate
$$
A_\alpha:=\left|\frac{\gR\Psi[u_\alpha](x_0)}{C_\alpha(\sigma)}-e_\alpha\right|\stackrel{!}{<}\frac{1}{4}
$$
is fulfilled for any $\alpha$. If this is true, then Lemma \ref{lemmalinearlyinde} would conclude the argument.

\begin{equation}\label{estimateregulfinale}
\begin{split}
A_\alpha:\!&=\!\left|\frac{\gR_\varepsilon u_\alpha(x_0)+\gR\chi_\alpha(x_0)}{C_\alpha(\sigma)}\!-\!\frac{u_\alpha(x_0)+\chi_\alpha(x_0)}{C_\alpha(\sigma)}\!+\!\frac{u_\alpha(x_0)+\chi_\alpha(x_0)}{C_\alpha(\sigma)}\!-\! \frac{u_\alpha(x_0)}{C_\alpha(\sigma)}\right|=\\
&=\frac{1}{|C_\alpha(\sigma)|}\left(\left|\gR_\varepsilon u_\alpha(x_0)-u_\alpha(x_0)\right|+|\gR\chi_\alpha(x_0)-\chi_\alpha(x_0)|\right)+\frac{|\chi_\alpha(x_0)|}{|C_\alpha(\sigma)|}\le\\
&\le \frac{1}{|C_\alpha(\sigma)|}\,\varepsilon\left(\|\mathfrak{J} u_\alpha\|_{x_0,\infty}\!+\!\sum_{i=1}^n|\lambda_i(u_\alpha)|\|\mathfrak{J} \psi_i\|_{x_0,\infty}\right)\!+\!\frac{1}{|C_\alpha(\sigma)|}|\lambda(u_\alpha)||\bpsi(x_0)|\le\\
&\le \frac{\|\mathfrak{J}u_\alpha\|_{x_0,\infty}}{|C_\alpha(\sigma)|}\,\varepsilon\!+\!\frac{2^3\pi^{3/2}\|G_\alpha\|_{\mathcal{L}^2}}{|C_\alpha(\sigma)|}\left(\varepsilon\sum_{i=1}^n\|\mathfrak{J} \psi_i\|_{x_0,\infty} + \!|\bpsi(x_0)|\right)\stackrel{!}{<}\frac{1}{4},
\end{split}
\end{equation}
where we used $|\lambda_i(u_\alpha)|\le |\lambda(u_\alpha)|$, Proposition \ref{approximationpropertiesreg}-(v), Lemma \ref{estimate} and (\ref{stimaambda}).

The quantity between brackets will play an important role in the following discussion and deserves its own notation.
\vspace{0.2cm}

\begin{definition}
	Let $\{\psi_1,\dots,\psi_n\}\subset\sol\cap \mathcal{C}^\infty(\mathbb{R}^3,\mathbb{C}^4)$ be an orthonormal set and $x\in\bR^{1,3}$. Then the quantity
	$$
	\cE(\bpsi,\varepsilon,x):=|\bpsi(x)|+\varepsilon\sum_{i=1}^n\|\mathfrak{J} \psi_i\|_{x,\infty}.
	$$
	is called the \textbf{microscopic-behavior distribution} of the multi-particle state.
	
\end{definition}
\begin{remark}
	This quantity provides information on the local behavior of the particle-number distribution and about its variation on the scale $\varepsilon$. 
\end{remark}
At this point, reasoning as in the derivation of the inequalities (\ref{estimatesJu}), setting $\sigma=\lambda m$ (with $\lambda$ arbitrary dimensionless parameter) and making use of (\ref{equalityA}) and (\ref{inequalityB}), it can be shown that
\begin{equation*}
\begin{split}
A_{\uparrow,\downarrow}^{(a)}
&\le 2^7(1+\lambda)m\varepsilon+\dfrac{(2\pi)^{3/4}}{\lambda^{3/2}m^{3/2}}\ \cE(\bpsi,\varepsilon,x_0),\\[0.2cm]
A_{\uparrow,\downarrow}^{(b)}&
\le 2^{11}\left(\lambda+\frac{1}{\lambda}\!+\!1\right)m\varepsilon+\dfrac{4(2\pi)^{3/4}}{m^{3/2}}\left(\frac{1}{\lambda^{3/2}}+\dfrac{1}{\lambda^{5/2}}\right)\cE(\bpsi,\varepsilon,x_0).
\end{split}
\end{equation*}
If we choose $\lambda=(\varepsilon_0/\varepsilon)^{1/2}$, for some arbitrary length parameter $\varepsilon_0$, the right-hand sides above become:
\begin{equation}\label{inequalitiesA}
\begin{split}
A_{\downarrow,\uparrow}^{(a)}&\le 2^7m\left(\varepsilon+\sqrt{\varepsilon_0\varepsilon}\right), +\frac{(2\pi)^{3/4}}{m^{3/2}}\left(\frac{\varepsilon}{\varepsilon_0}\right)^{\frac{3}{4}}\cE(\bpsi,\varepsilon,x_0).\\
A_{\downarrow,\uparrow}^{(b)}&\le 2^{11}\left(\sqrt{\varepsilon_0\varepsilon}+\frac{\varepsilon^{3/2}}{\sqrt{\varepsilon_0}}+\varepsilon\right)+\frac{4(2\pi)^{3/4}}{m^{3/2}}\left(\frac{\varepsilon}{\varepsilon_0}\right)^{\frac{3}{4}}\left(1+\sqrt{\frac{\varepsilon}{\varepsilon_0}}\right)\cE(\bpsi,\varepsilon,x_0)
\end{split}
\end{equation}
We see that the desired condition $A<1/4$ can be obtained by taking $\varepsilon$ small enough.
Nevertheless, this argument is not legitimate, because we want the length parameter $\varepsilon$ to freely vary within the interval $(0,\varepsilon_{max})$. Only at the very end a specific choice for $\varepsilon$ should be made, which is based upon physical arguments and cannot depend on the specific choice of the physical solutions. Troubles could arise from the addend involving the microscopic-behaviour function, which can be very large. The remaining terms involve only positive powers of the microscopic scale $\varepsilon$ and are generally very small. Therefore, we need to make some assumptions on the quantity $\cE(\bpsi,\varepsilon,x_0)$.

Let us start with the limiting case $\cE(\bpsi,\varepsilon,x_0)=0$. If we choose $\varepsilon_0=\varepsilon$, then the above inequalities can be simplified to
$$
A_{\downarrow,\uparrow}^{(a)}\le 2^{8}\,m\varepsilon<10^3\,m\varepsilon,\quad A_{\downarrow,\uparrow}^{(b)}\le 3\cdot 2^{11}\,m\varepsilon<3\cdot 10^4\,m\varepsilon.
$$ 	
If we now take $m\varepsilon< 10^{-15}$ as in Assumption $\ref{assumptionepsilon}$, we get indeed $A<1/4$.

However, the condition $\cE(\bpsi,\varepsilon,x_0)=0$ is too strong in general. Indeed, as discussed briefly in Section \ref{decayproperteis}, it holds that
$$
\supp u(t,\cdot)=\bR^3\quad \mbox{for all }u\in \hat{\mathrm{E}}(\hat{P}_-(\mathcal{S}_p(\bR^3,\bC^4)) \setminus\{0\}\ \mbox{ and for every~$t\in\bR$.}
$$
In particular, the strict inequality $|\bpsi(t,\cdot)|>0$ holds on a dense open subset of $\bR^{3}$.

Small - but non vanishing - distributions $\cE(\bpsi,\varepsilon,x_0)$ arise under the heuristic assumptions
$$
\varepsilon\sum_{i=1}^n\|\mathfrak{J}\psi_i\|_{x_0,\infty}\ll 1\quad\mbox{and}\quad |\bpsi(x_0)|\ll 1.
$$ 
\begin{center}
	\textit{Physically speaking, we require a low particle-number density around the point $x_0$ and a slow variation of the physical solutions on the microscopic scale $\varepsilon$}.
\end{center}
More concretely, take for example $\varepsilon_0=10^{16}\varepsilon$. Then inequalities (\ref{inequalitiesA}) simplify to
\begin{equation*}
\begin{split}
A_{\downarrow,\uparrow}^{(a)}&<  10^{11}\,m\varepsilon+\frac{(2\pi)^{3/4}}{m^{3/2}}10^{-12}\,\cE(\bpsi,\varepsilon,x_0),\\[0.1cm]
A_{\downarrow,\uparrow}^{(b)}&<3\cdot 10^{12}\, m\varepsilon+\frac{8(2\pi)^{3/4}}{m^{3/2}}10^{-12}\,\cE(\bpsi,\varepsilon,x_0).
\end{split}
\end{equation*}
Exploiting again Assumption \ref{assumptionepsilon} and choosing $\sigma=m(\varepsilon_0/\varepsilon)^{1/2}$, we then have the following result.
\begin{proposition}\label{proporegularityholes}
	Let $u^{(a)}, u^{(b)}$ be the special solutions defined in  Section \ref{subsectionregularityvacuum} with $\sigma=10^{8}m$ and suppose that $\sU$ admits an approximating smooth set of states $\{\psi_1,\dots,\psi_n\}$ that fulfills
	\begin{equation}\label{macroscopicond}
	\cE(\bpsi,\varepsilon,x_0)<10^{9}\cdot m^{3/2}.
	\end{equation}
	Then the following vectors are linearly independent:
	$$
	\gR_\varepsilon \Psi\big[u_{\uparrow}^{(a)}\big](x_0),\quad \gR_\varepsilon \Psi\big[u_{\downarrow}^{(a)}\big](x_0),\quad \gR_\varepsilon \Psi\big[u_{\uparrow}^{(b)}\big](x_0),\quad 
	\gR_\varepsilon \Psi\big[u_{\downarrow}^{(b)}\big](x_0).
	$$
\end{proposition}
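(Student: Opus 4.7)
The plan is to follow the quantitative scheme already set up in the excerpt: bound the deviation of each regularized corrected solution from the target unit vector $e_\alpha$, and then invoke Lemma \ref{lemmalinearlyinde} to conclude linear independence. Concretely, I will define, for $\alpha \in \{(\uparrow,a),(\downarrow,a),(\uparrow,b),(\downarrow,b)\}$, the quantity
\[
A_\alpha := \left|\frac{\gR_\varepsilon \Psi[u_\alpha](x_0)}{C_\alpha(\sigma)} - \mathrm{sgn}_\alpha\, e_{\mu(\alpha)}\right|,
\]
with $C_\alpha(\sigma)$ and the indexing $\mu(\alpha)$ chosen exactly so that $u_\alpha(x_0)/C_\alpha(\sigma) = \pm e_{\mu(\alpha)}$, as in (\ref{valorein0})–(\ref{valorein02}). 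The task reduces to showing $A_\alpha < 1/4$ for every $\alpha$ under the hypotheses on $\sigma$ and on $\cE(\bpsi,\varepsilon,x_0)$.

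The first step is the triangle inequality computation already carried out in (\ref{estimateregulfinale}): splitting $A_\alpha$ into a regularization-error part for $u_\alpha$, a regularization-error part for the correction $\chi_\alpha = -\sum_i\lambda_i(u_\alpha)\psi_i$, and a pure-correction part $|\chi_\alpha(x_0)|/|C_\alpha(\sigma)|$. Applying Proposition \ref{approximationpropertiesreg}(iii) to control the first two terms, the bound $|\lambda(u_\alpha)|\le 2\|u_\alpha\|\le 2^3\pi^{3/2}\|G_\alpha\|_{\scL^2}$ from Proposition \ref{part3} together with Lemma \ref{estimate} to bound $\|u_\alpha\|$, and the definition of $\cE(\bpsi,\varepsilon,x_0)$, I arrive at the estimate already displayed at the end of (\ref{estimateregulfinale}), in which the $\psi$-dependence is entirely encapsulated in a single factor $\cE(\bpsi,\varepsilon,x_0)$.

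Next, I substitute the explicit constants $C_\alpha(\sigma) = \beta(\sigma)$ or $\pm\gamma(\sigma)$ from (\ref{equalityA}) with $A=(2\sqrt\pi\sigma)^{-3}$ and $B=2^{-4}\pi^{-3/2}\sigma^{-5}$, and the Jacobian bounds (\ref{estimatea})–(\ref{estimateb}) on $\|\mathfrak J u_\alpha\|_{x_0,\infty}$. Writing $\sigma = \lambda m$ with $\lambda$ a free positive dimensionless parameter, arithmetic simplification yields the two inequalities (\ref{inequalitiesA}) after setting $\lambda=(\varepsilon_0/\varepsilon)^{1/2}$. This introduces a second free parameter $\varepsilon_0$ that I can tune; the optimization of $\varepsilon_0$ (against $\varepsilon$ and against the microscopic-behaviour bound on $\cE$) is the crux, because $\varepsilon$ is only allowed to satisfy Assumption \ref{assumptionepsilon} and is \emph{not} permitted to depend on the choice of $\sU$.

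Finally, the specific choice $\varepsilon_0 = 10^{16}\varepsilon$ in (\ref{inequalitiesA})—which corresponds precisely to $\sigma = 10^{8} m$ as claimed in the statement—produces bounds
\[
A^{(a)} < 10^{11}\,m\varepsilon + \tfrac{(2\pi)^{3/4}}{m^{3/2}}\,10^{-12}\,\cE(\bpsi,\varepsilon,x_0),\qquad A^{(b)} < 3\cdot 10^{12}\,m\varepsilon + \tfrac{8(2\pi)^{3/4}}{m^{3/2}}\,10^{-12}\,\cE(\bpsi,\varepsilon,x_0).
\]
Inserting Assumption \ref{assumptionepsilon} ($m\varepsilon<10^{-15}$) and the assumed bound $\cE(\bpsi,\varepsilon,x_0)<10^{9}\cdot m^{3/2}$ gives $A_\alpha<1/4$ for each of the four indices, with room to spare. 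Applying Lemma \ref{lemmalinearlyinde} with $n=4$ and $\epsilon=1/4$ to the orthonormal set $\{e_0,e_1,e_2,e_3\}$ and the vectors $\pm\gR_\varepsilon\Psi[u_\alpha](x_0)/C_\alpha(\sigma)$ yields the claimed linear independence. The main subtlety—and the only step requiring genuine care—is verifying the chain of constant bookkeeping so that the numerical coefficients in (\ref{inequalitiesA}) are indeed the correct majorants; the algebra is routine once the choices $\sigma=\lambda m$ and $\lambda=(\varepsilon_0/\varepsilon)^{1/2}$ are made.
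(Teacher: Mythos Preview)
Your proposal is correct and follows essentially the same approach as the paper: the proposition is stated there as an immediate consequence of the chain of estimates (\ref{estimateregulfinale})--(\ref{inequalitiesA}) with the specific choice $\varepsilon_0=10^{16}\varepsilon$ (equivalently $\sigma=10^8 m$), Assumption~\ref{assumptionepsilon}, and Lemma~\ref{lemmalinearlyinde}. You have reproduced this scheme step by step, including the correct bookkeeping $2\|u_\alpha\|\le 2\sqrt{2}(2\pi)^{3/2}\|G_\alpha\|_{\scL^2}=2^3\pi^{3/2}\|G_\alpha\|_{\scL^2}$.
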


\begin{remark}
	Condition (\ref{macroscopicond})  can be understood as a ``macroscopic" behavior of the physical solutions $\psi_i$ at the spacetime point $x_0$ with respect to the microscopic scale~$\varepsilon$. 
\end{remark}
This suggest the following definition.
\begin{definition}
	A finite dimensional subspace $\sU\subset\sol^- $ is said to be \textbf{$\boldsymbol{\varepsilon}$-macroscopic} at $x\in\bR^{1,3}$ if it admits an approximating smooth set of states which fulfills the macroscopic condition (\ref{macroscopicond}).
\end{definition}

Summarizing, we have proven the following result.
\begin{theorem}\label{emacroscopicregular}
	Let $\sU\subset\sol^-$ be a finite-dimensional subspace which is $\varepsilon$-macroscopic at $x\in\bR^{1,3}$. Then the associated causal fermion system $(a_-(\sU),\mathrm{F}^\varepsilon)$ is regular at $x$. 
\end{theorem}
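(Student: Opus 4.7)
The plan is to unpack the definition of $\varepsilon$-macroscopicity and then feed the ingredients directly into the two tools already forged in this section: Proposition \ref{proporegularityholes}, which produces four candidate vectors whose regularized values at $x$ are linearly independent, and Lemma \ref{lemmanucleoregolare}(iv), which translates such a statement into regularity of the causal fermion system at $x$. So the whole theorem should reduce to a short bookkeeping argument rather than any new estimate.

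First I would use $\varepsilon$-macroscopicity of $\sU$ at $x$ to choose an approximating smooth set of states $\{\psi_1,\dots,\psi_n\}\subset\hat{\mathrm{E}}_0(\hat{P}_-(\scS_p(\bR^3,\bC^4)))$ for $\sU$ satisfying the microscopic behaviour bound $\cE(\bpsi,\varepsilon,x)<10^{9}\cdot m^{3/2}$. With this family in hand, Proposition \ref{part2} provides, for every $\varphi\in\hat{\mathrm{E}}_0(\hat{P}_-(\scS_p(\bR^3,\bC^4)))$, a corrected solution $\Psi[\varphi]\in a_-(\sU)$. In particular, taking the four Gaussian-type solutions $u_{\uparrow,\downarrow}^{(a),(b)}$ from Section \ref{subsectionregularityvacuum} with the choice $\sigma=10^{8}m$, I obtain four explicit vectors
\begin{equation*}
\Psi\big[u_{\uparrow}^{(a)}\big],\ \Psi\big[u_{\downarrow}^{(a)}\big],\ \Psi\big[u_{\uparrow}^{(b)}\big],\ \Psi\big[u_{\downarrow}^{(b)}\big]\in a_-(\sU),
\end{equation*}
which are the natural candidates for the four elements required by Lemma \ref{lemmanucleoregolare}(iv).

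Second, I would simply quote Proposition \ref{proporegularityholes}: since the approximating set satisfies $\cE(\bpsi,\varepsilon,x)<10^{9}\cdot m^{3/2}$ and $\sigma=10^{8}m$ was chosen as above, the regularized vectors
\begin{equation*}
\gR_\varepsilon\Psi\big[u_{\uparrow}^{(a)}\big](x),\ \gR_\varepsilon\Psi\big[u_{\downarrow}^{(a)}\big](x),\ \gR_\varepsilon\Psi\big[u_{\uparrow}^{(b)}\big](x),\ \gR_\varepsilon\Psi\big[u_{\downarrow}^{(b)}\big](x)
\end{equation*}
are linearly independent in $\bC^4$. Applying Lemma \ref{lemmanucleoregolare}(iv) to the closed subspace $a_-(\sU)$ then yields $\dim\sS(x)=4$, which is precisely the regularity of $(a_-(\sU),\rF^\varepsilon)$ at $x$.

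There is really no conceptual obstacle left at this stage, because the analytic work — constructing the approximating smooth states with the bound $|\lambda(\varphi)|\le 2\|\varphi\|$ of Proposition \ref{part3}, and tracking the $\varepsilon$-dependence through the estimates (\ref{estimateregulfinale})--(\ref{inequalitiesA}) that culminate in Proposition \ref{proporegularityholes} — has already been carried out above. The only subtle point worth double-checking is that the $\Psi[u_\alpha]$ indeed land in $a_-(\sU)\subset\sol^-$ (and not merely in $\sU^\perp\cap\sol$), but this is guaranteed by Proposition \ref{part2} together with the fact that the $u_\alpha$ and the $\psi_i$ are all negative-energy solutions, so that their difference remains in $\sol^-$. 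Hence the theorem is obtained as a clean synthesis of the preceding results.
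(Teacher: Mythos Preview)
Your proposal is correct and matches the paper's approach exactly: the paper presents this theorem without a separate proof, simply writing ``Concluding we have the following result'' after Proposition~\ref{proporegularityholes} and the definition of $\varepsilon$-macroscopicity, so the theorem is indeed the immediate synthesis of Proposition~\ref{proporegularityholes} with Lemma~\ref{lemmanucleoregolare}(iv) that you have spelled out.
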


\subsection{The General Case of Particles and Antiparticles}

The addition of positive-energy particles to the vacuum enriches the system and, as such, it should not affect regularity. This is indeed the case, as we now prove in full generality.

Suppose we are given a finite-dimensional subspace $\sU_-\subset\scH_-$ such that $a_-(\sU_-)$ is regular, such as an $\varepsilon$-macroscopic system or the vacuum itself. If we now add some positive-energy physical solutions (described by a finite-dimensional subspace $\sU_+\subset\sol^+ $) the relevant Hilbert space becomes
$$
a(\sU_+,\sU_-)=a_-(\sU_-) \oplus\sU_+.
$$ 
Since this Hilbert space contains $a_-(\sU_-)$, Lemma \ref{lemmanucleoregolare}-(v) implies the following.

\begin{corollary}\label{regularitygeneralcase}
	Let $\sU_\pm\subset\sol^\pm$ be finite-dimensional subspaces and suppose that $a_-(\sU_-)$ is regular at $x\in\bR^{1,3}$, then $a(\sU_-,\sU_+)$ is also regular at $x$.
\end{corollary}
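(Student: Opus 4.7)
The plan is to reduce the statement to Lemma \ref{lemmanucleoregolare}(v), which encodes the monotonicity of pointwise regularity under enlargement of the generating Hilbert subspace. By definition of the notations introduced in (\ref{aggiuntotoglio})--(\ref{entramne}), one has $a(\sU_-,\sU_+) = a_-(\sU_-) \oplus \sU_+$, so $a_-(\sU_-)$ is a closed subspace of $a(\sU_-,\sU_+)$. The conclusion then follows immediately from applying Lemma \ref{lemmanucleoregolare}(v) with $\scH = a_-(\sU_-)$ and $\scH_1 = a(\sU_-,\sU_+)$.

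More concretely, one can reprove it hands-on via the criterion of Lemma \ref{lemmanucleoregolare}(iv). The assumption that $a_-(\sU_-)$ is regular at $x$ provides four solutions $u_0,u_1,u_2,u_3 \in a_-(\sU_-)$ such that $\gR_\varepsilon u_0(x),\dots,\gR_\varepsilon u_3(x)$ are linearly independent in $\bC^4$. Since the regularization operator $\gR_\varepsilon$ and its evaluation at a point $x$ are defined directly on $\sol$ (see Proposition \ref{defiregularization} and Proposition \ref{approximationpropertiesreg}(ii)), they do not depend on which closed subspace the solutions are regarded as living in. Hence the same four vectors $u_\mu$, viewed now as elements of the larger subspace $a(\sU_-,\sU_+) \supset a_-(\sU_-)$, witness the linear independence of $\gR_\varepsilon u_0(x),\dots,\gR_\varepsilon u_3(x)$, and by Lemma \ref{lemmanucleoregolare}(iv) this is precisely the criterion of regularity at $x$ for the causal fermion system generated by $a(\sU_-,\sU_+)$.

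There is no real obstacle here: the positive-energy enrichment $\sU_+$ contributes additional spin-space directions but cannot cancel the ones already supplied by $a_-(\sU_-)$. This matches the physical picture outlined in the introduction, according to which the addition of particles only augments the information carried by the system and cannot spoil its regularity, whereas the nontrivial direction (the behaviour under removal of negative-energy solutions) is precisely what Theorem \ref{emacroscopicregular} took care of.
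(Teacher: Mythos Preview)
Your proof is correct and takes essentially the same approach as the paper: the paper simply observes that $a(\sU_-,\sU_+)\supset a_-(\sU_-)$ and invokes Lemma \ref{lemmanucleoregolare}(v), which is exactly your first paragraph. Your second paragraph unpacks the content of that lemma via criterion (iv), which is fine but not needed.
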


In particular, the above result applies to the special case $\sU_-=\{0\}$, i.e. when only particles (and not antiparticle) are added to the system. Corollary \ref{regularitygeneralcase} shows that no assumptions are needed to ensure regularity, differently from the case when antiparticles are present. This shows an asymmetry between the concepts of particle and antiparticle, at least in the present settings.


\subsection{Characterization of the Vacuum Local Correlation Operators}\label{sectionvacuumprojector}

In this section we focus on the vacuum, i.e. to the causal fermion system generated by $\sol^-$, and conclude the discussion started at the end of Section \ref{sectioninterpretation}. In the previous section we showed that the vacuum  defines a regular system and, therefore, every spin space $S_x$ is isometric to $\bC^4$ by means of the function $\Phi_x$ (see Theorem \ref{isometry}).  We now show that, under these identifications, the kernel of the fermionic operator and the doubly-regularized kernel of the fermionic projector do in fact coincide.
\begin{theorem}\label{diagram}
	For any couple $x,y\in\bR^{1,3}$ the following diagram commutes\footnote{\textit{Abuse of language}:  $\mathrm{P}^\varepsilon(x,y)$ actually acts on the whole space $\scH_m$ but it vanishes on~$(S_y)^\perp$.}:
	\begin{displaymath}
	\begin{tikzcd}
	S_y  \arrow[rr, "{\mathrm{P}^\varepsilon(x,y)}"]\arrow[d, "\Phi_y"'] & &S_x\arrow[d, "\Phi_x"]\\	
	\bC^4   \arrow[rr, "{2\pi P^{2\varepsilon}(x,y)}"] 	& & \bC^4
	\end{tikzcd}
	\end{displaymath}
	where $P^{2\varepsilon}(x,y)$ is introduced in Definition \ref{rfp2}.
\end{theorem}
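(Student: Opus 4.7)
The proof is essentially a direct identification of two sum-representations that have already been established in the paper, so I would keep it short and exploit these.

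The plan is the following. Fix a spacetime point $y \in \bR^4$ and pick any $v \in \sS(y)$; I want to show that $\Phi_x(\mathrm{P}_\varepsilon(x,y)\,v) = 2\pi\,\scP_{\varepsilon^2}(x,y)\,\Phi_y(v)$ holds as an identity in $\bC^4$. Since the vacuum causal fermion system $(\sol^-,\rF_{vac}^\varepsilon)$ is regular (Theorem in Section \ref{subsectionregularityvacuum}), every spin space $\sS(y)$ has dimension four and $\Phi_y$ maps it bijectively onto $\bC^4$, so verifying the identity on $\sS(y)$ is equivalent to the commutativity of the diagram.

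Next, I would fix an arbitrary Hilbert basis $\{u_n\}_{n\in\bN}$ of $\sol^-$ and write down the two relevant representations. On the one hand, applying Proposition \ref{representationgenerickernel} to $\scH = \sol^-$ gives, for every $v \in \sS(y)$,
\begin{equation*}
\Phi_x\bigl(\mathrm{P}_\varepsilon(x,y)\,v\bigr) = -\sum_{n\in\bN}\gR_\varepsilon u_n(x)\,\overline{\gR_\varepsilon u_n(y)}\,\Phi_y(v).
\end{equation*}
On the other hand, by Proposition \ref{representationP} (vacuum case, sign $-$), acting on the vector $\Phi_y(v) = \gR_\varepsilon v(y) \in \bC^4$ yields
\begin{equation*}
2\pi\,\scP_{\varepsilon^2}(x,y)\,\Phi_y(v) = -\sum_{n\in\bN}\gR_\varepsilon u_n(x)\,\bigl(\overline{\gR_\varepsilon u_n(y)}\,\Phi_y(v)\bigr),
\end{equation*}
where the outer product $\gR_\varepsilon u_n(x)\,\overline{\gR_\varepsilon u_n(y)}$ is interpreted as the $4\times 4$ matrix whose action on a $\bC^4$-vector is the scalar $\overline{\gR_\varepsilon u_n(y)}\cdot$ times $\gR_\varepsilon u_n(x)$. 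Comparing the two displays term by term, both equal the same series, so the desired identity holds for every $v\in\sS(y)$.

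The only subtle point is that both representation propositions have already done the heavy lifting: they establish convergence of the infinite sums (in the appropriate sense) and take care of the fact that $\mathrm{P}_\varepsilon(x,y)$ vanishes on $\sS(y)^\perp$ — hence the ``abuse of language'' footnote in the statement is harmless. Once those are cited, nothing more than matching notation is required, and the commutativity of the diagram is immediate. No real obstacle is anticipated; the entire argument is book-keeping on top of Propositions \ref{representationP} and \ref{representationgenerickernel}, combined with the regularity of the vacuum to ensure $\Phi_y$ is surjective onto $\bC^4$.
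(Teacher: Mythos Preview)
Your proposal is correct and follows essentially the same approach as the paper: the paper's proof is a one-line citation of Proposition \ref{representationP} and Equation (\ref{expansionP2}) (i.e.\ Proposition \ref{representationgenerickernel}), and you have simply spelled out how these two sum-representations match term by term. The only minor imprecision is that Proposition \ref{representationP} requires the Hilbert basis $\{u_n\}$ to lie in $\hat{\mathrm{E}}_0(\hat{P}_-(\scS_p(\bR^3,\bC^4)))$ rather than being ``arbitrary'', but such a basis exists and can be used in both propositions, so this is a trivial adjustment.
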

\begin{proof}
	The proof comes from Proposition \ref{representationP} and Equation (\ref{expansionP2}).
\end{proof}

We want to make use of this identifications and study the spectral properties of the operators $\rF^\varepsilon(x)$. From Proposition \ref{spectrumF} we already know that the local correlation operators $\rF^\varepsilon(x)$ are unitarily equivalent and therefore have the same spectrum $\sigma_{vac}$. 
Thanks to the definition of $\mathrm{P}^\varepsilon(x,y)$ and to Theorem \ref{diagram}, it is clear that the spectrum of $\rF^\varepsilon(x)|_{S_x}$ coincides with the spectrum of $2\pi P^{2\varepsilon}(x,x)$. Using this, together with Proposition \ref{propositionkerneldiagonal}, it is possible to prove the following proposition. 
\begin{proposition}
	Referring to Proposition \ref{propositionkerneldiagonal}, for every $x\in\bR^{1,3}$,
	$$
	\sigma_{vac}=\sigma(2\pi\,P^{2\varepsilon}(x,x))\cup\{0\}=\left\{0,2\pi\nu^+(\varepsilon),2\pi\nu^-(\varepsilon) \right\}.
	$$
\end{proposition}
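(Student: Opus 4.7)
The plan is to decompose $\sigma_{vac}$ into a contribution from the kernel of $\rF^\varepsilon_{vac}(x)$ and a contribution from its restriction to $\sS(x)$, then identify the latter with the spectrum of $2\pi\scP_{\varepsilon^2}(x,x)$ via the commutative diagram of Theorem \ref{diagram}.

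First, fix $x\in\bR^4$. Since $\rF^\varepsilon_{vac}(x)$ is self-adjoint and bounded, we have the orthogonal decomposition $\sol=\sS(x)\oplus\sN(x)$ as in (\ref{decomposS}), with $\rF^\varepsilon_{vac}(x)\restr_{\sN(x)}=0$ and $\rF^\varepsilon_{vac}(x)(\sS(x))\subset\sS(x)$. Because $\sN(x)$ is infinite-dimensional (one has $\dim\sS(x)=4$ by the regularity of the vacuum proved in Section \ref{subsectionregularityvacuum}, while $\sol$ is infinite-dimensional), $0$ is an eigenvalue of $\rF^\varepsilon_{vac}(x)$ and therefore $0\in\sigma_{vac}$. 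Moreover, since $\rF^\varepsilon_{vac}(x)$ leaves the two summands invariant,
$$
\sigma_{vac}=\sigma\big(\rF^\varepsilon_{vac}(x)\big)=\{0\}\cup \sigma\big(\rF^\varepsilon_{vac}(x)\restr_{\sS(x)}\big).
$$

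Next, I would identify the non-zero part of the spectrum. By Definition \ref{projdef}, on $\sS(x)$ one has $\mathrm{P}_\varepsilon(x,x)=\pi_x\rF^\varepsilon_{vac}(x)=\rF^\varepsilon_{vac}(x)\restr_{\sS(x)}$, because the range of $\rF^\varepsilon_{vac}(x)$ already lies in $\sS(x)$. Applying Theorem \ref{diagram} with $y=x$ gives the commuting square
$$
\Phi_x\circ \rF^\varepsilon_{vac}(x)\restr_{\sS(x)}=2\pi\,\scP_{\varepsilon^2}(x,x)\circ \Phi_x,
$$
and since $x$ is regular the linear map $\Phi_x:\sS(x)\to\bC^4$ is a bijection (Theorem \ref{isometry}). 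Hence $\rF^\varepsilon_{vac}(x)\restr_{\sS(x)}$ and $2\pi\scP_{\varepsilon^2}(x,x)$ are similar endomorphisms of four-dimensional spaces and share the same spectrum.

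Finally, Proposition \ref{propositionkerneldiagonal} gives $\sigma(\scP_{\varepsilon^2}(x,x))=\{\lambda^{+}_{\varepsilon^2},\lambda^{-}_{\varepsilon^2}\}$, each eigenvalue of multiplicity two, so $\sigma\big(2\pi\scP_{\varepsilon^2}(x,x)\big)=\{2\pi\lambda^{+}_{\varepsilon^2},2\pi\lambda^{-}_{\varepsilon^2}\}$. Combining the two pieces yields
$$
\sigma_{vac}=\{0\}\cup\sigma\big(2\pi\scP_{\varepsilon^2}(x,x)\big)=\bigl\{0,\ 2\pi\lambda^{+}_{\varepsilon^2},\ 2\pi\lambda^{-}_{\varepsilon^2}\bigr\}.
$$
There is no real obstacle here: the whole statement is essentially a bookkeeping argument built on top of Theorem \ref{diagram}, the regularity of the vacuum, and the explicit diagonalization in Proposition \ref{propositionkerneldiagonal}. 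The only point deserving attention is that $\Phi_x$ need not be unitary for the Hilbert-space inner product; what is used is only that it is a linear bijection, which is enough to conclude similarity and hence equality of spectra.
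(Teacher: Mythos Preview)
Your proof is correct and follows essentially the same route as the paper: the paper states (in the paragraph preceding the proposition) that by the definition of $\mathrm{P}_\varepsilon(x,y)$ and Theorem \ref{diagram} the spectrum of $\rF^\varepsilon(x)\restr_{\sS(x)}$ coincides with that of $2\pi\,\scP_{\varepsilon^2}(x,x)$, and then invokes Proposition \ref{propositionkerneldiagonal}. Your write-up just makes the orthogonal decomposition and the similarity argument explicit, including the observation that only bijectivity (not unitarity) of $\Phi_x$ is needed.
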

We  now make use of the representation in Theorem \ref{diagram} to investigate the properties of the physical solutions which form the spin spaces $S_x$.
In particular, we focus on the eigenvector bases of the local correlation operators $\rF^\varepsilon(x)$.

\begin{proposition}\label{elemetnsS}
	For any spacetime point $x\in\bR^{1,3}$ the following statements hold.
	\vspace{0.1cm}
	\begin{itemize}[leftmargin=2.5em]
		\item[\rm{(i)}] A vector $u\in\sol^-$ belongs to  $S_x$ if and only if there exists $a\in\bC^4$ such that
		\begin{equation}\label{equazionedaR}
		u=P^{\varepsilon}(\,\cdot,x)a.
		\end{equation}
		More precisely, for every $u\in S_x$ there is a unique $a\in\bC^4$ such that (\ref{equazionedaR}) holds.\\[-0.15cm]
		\item[\rm{(ii)}] The action of the local correlation operator on $u\in\sol^- $ is given by
		$$
		\rF_{vac}^\varepsilon(x)u = 2\pi\,P^{\varepsilon}(\,\cdot\,,x)\gR_\varepsilon u(x).
		$$
		\item[\rm{(iii)}] In particular, a linear basis for $S_x$ is given by the four linearly independent functions
		\begin{equation}\label{eigenvectors}
		e_{x,\mu}:=P^{\varepsilon}(\,\cdot\,,x)e_\mu.
		\end{equation}
		Moreover, for every $a\in\bR^{1,3}$,
		$$
		\mathrm{U}_a (e_{x,\mu})=e_{x+a,\mu}.
		$$
		\item[\rm{(iv)}] The vectors $e_{x,\mu}$ are eigenvectors of $\rF^\varepsilon(x)$ with eigenvalues $2\pi\nu^\pm(\varepsilon)$. More precisely,
		\begin{equation}
		\begin{cases}
		\rF_{vac}^\varepsilon(x)\,e_{x,0}=(2\pi\,\nu^-(\varepsilon))\, e_{x, 0}\\[0.3em]
		\rF_{vac}^\varepsilon(x)\,e_{x,1}=(2\pi\,\nu^-(\varepsilon)) \,e_{x, 1}
		\end{cases}
		\quad
		\begin{cases}
		\rF_{vac}^\varepsilon(x)e_{x,2}=(2\pi\,\nu^+(\varepsilon))\, e_{x, 2}\\[0.3em]
		\rF_{vac}^\varepsilon(x)e_{x,3}=(2\pi\,\nu^+(\varepsilon))\, e_{x, 3}
		\end{cases}
		\end{equation}
		\vspace{0.1cm}
	\end{itemize}
\end{proposition}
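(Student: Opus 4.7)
The heart of the proposition is (ii), from which (i), (iii), and (iv) follow easily. The defining relation of $\rF^\varepsilon_{vac}(x)$ reduces (ii) to establishing the reproducing formula
$$(v\,|\,\scP_\varepsilon(\,\cdot\,,x)\,a) \;=\; -\frac{1}{2\pi}\,\overline{\gR_\varepsilon v(x)}\,a \qquad (v\in\sol^-,\ a\in\bC^4),$$
since $\scP_\varepsilon(\,\cdot\,,x)a\in\sol^-$ by Proposition~\ref{rfp}; one then sets $a=\gR_\varepsilon u(x)$ and uses non-degeneracy of the inner product on $\sol^-$. To prove the reproducing formula I would fix an arbitrary Hilbert basis $\{u_n\}_n$ of $\sol^-$ and combine Proposition~\ref{representationP} with Proposition~\ref{double regularazione}(i) to obtain, for every $z\in\bR^4$,
$$\gR_\varepsilon\bigl(\scP_\varepsilon(\,\cdot\,,x)a\bigr)(z)=\scP_{\varepsilon^2}(z,x)\,a = -\frac{1}{2\pi}\sum_{n}\bigl(\overline{\gR_\varepsilon u_n(x)}\,a\bigr)\,\gR_\varepsilon u_n(z).$$
The scalars $c_n:=-(2\pi)^{-1}\overline{\gR_\varepsilon u_n(x)}\,a$ are $\ell^2$-summable because $\sum_n|\gR_\varepsilon u_n(x)|^2$ equals the finite trace $\operatorname{tr}\!\bigl(-2\pi\,\scP_{\varepsilon^2}(x,x)\gamma^0\bigr)$ via the same series formula. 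Hence $\Phi:=\sum_n c_n u_n$ converges in $\sol^-$; continuity of the pointwise evaluation of $\gR_\varepsilon$ (Proposition~\ref{approximationpropertiesreg}(ii)) yields $\gR_\varepsilon\Phi=\gR_\varepsilon\bigl(\scP_\varepsilon(\,\cdot\,,x)a\bigr)$, whereupon injectivity of $\gR_\varepsilon$ for mollification-type regularizations (Proposition~\ref{convoltioncomesabout}(ii) together with Assumption~\ref{assumpformarego}) identifies $\scP_\varepsilon(\,\cdot\,,x)a=\sum_n c_n u_n$. Pairing with $v=\sum_n (u_n|v)u_n$ and expanding $\gR_\varepsilon v(x)=\sum_n (u_n|v)\gR_\varepsilon u_n(x)$ (again by continuity) produces the reproducing identity.

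Once (ii) is in hand, (i) is immediate: (ii) gives $\sS(x)=\ran\rF^\varepsilon_{vac}(x)\subseteq\{\scP_\varepsilon(\,\cdot\,,x)a:a\in\bC^4\}$. Regularity of the vacuum at $x$, already proven, supplies $u_\mu\in\sol^-$ with $\gR_\varepsilon u_\mu(x)=e_\mu$; feeding these into the reproducing identity shows that $\scP_\varepsilon(\,\cdot\,,x)a=0$ forces $\overline{e_\mu}\,a=0$ for all $\mu$, hence $a=0$, so $a\mapsto\scP_\varepsilon(\,\cdot\,,x)a$ is injective. A dimension count against $\dim\sS(x)=4$ closes both the surjectivity and the uniqueness claims. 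Statement (iii) then follows, with the translation identity reducing to the observation that the kernel $\scP_\varepsilon(z,x)$ depends only on $z-x$, as visible from the integral formula in Proposition~\ref{rfp}. For (iv), plug $u=e_{x,\mu}$ into (ii) and use Proposition~\ref{double regularazione}(i) to get
$$\rF^\varepsilon_{vac}(x)\,e_{x,\mu}=2\pi\,\scP_\varepsilon(\,\cdot\,,x)\,\scP_{\varepsilon^2}(x,x)\,e_\mu,$$
after which the explicit diagonal form from Proposition~\ref{propositionkerneldiagonal} finishes the job: in the Dirac representation $\gamma^0=\mathrm{diag}(1,1,-1,-1)$, so $e_0,e_1$ are $\scP_{\varepsilon^2}(x,x)$-eigenvectors with eigenvalue $\lambda^-_{\varepsilon^2}$ and $e_2,e_3$ with eigenvalue $\lambda^+_{\varepsilon^2}$, matching the claimed spectrum.

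The main obstacle is the ``inversion'' step in the proof of (ii): passing from the series identity for the regularized quantity $\gR_\varepsilon(\scP_\varepsilon(\,\cdot\,,x)a)$ to one for $\scP_\varepsilon(\,\cdot\,,x)a$ itself inside $\sol^-$. This requires both $\ell^2$-summability of the coefficients and injectivity of $\gR_\varepsilon$, the first hinging on the finite-trace structure of $\scP_{\varepsilon^2}(x,x)$ and the second on the mollification assumption; without either, one recovers only the regularized statement. Every remaining step is essentially bookkeeping against the results already established.
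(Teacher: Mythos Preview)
Your argument is correct, but the route differs from the paper's. The paper proves (i) \emph{first} by writing $u=\rF^\varepsilon_{vac}(x)w$ and invoking the commutative diagram of Theorem~\ref{diagram}: for every $y$ one has $\gR_\varepsilon u(y)=\Phi_y(\mathrm P_\varepsilon(y,x)w)=2\pi\,\scP_{\varepsilon^2}(y,x)\gR_\varepsilon w(x)=\gR_\varepsilon\bigl(2\pi\,\scP_\varepsilon(\,\cdot\,,x)\gR_\varepsilon w(x)\bigr)(y)$, and then Lemma~\ref{lemmakerR} (regularity $\Rightarrow\sS^\perp\cap\sol^-=\{0\}$) forces $u=2\pi\,\scP_\varepsilon(\,\cdot\,,x)\gR_\varepsilon w(x)$. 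Point (ii) is then read off from (i). You reverse the order: you establish (ii) directly via the reproducing identity $(v\,|\,\scP_\varepsilon(\,\cdot\,,x)a)=-\tfrac1{2\pi}\overline{\gR_\varepsilon v(x)}\,a$, proved by expanding $\scP_\varepsilon(\,\cdot\,,x)a$ in a Hilbert basis of $\sol^-$ through the series of Proposition~\ref{representationP} and the injectivity of $\gR_\varepsilon$. Both arguments rest on the same ingredients (Proposition~\ref{representationP}, Proposition~\ref{double regularazione}, and $\ker\gR_\varepsilon=\{0\}$); the paper packages the series identity once into Theorem~\ref{diagram} and then quotes it, while you unpack it explicitly at the point of use. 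Your approach is slightly more self-contained and makes the reproducing-kernel structure transparent; the paper's has the advantage that Theorem~\ref{diagram} is reusable elsewhere. Your treatment of (iii) and (iv) is essentially identical to the paper's.
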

\begin{proof}
	Point (i). First, notice that $P^{\varepsilon}(\,\cdot\,,x)a\in\sol^-$ for any $a\in\bC^4$, as proved in Proposition \ref{rfp}.	Now, by definition a vector $u\in\sol^-$ belongs to $S_x$ if it can be written as $u=\mathrm{F}(x)w$ for some element $w$ in $\sol^- $. 
	Therefore, at every other spacetime point $y\in\bR^{1,3}$ we have
	\begin{equation*}\label{equationeutile}
	\begin{split}
	\gR_\varepsilon u(y)&=\Phi_y(\pi_y\mathrm{F}(x)w)=\Phi_y(P(y,x)w)=2\pi\,P^{2\varepsilon}(y,x)\, \gR_\varepsilon w(x)=\\
	&=\gR_\varepsilon (2\pi\,P^{\varepsilon}(\cdot,x)\gR_\varepsilon w(x))(y),
	\end{split}
	\end{equation*}
	where we used Proposition \ref{double regularazione}.
	Exploiting Lemma \ref{lemmanucleoregolare} and Lemma \ref{lemmakerR}, we get
	$$
	u-2\pi\,P^{\varepsilon}(\,\cdot,x)\gR_\varepsilon  w(x)\in\sol^-\cap\bigcap_{y\in\bR^{1,3}}\sN(y)=\sol^-\cap S^\perp=\{0\},
	$$
	which implies $u=P^{\varepsilon}(\,\cdot,x)(2\pi\,\gR_\varepsilon  w(x))$.
	On the other hand, take $a\in\bC^4$
	and consider the solution $P^{\varepsilon}(\,\cdot,x)a\in\sol^-$. Then, by regularity there always exists some $w\in\sol^- $ such that $\gR_\varepsilon w(x)=a$. Therefore, we have
	\begin{equation*}
	\begin{split}
	\gR_\varepsilon  (2\pi\,P^{\varepsilon}(\,\cdot,x)a)(y)&=2\pi\,P_{2\varepsilon}(y,x)\,\gR_\varepsilon w(x)=\Phi_y(\mathrm{P}^\varepsilon(y,x)w)=\\
	&=\Phi_y (\pi_y\mathrm{F}^\varepsilon(x)w)=\gR_\varepsilon (\mathrm{F}^\varepsilon(x)w)(y).
	\end{split}
	\end{equation*}
	Reasoning as above, the arbitrariness of $y$ ensures that
	$
	2\pi\,P_{\varepsilon}(\,\cdot,x)a=\mathrm{F}^\varepsilon(x)w\in S_x.
	$
	To conclude point (i), notice that the uniqueness of $a$ in (\ref{equazionedaR}) follows  by first applying $\gR_\varepsilon$ to both sides of (\ref{equazionedaR}) and then making use of Proposition \ref{double regularazione}.
	
	Let us now prove point (ii). Take $u\in\scH_m^-$. Thanks to point (i) there exists a unique  $a\in\bC^4$ such that $\rF^\varepsilon(x)u=P^{\varepsilon}(\,\cdot\,,x)a$. Exploiting the isometry $\Phi_x$ and the identity $\rF^\varepsilon(x)=\mathrm{P}^\varepsilon(x,x)$, we get
	$$
	2\pi\,P^{2\varepsilon}(x,x)\gR_\varepsilon u(x)=\Phi_x(\rF_{vac}^\varepsilon(x)u)=\Phi_x(P_{\varepsilon}(\,\cdot\,,x)a)=P^{2\varepsilon}(x,x)a.
	$$
	At this point, arguing as in the proof of point (ii) of Lemma \ref{double regularazione}, we see that $a=2\pi\,\gR_\varepsilon u(x)$.  
	The proof of point (iii) can be checked easily, by applying the regularization operator to \eqref{eigenvectors}, exploiting Lemma \ref{double regularazione} and noticing that  $S_x$ has dimension four.
	To conclude, let us prove point (iv).  Exploiting points (ii) and Proposition \ref{propositionkerneldiagonal}, we have
	\begin{equation}\label{1}
	\begin{split}
	\rF_{vac}^\varepsilon(x)e_{x,\mu}=2\pi\,P^{\varepsilon}(\,\cdot\,,x)P^{2\varepsilon}(x,x)e_\mu=2\pi\,\nu_\mu\, P^{\varepsilon}(\,\cdot\,,x)e_\mu= (2\pi\nu_\mu)\, e_{x,\mu}.
	\end{split}
	\end{equation}
	The proof is complete.
\end{proof}

\begin{remark}
	At every spacetime point $x\in\bR^{1,3}$, the spin space $S_x$ of the vacuum is formed by the ``maximally localized" physical solutions $P^{\varepsilon}(\,\cdot\,,x)a$. 
\end{remark}

To conclude this section, we briefly discuss how the eigenvalues $\nu^\pm(\varepsilon)$ change when  particles or antiparticles are added to the system. Consider two finite-dimensional subspaces $\sU_+\subset \sol^+$ and $\sU_-\subset\sol^-$. The local correlation operators of $a_\pm(\sU_\pm)$ are self-adjoint on the corresponding spin spaces and as such they can be diagonalized with real eigenvalues. For simplicity, we consider the case of smooth physical solutions.
First, notice that, in the case of regular systems, the kernel of the fermionic operator can be represented in $\bC^4$ as the  matrix (see identity (\ref{expansionP2})):
\begin{equation}\label{realizationpa}
\begin{split}
\Phi_x \mathrm{P}^\varepsilon(x,y)\Phi_x^{-1}=2\pi\,P^{2\varepsilon}(x,y)&+\sum_{i=1}^{n_-} \gR_\varepsilon e_i^-(x)\Sl\gR_\varepsilon e_i^-(y)|\cdot\Sr-\\
&-\sum_{i=1}^{n_+} \gR_\varepsilon e_i^+(x)\Sl\gR_\varepsilon e_i^+(y)|\cdot\Sr
\end{split}
\end{equation}
where $\{e_i^\pm\}_{i=1,\dots,n_\pm}$ are arbitary Hilbert bases of $\sU_\pm$. 
\begin{theorem}\label{approximationeigenvalues}
	Let $\sU_\pm\subset \hat{\mathrm{E}}(\hat{P}_\pm(\mathcal{S}_p(\bR^3,\bC^4)))$ be finite-dimensional and assume that the causal fermion system $a(\sU_-,\sU_+)$ is regular at $x\in\bR^{1,3}$. Then, the corresponding kernel $\mathrm{P}^{\varepsilon}(x,x)$ has four non-vanishing real eigenvalues $\{\nu_i\}_{i=1,2,3,4}$  (counting multiplicities) which satisfy
	$$
	\min\left\{\left|\nu_{i}-2\pi\,\nu^+(\varepsilon)\right|,\left|\nu_i-2\pi\,\nu^-(\varepsilon)\right|\right\}
	\le 2\,\cE(\boldsymbol{e},\varepsilon,x)^2,\quad i=1,2,3,4
	$$
	where $\boldsymbol{e}=\boldsymbol{e}_+\cup \boldsymbol{e}_-$, with $\boldsymbol{e}_\pm$ any Hilbert basis of $\sU_\pm$.
\end{theorem}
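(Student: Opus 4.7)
The idea is to pass to the four–dimensional spin–space picture via $\Phi_x$, recognize the resulting $4\times 4$ matrix as a perturbation of the Hermitian matrix $2\pi\,\scP_{\varepsilon^2}(x,x)$, and apply the Bauer–Fike theorem exactly as in Proposition \ref{propositionperturb}.

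\textbf{Step 1 (reduction to a $4\times 4$ problem).} Regularity at $x$ gives $\dim\sS(x)=4$, so the self‑adjoint operator $\mathrm{P}_\varepsilon(x,x)=\rF^\varepsilon(x)$ on $\scH$ has exactly four non‑vanishing real eigenvalues, and by self‑adjointness algebraic and geometric multiplicities coincide. Since by regularity $\Phi_x\colon\sS(x)\to\bC^4$ is a linear bijection, these four eigenvalues coincide with those of the $4\times 4$ matrix $\Phi_x\,\mathrm{P}_\varepsilon(x,x)\,\Phi_x^{-1}$. By the identity displayed just above the theorem statement,
\[
\Phi_x\,\mathrm{P}_\varepsilon(x,x)\,\Phi_x^{-1}=2\pi\,\scP_{\varepsilon^2}(x,x)+\tilde{\Delta}(x),\qquad \tilde{\Delta}(x):=\sum_{i=1}^{n_-}\gR_\varepsilon e_i^-(x)\,\overline{\gR_\varepsilon e_i^-(x)}-\sum_{i=1}^{n_+}\gR_\varepsilon e_i^+(x)\,\overline{\gR_\varepsilon e_i^+(x)}.
\]

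\textbf{Step 2 (Bauer–Fike).} By Proposition \ref{propositionkerneldiagonal} the unperturbed matrix $2\pi\,\scP_{\varepsilon^2}(x,x)$ is Hermitian with two doubly degenerate eigenvalues $2\pi\lambda_{\varepsilon^2}^{\pm}$, hence it is diagonalized by a unitary matrix (condition number $1$). The Bauer–Fike theorem, in the same form used in Proposition \ref{propositionperturb}, then yields
\[
\min\bigl\{|\lambda_i-2\pi\lambda_{\varepsilon^2}^+|,\,|\lambda_i-2\pi\lambda_{\varepsilon^2}^-|\bigr\}\le \|\tilde{\Delta}(x)\|_2\qquad (i=1,2,3,4).
\]

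\textbf{Step 3 (norm estimate via $\cE$).} For any $v\in\bC^4$ the rank‑one matrix $v\bar{v}=v\,v^{\dagger}\gamma^{0}$ has operator 2‑norm at most $|v|^{2}$, since $\gamma^{0}$ is unitary. By the triangle inequality,
\[
\|\tilde{\Delta}(x)\|_2\le\sum_{i=1}^{n_-}|\gR_\varepsilon e_i^-(x)|^{2}+\sum_{i=1}^{n_+}|\gR_\varepsilon e_i^+(x)|^{2}.
\]
Each $e_i^{\pm}$ is smooth, so Proposition \ref{approximationpropertiesreg}(iii) gives $|\gR_\varepsilon e_i^{\pm}(x)|\le |e_i^{\pm}(x)|+\varepsilon\|\mathfrak{J}e_i^{\pm}\|_{x,\infty}$. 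Writing $a_i:=|e_i^{\pm}(x)|$ and $b_i:=\varepsilon\|\mathfrak{J}e_i^{\pm}\|_{x,\infty}$ (running over both signs), and using $(a_i+b_i)^{2}\le 2a_i^{2}+2b_i^{2}$ together with $\sum_i b_i^{2}\le(\sum_i b_i)^{2}$ and $\sum_i a_i^{2}=|\boldsymbol{e}(x)|^{2}$, we obtain
\[
\|\tilde{\Delta}(x)\|_2\le 2|\boldsymbol{e}(x)|^{2}+2\Bigl(\varepsilon\sum_{\pm,i}\|\mathfrak{J}e_i^{\pm}\|_{x,\infty}\Bigr)^{2}\le 2\,\cE(\boldsymbol{e},\varepsilon,x)^{2},
\]
where the last step uses $\alpha^{2}+\beta^{2}\le(\alpha+\beta)^{2}$ with $\alpha=|\boldsymbol{e}(x)|$, $\beta=\varepsilon\sum_{\pm,i}\|\mathfrak{J}e_i^{\pm}\|_{x,\infty}$, and the very definition of $\cE$. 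Combining this with Step 2 completes the proof.

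\textbf{Main obstacle.} There is no serious obstacle; the only delicate point is the bookkeeping in Step 3 to match the concrete constant $2$ in the statement. Everything else is a direct reformulation of the general Bauer–Fike estimate (already exploited in Proposition \ref{propositionperturb}) in the specific $4\times 4$ representation afforded by the isometry $\Phi_x$.
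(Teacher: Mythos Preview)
Your proof is correct and follows essentially the same route as the paper: reduce to the $4\times 4$ matrix via $\Phi_x$, invoke Bauer--Fike exactly as in Proposition~\ref{propositionperturb}, and bound the perturbation norm using Proposition~\ref{approximationpropertiesreg}(iii) together with the elementary inequalities $(a+b)^2\le 2a^2+2b^2$ and $\alpha^2+\beta^2\le(\alpha+\beta)^2$. The paper's argument is identical up to notation (it writes the bound as $2\pi\|\Delta\scP(x,x)\|_2$ rather than $\|\tilde{\Delta}(x)\|_2$).
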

\begin{proof}
	The thesis follows from identity \eqref{realizationpa}, Proposition \ref{propositionperturb} and the following:
	\begin{equation*}
	\begin{split}
	& 2\pi\|\Delta P(x,x)\|_2\le \sum_{i=1}^{n}\|\gR_\varepsilon e_i(x)\Sl\gR_\varepsilon e_i(x)|\cdot\Sr \|_2\le\\
	&\le \sum_{i=1}^{n}\|\gR_\varepsilon e_i(x)\gR_\varepsilon e_i(x)^\dagger \|_2=\sum_{i=1}^{n}|\gR_\varepsilon e_i(x)|^2\le\\
	&\le \sum_{i=1}^n||e_i(x)|+\varepsilon\|\gJ e_i\|_{x,\infty}|^2\le 2\sum_{i=1}^n\big||e_i(x)|^2+\varepsilon^2\big(\|\gJ e_i\|_{x,\infty}\big)^2\big|=\\
	&=2\left(|\boldsymbol{e}(x)|^2+\varepsilon^2\sum_{i=1}^n\big(\|\gJ e_i\|_{x,\infty}\big)^2\right)\le 2\,\cE(\V{e},\varepsilon,x)^2,
	\end{split}
	\end{equation*}
	where we used $\|AB\|_2\le \|A\|_2\|B\|_2$ and $\|\gamma^0\|_2=1$.
	
\end{proof}

\begin{remark}
	The above result shows that, in presence of particles or antiparticles, the variation of the eigenvalues of the kernel of fermionic projector is controlled by the local behavior of the wave functions around the point.
\end{remark}

\section{On the Smooth Manifold Structure of Regular Systems}\label{sectionsmooth}

In this final section we prove how and under which assumptions regular causal fermion systems realize differentiable manifolds within $\scF$. More precisely, the goal is to show in which situations the function $\rF^\varepsilon:\bR^{1,3}\rightarrow \scF$ is closed and homeomorphic onto its image. When this happens, $\rF^\varepsilon(\bR^{1,3})=\supp\varrho$ and $\rF^\varepsilon$ itself realizes a global chart which turns $\supp\rho$ into a smooth manifold. The proof is divided into two steps.

\subsection{Step 1: Injectivity of the Local Correlation Function}\label{subsectininjective}

The local correlation function 
$$
\mathrm{F}^\varepsilon:\bR^{1,3}\ni x\mapsto\rF^\varepsilon(x)\in \scF
$$ 
provides a representation of  spacetime in terms of operators. The question is: are we still able to distinguish two spacetime points by looking at the corresponding local correlation operators? Do we loose information in representing spacetime points as operators? In other words,

\begin{center}
	\textit{Is the local correlation function faithful?}
\end{center}
The general answer is negative. As we already mentioned before, choosing too poor ensembles of physical solutions might cause a critical loss of information and some features, like regularity, can be lost. 
The same goes for injectivity and here we show a simple example of this. 

Consider the vacuum causal fermion system $(\sol^- ,\mathrm{F}^\varepsilon_{vac})$ and take any couple of different spacetime points $x\neq y$. 
If we now remove the negative-energy solutions corresponding to the subspace $\sU_0:=S_x+ S_y$ (that is, if we construct the causal fermion system of the subspace $a_-(\sU_0)$), the corresponding function $\mathrm{F}^\varepsilon_0$ satisfies (see Lemma \ref{propprojcetion})
\begin{equation}\label{lossinjectivity}
\mathrm{F}^\varepsilon_0(x)=\Pi_0\, \mathrm{F}_{vac}^\varepsilon(x)\,\Pi_0 =0 = \Pi_0\, \mathrm{F}_{vac}^\varepsilon(y)\Pi_0 = \mathrm{F}^\varepsilon_0(y).
\end{equation}
This identity can be proved with an argument similar to the proof of loss of regularity. Identity \eqref{lossinjectivity} shows that  injectivity is lost. 

Nevertheless, for special subspaces of $\sol$ injectivity turns out to be preserved. The first example is of course the vacuum. Then we show that the addition of positive-energy solutions does not affect injectivity (as one could expect). On the other hand, in creating holes we have to choose the solutions very carefully in order to preserve injectivity. This is the content of the current section. 

\subsubsection{The Vacuum}
Let us start with the vacuum.
In order to prove injectivity we again make use of special solutions: Gaussian wave-packets. Let us introduce
$$
G_{\V{p}}^{(\sigma)}(\V{k}):=\frac{1}{(\sqrt{2\pi}\sigma)^{3}}e^{-\frac{(\V{k}-\V{p})^2}{2\sigma^2}}.
$$
It is well-known that these functions converge as distributions to the Dirac delta in the limit $\sigma\searrow 0$:
$$
\lim_{\sigma\to 0}G_{\V{p}}^{(\sigma)}=\delta^{(3)}_{\V{p}}.
$$
Now, consider the negative-energy solution $u_{\V{p}}^{(\sigma)}\in\sol^- $ defined as in (\ref{basicsolution}) by
\begin{equation}
u_\V{p}^{(\sigma)}(x):=\int_{\bR^3}\frac{d^3\V{k}}{(2\pi)^{3/2}}\,G_{\V{p}}^{(\sigma)}(\V{k})\chi_\uparrow^-(\V{k})e^{i(\omega(\V{k})t+\V{k}\cdot\V{x})}.
\end{equation}
Applying the regularization operator to this function, we obtain
\begin{equation*}
\begin{split}
\gR_\varepsilon u_{\V{p}}^{(\sigma)}(x)&=\int_{\bR^3} \frac{d^3\V{k}}{(2\pi)^{3/2}}\, \mathfrak{g}_\varepsilon(\V{k})\, G_{\V{p}}^{(\sigma)}(\V{k})\chi_\uparrow^-(\V{k}) e^{i (\omega(\V{k})t+\V{k}\cdot\V{x})}.
\end{split}
\end{equation*}
Exploiting the properties of $G_\V{p}^{(\sigma)}$, we finally have
\begin{equation}\label{limit}
\lim_{\sigma\to 0}\gR_\varepsilon u_{\V{p}}^{(\sigma)}(x)=(2\pi)^{-{3/2}}\mathfrak{g}_\varepsilon(\V{p})\chi_\uparrow^-(\V{p})e^{i(\omega(\V{p})t+\V{p}\cdot\V{x})}.
\end{equation}
%
%
%
After these preliminaries, we are ready to prove the most important result of this section.

\begin{theorem}\label{injetivity}
	The function $\mathrm{F}^\varepsilon_{vac}:\bR^{1,3}\rightarrow \scF$ is injective.
\end{theorem}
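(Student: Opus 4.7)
The plan is to compare the operators by testing them on the Gaussian wave-packet family $u_{\V{p}}^{(\sigma)}$ introduced before the statement, and then to extract the coordinates $x^0,\V{x}$ from the resulting phase factors in the limit $\sigma\to 0$.

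Suppose $\mathrm{F}^\varepsilon_{vac}(x)=\mathrm{F}^\varepsilon_{vac}(y)$ for some $x=(x^0,\V{x}), y=(y^0,\V{y})\in\bR^4$. By the defining identity \nref{defiF} and non-degeneracy of the inner product, this is equivalent to
$$
\overline{\gR_\varepsilon u(x)}\,\gR_\varepsilon v(x)=\overline{\gR_\varepsilon u(y)}\,\gR_\varepsilon v(y)\qquad\text{for all }u,v\in\sol^-.
$$
First I would specialise to $u=u_{\V{p}}^{(\sigma)}$ and $v=u_{\V{q}}^{(\sigma)}$ for arbitrary $\V{p},\V{q}\in\bR^3$ and pass to the limit $\sigma\to 0$. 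Using \nref{limit} pointwise at the single spacetime points $x$ and $y$, the identity above becomes
$$
\mathfrak{g}_\varepsilon(\V{p})\,\mathfrak{g}_\varepsilon(\V{q})\,\overline{\chi_\uparrow^-(\V{p})}\,\chi_\uparrow^-(\V{q})\,e^{i(\omega(\V{q})-\omega(\V{p}))x^0+i(\V{q}-\V{p})\cdot\V{x}}
$$
equal to the same expression with $x$ replaced by $y$.

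The next step is to remove the prefactor. Fix $\V{p}_0$ in the (by Proposition \ref{convoltioncomesabout}) dense open set where $\mathfrak{g}_\varepsilon\neq 0$, and observe that by a direct computation from \nref{fundamentalspinors} one has $\overline{\chi_\uparrow^-(\V{p}_0)}\chi_\uparrow^-(\V{p}_0)\neq 0$. By continuity there is an open neighbourhood $U$ of $\V{p}_0$ on which $\mathfrak{g}_\varepsilon(\V{p})\mathfrak{g}_\varepsilon(\V{q})\overline{\chi_\uparrow^-(\V{p})}\chi_\uparrow^-(\V{q})\neq 0$ for $(\V{p},\V{q})\in U\times U$. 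Cancelling this factor leaves
$$
e^{i(\omega(\V{q})-\omega(\V{p}))x^0+i(\V{q}-\V{p})\cdot\V{x}}=e^{i(\omega(\V{q})-\omega(\V{p}))y^0+i(\V{q}-\V{p})\cdot\V{y}},
$$
and since both sides agree and equal $1$ at $\V{q}=\V{p}$, a continuous choice of logarithm yields
$$
\bigl(\omega(\V{q})-\omega(\V{p})\bigr)(x^0-y^0)+(\V{q}-\V{p})\cdot(\V{x}-\V{y})=0\qquad\text{on }U\times U.
$$

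Finally I would differentiate this relation in $\V{q}$ at $\V{q}=\V{p}$; using $\nabla\omega(\V{p})=\V{p}/\omega(\V{p})$ this gives
$$
\frac{\V{p}}{\omega(\V{p})}\,(x^0-y^0)=\V{y}-\V{x}\qquad\text{for all }\V{p}\in U.
$$
Since the map $\V{p}\mapsto\V{p}/\omega(\V{p})$ is non-constant on any open set, this forces $x^0=y^0$ and hence $\V{x}=\V{y}$, as required.

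The only delicate point is the interchange of the limit $\sigma\to 0$ with the evaluation of $\gR_\varepsilon u_{\V{p}}^{(\sigma)}$ at a fixed spacetime point, but this is immediate: for fixed $x$ the regularised solution is given by an absolutely convergent integral whose integrand converges pointwise and is dominated by an $L^1$ function (since $\mathfrak{g}_\varepsilon\in\scS_p$ and $\chi_\uparrow^-$ is polynomially bounded), so dominated convergence and the concentration $G_{\V{p}}^{(\sigma)}\to\delta^{(3)}_{\V{p}}$ give \nref{limit}. Everything else is algebraic.
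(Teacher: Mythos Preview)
Your proof is correct and follows the same overall strategy as the paper: test the equality $\rF^\varepsilon_{vac}(x)=\rF^\varepsilon_{vac}(y)$ on the Gaussian wave-packets $u_{\V{p}}^{(\sigma)}$, pass to the limit $\sigma\to 0$ via \nref{limit}, cancel the nonvanishing factors $\mathfrak{g}_\varepsilon$ and $\overline{\chi_\uparrow^-}\chi_\uparrow^-$, and then read off $x=y$ from the resulting phase identity.

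The only difference is in the endgame. The paper fixes one momentum at $\V{0}$ and lets the other vary in a small ball, obtaining $e^{i((\omega(\V{p})-m)t_x+\V{p}\cdot\V{x})}=e^{i((\omega(\V{p})-m)t_y+\V{p}\cdot\V{y})}$; it then isolates $t_x=t_y$ by choosing $\V{p}$ along a direction orthogonal to both $\V{x}$ and $\V{y}$ and using the strict monotonicity of $t\mapsto\omega(t\V{n})-m$, and afterwards recovers $\V{x}=\V{y}$ by taking $\V{p}$ along the coordinate axes. You instead let both momenta $\V{p},\V{q}$ vary in an open set, use the connectedness/continuity argument to upgrade the exponential identity to the linear relation $(\omega(\V{q})-\omega(\V{p}))(x^0-y^0)+(\V{q}-\V{p})\cdot(\V{x}-\V{y})=0$, and differentiate in $\V{q}$ to obtain $\frac{\V{p}}{\omega(\V{p})}(x^0-y^0)=\V{y}-\V{x}$. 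Your route is a little slicker and avoids the special choice of a direction orthogonal to $\V{x}$ and $\V{y}$; the paper's route is slightly more elementary in that it never needs to pass to a logarithm or differentiate. Both are entirely sound.
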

\begin{proof}
	Assume there exist two points $x,y$ such that $\mathrm{F}^\varepsilon_{vac}(x)=\mathrm{F}^\varepsilon_{vac}(y)$. By definition,
	$$
	\Sl\gR_\varepsilon u(x)|\gR_\varepsilon v(x)\Sr=\Sl\gR_\varepsilon u(y)|\gR_\varepsilon v(y)\Sr\quad\mbox{for all } u,v\in\sol^- .
	$$
	Now, consider the special case of $u=u_{\V{0}}^{(\sigma)}$ and $v=u_{\V{p}}^{(\sigma)}$ for arbitrary $\V{p}\in\bR^3$. Then, exploiting (\ref{limit}), we get
	\begin{equation*}
	\begin{split}
	&\Sl\mathfrak{g}_\varepsilon(\V{0})\chi_\uparrow^-(\V{0})|\,\mathfrak{g}_\varepsilon(\V{p})\chi_\uparrow^-(\V{p})\Sr\, e^{i((\omega(\V{p})-m)t_x+\V{p}\cdot\V{x})}=\\
	&=\Sl\mathfrak{g}_\varepsilon(\V{0})\chi_\uparrow^-(\V{0})|\,\mathfrak{g}_\varepsilon(\V{p})\chi_\uparrow^-(\V{p})\Sr\, e^{i((\omega(\V{p})-m)t_y+\V{p}\cdot\V{y})}.
	\end{split}
	\end{equation*}
	Without loss of generality we can assume that $\mathfrak{g}_\varepsilon\neq 0$ in a neighborhood $B_\delta(\V{0})\subset\bR^3$.
	Therefore, if we take $\V{p}$ in this set we can divide the above equations by $\mathfrak{g}_\varepsilon(\V{0})$ and $\mathfrak{g}_\varepsilon(\V{p})$ and get:
	\begin{equation}\label{equazione da semplicifcare}
	\Sl\chi_\uparrow^-(\V{0})| \chi_\uparrow^-(\V{p})\Sr\,e^{i((\omega(\V{p})-m))t_x+\V{p}\cdot\V{x})}= \Sl\chi_\uparrow^-(\V{0})| \chi_\uparrow^-(\V{p})\Sr\,e^{i((\omega(\V{p})-m)t_y+\V{p}\cdot\V{y})}.
	\end{equation}
	At this point, exploiting the identity $\Sl\chi(\V{0})|\,\chi(\V{0})\Sr=-1$ and the continuity of the function $\V{p}\mapsto \Sl\chi(\V{0})|\, \chi(\V{p})\Sr$, shrinking  the neighborhood if necessary, we get
	$$
	\Sl\chi(\V{0})|\, \chi(\V{p})\Sr\neq 0\ \mbox{ for all }\ \V{p}\in B_\delta(\V{0}).
	$$
	This allows to further simplify the above equations and get the following identity:
	\begin{equation}\label{equazioneesponenziali}
	e^{i((\omega(\V{p})-m)t_x+\V{p}\cdot\V{x})}=e^{i((\omega(\V{p})-m)t_y+\V{p}\cdot\V{y})}\ \mbox{ for all }\ \V{p}\in B_\delta(\V{0}).
	\end{equation}
	%
	The norm of the vector $\V{p}$ is bounded by $\delta$, but its direction is unconstrained. Therefore, we can choose a unit vector $\V{n}$ which is orthogonal to both $\V{x}$ and $\V{y}$ and consider the vectors $\V{p}_t:=t\V{n}\in B_\delta(\V{0})$ for $t\in (0,\delta)$. With this choice, identity \eqref{equazioneesponenziali} reduces to 
	$$
	e^{i(\omega(\V{p}_t)-m)t_x}=e^{i(\omega(\V{p}_t)-m)t_y},\ \mbox{ or equivalently }\  	e^{i(\omega(\V{p}_t)-m)(t_x-t_y)}=1,\mbox{ for all } t\in (0,\delta).
	$$
	Now, notice that the function
	$
	(0,\delta)\ni t\mapsto	\omega(\V{p}_t)-m=\sqrt{t^2+m^2}-m
	$
	is continuous and strictly monotonically increasing. From this 
	it is not difficult to see that $t_x$ and $t_y$ must coincide. 
	
	At this point, going back to equation (\ref{equazioneesponenziali}), the terms involving $t_x$ and $t_y$ factor out and the identity reduces to
	$
	e^{i\V{p}\cdot(\V{x}-\V{y})}=1
	$
	for every $\V{p}\in B_\delta(\V{0})$.
	Taking $\V{p}=t\V{e}_i$ with $t\in (0,\delta)$, we obtain
	$$
	e^{it(x_i-y_i)}=1\quad \mbox{for all } t\in (0,\delta),\ i=1,2,3.
	$$
	Reasoning as before we infer that $x_i=y_i$ for any $i=1,2,3$ and the proof is complete.
\end{proof}

\subsubsection{The General Case of Particles and Antiparticles}

We now address the following question: what happens if we add particles or antiparticles? As in the case of regularity, the injectivity of $\mathrm{F}^\varepsilon$ is preserved if additional positive-energy physical solutions are added to the system, while the creation of holes is a much more delicate matter. Again, we need to find suitable assumptions in order not to lose too much information. 

In the case of regular systems, a sufficient condition is given in the following theorem. Remember that a sufficient condition for a system with holes to be regular is the assumption of being \textit{$\varepsilon$-macroscopic} (see Theorem \ref{emacroscopicregular}). Again, for simplicity, we focus our attention to smooth solutions. In particular, we assume that the solutions describing the particles and antiparticles have compactly supported three-momentum distributions. 

\begin{theorem}\label{theoreminjectivityparticle}
	Let $\sU_\pm\subset\hat{\mathrm{E}}(\hat{P}_-(\mathcal{C}_{0,p}^\infty(\bR^3,\bC^4)))$ be finite-dimensional subspaces and assume that the causal fermion system $a_-(\sU_-)$ is regular. Then, the local correlation function $\rF^\varepsilon:\bR^{1,3}\rightarrow\scF$ of the causal fermion system $a(\sU_-,\sU_+)$ is injective.
\end{theorem}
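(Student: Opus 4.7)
The plan is to mimic the vacuum proof of Theorem~\ref{injetivity}, exploiting the fact that the elements of $\sU_\pm$ all have compactly supported three-momentum distributions in order to neutralise the ``perturbation'' they cause relative to the pure vacuum case. Suppose $\rF^\varepsilon(x)=\rF^\varepsilon(y)$. Then, by the defining identity \eqref{defiF} applied to the Hilbert space $a(\sU_-,\sU_+)$, one has
$$
\overline{\gR_\varepsilon w_1(x)}\,\gR_\varepsilon w_2(x)=\overline{\gR_\varepsilon w_1(y)}\,\gR_\varepsilon w_2(y)\qquad \text{for all }w_1,w_2\in a(\sU_-,\sU_+).
$$
Since it suffices to derive $x=y$ from a subfamily of test vectors, the idea is to feed in suitable projections of the Gaussian wave packets $u_{\V p}^{(\sigma)}\in\sol^-$ used in the vacuum proof, chosen so that the vacuum computation reappears intact in the limit $\sigma\to 0$.

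More concretely, let $\{e_i^-\}_{i=1,\dots,n}$ be an orthonormal basis of $\sU_-$ and let $R>0$ be large enough that the compactly supported momentum distributions $\hat e_i^-=\hat{\mathrm E}_0^{-1}(e_i^-)$ are all supported in $B(0,R)\subset\bR^3$. Denote by $P:\sol\to a(\sU_-,\sU_+)$ the orthogonal projector; since any $u\in\sol^-$ has no $\sU_+\subset\sol^+$ component, $Pu=u-\sum_i(e_i^-|u)\,e_i^-\in a_-(\sU_-)\subset a(\sU_-,\sU_+)$. By Proposition~\ref{convoltioncomesabout}(ii), the zero set of $\mathfrak{g}_\varepsilon$ is a union of isolated spheres; pick $\V p_0\in\bR^3$ with $|\V p_0|>R$ and $\mathfrak{g}_\varepsilon(\V p_0)\neq 0$, and choose a ball $B(\V p_0,\delta)$ on which $\mathfrak{g}_\varepsilon$ is nonvanishing, $\overline{\chi_\uparrow^-(\V p_0)}\chi_\uparrow^-(\cdot)\neq 0$, and which is disjoint from $B(0,R)$. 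For $\V p\in B(\V p_0,\delta)$ set $w_{\V p}^{(\sigma)}:=P\,u_{\V p}^{(\sigma)}\in a(\sU_-,\sU_+)$. Exactly as in the vacuum case, $\gR_\varepsilon u_{\V p}^{(\sigma)}(x)\to(2\pi)^{-3/2}\mathfrak{g}_\varepsilon(\V p)\,\chi_\uparrow^-(\V p)\,e^{i(\omega(\V p)t_x+\V p\cdot\V x)}$ as $\sigma\to 0$, while the inner products $(e_i^-|u_{\V p}^{(\sigma)})$ converge by weak $\delta$-convergence to $\hat e_i^-(\V p)^\dagger\chi_\uparrow^-(\V p)$, which vanishes because $\V p\notin\supp\hat e_i^-$. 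Hence the projection correction drops out and
$$
\lim_{\sigma\to 0}\gR_\varepsilon w_{\V p}^{(\sigma)}(x)=(2\pi)^{-3/2}\mathfrak{g}_\varepsilon(\V p)\,\chi_\uparrow^-(\V p)\,e^{i(\omega(\V p)t_x+\V p\cdot\V x)}.
$$

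Plugging $w_1=w_{\V p_0}^{(\sigma)}$, $w_2=w_{\V q}^{(\sigma)}$ for $\V q\in B(\V p_0,\delta)$ into the identity above and letting $\sigma\to 0$ yields precisely the vacuum-type relation
$$
\mathfrak{g}_\varepsilon(\V p_0)\mathfrak{g}_\varepsilon(\V q)\,\overline{\chi_\uparrow^-(\V p_0)}\chi_\uparrow^-(\V q)\,e^{i((\omega(\V q)-\omega(\V p_0))t_x+(\V q-\V p_0)\cdot\V x)}=(\text{same with }y),
$$
valid on $B(\V p_0,\delta)$. Dividing out the nonvanishing scalar prefactors reduces us to the identity $e^{i((\omega(\V q)-\omega(\V p_0))t_x+(\V q-\V p_0)\cdot\V x)}=e^{i((\omega(\V q)-\omega(\V p_0))t_y+(\V q-\V p_0)\cdot\V y)}$ on a neighbourhood of $\V p_0$, and the conclusion $x=y$ then follows exactly as in the proof of Theorem~\ref{injetivity}: vary $\V q$ along a direction orthogonal to $\V x-\V y$ to extract $t_x=t_y$ from the strict monotonicity of $\omega$, and then read off $\V x=\V y$ from the remaining spatial exponential.

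The main delicate point is the simultaneous choice of the test point $\V p_0$ outside the momentum support of $\sU_-$ and inside the nonvanishing set of $\mathfrak{g}_\varepsilon$: this is where the compactness of the momentum supports in the hypothesis is used in an essential way. The auxiliary hypothesis that $a_-(\sU_-)$ be regular plays no direct role in this argument, but is consistent with the companion regularity result (Corollary~\ref{regularitygeneralcase}) and serves to ensure that $a(\sU_-,\sU_+)$ is a causal fermion system with the full geometric content expected in the statement. The role of $\sU_+$ is trivial, since the test solutions live in $\sol^-$ and therefore project to zero on $\sU_+\subset\sol^+$.
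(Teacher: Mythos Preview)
Your argument is correct and takes a genuinely different route from the paper. The paper first reduces to $\sU_+=\{0\}$ via Proposition~\ref{propprojcetion}, then for $a_-(\sU_-)$ writes $\Pi(\rF^\varepsilon_{vac}(x)-\rF^\varepsilon_{vac}(y))\Pi=0$ and uses the \emph{regularity} hypothesis to pick $u\in a_-(\sU_-)$ with $\gR_\varepsilon u(x)=e_0$; it then invokes the structural result Proposition~\ref{elemetnsS} to deduce that $\scP_\varepsilon(\cdot,x)\gR_\varepsilon u(x)-\scP_\varepsilon(\cdot,y)\gR_\varepsilon u(y)\in\sU_-$, passes to momentum space, and exploits the fact that the momentum distribution of any element of $\sU_-$ is compactly supported while $\mathfrak{g}_\varepsilon$ is not, to obtain an annulus on which an explicit algebraic system in the spinor components forces $x=y$. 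Your approach bypasses all of this structure: you simply recenter the vacuum Gaussian packets $u_{\V p}^{(\sigma)}$ at a momentum $\V p_0$ outside the compact momentum support of $\sU_-$, so that the orthogonal projection onto $a_-(\sU_-)$ becomes the identity in the limit $\sigma\to 0$, and the vacuum argument of Theorem~\ref{injetivity} carries over verbatim around $\V p_0$ instead of the origin. The trade-off is that the paper's method exhibits more of the causal-fermion-system machinery (the eigenvector description of $\sS(x)$), while yours is more elementary and, as you correctly observe, does not actually use the regularity hypothesis on $a_-(\sU_-)$ --- it proves a slightly stronger statement. Both approaches use the compact momentum support of $\sU_-$ in the same essential way, and both treat $\sU_+$ as harmless (the paper via Proposition~\ref{propprojcetion}, you via the orthogonality $\sol^-\perp\sU_+$).
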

\begin{proof}
	It is sufficient to focus on  the case $\sU_+=\{0\}$. Indeed, once we have proven that $a_-(\sU_-)$ admits an injective local correlation function, then the injectivity for $a(\sU_-,\sU_+)$ will follow directly from Proposition \ref{propprojcetion}.  Therefore, let us consider the pure negative case.  By $\rF^\varepsilon_{vac}$ we denote the local correlation function associated with $\sol^-$, i.e. the vacuum. Fix any two different points $x,y\in\bR^{1,3}$ and suppose by contradiction that $\rF^\varepsilon(x)=\rF^\varepsilon(y)$. This can be rewritten as $\Pi(\rF^\varepsilon_{vac}(x)-\rF^\varepsilon_{vac}(y))\Pi=0$, where $\Pi$ is the orthogonal projector on $a_-(\sU_-)\subset\scH_m^-$. Since the causal fermion system is assumed to be regular, there must exist at least one element $u\in a_-(\sU_-)$ such that $\gR_{\varepsilon} u(x)= e_0$ (this will be used later). By assumption, $(\rF^\varepsilon_{vac}(x)-\rF^\varepsilon_{vac}(y))u\in \sU_-$ and therefore, exploiting Proposition \ref{elemetnsS}:
	$$
	P^{\varepsilon}(\,\cdot\,,x)\gR_{\varepsilon} u(x)-P^{\varepsilon}(\,\cdot\,,y)\gR_\varepsilon u(y) = \omega,
	$$
	for some $\omega\in\sU_-$. 
	More explicitly, we have for  $t=0$ and arbitrary $\V{z}\in\bR^3$ that
	\begin{equation*}
	\begin{split}
	&-\!\int_{\bR^3}\frac{d^3\V{k}}{(2\pi)^4}\,\mathfrak{g}_\varepsilon (\V{k})\,p_-(\V{k})\,\gamma^0\left(\gR_\varepsilon u(x)e^{i\eta(k,x)}-\gR_\varepsilon u(y)e^{i\eta(k,y)}\right)\,e^{i \V{k}\cdot\V{z}}=\\
	&=\int_{\bR^3}\frac{d^3\V{k}}{(2\pi)^{4}}\,\varphi({\V{k}})\,e^{i\V{k}\cdot\V{z}}
	\end{split}
	\end{equation*}
	for some $\varphi\in \hat{P}_-(\mathcal{C}_{0,p}^\infty(\bR^3,\bC^4))\subset \mathcal{C}_{0,p}^\infty(\bR^3,\bC^4)$, where we used $k^0=-\omega(\V{k})$. Since the Fourier Transform is a bijection on the Schwartz space, the above identity implies
	\begin{equation}
	\begin{split}
	-\mathfrak{g}_\varepsilon (\V{k})\,p_-(\V{k})\,\gamma^0\left(\gR_\varepsilon u(x)e^{i\eta(k,x)}-\gR_\varepsilon u(y)e^{i\eta(k,y)}\right)=\varphi(\V{k})\quad\mbox{for all }\V{k}\in\bR^3.
	\end{split}
	\end{equation}
	At this point, notice that the function $\mathfrak{g}_\varepsilon$ cannot be of compact support, as we know that it vanishes at most on a countable family of spheres which are isolated from each other (see the proof of Proposition \ref{convoltioncomesabout}). Therefore, bearing in mind that $\mathfrak{g}_\varepsilon$ is rotationally symmetric, there must exists some annulus $B(r,R)\subset\bR^3$ ($0<r<R$) on which $\mathfrak{g}_\varepsilon>0$ and $\varphi\equiv 0$ and therefore:
	$$
	p_-(\V{k})\,\gamma^0\left(\gR_\varepsilon u(x)e^{i\eta(k,x-y)}-\gR_\varepsilon u(y)\right)=0\quad\mbox{for all } \V{k}\in B(r,R).
	$$
	This implies  (see Section \ref{sectionhamiltoniana})
	$$
	\gamma^0\left(\gR_\varepsilon u(x)e^{i\eta(k,x-y)}-\gR_\varepsilon u(y)\right)\in W_\V{k}^+\quad\mbox{for all $\V{k}\in B(r,R)$},
	$$ 
	or, equivalently (exploiting the form of the gamma matrix $\gamma^0$),
	$$
	\gR_\varepsilon u(x)e^{i\eta(k,x-y)}-\gR_\varepsilon u(y)=\lambda_\uparrow (\V{k})\left(
	\begin{matrix}
	e_{\uparrow}\\[2ex]
	\dfrac{\V{\sigma}\cdot\V{k}}{\omega(\V{k})+m}\, e_{\uparrow}
	\end{matrix}
	\right)+
	\lambda_{\downarrow}(\V{k})\left(
	\begin{matrix}
	e_{\downarrow}\\[2ex]
	\dfrac{\V{\sigma}\cdot\V{k}}{\omega(\V{k})+m}\, e_{\downarrow}
	\end{matrix}
	\right),
	$$
	for some continuous functions $\lambda_{\uparrow,\downarrow}$. 
	At this point, we finally exploit that $\gR_\varepsilon u(x)=e_0$, which gives
	\begin{equation}\label{setequations}
	\begin{split}
	&\lambda_\uparrow(\V{k})= \left(\gR_\varepsilon u(x)e^{i\eta(k,x-y)}-\gR_\varepsilon u(y)\right)_0=e^{i\eta(k,x-y)}-[\gR_{\varepsilon} u(y)]_0, \\[0.3em]
	&\lambda_{\downarrow}(\V{k})=\left(\gR_\varepsilon u(x)e^{i\eta(k,x-y)}-\gR_\varepsilon u(y)\right)_1=-[\gR_\varepsilon u(y)]_1=:\lambda_\downarrow=\mbox{const},\\[0.3em]
	& \dfrac{\lambda_\uparrow(\V{k})}{\omega(\V{k})+m}\left(\begin{matrix} k_3 \\ k_1+ik_2 \end{matrix}\right)+\frac{\lambda_{\downarrow}}{\omega(\V{k})+m}\left(\begin{matrix} k_1-ik_2 \\ -k_3 \end{matrix}\right)= \left(\begin{matrix}
	[\gR_\varepsilon u(y)]_2\\
	[\gR_\varepsilon u(y)]_3
	\end{matrix}\right)=\mbox{const}.
	\end{split}
	\end{equation}
	\vspace{0.05cm}
	
	\noindent From the third identity, taking $\V{k}=s\V{e}_1$ and $\V{k}=s\V{e}_2$ with arbitrary $r<s<R$, we see that $s\lambda_{\downarrow}=-is\lambda_{\downarrow}$ which implies $\lambda_{\downarrow}=0$.
	The last identity, then, reduces to
	\begin{equation}
	\frac{\lambda_\uparrow(\V{k})}{\omega(\V{k})+m}\left(\begin{matrix}  k_3   \\ k_1 + ik_2\end{matrix}\right)= \left(\begin{matrix}
	[\gR_\varepsilon u(y)]_2\\
	[\gR_\varepsilon u(y)]_3
	\end{matrix}\right).
	\end{equation}
	Choosing $\V{k}= s\V{e}_1$
	we get $[\gR_\varepsilon u(y)]_2=0$ and therefore $\lambda_{\uparrow}(\V{k})k_3=0$ for any $\V{k}\in B(r,R)$. The continuity of $\lambda_\uparrow$ yields $\lambda_{\uparrow}(\V{k})=0$ on $B(r,R)$. In particular, the first identity in (\ref{setequations}) becomes
	$$
	e^{-i(\omega(\V{k})(t_x-t_y)+\V{k}\cdot(\V{x-y}))}=[\gR_\varepsilon u(y)]_0=\mbox{constant}\quad\mbox{for all }\V{k}\in B(r,R).
	$$
	At this point, reasoning as in the proof of Theorem \ref{injetivity} and get $t_x=t_y$ and $\V{x}=\V{y}$.
\end{proof}
\begin{remark}
	It should be remarked that the assumption of compactly supported momentum distributions  was crucial only for negative-energy solutions and it was not used for positive-energy solutions. Therefore, the condition of compact support can be dropped for $\sU_+$ in Theorem \ref{theoreminjectivityparticle}.  Again, this reveals an asymmetry between particles and antiparticles.
\end{remark}

Again, we see that the addition of positive-energy solutions to $\sol^-$ enriches the system and also injectivity (as regularity) is preserved. On the other hand, removing negative-energy solutions may perturb the system in a way that injectivity (as regularity) is lost. This is prevented if the removed solutions do not play a crucial role in the construction of the causal fermion system: in other word if their distribution is sufficiently regular in spacetime.

\begin{remark}
	Notice that the regularity of $a_-(\sU_-)$ in  Theorem \ref{theoreminjectivityparticle} implies that the causal fermion system $a(\sU_-,\sU_+)$ is regular as well (see Corollary \ref{regularitygeneralcase}). In this and in the next section we will always focus on regular systems.
\end{remark}

\subsection{Step 2: Closedness of the Local Correlation Function}

So far we have showed that the local correlation function is continuous and, under suitable assumptions, also injective. However, we have no knowledge, yet, on the continuity of its inverse. If this could be proved, then $\rF^\varepsilon$ would define a global homeomorphism from $\bR^{1,3}$ to the image $\rF^\varepsilon(\bR^{1,3})$, giving the latter a structure of smooth manifold.

Another critical property that we would like to prove is the closedness of the local correlation function. In such a way, the image $\rF^\varepsilon(\bR^{1,3})$ would be a closed subset of $\scF$ and therefore coincide with $\overline{\rF^\varepsilon(\bR^{1,3})}=\supp (\rF^\varepsilon)_*\mu$, showing in this way that the causal fermion system selects exactly $\bR^{1,3}$ as the support of its measure.
In order to prove all this we need some technical results.

A function $f:X\rightarrow Y$ between two topological spaces is called \textit{proper} if the inverse image of any compact set is compact. In \cite[Section 3.7]{engelk} (see in particular Theorem 3.7.18)  it is proved that every such a function is also \textit{closed} if $Y$ is a $k$-space. First-countable Hausdorff spaces are always of this kind (see Theorem 3.3.20): in particular this applies to $\scF$.

\subsubsection{The Vacuum}
Let us start with the vacuum.
\begin{theorem}\label{closedF}
	The local correlation function $\rF^\varepsilon_{vac}:\bR^{1,3}\rightarrow\scF$ of the vacuum is closed.
\end{theorem}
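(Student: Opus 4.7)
The plan is to verify closedness sequentially. Given a closed set $C\subset\bR^4$ and any sequence $\{x_n\}\subset C$ for which $\rF^\varepsilon_{vac}(x_n)\to F$ in the uniform topology of $\scF$, I must exhibit some $x\in C$ with $F=\rF^\varepsilon_{vac}(x)$. If $\{x_n\}$ admits a bounded subsequence, then Bolzano--Weierstrass together with the closedness of $C$ and the continuity of $\rF^\varepsilon_{vac}$ (Theorem \ref{teoremaesistenzaF}) settles this case at once. The heart of the argument is therefore to rule out the alternative $|x_n|\to\infty$ by contradiction.

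In this case, for every smooth negative-energy solution $u\in\hat{\mathrm{E}}_0(\hat{P}_-(\scS_p(\bR^3,\bC^4)))$, Proposition \ref{hormander} gives $|u(z)|\to 0$ as $|z|\to\infty$; combining this with the compact support $\supp h_\varepsilon\subset B(0,\varepsilon)$ granted by Assumption \ref{assumpformarego} and Theorem \ref{teoremaformaregol} yields
$$
|\gR_\varepsilon u(x_n)| \;=\; \Bigl|\int_{B(0,\varepsilon)} h_\varepsilon(y)\,u(x_n-y)\,d^4y\Bigr| \;\le\; \|h_\varepsilon\|_{\scL^1}\sup_{|y|\le\varepsilon}|u(x_n-y)|\;\xrightarrow[n\to\infty]{}\;0,
$$
since $|x_n-y|\ge|x_n|-\varepsilon\to\infty$ uniformly on the closed $\varepsilon$-ball. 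Consequently $(u|\rF^\varepsilon_{vac}(x_n)u)=-|\gR_\varepsilon u(x_n)|^2\to 0$, whereas the uniform convergence $\rF^\varepsilon_{vac}(x_n)\to F$ forces $(u|\rF^\varepsilon_{vac}(x_n)u)\to(u|Fu)$. Hence $(u|Fu)=0$ on a dense subspace of $\sol^-$ by Lemma \ref{lemmaproizionieschwat}, and since $F$ is self-adjoint and vanishes on $(\sol^-)^\perp$ via the natural embedding $\iota$ of Section \ref{sectiondefcfs}, a polarization argument produces $F=0$.

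The contradiction is then extracted from the spectral rigidity of the vacuum local correlation operators: by Proposition \ref{spectrumF} every $\rF^\varepsilon_{vac}(x_n)$ shares the same spectrum $\sigma_{vac}$, whose non-zero elements are $2\pi\lambda^\pm_{\varepsilon^2}$ by Proposition \ref{elemetnsS}, with $\lambda^\pm_{\varepsilon^2}\ne 0$ guaranteed by Proposition \ref{propositionkerneldiagonal}. Hence $\|\rF^\varepsilon_{vac}(x_n)\|=2\pi\max\{|\lambda^+_{\varepsilon^2}|,|\lambda^-_{\varepsilon^2}|\}$ is a strictly positive constant, and continuity of the operator norm forces $\|F\|>0$, which is incompatible with $F=0$. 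The main technical subtlety I foresee is the step from "$(u|Fu)=0$ on a dense family" to "$F=0$", but the density statement of Lemma \ref{lemmaproizionieschwat} together with the self-adjointness of $F$ (and its being supported on $\sol^-$) makes this routine; the essential geometric input is the compact support of the mollifier $h_\varepsilon$, which is precisely what transports the spatial decay of the smooth solutions into the uniform decay of $\gR_\varepsilon u(x_n)$.
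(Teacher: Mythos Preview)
Your proof is correct and follows essentially the same route as the paper: both arguments combine the H\"ormander-type decay of smooth negative-energy solutions (Proposition~\ref{hormander}) with the constancy of the spectrum (Proposition~\ref{spectrumF}) to rule out divergent sequences. The only cosmetic differences are that the paper phrases the argument via properness (inverse images of compacta are compact) rather than sequential closedness, and applies Proposition~\ref{hormander} directly to $\gR_\varepsilon u_\varphi$ by noting that $\mathfrak{g}_\varepsilon\varphi\in\scS_p(\bR^3,\bC^4)$, whereas you route the decay through the compact support of the mollifier $h_\varepsilon$ --- both are perfectly valid.
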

\begin{proof}
	Let $K\subset\scF$ be compact (hence closed) and consider any sequence $\{x_n \}_n\subset H:=(\rF^\varepsilon_{vac})^{-1}(K)\subset\bR^{1,3}$. We want to prove that there exists some subsequence which converges to some element in $H$. This gives the compactness of $H$. 
	
	Since $K$ is compact and $\{\rF^\varepsilon_{vac}(x_n) \}_n\subset K$, there must exist some subsequence $\{y_{n} \}_n$ and some $A\in K$ such that $\rF^\varepsilon_{vac}(y_{n})\to A$.
	Notice that $A^*=A$. 
	Now, we want to prove that $A\neq 0$. Assume by contradiction that $A=0$, then $\|\rF^\varepsilon_{vac}(y_{n})\|\to 0$.
	Let $\lambda\in\sigma_{vac}\setminus\{0\}$ and take $u\in\sol^- $ such that $\rF^\varepsilon_{vac}(y_{1})u=\lambda u$, then $u_n:=\mathrm{U}_{y_1-y_n}u$ satisfies $\rF^\varepsilon_{vac}(y_n)u_n=\lambda u_n$, as follows from Proposition \ref{spectrumF}.
	Thus, we have
	$$
	|\lambda|\|u\|=|\lambda|\|u_n\|=\|\rF^\varepsilon_{vac}(y_n)u_n\|\le \|\rF^\varepsilon_{vac}(y_n)\|\|u\|\to 0,
	$$
	which is impossible, it being $u\neq 0$.
	
	Now, notice that $\hat{\mathrm{E}}(\hat{P}_-(\mathcal{S}_p(\bR^3,\bC^4)))$ is a dense subspace of $\sol^- $, as proved in Lemma \ref{lemmaproizionieschwat}, and its elements are exactly the functions of the form
	$$
	u_\varphi(x):=\int_{\bR^3}\varphi(\V{k})e^{i(\omega(\V{k})t+\V{k}\cdot\V{x})}\,d^3{\V{k}}\quad \left(\gR_\varepsilon u_\varphi(x)=\int_{\bR^3}\varphi(\V{k})\mathfrak{g}_\varepsilon(\V{k})e^{i(\omega(\V{k})t+\V{k}\cdot\V{x})}\,d^3{\V{k}}\right).
	$$	
	with $\varphi\in \hat{P}_-(\mathcal{S}_p(\bR^3,\bC^4))$ (see Proposition \ref{generalsolutionsmooth}).
	Therefore, as $A$ is bounded, self-adjoint and different from zero, there must exist at least one $\varphi\in \hat{P}_-(\mathcal{S}_p(\bR^3,\bC^4))$ such that $(u_\varphi|Au_\varphi)\neq 0$.
	
	At this point, we have two possibilities: either $\{y_n \}_n$ is bounded or it is not. If $\{y_n\}_n$ is unbounded, then we can always extract some subsequence $\{z_{n} \}_n$ which diverges to infinity: $|z_n|\to \infty$. Since $\varphi \mathfrak{g}_\varepsilon\in\mathcal{S}_p(\bR^3,\bC^4)$, Theorem \ref{hormander} implies that
	$
	\gR_\varepsilon u_\varphi(z_n)\to 0
	$
	in the limit  $n\to\infty$ and, therefore,
	$$
	\langle u_\varphi|Au_\varphi\rangle=\lim_{n\to\infty}\langle u_\varphi|\rF^\varepsilon_{vac}(z_{n})u_\varphi\rangle=-\lim_{n\to\infty}\Sl\gR_\varepsilon u_\varphi(z_{n})\,|\, \,\gR_\varepsilon u_\varphi(z_{n})\Sr=0
	$$
	for all $\varphi\in\mathcal{S}_p(\bR^3,\bC^4)$, which is a contradiction. Therefore, $\{y_{n} \}_n$ must be bounded and so there exists some subsequence $\{w_{n} \}_n$ which converges to some $x\in\bR^{1,3}$.
	To conclude, notice that $H$ is closed, it being the continuous inverse of $K$, and therefore
	$\{w_n \}_n\subset H$ and $w_n\to x$ imply $x\in H$. The proof is complete.
\end{proof}

\subsubsection{The General Case of Particles and Antiparticles}
It is possible to show that closedness is preserved in presence of particles or antiparticles, under suitable assumptions in the latter case, as for injectivity. Again, we stick to the case of smooth solutions, though we do not need to put any restriction on their support this time.

\begin{theorem}\label{teoremaclosedparticle}
	Let $\sU_\pm\subset\hat{\mathrm{E}}(\hat{P}_\pm(\mathcal{S}_p(\bR^3,\bC^4)))$ be finite-dimensional subspaces and assume that $a_-(\sU_-)$ is regular. Moreover, assume that there exists a Hilbert basis $\boldsymbol{e}_-$ of $\sU_-$ such that
	\begin{equation}\label{stima}
	\cE(\boldsymbol{e}_-,\varepsilon,x)< \sqrt{|\pi\,\nu(\varepsilon)^-|}\quad\mbox{for every }x\in\bR^{1,3}.
	\end{equation}
	Then the local correlation function of the causal fermion system $a(\sU_-,\sU_+)$ is  closed.
\end{theorem}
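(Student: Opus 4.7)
The plan is to mimic the proof of Theorem~\ref{closedF} for the vacuum, with the perturbation $\sU_-\oplus\sU_+$ handled via Theorem~\ref{approximationeigenvalues} and the hypothesis on $\cE(\boldsymbol{e}_-,\varepsilon,x)$. Since $\scF$ is first countable and Hausdorff, it is enough to show that $\rF^\varepsilon$ is proper. Given a compact $K\subset\scF$ and a sequence $\{x_n\}\subset(\rF^\varepsilon)^{-1}(K)$, after passing to a subsequence I may assume $\rF^\varepsilon(x_n)\to A\in K$ in operator norm. The whole point is then to extract a bounded subsequence of $\{x_n\}$; if $|x_n|\to\infty$ one needs a contradiction.

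Suppose $|x_n|\to\infty$. The plan is to derive the two incompatible conclusions $\|A\|>0$ and $A=0$, just as in the vacuum argument. For the first, note that the basis elements $e^\pm_i\in\hat{\mathrm{E}}_0(\hat{P}_\pm(\scS_p(\bR^3,\bC^4)))$ are smooth Schwartz-type solutions of the Dirac equation, and so are all of their spacetime partial derivatives (in momentum space these correspond to multiplication by polynomials in $\V{k}$ or by the smooth polynomially bounded factor $\pm\omega(\V{k})$, both of which preserve $\scS_p$). Hence Proposition~\ref{hormander} applies to each $e^\pm_i$ and each $\partial_\mu e^\pm_i$, giving $\cE(\boldsymbol{e}_\pm,\varepsilon,x_n)\to 0$. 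In particular, for $n$ large enough $\cE(\boldsymbol{e},\varepsilon,x_n)^2<\pi|\lambda^-_{\varepsilon^2}|$; the global hypothesis $\cE(\boldsymbol{e}_-,\varepsilon,x)<\sqrt{\pi|\lambda^-_{\varepsilon^2}|}$ serves as the uniform upper bound that bridges the finite‑$n$ regime to the asymptotic one. Theorem~\ref{approximationeigenvalues} then forces each of the four non‑vanishing eigenvalues of $\rF^\varepsilon(x_n)=\mathrm{P}_\varepsilon(x_n,x_n)$ to lie within distance $2\cE(\boldsymbol{e},\varepsilon,x_n)^2$ of $2\pi\lambda^+_{\varepsilon^2}$ or $2\pi\lambda^-_{\varepsilon^2}$, so that $\|\rF^\varepsilon(x_n)\|\ge 2\pi|\lambda^-_{\varepsilon^2}|-2\cE(\boldsymbol{e},\varepsilon,x_n)^2$ is bounded below by a positive constant for large $n$. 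Taking the limit yields $\|A\|>0$.

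For the second conclusion, the smooth elements of $a(\sU_-,\sU_+)$, namely $\bigl(\hat{\mathrm{E}}_0(\hat{P}_-(\scS_p))\cap\sU_-^\perp\bigr)\oplus\sU_+$, form a dense subspace: since $\sU_-$ is finite dimensional and lies in the smooth subspace, orthogonal projection onto $\sU_-^\perp$ preserves smoothness, and Lemma~\ref{lemmaproizionieschwat} supplies the density. For any such smooth $u$, the mollification $\gR_\varepsilon u=u*h_\varepsilon$ inherits Schwartz-type decay, so Proposition~\ref{hormander} gives $\gR_\varepsilon u(x_n)\to 0$. Therefore
\[
(u\,|\,Au)=\lim_n(u\,|\,\rF^\varepsilon(x_n)u)=-\lim_n\bigl|\gR_\varepsilon u(x_n)\bigr|^2=0.
\]
Since each $\rF^\varepsilon(x_n)$ vanishes on the orthogonal complement of $a(\sU_-,\sU_+)$ (via the identification of Lemma~\ref{lemma1}), so does $A$; combined with $(u|Au)=0$ on a dense subspace of $a(\sU_-,\sU_+)$ and the self-adjointness of $A$, this forces $A=0$, contradicting $\|A\|>0$. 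Hence $\{x_n\}$ admits a bounded subsequence which, by continuity of $\rF^\varepsilon$, converges to some $x\in(\rF^\varepsilon)^{-1}(K)$.

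The main delicate point I expect is the lower bound on $\|\rF^\varepsilon(x_n)\|$: the argument requires not only the asymptotic vanishing of $|\boldsymbol{e}_\pm(x_n)|$, but also of the Jacobian terms in $\cE$, which reduces to checking that every spacetime partial derivative of an element of $\hat{\mathrm{E}}_0(\hat{P}_\pm(\scS_p))$ remains in that space so Proposition~\ref{hormander} applies. Everything else is a routine transcription of the vacuum argument, with the global hypothesis on $\cE(\boldsymbol{e}_-)$ guaranteeing that the interpolation between the asymptotic eigenvalue estimate and the bound in Theorem~\ref{approximationeigenvalues} is uniform.
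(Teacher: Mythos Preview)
Your argument is correct, but it follows a genuinely different route from the paper's and, interestingly, proves more than the stated theorem.

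The paper first treats the case $\sU_+=\{0\}$ and establishes $A\neq 0$ \emph{before} assuming anything about the boundedness of $\{y_n\}$: from $A=0$ all eigenvalues of $\rF^\varepsilon(y_n)$ would tend to $0$, and then Theorem~\ref{approximationeigenvalues} together with the \emph{global} bound~(\ref{stima}) on $\cE(\boldsymbol{e}_-,\varepsilon,\cdot)$ yields the contradiction $|2\pi\lambda^-_{\varepsilon^2}|\le 2\cE^2<|2\pi\lambda^-_{\varepsilon^2}|$. Only afterwards does the paper assume the sequence diverges and invoke Proposition~\ref{hormander} to contradict $A\neq 0$. The general case is then reduced to the pure negative one by projecting onto $a_-(\sU_-)$ via Proposition~\ref{propprojcetion}.

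You instead assume divergence first and use Proposition~\ref{hormander} twice: once on the $e_i^\pm$ and their partial derivatives to force $\cE(\boldsymbol{e},\varepsilon,x_n)\to 0$ (hence $\|A\|\ge 2\pi|\lambda^-_{\varepsilon^2}|>0$ via Theorem~\ref{approximationeigenvalues}), and once on general smooth $u$ to force $A=0$. This is cleaner in that it handles $\sU_+$ directly without the projection step, and---more strikingly---it never actually uses hypothesis~(\ref{stima}). Your sentence about the global bound ``bridging the finite-$n$ regime to the asymptotic one'' is superfluous: the eigenvalue lower bound is only needed along the diverging subsequence, where the asymptotic decay of $\cE$ already suffices. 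In effect you have shown that the closedness holds under the hypotheses of Theorem~\ref{teoremaclosedparticle} \emph{without} condition~(\ref{stima}), which is a strict strengthening of the paper's statement. The paper's route, by contrast, keeps the two uses of decay (eigenvalue control versus pointwise vanishing of $\gR_\varepsilon u$) logically separate, at the cost of requiring the uniform bound~(\ref{stima}).
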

\begin{proof}
	Let us start with the pure negative-energy case first, i.e. $\sU_+=\{0\}$. We proceed similarly as in the proof of Theorem \ref{closedF}.  
	Let $K\subset\scF$ be compact (hence closed) and consider any sequence $\{x_n \}_n\subset H:=(\rF^\varepsilon)^{-1}(K)\subset\bR^{1,3}$. We want to prove that there exists a subsequence which converges to some element in $H$. This gives the compactness of $H$. 
	Since $K$ is compact and $\{\rF^\varepsilon(x_n) \}_n\subset K$, there must exist some subsequence $\{y_{n} \}_n$ and some $A\in K$ such that $\rF^\varepsilon(y_{n})\to A$.
	Notice that $A^*=A$. Moreover $A\restr_{a_-(\sU_-)^\perp}=0$, as the same holds for any $\rF^\varepsilon(y_n)$.
	Nevertheless, we want to show that $A\neq 0$. Assume by contradiction that $A=0$, then in particular $\|\rF^\varepsilon(y_n)\|\to 0$. This implies that 
	$$
	m_n:=\max |\sigma(\rF^\varepsilon(y_n))|=\|\rF^\varepsilon(y_n)\|\to 0,
	$$ 
	and therefore all the eigenvalues $\nu_i(n)$ of $\rF^\varepsilon(y_n)$ will converge to zero, since $|\nu_i(n)|\le m_n$. Nevertheless, this is not possible, because Theorem \ref{approximationeigenvalues} would imply that, for any $i=1,2,3,4$:
	\begin{equation}
	\begin{split}
	|2\pi\,\nu^-(\varepsilon)|&=\min\left\{|2\pi\,\nu^-(\varepsilon)|,|2\pi\,\nu^+(\varepsilon)|\right\}=\\
	&= \lim_{n\to \infty}  \min\left\{|\nu_i(n)-2\pi\,\nu^-(\varepsilon)|,|\nu_i(n)-2\pi\,\nu^+(\varepsilon)|\right\} \le\\
	&\le 2\,\cE (\V{e}_-,\varepsilon,x)^2< {|2\pi\,\nu^-(\varepsilon)|},
	\end{split}
	\end{equation}
	which is a contradiction. Now, notice that the orthogonal projector over $a_-(\sU_-)$ is given exactly by the function $\Psi$ defined in Proposition \ref{part2} (with $\boldsymbol{\psi}=\boldsymbol{u}=\boldsymbol{e}_-$  any Hilbert basis of $\sU_-\subset \hat{\mathrm{E}}(\hat{P}_-(\mathcal{S}_p(\bR^3,\bC^4)))$) i.e. $a_-(\sU_-)=\Psi[\sol^-]$. Moreover, it is not difficult to see that 
	$$
	\Psi\left[\hat{\mathrm{E}}(\hat{P}_-(\mathcal{S}(\bR^3,\bC^4))\right]\subset \hat{\mathrm{E}}(\hat{P}_-(\mathcal{S}(\bR^3,\bC^4))\quad\mbox{is dense within $\Psi[\sol^-]$}.
	$$ 
	At this point, the proof follows as in Theorem \ref{closedF}, replacing $\scH_m^-$ by $a_-(\sU_-)$ and $\hat{\mathrm{E}}(\hat{P}_-(\mathcal{S}(\bR^3,\bC^4))$ by $\Psi\left[\hat{\mathrm{E}}(\hat{P}_-(\mathcal{S}(\bR^3,\bC^4))\right]$.
	
	Now, suppose we add some positive energy solutions to the system, i.e. we consider the causal fermion system $a(\sU_-,\sU_+)$ with $\sU_+\neq \{0\}$.  Again, consider any compact set $K\subset\scF$ and let $\{x_n \}_n\subset H:=(\rF^\varepsilon)^{-1}(K)\subset\bR^{1,3}$. Since $K$ is compact, there exists a subsequence $\{y_n \}_n$ of $\{x_n\}_n$ and some $A\in K$ such that $\rF^\varepsilon(y_n)\to A$. Let us denote by $\rF_-^\varepsilon$ the local correlation function associated to the causal fermion system $a_-(\sU_-)$ analyzed in the first part of the proof. Thanks to Proposition \ref{propprojcetion}, we know that $\rF^\varepsilon_-(z)=\Pi_-\,\rF^\varepsilon(z)\, \Pi_-=:A_-$ (with $\Pi_-$ the orthogonal projector onto $a_-(\sU_-)$) for every $z\in\bR^{1,3}$ and therefore $\rF^\varepsilon_-(y_n)\to \Pi_-A\Pi_-$. The argument continues as in the first part of the proof,  replacing $\rF^\varepsilon$ by $\rF^\varepsilon_-$ and $A$ by $A_-$.
\end{proof}

\begin{remark}
In order to get some intuition, referring to the brute approximation carried out in (\ref{stime}), the estimate in (\ref{stima}) is approximately given by
	$$
	\cE(\boldsymbol{e}_-,\varepsilon,x)\lesssim\left(\frac{1}{16\pi^2}\left(\frac{2}{3\,\varepsilon^{3}}-\frac{m}{\varepsilon^2} \right)\right)^{1/2}=\left[\frac{1}{4\pi}\left(\frac{2}{3(m\varepsilon)^3}-\frac{1}{(m\varepsilon)^2}\right)^{1/2}\right] m^{3/2}.
	$$
	The right-hand term is very large under the assumption $m\varepsilon<m\varepsilon_{max}=10^{-15}$ if compared with (\ref{macroscopicond}).
\end{remark}

We are ready to state the most important result of this section. 

\begin{theorem}\label{teoremamanifold}
	Under the assumptions of Theorems \ref{theoreminjectivityparticle} and \ref{teoremaclosedparticle} the following statements hold.
	\vspace{0.1cm}
	\begin{itemize}[leftmargin=2.5em]
		\item[\rm{(i)}] $\rF^\varepsilon(\bR^{1,3})$ is a closed subset of $\scF$,\\[-0.3cm]
		\item[\rm{(ii)}] $\rF^\varepsilon:\bR^3\rightarrow \rF^\varepsilon(\bR^{1,3})$ is a homeomorphism,\\[-0.3cm]
		\item[\rm{(iii)}] $\rF^\varepsilon(\bR^{1,3})=\supp(\rF^\varepsilon)_*\mu$ is a 4-dimensional smooth manifold.
	\end{itemize}
\end{theorem}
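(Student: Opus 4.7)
The plan is to package the preceding results rather than do fresh work. Under the stated hypotheses, Theorem~\ref{teoremaesistenzaF} gives continuity of $\rF^\varepsilon$, Theorem~\ref{theoreminjectivityparticle} gives injectivity, and Theorem~\ref{teoremaclosedparticle} gives that $\rF^\varepsilon$ is a closed mapping from $\bR^4$ into $\scF$. Everything in (i)--(iii) follows from these three properties via elementary point-set topology.

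For (i), I would simply observe that $\bR^4$ is closed in itself, so $\rF^\varepsilon(\bR^4)$ is closed in $\scF$ because $\rF^\varepsilon$ is a closed map. For (ii), consider $\rF^\varepsilon$ as a bijection onto its image $M:=\rF^\varepsilon(\bR^4)$ endowed with the subspace topology inherited from $\scF$. Continuity is given, and a continuous closed bijection is automatically a homeomorphism: if $C\subset\bR^4$ is closed then $\rF^\varepsilon(C)$ is closed in $\scF$ (by the closedness as a map into $\scF$), hence also closed in $M$; but $\rF^\varepsilon(C) = \big((\rF^\varepsilon)^{-1}\big)^{-1}(C)$, so preimages of closed sets under $(\rF^\varepsilon)^{-1}$ are closed, i.e.\ the inverse is continuous.

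For (iii), the identity $\rF^\varepsilon(\bR^4) = \supp(\rF^\varepsilon)_*\mu$ comes from combining (i) with formula (\ref{closureImF}): the right-hand side is the closure of $\rF^\varepsilon(\bR^4)$, and by (i) this set already equals its closure. Finally, the smooth $4$-manifold structure on $M$ is the one transported from $\bR^4$ through the homeomorphism $\rF^\varepsilon$: declaring $(\rF^\varepsilon)^{-1}:M\to \bR^4$ to be a global chart produces a (trivially compatible) smooth atlas with respect to which $\rF^\varepsilon$ becomes a diffeomorphism by construction.

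The statement is essentially a corollary; the main obstacle was already overcome in the preceding two theorems (injectivity and properness/closedness of $\rF^\varepsilon$). The only point requiring a brief argument here is the standard topological lemma that a continuous closed bijection has continuous inverse, which I would spell out in one line using preimages of closed sets as above. No nontrivial smoothness claim about the embedding into $\scB(\scH_m)$ is asserted in the statement of this theorem, so the manifold structure can legitimately be defined via the global chart without further analytic work; any stronger compatibility with the ambient operator-space structure would presumably be the content of the forthcoming Corollary~\ref{teoremamanifold2}.
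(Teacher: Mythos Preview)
Your proposal is correct and follows exactly the same approach as the paper, which simply states that the proof follows directly from the injectivity, continuity and closedness of $\rF^\varepsilon$; you have merely spelled out the standard point-set topology details that the paper leaves implicit. (Your closing remark about Corollary~\ref{teoremamanifold2} is a bit off---that corollary concerns the foliation by Cauchy surfaces rather than any ambient smoothness---but this does not affect the argument.)
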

\begin{proof}
	The proof follows directly from the injectivity, continuity and closedness of~$\rF^\varepsilon$.
\end{proof}

It is actually possible show that even the canonical foliation of $\bR^{1,3}$ into space and time is preserved. 
Let us introduce the following symbol:
$$
\hat{\Sigma}_t:=\rF^\varepsilon(\Sigma_t)\quad t\in\bR.
$$
Then the following result can be proved as the previous theorem.
\begin{corollary}\label{teoremamanifold2}
	Under the assumptions of Theorems \ref{theoreminjectivityparticle} and \ref{teoremaclosedparticle} the following statements hold for every $t\in\bR$.
	\vspace{0.1cm}
	\begin{itemize}[leftmargin=2.5em]
		\item[\rm{(i)}] $\hat{\Sigma}_t$ is a closed subset of $\scF$,\\[-0.3cm]
		\item[\rm{(ii)}] $\rF^\varepsilon\restr_{\Sigma_t}:\bR^3\rightarrow\Sigma_t$ is a homeomorphism,\\[-0.3cm]
		\item[\rm{(iii)}] $\hat{\Sigma}_t$ is a three-dimensional smooth manifold.
	\end{itemize}
	In particular, $\supp(\rF^\varepsilon)_*\mu$ admits a smooth foliation in terms of 3-dimensional submanifolds $\{\hat{\Sigma}_t\}_{t\in\bR}$.
\end{corollary}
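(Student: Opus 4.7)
The plan is to run the same three-step argument used for Theorem \ref{teoremamanifold}, but applied to the restriction $\rF^\varepsilon\restr_{\Sigma_t}$, taking full advantage of the fact that $\Sigma_t=\{t\}\times\bR^3$ is a closed subset of $\bR^4$ (canonically homeomorphic to $\bR^3$). Since Theorem \ref{teoremamanifold} has already given us that $\rF^\varepsilon:\bR^4\to\scF$ is injective, continuous and closed, each of the three properties we need will descend to the restriction essentially for free.

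First I would note that injectivity and continuity of $\rF^\varepsilon\restr_{\Sigma_t}$ are immediate: injectivity because the restriction of an injection is an injection, and continuity because restrictions of continuous maps are continuous in the subspace topology, with $\Sigma_t\cong\bR^3$ via the obvious homeomorphism $\V{x}\mapsto (t,\V{x})$. The slightly less trivial point is closedness of the restricted map. Here I would argue as follows: let $C\subset\Sigma_t$ be closed in $\Sigma_t$; since $\Sigma_t$ itself is closed in $\bR^4$, the set $C$ is also closed in $\bR^4$. By Theorem \ref{teoremamanifold}(i) (i.e.\ Theorem \ref{teoremaclosedparticle}) the map $\rF^\varepsilon$ is closed, so $\rF^\varepsilon(C)$ is closed in $\scF$. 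Applying this with $C=\Sigma_t$ yields point (i) of the corollary, namely that $\hat{\Sigma}_t=\rF^\varepsilon(\Sigma_t)$ is closed in $\scF$.

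Next, combining injectivity, continuity and closedness of $\rF^\varepsilon\restr_{\Sigma_t}$, this restriction is a closed continuous bijection onto $\hat{\Sigma}_t$; closed bijections are homeomorphisms onto their image (the inverse sends closed sets to closed sets by definition of closed map), proving point (ii). For point (iii), the inverse map $(\rF^\varepsilon\restr_{\Sigma_t})^{-1}:\hat{\Sigma}_t\to\bR^3$ furnishes a global chart on $\hat{\Sigma}_t$, giving it the structure of a $3$-dimensional smooth manifold in the trivial way (single-chart atlas, hence no compatibility conditions to check). The identity $\hat{\Sigma}_t=\supp(\rF^\varepsilon)_*\mu\cap\hat{\Sigma}_t$ is not needed here; the foliation statement is just the observation that $\bR^4=\bigsqcup_{t\in\bR}\Sigma_t$ is carried by the homeomorphism $\rF^\varepsilon$ of Theorem \ref{teoremamanifold} to a disjoint decomposition $\supp(\rF^\varepsilon)_*\mu=\bigsqcup_{t\in\bR}\hat{\Sigma}_t$ whose leaves are smooth submanifolds by (i)--(iii).

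There is essentially no obstacle in this proof beyond careful bookkeeping: all the analytic work (injectivity via the Gaussian-wave-packet argument of Theorem \ref{theoreminjectivityparticle}, and closedness via the Riemann--Lebesgue-type decay in Theorem \ref{teoremaclosedparticle}) is already encoded in Theorem \ref{teoremamanifold}. The only point worth stating explicitly is that, because $\Sigma_t$ is a \emph{closed} subspace of $\bR^4$, closed subsets of $\Sigma_t$ are automatically closed in $\bR^4$, which is precisely what lets the closedness of $\rF^\varepsilon$ descend to the closedness of $\rF^\varepsilon\restr_{\Sigma_t}$; had $\Sigma_t$ been merely a subset (not closed), the restriction would only be a continuous bijection and the inverse could fail to be continuous.
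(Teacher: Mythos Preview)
Your proposal is correct and follows exactly the approach the paper intends: the paper's own proof is merely the remark that the corollary ``can be proved as the previous theorem,'' and you have spelled out precisely how the injectivity, continuity and closedness of $\rF^\varepsilon$ (established in Theorem~\ref{teoremamanifold}) descend to the restriction $\rF^\varepsilon\restr_{\Sigma_t}$. Your explicit observation that $\Sigma_t$ is closed in $\bR^4$---so that closed subsets of $\Sigma_t$ remain closed in $\bR^4$ and hence closedness of $\rF^\varepsilon$ passes to the restriction---is the one nontrivial point, and you have identified it correctly.
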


As a conclusion of this section, we go back to the discussion started after Definition~\ref{defCFS}.
Consider the set $\scH_m^-$ of negative-energy physical solutions and let $\rF^\varepsilon_{vac}$ be its local correlation function. We want to give some arguments which support the choice of $(\rF^\varepsilon_{vac})_*\mu$ as Borel measure on $\scF$ realizing $\rF^\varepsilon_{vac}(\bR^{1,3})$ as support.

Proposition \ref{propositioninvariancetransl} and Proposition \ref{spectrumF} show that the vacuum causal fermion system carries a unitary representation of the translation group. 
It is sensible to assume that any Borel measure $\varrho$ which is meant to describe the vacuum is invariant under such transformations:
\begin{equation}\label{invariancemeasure}
\varrho((\mathrm{U}_a)^\dagger\, \Omega\,\mathrm{U}_a))=\varrho(\Omega)\ \mbox{ for all }\Omega\in\mathfrak{Bor}(\scF)\ \mbox{ and } a\in\bR^{1,3}. 
\end{equation}
This assumption does not leave much freedom in choosing $\rho$, as the following proposition clarifies.
\begin{proposition}\label{remarkunicitymeasure}
	Let $\mu$ denote the Lebesgue-Borel measure on $\bR^{1,3}$ and let $\varrho$ be a Borel measure on $\scF$ which fulfills (\ref{invariancemeasure}) and such that 
	$$
	\supp\varrho=\mathrm{F}_{vac}^\varepsilon(\bR^{1,3})
	\quad\mbox{and}\quad
	\varrho(K)<\infty\mbox{ for every compact $K\subset\scF$. }$$ Then there exists $\lambda\ge 0$ such that $\varrho=\lambda\, (\rF^\varepsilon_{vac})_*\mu$.
\end{proposition}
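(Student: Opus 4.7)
The plan is to pull $\varrho$ back to $\bR^4$ via the homeomorphism $\rF^\varepsilon_{vac}$, invoke uniqueness of translation-invariant Borel measures on $\bR^4$, and push forward. By Theorem \ref{teoremamanifold}, the map
$$
\rF^\varepsilon_{vac}:\bR^4\longrightarrow \rF^\varepsilon_{vac}(\bR^4)\subset\scF
$$
is a homeomorphism onto a closed subset of $\scF$, and this closed subset equals $\supp\varrho$ by hypothesis. In particular, $\varrho$ assigns zero measure to the complement of $\rF^\varepsilon_{vac}(\bR^4)$, so that $\varrho$ is entirely determined by the Borel measure
$$
\nu:\mathfrak{Bor}(\bR^4)\ni B\longmapsto \varrho\bigl(\rF^\varepsilon_{vac}(B)\bigr)\in[0,\infty],
$$
which is well-defined because the homeomorphism sends Borel sets to Borel sets (they are exactly the trace on $\rF^\varepsilon_{vac}(\bR^4)$ of Borel sets of $\scF$).

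Next, I would verify the two structural properties of $\nu$. For local finiteness, any compact $K\subset\bR^4$ has compact image $\rF^\varepsilon_{vac}(K)$ by continuity, so $\nu(K)=\varrho(\rF^\varepsilon_{vac}(K))<\infty$ by the local finiteness assumption on $\varrho$. For translation invariance, fix $a\in\bR^4$ and a Borel set $B\subset\bR^4$. By Proposition \ref{spectrumF},
$$
\rF^\varepsilon_{vac}(B+a)=\mathrm{U}_a^{\dagger}\,\rF^\varepsilon_{vac}(B)\,\mathrm{U}_a,
$$
so applying (\ref{invariancemeasure}) to $\Omega=\rF^\varepsilon_{vac}(B)$ gives
$$
\nu(B+a)=\varrho\bigl(\mathrm{U}_a^{\dagger}\,\rF^\varepsilon_{vac}(B)\,\mathrm{U}_a\bigr)=\varrho\bigl(\rF^\varepsilon_{vac}(B)\bigr)=\nu(B).
$$

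Thus $\nu$ is a locally finite, translation-invariant Borel measure on $\bR^4$. By the classical uniqueness theorem for Haar measure on the locally compact abelian group $(\bR^4,+)$, there exists $\lambda\ge 0$ with $\nu=\lambda\,\mu$. Pushing this equality forward via $\rF^\varepsilon_{vac}$ yields, for every Borel $\Omega\subset\scF$,
$$
\varrho(\Omega)=\varrho\bigl(\Omega\cap\rF^\varepsilon_{vac}(\bR^4)\bigr)=\nu\bigl((\rF^\varepsilon_{vac})^{-1}(\Omega)\bigr)=\lambda\,\mu\bigl((\rF^\varepsilon_{vac})^{-1}(\Omega)\bigr)=\lambda\,(\rF^\varepsilon_{vac})_*\mu(\Omega),
$$
which is the desired identity. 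The only delicate point I foresee is the Borel measurability bookkeeping in defining $\nu$ and handling the complement of $\rF^\varepsilon_{vac}(\bR^4)$, but this is routine once closedness and the homeomorphism property of Theorem \ref{teoremamanifold} are in hand; the substantive content of the proposition is simply Haar uniqueness combined with the translation-equivariance of $\rF^\varepsilon_{vac}$.
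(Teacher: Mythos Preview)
Your proof is correct and follows essentially the same approach as the paper: pull $\varrho$ back to $\bR^4$ via the homeomorphism of Theorem \ref{teoremamanifold}, use Proposition \ref{spectrumF} together with (\ref{invariancemeasure}) to verify translation invariance of the pulled-back measure, apply the uniqueness of translation-invariant locally finite Borel measures on $\bR^4$, and push forward. The paper additionally verifies explicitly that $(\rF^\varepsilon_{vac})_*\mu$ itself satisfies (\ref{invariancemeasure}), but this is a consistency check rather than an extra ingredient in the uniqueness argument.
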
 
\begin{proof}
	For simplicity of notation we drop the indices $\varepsilon$ and $vac$. To start with, we show that the push-forward measure does satisfy condition (\ref{invariancemeasure}). First, notice that Proposition \ref{spectrumF} implies $\mathrm{U}_a^\dagger\, \rF(\bR^{1,3})\,\mathrm{U}_a=\rF(\bR^{1,3})$ for every $a\in\bR^{1,3}$. Therefore, given that $\supp \rF_*\mu=\rF(\bR^{1,3})$ and the unitarity of $\mathrm{U}_a$, in order to prove (\ref{invariancemeasure}) we can stick to the Borel subsets of $\rF(\bR^{1,3})$. Indeed, let $\Omega$ be any Borel set of $\scF$, then:
	\begin{equation*}
	\begin{split}
	\rF_*\mu(\mathrm{U}_a^\dagger\,\Omega\, \mathrm{U}_a)&=\rF_*\mu(\mathrm{U}_a^\dagger\,\Omega\, \mathrm{U}_a\cap \rF(\bR^{1,3}))=\rF_*\mu(\mathrm{U}_a^\dagger\,\Omega\, \mathrm{U}_a\cap \mathrm{U}_a^\dagger\,\rF(\bR^{1,3})\,\mathrm{U}_a)=\\
	&=\rF_*\mu(\mathrm{U}_a^\dagger\,(\Omega\cap\rF(\bR^{1,3}))\, \mathrm{U}_a).
	\end{split}
	\end{equation*}
	Thus, we can focus on the Borel sets of $\rF(\bR^{1,3})$. Let $\Omega$ be any of them, then
	\begin{equation*}
	\begin{split}
	(\mathrm{U}_a)^\dagger\, \Omega\,\mathrm{U}_a&=\{(\mathrm{U}_a)^\dagger\rF(x)\mathrm{U}_a\:|\: x\in\rF^{-1}(\Omega)\}=\\
	&=\{\rF(x+a)\:|\: x\in\rF^{-1}(\Omega)\}=\rF(\rF^{-1}(\Omega)+a).
	\end{split}
	\end{equation*}
	Therefore, as the Lebesgue measure is invariant under translations, we get
	$$
	\rF_*\mu((\mathrm{U}_a)^\dagger\, \Omega\,\mathrm{U}_a)=\mu(\rF^{-1}(\Omega)+a)=\mu(\rF^{-1}(\Omega))=\rF_*\mu(\Omega).
	$$
	To conclude the proof we have to show uniqueness (up to a non-negative factor) of such a measure. Assume there exists another measure $\varrho$ as in the assumptions of the proposition. Since $\supp\varrho=\mathrm{F}_{vac}^\varepsilon(\bR^{1,3})$, again we can stick our analysis to the Borel subsets of $\rF(\bR^{1,3})$, which coincide with the images of the Borel subsets of $\bR^{1,3}$ through $\rF$, the latter being a homeomorphism. Now, fix a Borel set $\Delta\subset\bR^{1,3}$  and a vector $a\in\bR^{1,3}$, then we have
	$$
	\varrho(\rF(\Delta+a))=\varrho((\mathrm{U}_a)^\dagger\,\rF(\Delta)\,\mathrm{U}_a)=\varrho(\rF(\Delta)).
	$$
	Now, as $\rF$ is a homeomorphism onto its image, the function $\Delta\mapsto \varrho(\rF(\Delta))$ defines a Borel measure on $\bR^{1,3}$, which is translation invariant and is finite on compact subsets. It is a well-known fact (see \cite[Theorem 2.20]{rudin}) that every Borel measure fulfilling these properties must be a non-negative multiple of the Lebesgue-Borel measure $\mu$. Therefore, there exists $\lambda\ge 0$ such that, for any Borel set $\Omega\subset\rF(\bR^{1,3})$,
	$$
	\lambda\,\rF_*\mu(\Omega)=\lambda\,\mu(\rF^{-1}(\Omega))=\varrho(\rF(\rF^{-1}(\Omega)))=\varrho(\Omega),
	$$
	and the proof is complete.
\end{proof}
\section{Some concluding remarks}
%

In constructing a causal fermion systems $(\mathscr{H},\rF^\varepsilon)$, we start from Minkowski space $\bR^{1,3}$ and represent it in $\scF$ through the map $\mathrm{F}^\varepsilon$. Then, we push-forward the Lebesgue measure $\mu$ on $\bR^{1,3}$ to a measure $\varrho=(\rF^\varepsilon)_*\mu$ on $\scF$, which satisfies $\supp\varrho=\overline{\rF^\varepsilon(\bR^{1,3})}$. 

If now we focus on the special cases treated in the previous sections, for which the corresponding local correlation function is closed and homeomorphic onto its image, then it is possible to discuss the relations existing between different causal fermion systems without ever referring to Minkowski space. In this way, we may be able to highlight specific underlying structures which hold in more general settings than Minkowski space.

%

Consider some finite-dimensional subspaces $\sU_\pm^{(i)}\subset\sol^\pm,\ i=1,2 $ as in the previous section, and construct the corresponding causal fermion systems. Also, consider the vacuum system (this is, actually, a special case of the former with $\sU_\pm=\{0\}$, but it is convenient to consider it separately). Thus, we have
$$
\big(a\big(\sU_-^{(i)},\sU_\pm^{(i)}\big),\rF_i,\varrho_i\big),\ \ i=1,2\quad\mbox{and}\quad \big(\scH_m^-,\rF_{vac},\varrho_{vac}\big),
$$
where we dropped the index $\varepsilon$, for simplicity of notation. The first pair of casual fermion systems describes two possible configurations of matter, with a few particles and a few antiparticles. The second one represents the vacuum.

By definition, the corresponding measures are defined as the push-forward of the Lebesgue measure to $\scF$ through the local correlation functions:
$$
\varrho_i=(\mathrm{F}_i)_*\mu,\ \ i=1,2\quad\mbox{and}\quad \varrho_{vac}=(\rF_{vac})_*\mu.
$$
We also introduce the following notation for the supports:
$$
\sM_i:=\supp\varrho_i=\rF_i(\bR^{1,3}),\ \ i=1,2\quad\mbox{and}\quad \sM_{vac}=\supp\varrho_{vac}=\rF_{vac}(\bR^{1,3}).
$$
At this point, we put all these mathematical structures together in a commutative diagram:
%
%
\begin{displaymath}
\begin{tikzcd}
\sM_1\arrow[rrrr, bend left, "h_{21}"]\arrow[drr, dashed, bend left, "\mathrm{F}_1^{-1}"] & & &	&	\sM_2 \arrow[dddll, bend left, "h_2"'] \arrow[dll, dashed, bend right, "\mathrm{F}_2^{-1}"']\\	
&	&\bR^{1,3}\arrow[ull,"\mathrm{F}_1"] \arrow[urr,"\mathrm{F}_2"'] \arrow[dd,"\mathrm{F}_{vac}"]\\
& &\\	
&& \sM_{vac}\arrow[uu, bend left, dashed, "\mathrm{F}_{vac}^{-1}"]	\arrow[uuull, bend left, "h_1"] 
\end{tikzcd}
\end{displaymath}
%
where we defined the homeomorphisms:
\begin{equation*}
\begin{split}
h_1:=\rF_1\circ\rF_{vac}^{-1}:&\sM_{vac}\rightarrow\sM_1,\quad h_2:=\rF_{vac}\circ\rF_{2}^{-1}:\sM_{2}\rightarrow\sM_{vac}\\
 &h_{21}:\rF_2\circ\rF_1^{-1}:\sM_1\rightarrow\sM_2.
\end{split}
\end{equation*}
By construction, every set appearing in the diagram above is a smooth manifold and every function is a diffeomorphism. 
Moreover, it is possible to rewrite every measure as the push-forward of one another, as the defining functions are homeomorphisms. 
More precisely, it follows by definition of push-forward that
$$
\varrho_1=(h_1)_*\varrho_{vac},\quad \varrho_{vac}=(h_2)_*\varrho_2,\quad \varrho_2=(h_{21})_*\varrho_1.
$$
The identities above translate into the mathematical language of causal fermion systems what is generally known as the action of a \textbf{creation or annihilation operator}.
More precisely, we can give the following interpretations.
\vspace{0.1cm}
\begin{itemize}[leftmargin=2.5em]
	\item[\rm{(i)}] The identity $\varrho_1=(f_1)_*\varrho_{vac}$ defines the creation of the particles of $\sU_+^{(1)}$ and  the antiparticles of $\sU_-^{(1)}$, starting from the vacuum.\\[-0.3cm]
	\item[\rm{(ii)}] The identity $\varrho_{vac}=(f_2)_*\varrho_2$  defines the annihilation of the particles of $\sU_+^{(2)}$ and  the antiparticles of $\sU_-^{(2)}$, ending up in the vacuum.\\[-0.3cm]
	\item[\rm{(iii)}] The identity $\varrho_2=(f_{21})_*\varrho_1$ defines the annihilation of $\sU_\pm^{(1)}$ and the creation of~$\sU_\pm^{(2)}$.
\end{itemize}

In the general case, when  the local correlation function is not a homeomorphism onto its image, the discussion presented here no longer applies. However, in some special cases it is possible to get similar results, even though the functions involved may only be measurable. This, though, goes beyond the scope of this paper and will not be discussed here.

\vspace{0,5cm}


\section*{Acknowledgments}
I would like to thank Felix Finster for the  fruitful discussions and the several advices and hints which helped me a lot in writing this paper. I am grateful to my colleagues, in particular Maximilian Jokel, Christoph Langer and Andreas Platzer, for their interest in this work and the useful exchange of ideas, and to anonymous referees whose comments and suggestions helped me to improve and clarify this manuscript. A final thank goes to a dear friend, Davide De Boni, for pointing out some unforgivable mistakes.

\section{{Appendix}}

\begin{proof}[Proof of Proposition \ref{causalpropag}]
	Point (i) is a direct consequence of the fact that two solutions of (\ref{solutiondirac}) which coincide on the past boundary of a lens-shaped region do coincide on the whole region (see the corresponding section in \cite{intro}; also see \cite[Section 2.1]{DM}). More precisely, any solution whose restriction to  $\Sigma_t$ has compact support must coincide with the trivial solution on the set complement of $\{t\}\times\supp\varphi + \{x\in\bR^{1,3}\:|\: \eta(x,x)\ge 0 \}$. The proof of (ii) follows from (i) and from Theorem \ref{existenceuniqueness}-(i).
\end{proof}
\begin{proof}[Proof of Proposition \ref{currentconserv}]
	The positive-definiteness of $(\cdot|\cdot)_t$ follows directly from the analogous property of $(\cdot|\cdot)_{\mathcal{L}^2}$ and the uniqueness of the solutions of (\ref{solutiondirac}) for given  initial data at time $t\in\bR$.  The function $\mathrm{E}_t$ is an isometry by construction. Finally, the independence of the inner product from the time variable can be proved by differentiating under the integral sign, writing the time derivative in terms of spatial derivatives by means of equation (\ref{solutiondirac}) and eventually applying the divergence theorem (note that the involved functions have spatially compact support).
\end{proof}

\begin{proof}[Proof of Lemma \ref{lemmaconverging}]
	Consider $f,g\in\Sol$ and fix any $T>0$. Then, from current conservation we have
	\begin{equation}\label{equazioniintegrali}
	(f\restr_{R_T}|g\restr_{R_T})_{\mathcal{L}^2}=\int_{R_T}f(x)^\dagger g(x)\, d^4x=\int_{-T}^Tdt\int_{\bR^3}f(t,\V{x})^\dagger g(t,\V{x})\, d^3\V{x}=2T(f|g)_0,
	\end{equation}
	which in particular gives identity (\ref{identityintegralstrip}) for functions in $\Sol$.
	From this we see that $\{f_n \}_n$ is Cauchy in $\Sol$ if and only if $\{f_n|_{R_T}\}_n$ is Cauchy in $\mathcal{L}^2(R_T,\bC^4)$ for any $T>0$.  
	In particular, since every open bounded set is contained within a suitable $R_T$, we see that every Cauchy sequence in $\Sol$ is also a Cauchy sequence in $\mathcal{L}^2_{loc}(\bR^{1,3},\bC^4)$ and the proof is complete.
\end{proof}

\begin{proof}[Proof of Lemma \ref{lemmaconverging2}]
	Let $\{f_n \}_n$ be a Cauchy sequence in $\Sol$ and let $u$ denote its limit in $\mathcal{L}^2_{loc}(\bR^{1,3},\bC^4)$.	Let us start by proving point (i). 
	As shown in the proof of Lemma \ref{lemmaconverging}, $\{f_n\}_n$ is also Cauchy in $\mathcal{L}^2(R_k,\bC^4)$ for every $k\in\bN$ and, therefore, for any $k\in\bN$ there must exist a function $\tilde{u}_k\in \mathcal{L}^2(R_k,\bC^4)$ such that $\|f_n\restr_{R_k}-\tilde{u}_k\|_{\mathcal{L}^2}\to 0$. Suppose that $l<k$, then $f_n\restr_{R_l}=(f_n\restr_{R_k})\restr_{R_l}$ and therefore it follows that that $\tilde{u}_k\restr_{R_l}=\tilde{u}_l$. In this way it is possible to construct a measurable function $\tilde{u}$ such that $\tilde{u}\restr_{R_k}=\tilde{u}_k$ for every $k\in\bN$. This function belongs to $\mathcal{L}^2_{loc}(\bR^{1,3},\bC^4)$ and $\|f_n\restr_{R_k}-\tilde{u}\restr_{R_k}\!\!\|\to 0$ for any $k\in\bN$. From this, it follows that $f_n$ converges to $\tilde{u}$  in the $\mathcal{L}^2$-norm  on every open bounded set $B$. By uniqueness of the limit in $\mathcal{L}^2_{loc}(\bR^{1,3},\bC^4)$ we see that $u=\tilde{u}$. In this way we have proved that $u\restr_{R_T}\in\mathcal{L}^2(R_T,\bC^4)$ . It remains to prove identity (\ref{identityintegralstrip}). Notice that we have already proved the same identity for the elements of $\Sol$ in (\ref{equazioniintegrali}). We also know that $\|f_n\restr_{R_k}-u\restr_{R_k}\!\!\|_{\mathcal{L}^2}\to 0$ for every $k\in\bN$. The proof of (\ref{identityintegralstrip}) follows then by the analogous property on $\Sol$ and the continuity of $\|\cdot\|$.
	Now, let us pass to the proof of (ii). We know from the proof of (i) that $f_n|_{R_k}\to u|_{R_k}$ for every $k\in\bN$ in $\mathcal{L}^2(R_k,\bC^4)$. Fix $k=1$, then there exists a subsequence $f_{\sigma_1(n)}$ and a null-set $N_1\subset R_1$ on whose complement $R_1\setminus N_1$ the pointwise convergence $f_{\sigma_1(n)}(x)\to u(x)$ holds. At this point, we apply the same argument to the Cauchy sequence $\{f_{\sigma_1(n)} \}_n$ and the integer $k=2$. This gives a subsequence $\{f_{\sigma_2(n)} \}_n$ and a null set $N_2\subset R_2$ such that $f_{\sigma_2(n)}(x)\to u(x)$ for every $x\in R_2\setminus N_2$. We proceed in this way for every $k\in\bN$. If we now define the subsequence $\sigma(n):=\sigma_n(n)$ and the  set $N=\bigcup_{k\in\bN} N_k$, then $N$ is again a null set and $f_{\sigma(n)}(x)\to u(x)$ on $\bR^{1,3}\setminus N$, concluding the proof.
\end{proof}

\begin{proof}[Proof of Theorem \ref{weaksol}]
	Let $u\in\scH_m$  and let $\{f_n\}_n\subset\Sol$ be any Cauchy sequence converging to $u$ in $\scH_m$. Then $\{f_n\}_n$ converges also in the $\mathcal{L}_{loc}^2$-topology and therefore, as $\supp\varphi\subset B(0,R)$ for some $R>0$, we have
	\begin{equation*}
	\begin{split}
&\int_{\bR^4} u(x)^\dagger(\cD^* \varphi) (x)\, d^4x=\int_{B(0,R)} u(x)^\dagger (\cD^* \varphi) (x)\, d^4x=\\
	&=\lim_{n\to \infty }\int_{B(0,R)} f_n(x)^\dagger (\cD^* \varphi)(x)\, d^4x=\lim_{n\to\infty}\int_{B(0,R)} (\cD f_n (x))^\dagger \varphi(x)\, d^4x=0.
	\end{split}
	\end{equation*}
	To conclude, assume that $u\in\mathcal{C}^\infty(\bR^{1,3},\bC^4)$. Integrating by parts, the proven identity can be improved to
	$$
	0=\int_{\bR^4} u(x)^\dagger(\cD^* \varphi) (x)\, d^4x=\int_{\bR^4} (\mathcal{D} u(x))^\dagger\varphi(x)\, d^4x\quad \mbox{for all }\varphi\in\mathcal{C}_0^\infty(\bR^{1,3},\bC^4).
	$$
	The fundamental theorem of variational calculus implies that $\mathcal{D} u=0$.
\end{proof}

\begin{proof}[Proof of Proposition \ref{continuityU}]
	Let us start with the group properties. By density and unitarity, it suffices to work on $\mathcal{C}_0^\infty(\bR^3,\bC^4)$. It is clear that $U_0=\bI$. Now, consider any solution $f\in\Sol$. Then  the function $f_p(u,\V{x}):=f(u+p,\V{x})$ is again an element of $\Sol$ and $f_p\restr_{\Sigma_q}=f\restr_{\Sigma_{p+q}}$. Take any $t,s\in\bR$, then we have
	$$
	U_{s+t}(f\restr_{\Sigma_0})=f\restr_{\Sigma_{s+t}}=f_s\restr_{\Sigma_{t}}=U_t(f_s\restr_{\Sigma_0})=U_t(f\restr_{\Sigma_s})=U_t(U_s(f\restr_{\Sigma_0})).
	$$ 
	Let us pass to the proof of  strong-continuity. Again, we can focus on $\mathcal{C}_0^\infty(\bR^3,\bC^4)$ (see \cite[Proposition 9.28]{moretti}). For simplicity of notation (we change notation), we write $f_t:=f\restr_{\Sigma_t}$ for every $f\in\Sol$. The proof follows from (b) of  \cite[Proposition 9.28]{moretti} if we manage to show that $(f_t|f_0)_{\mathcal{L}^2}\to (f_0|f_0)_{\mathcal{L}^2}$, with $f_t= U_t f_0$.
	To this aim, fix any $\epsilon>0$. Point (i) of Proposition \ref{causalpropag} ensures that $R_\epsilon\cap\supp f$ is compact. Therefore, by continuity of $f$, there exists $C>0$ such that
	$$
	|f_t(\V{x})^\dagger f_0(\V{x})|\le C|f_0(\V{x})|\quad\mbox{for all }t\in [-\epsilon,\epsilon],\ \V{x}\in\bR^3.
	$$
	Moreover, $f_t(\V{x})^\dagger f_0(\V{x})\to f_0(\V{x})^\dagger f_0(\V{x})$ as $t\searrow 0$ for any fixed $\V{x}\in \bR^3$, again by continuity. The claim follows by Lebesgue's dominated convergence theorem.
	
	To show the last statements let us stick again to the notation $f_t:=f\restr_{\Sigma_t}=f(t,\,\cdot\,)$. Also,  notice that $U_t(f_0)=f_t$. We want to show that 
	$$
	\frac{d U_s(f_0)}{ds}\bigg|_0= -i\mathcal{H}(f_0)\quad\mbox{for any $f\in\scH_m^{sc}$}.
	$$ 
The claim will then follow from  \cite[Theorem 1.1 ]{thaller}. The Dirac equation $\cD f=0$ can be restated as 
	$$
	(i\partial_tf) (s,\V{x})= (\mathcal{H} f_s) (\V{x})\quad\mbox{for all $f\in\Sol$ and $(s,\V{x})\in\bR^{1,3}$.}
	$$
	This implies that 
	$$
	\lim_{s\to 0}\frac{f(s,\V{x})-f(0,\V{x})}{s}+i\mathcal{H} f_0(\V{x})=0\quad\mbox{for all }\V{x}\in\bR^3.
	$$
	Now, fix $\epsilon>0$.  From Proposition \ref{causalpropag}-(i) again, there must exist some $\delta>0$ such that  
	$$
	R_\epsilon\cap\supp f\subset[-\epsilon,\epsilon]\times \overline{B(0,\delta)}:=A.
	$$ 
	Moreover,  $\supp \mathcal{H} f_0\subset\supp f_0\subset \overline{B(0,\delta)}$. Therefore, for every $(s,\V{x})\in R_\epsilon$ we have
	\begin{equation*}
	\begin{split}
	\left|\frac{f(s,\V{x})-f(0,\V{x})}{s}+i\mathcal{H} f_0(\V{x})\right|^2&=\chi_{\overline{B(0,\delta)}}(\V{x})\left|\frac{f(s,\V{x})-f(0,\V{x})}{s}+i\mathcal{H} f_0(\V{x})\right|^2\le\\
	&\le \left(\chi_{\overline{B(0,\delta)}}({\V{x}})\cdot\sup_{A}|\partial_t f| + |\mathcal{H} f_0(\V{x})|\right)^2,
	\end{split}
	\end{equation*}
	where  we made use of the mean value theorem in the time variable. At this point, the function on the right-hand side is integrable and, therefore, we can apply Lebesgue's dominated convergence theorem and show that $U_t f_0$ is differentiable at $t=0$ in $\mathcal{L}^2(\bR^3,\bC^4)$, with derivative $-i\mathcal{H} f_0$. Because the function $f$ was chosen arbitrarily, we have just proved that the self-adjoint generator of $U_t$ coincides with $\mathcal{H}$ on~$\mathcal{C}^\infty_c(\bR^3,\bC^4)$. 
\end{proof}

\begin{lemma}\label{chiusuraschwartz}
	Let $\omega(\V{k}):=\sqrt{\V{k}^2+m^2}$. Then the following statements hold.
	\vspace{0.1cm}
	\begin{itemize}[leftmargin=2.5em]
		\item[\rm{(i)}] $\omega^{-1}$ has bounded derivatives of any order $k\ge 0$,\\[-0.2cm]
		\item[\rm{(ii)}] $\omega$ has bounded derivatives of any order $k\ge 1$ and $\omega(\V{k})\le \sum_{i=1}^3|k_i|+m$.
	\end{itemize}
\end{lemma}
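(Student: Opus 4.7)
The trivial bound $\omega(\V{k})\le |k_1|+|k_2|+|k_3|+m$ is immediate: squaring the right-hand side gives $\V{k}^2+m^2$ plus non-negative cross terms, hence $\omega(\V{k})^2\le (|k_1|+|k_2|+|k_3|+m)^2$ and one takes square roots. The rest of the argument is an exercise in differentiating an expression of the form (polynomial)/(power of $\omega$) and controlling it by using that $\omega\ge m>0$ and $|k_i|\le\omega$.

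The plan is to prove, by induction on $n=|\alpha|$, the following two structural claims simultaneously:
\begin{equation*}
\partial^\alpha\omega^{-1}=\frac{P_\alpha(\V{k})}{\omega^{2n+1}},\qquad \partial^\alpha\omega=\frac{Q_\alpha(\V{k})}{\omega^{2n-1}}\quad(n\ge 1),
\end{equation*}
where $P_\alpha$ and $Q_\alpha$ are polynomials in $\V{k}$ of degree at most $n$. The base cases $n=0$ (for $\omega^{-1}$) and $n=1$ (for $\omega$, where $\partial_i\omega=k_i/\omega$ so $Q_{e_i}(\V{k})=k_i$) are immediate. For the inductive step, one differentiates once more and applies the quotient rule: differentiating $P_\alpha(\V{k})/\omega^{2n+1}$ yields a first term $(\partial_j P_\alpha)/\omega^{2n+1}$ and a second term proportional to $P_\alpha\cdot k_j/\omega^{2n+3}$; writing the first term as $(\partial_j P_\alpha)\,\omega^{2}/\omega^{2n+3}=(\partial_j P_\alpha)(\V{k}^2+m^2)/\omega^{2n+3}$ puts everything over the common denominator $\omega^{2(n+1)+1}$, and the resulting numerator is a polynomial of degree at most $n+1$. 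The same bookkeeping works for $\omega$.

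Once the structural claims are in hand, the boundedness statements follow at once from the two elementary estimates $\omega(\V{k})\ge m>0$ and $|k_i|\le\omega(\V{k})$. Indeed, writing $P_\alpha(\V{k})=\sum_{|\beta|\le n}c_\beta\V{k}^\beta$, one has $|\V{k}^\beta|\le\omega^{|\beta|}\le\omega^{n}$, so
\begin{equation*}
\bigl|\partial^\alpha\omega^{-1}(\V{k})\bigr|\le\frac{C_\alpha\,\omega^{n}}{\omega^{2n+1}}=\frac{C_\alpha}{\omega^{n+1}}\le\frac{C_\alpha}{m^{n+1}},
\end{equation*}
and analogously $|\partial^\alpha\omega(\V{k})|\le C_\alpha'/\omega^{n-1}\le C_\alpha'/m^{n-1}$ for $n\ge 1$ (for $n=1$ the denominator is $\omega^{0}=1$, which is consistent with the uniform bound $|\partial_i\omega|=|k_i|/\omega\le 1$).

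The only real bookkeeping step is checking that the degree of the polynomial numerator is preserved correctly under the inductive step; everything else is a direct consequence of $\omega\ge m$ and $|\V{k}|\le\omega$. No obstacle of substance is expected.
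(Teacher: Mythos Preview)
Your proof is correct and is precisely the kind of ``direct inspection'' the paper invokes without details (the paper's entire proof reads: ``The proof follows by direct inspection.''). Your inductive structural form $\partial^\alpha\omega^{-1}=P_\alpha/\omega^{2|\alpha|+1}$ with $\deg P_\alpha\le|\alpha|$, combined with $|k_i|\le\omega$ and $\omega\ge m$, is exactly the computation one would carry out; you have simply written out what the paper leaves implicit.
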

\begin{proof}
	The proof follows by direct inspection.
\end{proof}

	\begin{proof}[Proof of Theorem \ref{theoremfourierH}]
		The proof of the first part follows from the properties of the Fourier Transform (see \cite{thaller}, Section 1.4.4). The last statement follows directly from Stone Theorem and the uniqueness of the self-adjoint generator.
	\end{proof}
\begin{proof}[Proof of Theorem \ref{proporthogonalityp}]
	The result can be proved by direct inspection, bearing in mind the properties of the Dirac matrices: $(\gamma^\mu)^\dagger= \eta^{\nu\nu}\gamma^\mu$ and $\{\gamma^\mu,\gamma^\nu\}=2\eta^{\mu\nu}\bI_4$.
\end{proof}
\begin{proof}[Proof of Proposition \ref{propPpm}]
	The proof follows from the corresponding features of the matrices $p_\pm(\V{k})$. Just notice that the operators $\hat{P}_\pm$ are well-defined by virtue of the boundedness of $p_\pm$.
\end{proof}
\begin{proof}[Proof of Proposition \ref{lemmaproizionieschwat}]
	The proof follows directly from the density of the Schwartz space in the space of square-integrable functions and the fact that $\hat{P}_\pm(\mathcal{S}_p(\bR^3,\bC^4))\subset \mathcal{S}_p(\bR^3,\bC^4)$. This last inclusion follows from the second identity in (\ref{localprojectorposneg}), together with Lemma~\ref{chiusuraschwartz} and the fact that the Schwartz space is closed under multiplication by polynomials.
\end{proof}
\begin{proof}[Proof of Proposition \ref{theoremenergy}]
	The proof of point (i) follows immediately from $$\hat{P}_\pm(\varphi)\in\mathcal{S}_p(\bR^3,\bC^4)\subset D(\hat{\mathcal{H}})$$ and the definition of $\hat{\mathcal{H}}$.  Similarly, point (ii) can be proved exploiting the explicit action of $e^{-it\hat{\mathcal{H}}}$. Point (iii) can be found in \cite[Theorem 1.1]{thaller}.
\end{proof}

\begin{proof}[Proof of Proposition \ref{generalsolutionsmooth}]
	Consider a  generic function $\varphi\in \mathcal{S}_p(\bR^3,\bC^4)$ and denote by $\tilde{u}_\varphi$ the function defined in the right-hand side of (\ref{fourierexpansion}). We claim that $\tilde{u}_\varphi\in\ker\cD$. In order to prove this, it suffices to focus on functions $\varphi\in \hat{P}_\pm(\mathcal{S}_p(\bR^3,\bC^4))$.  We can just consider the negative-energy case, because the other one is analogous. Using the compact notation $\eta(x,k)=-\omega(\V{k})t-\V{k}\cdot\V{x}$, define
	$$
	{\varphi}_x:= \varphi\, e^{-i\eta(k,x)}\in\mathcal{L}_p^1(\bR^3,\bC^4)\quad\mbox{for any } x\in\bR^{1,3}.
	$$
	Notice that $x\mapsto \varphi_x(\V{k})$ is smooth for every fixed $\V{k}\in\bR^3$. Moreover, differentiating in $x$ gives, for any multi-index $\alpha \in\bN^4$,
	$$
	|D^\alpha \varphi_x(\V{k})| = |k^\alpha \varphi (\V{k})|\quad\mbox{for every }\V{k}\in\bR^3.
	$$
	Notice that $k^\alpha\varphi$ is an element of $\mathcal{S}_p(\bR^3,\bC^4)$ for any multi-index $\alpha$. This follows from the fact that the Schwartz space is closed under multiplication by polynomials and by Lemma \ref{chiusuraschwartz}.
	At this point, as $k^\alpha \varphi \in\mathcal{L}_p^1(\bR^3,\bC^4)$ and it does not depend on $x$, we can apply \cite[Theorem 1.88]{moretti} and conclude that the function
	\begin{equation}\label{equazionesmooth}
	\bR^{1,3}\ni x\mapsto  \int_{\bR^3} \frac{d^3\V{k}}{(2\pi)^{3/2}}\varphi(\V{k})e^{-i\eta(k,x)}\in\bC^4
	\end{equation}
	is differentiable to every order and that partial derivatives and integral can be switched.  In particular, if we apply the Dirac operator $\cD$ to (\ref{equazionesmooth}), it is not difficult to see that (\ref{equazionesmooth}) solves the Dirac equation. 
	
	In the special case of functions $\varphi\in\cF(\mathcal{C}_{0,x}^\infty(\bR^3,\bC^4))$, the function $\tilde{u}_\varphi$ must coincide with $u_\varphi=\mathrm{E}\circ\cF^{-1}(\varphi)\in\Sol$, for they are both smooth solutions of (\ref{solutiondirac}) and coincide at $t=0$.
	Let us go back to general functions $\varphi\in\mathcal{S}_p(\bR^3,\bC^4)$. It holds that  
	$$
	\tilde{u}_\varphi(t,\cdot)\in\mathcal{S}_x(\bR^3,\bC^4)\subset \mathcal{L}_x^2(\bR^3,\bC^4)\ \ \mbox{and}\ \  \|\tilde{u}_\varphi(t,\cdot)\|_{\mathcal{L}^2}=\|\tilde{u}_\varphi(0,\cdot)\|_{\mathcal{L}^2}\quad \mbox{for every }t\in\bR.
	$$ 
	The first statement follows from the fact that $\tilde{u}(t,\cdot)$ is, by definition, the inverse Fourier Transform of $\varphi_+\, e^{- i\omega t} +\varphi_-\, e^{ i\omega t}$, which is a Schwartz function (again, use Lemma \ref{chiusuraschwartz}). 
	In order to prove the second identity, notice once more that $\tilde{u}(t,\cdot)$ is the inverse Fourier Transform of $\varphi_+\, e^{- i\omega t} + \varphi_-\, e^{ i\omega t}$ and, therefore, using Parseval's identity (see \cite[Proposition 3.105]{moretti}) and the fact that the subspaces $W_{\V{k}}^\pm$ are orthogonal to each other (see Proposition \ref{proporthogonalityp}) we have:
	\begin{equation*}
	\begin{split}
	\|\tilde{u}_\varphi(t,\cdot)\|_{\mathcal{L}^2}^2&=\int_{\bR^3} |\varphi_+(\V{k})\, e^{- i\omega(\V{k})t} + \varphi_-(\V{k})\, e^{ i\omega(\V{k}) t}|^2\,d^3\V{k}=\\
	&=\int_{\bR^3} \big(|\varphi_+(\V{k})|^2 + |\varphi_-(\V{k})|^2\big)\,d^3\V{k}=\\
	&=\int_{\bR^3} |\varphi_+(\V{k}) + \varphi_-(\V{k})|^2\,d^3\V{k}=\left(\|\varphi\|^2_{\mathcal{L}^2}=\right)\|\tilde{u}_\varphi(0,\cdot)\|_{\mathcal{L}^2}^2
	\end{split}
	\end{equation*}
	At this point, reasoning as in (\ref{equazioniintegrali}) and making use of the identity between brackets above, it follows that, for every $T>0$,
	$$
	\tilde{u}_\varphi\restr_{R_T}\in\mathcal{L}^2(R_T,\bC^4)\ \ \mbox{and}\ \  \|\tilde{u}_\varphi\restr_{R_T}\|_{\mathcal{L}^2}=\sqrt{2T}\|\tilde{u}_\varphi(0,\cdot)\|_{\mathcal{L}^2}=\sqrt{2T}\|\varphi\|_{\mathcal{L}^2}.
	$$ 
	Since $\cF(\mathcal{C}_{0,x}^\infty(\bR^3,\bC^4))$ is dense in $\mathcal{S}_p(\bR^3,\bC^4)$ in the $\mathcal{L}^2$-norm, there must exist a sequence 
	$$
	\{\varphi_n \}_n\subset\cF(\mathcal{C}_{0,x}^\infty(\bR^3,\bC^4))\quad\mbox{such that}\quad \|\varphi_n- \varphi\|_{\mathcal{L}^2}\to 0.
	$$
	The sequence $\{u_{\varphi_n} \}_n\subset\Sol$ is then of Cauchy type, as $\{\varphi_n\}_n$ is Cauchy and $\mathrm{{E}_0\circ\cF^{-1}}$ is an isometry.
	Now, take an open bounded set $B\subset\bR^{1,3}$ and $T>0$ such that $\overline{B}\subset R_T$. Then, bearing in mind that $\tilde{u}_{\varphi_n}=u_{\varphi_n}$, we have
\begin{equation*}
\begin{split}
	\|\tilde{u}_{\varphi}\restr_{B}-u_{\varphi_n}\restr_{B}\|_{\mathcal{L}^2}&\le \|\tilde{u}_{\varphi}\restr_{R_T}-u_{\varphi_n}\restr_{R_T}\|_{\mathcal{L}^2}=\|\tilde{u}_{\varphi-\varphi_n}\restr_{R_T}\|_{\mathcal{L}^2}=\\
	&=\sqrt{2T}\|\varphi-\varphi_n\|_{\mathcal{L}^2}\to 0,
	\end{split}
	\end{equation*}
	which implies that $u_{\varphi_n}\to \tilde{u}_\varphi$ in $\mathcal{L}^2_{loc}(\bR^{1,3},\bC^3)$. 
	By definition, this means that $\tilde{u}_{\varphi}\in\sol$ (notice that $\tilde{u}_\varphi$ clearly is locally square-integrable, it being continuous) and 
	$$
	\|\tilde{u}_\varphi\|_m=\lim_{n\to \infty }\|u_{\varphi_n}\|_m=\lim_{n\to \infty }\|\varphi_n\|_{\mathcal{L}^2}=\|\varphi\|_{\mathcal{L}^2}.
	$$
	In particular, the last identity proves that the linear function $f:\mathcal{S}_p(\bR^3,\bC^4)\ni\varphi\mapsto \tilde{u}_\varphi\in\sol$ is continuous. 
	Using the notation $g:=\mathrm{E}\circ\cF^{-1}:	\mathcal{L}_p^2(\bR^3,\bC^4)\ni\psi\mapsto u_\psi\in\sol$, and exploiting the uniqueness of the extension of continuous linear operators (see \cite[Proposition 2.47]{moretti}), we get
	$$
	g\restr_{\cF(\mathcal{C}_{0,x}^\infty(\bR^3,\bC^4))}=f\restr_{\cF(\mathcal{C}_{0,x}^\infty(\bR^3,\bC^4))}\ \Longrightarrow\ g\restr_{\mathcal{S}_p(\bR^3,\bC^4)}= f\restr_{\mathcal{S}_p(\bR^3,\bC^4)}.
	$$
	In other words, we have just proved that $\tilde{u}_\varphi=u_\varphi$ for every $\varphi\in\mathcal{S}_p(\bR^3,\bC^4)$.
\end{proof}

\begin{proof}[Proof of Proposition \ref{mostgeneralsolution}]
	Identity (\ref{representationsol}) can be obtained as in (\ref{abstractrestatement}). Concerning the identity in (\ref{identificationsolutions}), the inclusion $\supset$ was already proved in (\ref{abstractrestatement}): just take $f:=\cF^{-1}(\tilde{\varphi})$. In order to prove the inclusion $\subset$, take any element $P_\pm(\,\cdot\,,f)$  and define the functions
	$$
	\varphi_\pm(\V{k}):=\pm (2\pi)^{-1/2}\,  p_\pm(\V{k}) \gamma^0\, \cF(f)(\pm\omega(\V{k}),\V{k}),
	$$
	which belong to $\hat{P}_\pm(\mathcal{S}_p(\bR^3,\bC^4))$. From (\ref{representationsol}) and Proposition \ref{generalsolutionsmooth} we get
	\begin{equation}\label{key}
	P_\pm((t,\V{x}),f)=\int_{\bR^3}\frac{d^3\V{k}}{(2\pi)^{3/2}}\,\varphi_\pm(\V{k})\, e^{-i(\pm\omega(\V{k})t-\V{k}\cdot\V{x})}=u_{\varphi_\pm}(t,\V{x})=\hat{\mathrm{E}}(\varphi_\pm)(t,\V{x}).
	\end{equation}
	Proposition \ref{generalsolutionsmooth} ensures that $P_\pm(\,\cdot\,,f)$ belong to the subspace  $\scH_m^\pm\cap\mathcal{C}^\infty(\bR^{1,3},\bC^4)$.
\end{proof}

\begin{proof}[Proof of Proposition \ref{propdistribproof}]
	Let $x\in\bR^{1,3}$ and $f\in\mathcal{S}_x(\bR^{1,3},\bC^4)$. By definition of Schwartz space, there exists a constant $C$ such that:
	$$
	(1+|k|^4) \:|\cF(f)(k)|\le C\:\|\cF(g)\|_{4,0}\quad\mbox{for all $k\in\bR^{1,3}$}
	$$
	(where $\|\cdot\|_{p,q}$ denote the Schwartz norms).  Usingthis inequality and identities (\ref{localprojectorposneg}), (\ref{representationsol}), we see that
	\begin{equation*}
	\begin{split}
		|P_\pm(x,f)| &\le \left(C\int_{\bR^3}\frac{d^3\V{k}}{(2\pi)^2}\frac{\omega(\V{k})\|\gamma^0\|_2+\sum_{i=1}^3|k^i|\|\gamma^i\|_2+m}{2\omega(\V{k})(1+(\omega(\V{k})^2+\V{k}^2)^2)} \right)\|\cF(f)\|_{4,0} \:.
	\end{split}
\end{equation*}
	The integral in brackets is well-defined and convergent.
	Now, exploiting \cite[Lemma 8.2.2 and eq.~(8.2.2)]{friedlander2} (where a different convention for the Schwartz norm is adopted), one finds that, for some constant $K>0$,
	$$
	\|\cF(h)\|_{4,0}\le K \:\|h\|_{6,4}\quad\mbox{for all }h\in \mathcal{S}_x(\bR^{1,3},\bC^4).
	$$
	Putting all together, we find a constant $B>0$ such that $|P_\pm(x,f)|\le B\|f\|_{6,4}$ for all $f\in\mathcal{S}_x(\bR^{1,3},\bC^4)$. This proves that $P_\pm$ are indeed tempered distributions.
\end{proof}

\begin{proof}[Proof of Proposition \ref{realizationallsolut}]
	The first statement follows directly from the second one, which can be proved using (\ref{abstractrestatement}) and (\ref{identificationsolutions}).
\end{proof}
\begin{proof}[Proof of Proposition \ref{kernelP}]
	The case of $P_-$ can be proved following the discussion in \cite[Section 1.2.5]{FF} (see Lemma 1.2.9). In particular, the function $\mathcal{P}_-$ is given by
	$
	\mathcal{P}_-=(i\slashed{\partial}+m) T_{m^2}
	$
	with $T_{m^2}$ the function defined in (1.2.29). The case of positive energy is analogous.
\end{proof}
\begin{proof}[Proof of Proposition \ref{spindecomposition}]
	Take any $\lambda_{\uparrow,\downarrow}^\pm\in\mathcal{S}_p(\bR^3,\bC)$. Then, reasoning as in the proof of Lemma  \ref{lemmaproizionieschwat}, it is possible to show that the functions $\lambda_{\uparrow,\downarrow}\,\chi_{\uparrow,\downarrow}^\pm$ belong to $\mathcal{S}_p(\bR^3,\bC^4)$, more precisely to $\hat{P}_\pm(\mathcal{S}_p(\bR^3,\bC^4))$. Together with Proposition \ref{generalsolutionsmooth}, this proves that the functions in (\ref{basicsolution}) do define elements of $\hat{\mathrm{E}}(\hat{P}_\pm(\mathcal{S}_p(\bR^3,\bC^4)))$.  In order to prove the other inclusion, let us stick to the positive-energy case (the negative one is analogous) and take any $u_\varphi\in \hat{\mathrm{E}}(\hat{P}_+(\mathcal{S}_p(\bR^3,\bC^4)))$ with $\varphi\in \hat{P}_+(\mathcal{S}_p(\bR^3,\bC^4))\subset\mathcal{S}_p(\bR^3,\bC^4)$. Since the vectors $\chi_{\uparrow,\downarrow}^+(\V{k})$ define a basis of $W_\V{k}^+$ at every $\V{k}\in\bR^3$, we can write 
	$$
	\varphi(\V{k})=\lambda_\uparrow^+(\V{k})\chi_\uparrow^+(\V{k})+\lambda_\downarrow^+(\V{k})\chi_\downarrow^+(\V{k})=
	\big(
	\lambda_\uparrow^+(\V{k}),
	\lambda_\downarrow^+(\V{k}),
	A(\V{k})
	\big)^t,
	$$
	for some vector $A(\V{k})\in\bC^2$. By definition of $\mathcal{S}(\bR^3,\bC^4)$, all the components of $\varphi$ must be (complex-valued) Schwartz functions and, therefore, in particular, $\lambda_{\uparrow,\downarrow}^+\in\mathcal{S}_p(\bR^3,\bC)$. 
\end{proof}

\begin{lemma}\label{estimate}
	Referring to Proposition \ref{spindecomposition} the following estimates hold.
	\vspace{0.15cm}
	\begin{itemize}[leftmargin=2.5em]
		\item[\rm{(i)}] $\|u_{\uparrow,\downarrow}^\pm\|_m\le \sqrt{2}\,\|\lambda_{\uparrow,\downarrow}^\pm\|_{\mathcal{L}^2}$\\[-0.15cm]
		\item[\rm{(ii)}] $|\partial_\mu u_{\uparrow,\downarrow}^\pm(x)|\le \sqrt{2}\,(2\pi)^{-3/2}\,\|k_\mu \lambda_{\uparrow,\downarrow}^\pm\|_{\mathcal{L}^1}$
	\end{itemize}
\end{lemma}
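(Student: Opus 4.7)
The plan is to recognize both solutions $u^\pm_{\uparrow,\downarrow}$ as instances of the general formula $u_\varphi$ from Proposition \ref{generalsolutionsmooth} with the specific choice of momentum distribution $\varphi = \lambda^\pm_{\uparrow,\downarrow}\,\chi^\pm_{\uparrow,\downarrow} \in \hat{P}_\pm(\scS_p(\bR^3,\bC^4))$, and then reduce both estimates to a pointwise bound on the fundamental spinors.

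The key preliminary step is to control $|\chi^\pm_{\uparrow,\downarrow}(\V{k})|$ uniformly in $\V{k}$. From the explicit form (\ref{fundamentalspinors}) and the Pauli-matrix identity $|(\V{\sigma}\cdot\V{k})\,e_{\uparrow,\downarrow}|^2 = |\V{k}|^2$, one computes
\begin{equation*}
|\chi^\pm_{\uparrow,\downarrow}(\V{k})|^2 \;=\; 1 + \frac{|\V{k}|^2}{(\omega(\V{k})+m)^2}\;\le\;2,
\end{equation*}
since $(\omega(\V{k})+m)^2 \ge \omega(\V{k})^2 = |\V{k}|^2 + m^2 \ge |\V{k}|^2$. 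Hence $|\chi^\pm_{\uparrow,\downarrow}(\V{k})|\le \sqrt{2}$ pointwise.

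For point (i): the function $\varphi := \lambda^\pm_{\uparrow,\downarrow}\,\chi^\pm_{\uparrow,\downarrow}$ belongs to $\hat{P}_\pm(\scS_p(\bR^3,\bC^4))$ (as shown inside the proof of Proposition \ref{spindecomposition}), and $u^\pm_{\uparrow,\downarrow} = u_\varphi = \hat{\mathrm{E}}_0(\varphi)$. Since $\hat{\mathrm{E}}_0$ is unitary, $\|u^\pm_{\uparrow,\downarrow}\| = \|\varphi\|_{\scL^2}$, so the spinor bound immediately yields
\begin{equation*}
\|u^\pm_{\uparrow,\downarrow}\|^2 \;=\; \int_{\bR^3}|\lambda^\pm_{\uparrow,\downarrow}(\V{k})|^2\,|\chi^\pm_{\uparrow,\downarrow}(\V{k})|^2\,d^3\V{k} \;\le\; 2\,\|\lambda^\pm_{\uparrow,\downarrow}\|^2_{\scL^2}.
\end{equation*}

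For point (ii): I would first justify differentiation under the integral sign by exactly the argument used in the proof of Proposition \ref{generalsolutionsmooth} -- for each fixed $x$ the integrand is smooth in $x$, and its $\partial_\mu$-derivative is dominated (uniformly in $x$) by $|k_\mu|\,|\lambda^\pm_{\uparrow,\downarrow}|\,|\chi^\pm_{\uparrow,\downarrow}|$, which is $\scL^1$ because $\lambda^\pm_{\uparrow,\downarrow}$ is Schwartz and $\chi^\pm_{\uparrow,\downarrow}$ has at worst polynomial growth (via Lemma \ref{chiusuraschwartz}). Since $\partial_\mu e^{-i\eta(k,x)} = -i k_\mu e^{-i\eta(k,x)}$ with $k^0=\pm\omega(\V{k})$ on the mass shell, I obtain
\begin{equation*}
\partial_\mu u^\pm_{\uparrow,\downarrow}(x) \;=\; -i\!\int_{\bR^3}\!\frac{d^3\V{k}}{(2\pi)^{3/2}}\,k_\mu\big|_{k^0=\pm\omega(\V{k})}\lambda^\pm_{\uparrow,\downarrow}(\V{k})\,\chi^\pm_{\uparrow,\downarrow}(\V{k})\,e^{-i(\pm\omega(\V{k})t-\V{x}\cdot\V{k})},
\end{equation*}
and taking absolute values inside the integral together with $|\chi^\pm_{\uparrow,\downarrow}|\le\sqrt{2}$ gives the claimed $\scL^1$-bound.

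No part of this is really hard; the only tiny subtlety is the interchange of derivative and integral, which was already handled in Proposition \ref{generalsolutionsmooth} and requires just a repetition of the dominated-convergence argument there.
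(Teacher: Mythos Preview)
Your proof is correct and follows essentially the same route as the paper's: both arguments rest on the pointwise spinor bound $|\chi^\pm_{\uparrow,\downarrow}(\V{k})|^2\le 2$ (the paper records the exact value $2\omega/(\omega+m)$ first, you bound it directly), then use unitarity of $\hat{\mathrm{E}}_0$/Plancherel for (i) and differentiation under the integral for (ii). The only cosmetic difference is that you make the dominated-convergence justification for (ii) explicit, which the paper leaves implicit.
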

\begin{proof}
	For the proof of (i) notice that $\chi^\dagger \chi=2\omega/(\omega+m)\le 2$ and, therefore, Plancherel Theorem and a calculation similar as in the proof of Proposition \ref{generalsolutionsmooth} yields
	\begin{equation*}
	\begin{split}
	\|u^\pm\|_m^2\!&=\!\|u^\pm(0,\cdot)\|_{\mathcal{L}^2}^2\!=\!\int_{\bR^3}d^3\V{k}\,|\lambda^\pm(\V{k})|^2|\chi^\pm(\V{k})|^2\le 2\int_{\bR^3}d^3\V{k}\,|\lambda^\pm(\V{k})|^2={2}\|\lambda^\pm\|_{\mathcal{L}^2}^2.
	\end{split}
	\end{equation*}
	Concerning the first derivatives, it follows from (\ref{basicsolution}) that
	\begin{equation}
	|\partial_\mu u^\pm(x)|\le \frac{1}{(2\pi)^{3/2}}\int_{\bR^3}|\lambda^\pm(\V{k})||\chi^\pm(\V{k})||k_\mu|\, d^3\V{k}\le \frac{\sqrt{2}}{(2\pi)^{3/2}}\|k_\mu \lambda^\pm \|_{\mathcal{L}^1}.
	\end{equation}
	where $k=(\pm\omega(\V{k}),\V{k})$.
\end{proof}

\begin{proof}[Proof of Lemma \ref{lemmamollifieroperator}]
	Let us start with point (i). Take any $h\in\mathscr{M}(\bR^{1,3})$ with $\supp h\subset B(0,\delta)$ and let $f\in\Sol$ be arbitrary. The function $f*h$ is smooth (see \cite[Theorem 1.6.1]{ziemer}). From Proposition \ref{causalpropag} we know that $\supp f$ is contained in the causal propagation of the support of the initial data. By definition of convolution, it holds that $\supp (f*h)\subset \supp f + \overline{B(0,\delta)}$ and therefore $\supp ( f*h|_{\Sigma_0})$ must be compact.  If we manage to prove that $f*h$ belongs to $\ker\cD$, then the proof of (i) is finished. This is true. Indeed, applying again \cite[Theorem 1.6.1]{ziemer}, we get
	\begin{equation*}
	\begin{split}
	&i\gamma^\mu\partial_\mu (h*f)=\partial_{\mu}(h*(i\gamma^\mu f))=(\partial_{\mu}h)*(i\gamma^\mu f)=\\
	&=\int_{\bR^4}\left(\frac{\partial h}{\partial s^\mu} \right)(x-y)\cdot i\gamma^\mu f(y)\, d^4y= -\int_{\bR^4}\left(\frac{\partial }{\partial y^\mu}h(x-y)\right)\cdot i\gamma^\mu f(y)\, d^4y \stackrel{(*)}{=}\\
	&= \int_{\bR^4}h(x-y)\,\left(i\gamma^\mu\frac{\partial }{\partial y^\mu} f(y)\right)\, d^4y=m (h*f),
	\end{split}
	\end{equation*}
	where in  $(*)$ we used the divergence theorem and the fact that $h$ is compactly supported.
	%
	
	Now, let us prove point (ii). First, notice that the convolution of any $\mathcal{L}^2_{loc}$ function with a mollifier $h$ always yields a smooth function (see \cite[Theorem 1.6.1]{ziemer}). In order to proceed, we need a technical result about mollification.
	\begin{lemma}\label{lemmatecnico}
		Let $v\in\mathcal{L}_{loc}^1(\bR^n,\bC^m)$, $B\subset\bR^n$ be an open set and define the open set
		$
		B_\delta:=B\cup\bigcup_{x\in\partial B} B(x,\delta).
		$
		Then $\|h*v\restr_B\|_{\mathcal{L}^2}\le \|v\restr_{B_\delta}\|_{\mathcal{L}^2}.$
	\end{lemma}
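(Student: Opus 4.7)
The plan is to combine the standard Young-type estimate for convolution with a probability measure with a geometric observation controlling the translates of $B$ that contribute to the convolution against $h$.

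First I would use that $h\in\mathscr{M}(\bR^n)$ means $h\ge 0$, $\|h\|_{\scL^1}=1$ and $\supp h\subset B(0,\delta)$, so $h(y)\,dy$ is a compactly supported probability measure. Applying Jensen's inequality with the convex function $t\mapsto t^2$ yields the pointwise bound
$$\big|(h*v)(x)\big|^2 \;=\; \left|\int h(y)\,v(x-y)\,dy\right|^2 \;\le\; \int h(y)\,|v(x-y)|^2\,dy.$$
Integrating over $B$ and swapping the order of integration via Fubini--Tonelli (the integrand being non-negative), together with the translation $z=x-y$ in the inner integral, gives
$$\|h*v\restr_B\|_{\scL^2}^2 \;\le\; \int h(y)\,\|v\restr_{B-y}\|_{\scL^2}^2\,dy.$$

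The key geometric step is then to show that $B-y\subset B_\delta$ for every $y$ with $|y|<\delta$. So let $z\in B-y$, i.e. $z+y\in B$. If $z\in B$ we are done; otherwise the straight segment joining $z+y\in B$ to $z\notin B$ must cross $\partial B$ at some point $p$, whence $|z-p|\le |y|<\delta$ and therefore $z\in B(p,\delta)\subset B_\delta$. Since $\supp h\subset B(0,\delta)$, this gives $\|v\restr_{B-y}\|_{\scL^2}\le\|v\restr_{B_\delta}\|_{\scL^2}$ for $h$-a.e.\ $y$. Plugging back and using $\|h\|_{\scL^1}=1$ yields the claim.

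The Jensen/Fubini computation is entirely routine; the only mildly delicate point is the geometric inclusion $B-y\subset B_\delta$, which hinges on the elementary segment argument above. Note also that the inequality is trivial whenever the right-hand side is infinite, so no a priori integrability of $v$ on $B_\delta$ beyond the $\scL^1_{loc}$ hypothesis is required.
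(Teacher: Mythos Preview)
Your proof is correct and follows essentially the same route as the paper: the same Jensen/H\"older bound with the probability weight $h$, followed by Fubini--Tonelli and the geometric containment. The only cosmetic difference is that the paper swaps the integrals directly and argues that for $y\notin B_\delta$ one has $h(x-y)=0$ for all $x\in B$, whereas you organize the same computation through the translates $B-y$ and make the segment argument for the inclusion $B-y\subset B_\delta$ explicit (the paper states the corresponding fact without justification).
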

	\begin{proof}
		Exploiting H\"older's inequality, $\|h\|_{\mathcal{L}^1}=1$ and Fubini Theorem for positive functions, we have:
		\begin{equation*}
		\begin{split}
		&\int_B|v*h(x)|^2\, d^nx=\int_B\, d^nx\left|\int_{\bR^n}\sqrt{h(x-y)}\sqrt{h(x-y)}v(y)\,d^ny\right|^2\le \\
		&\le \int_B d^nx \int_{\bR^n} h(x-y)|v(y)|^2\, d^ny= \int_{\bR^n} |v(y)|^2\, d^ny\int_B h(x-y)\, d^nx\stackrel{(*)}{=}\\
		&=\int_{B_\delta} |v(y)|^2\, d^ny\int_B h(x-y)\, d^nx\le \int_{B_\delta} |v(y)|^2\, d^ny.
		\end{split}
		\end{equation*}
		The only non-obvious step in the above chain of inequalities is identity (*). In order to understand it, take any $y\in \bR^n\setminus B_\delta$. By definition, $|y-x|>\delta$ for every $x\in B$ and therefore $h(x-y)=0$ for every $x\in B$. This shows that $y$ gives no contribution to the integral on the left-hand side of~$(*)$.
	\end{proof}
	We can go back to the proof of point (ii). Let $u\in\sol$ and $\{f_n \}_n\subset\Sol$ be a Cauchy sequence converging to $u$ in $\mathcal{L}_{loc}^2(\bR^{1,3},\bC^4)$, as in the definition of $\sol$.
	First, exploiting Lemma \ref{lemmaconverging2}, Lemma \ref{lemmatecnico} and the fact that $f_n*h\in\Sol$, we get for any $T>0$ that
	\begin{equation*}
	\begin{split}
	\sqrt{2T}\|f_n*h-f_m*h\|_m&=\|f_n*h\restr_{R_T}-f_m*h\restr_{R_T}\|_{\mathcal{L}^2}=\|(f_n-f_m)*h\restr_{R_T}\|_{\mathcal{L}^2}\le \\
	&\le\|(f_n-f_m)\restr_{R_{T+\delta}}\|_{\mathcal{L}^2}=\sqrt{2{(T+\delta)} }\|f_n-f_m\|_m
	\end{split}
	\end{equation*}
	which shows that $\{f_n*h\}_n$ is a Cauchy sequence in $\Sol$.\textbf{}
	Second, exploiting Lemma \ref{lemmatecnico},  we see that, for any bounded open set $B\subset\bR^{1,3}$,
	$$
	\|f_n*h\restr_B-u*h\restr_B\|_{\mathcal{L}^2}\le \|f_n\restr_{B_\delta}-u\restr_{B_\delta}\|_{\mathcal{L}^2}\to 0,
	$$
	where we used the fact that $B_\delta$ is open and bounded. 
	By definition of $\sol$, this implies that $u*h\in\sol$.
	
	To prove point (iii), fix any $T>0$ and notice that (see Lemma \ref{lemmaconverging2})
	$$\sqrt{2T}\|u*h\|_m=\|u*h\restr_{R_T}\|_{\mathcal{L}^2}\le \|u\restr_{R_{T+\delta}}\|_{\mathcal{L}^2}=\sqrt{2(T+\delta)}\|u\|_m,$$
	for every $u\in\sol$. This gives the continuity.
\end{proof}

\begin{proof}[Proof of Proposition \ref{representationP}]
	The theorem follows once we have proved that ($\gamma^0$ is unitary)
	$$
	\pm 2\pi\,a^\dagger P_\pm(x,y)\gamma^0 b=\sum_{n=1}^\infty a^\dagger\, \gR u_n(x)\,\Sl\gR u_n(y)\,|\,	\gamma^0 b\Sr\quad \mbox{for any $a,b\in\bC^4$}.
	$$ 
	By definition, we have $u_n=\hat{\mathrm{E}}(\varphi_n)$ for some Hilbert basis $\{\varphi_n\}_n$ of $\hat{P}_\pm(\mathcal{L}_p^2(\bR^3,\bC^4))$ which is contained in $\hat{P}_\pm(\mathcal{S}_p(\bR^3,\bC^4))$. Let $N\in\bN$, then, exploiting Proposition \ref{generalsolutionsmooth} and the definition of $\gR_{\varepsilon}$, we have
	$$
	u_n(x):=\int_{\bR^3}\frac{d^3\V{k}}{(2\pi)^{3/2}}\varphi_n(\V{k})e^{-i\eta(x,k)},\quad \gR_\varepsilon u_n(x)=\int_{\bR^3}\frac{d^3\V{k}}{(2\pi)^{3/2}}\mathfrak{g}_\varepsilon(\V{k})\varphi_n(\V{k})e^{-i\eta(x,k)},
	$$	
	where we used the compact notation $\eta(k,x)=\pm\omega(\V{k})t-\V{k}\cdot\V{x}$. Now, since $\varphi_n(\V{k})=p_\pm(\V{k})\varphi_n(\V{k})$, we get the following chain of identities (with obvious notation):
	\begin{equation*}
	\begin{split}
	&\sum_{n=1}^Na^\dagger\, \gR u_n(x)\,\Sl\gR u_n(y)|\,\gamma^0b\Sr=\sum_{n=1}^N\int_{\bR^3}\frac{d^3\V{k}}{(2\pi)^{3/2}}\left(\mathfrak{g}_\varepsilon(\V{k})p_\pm(\V{k})a\, e^{i\eta(x,k)}\right)^\dagger\! \varphi_n(\V{k})\,\cdot\\
	&\qquad\qquad\qquad\qquad\qquad\qquad\cdot\int_{\bR^3}\frac{d^3\V{p}}{(2\pi)^{3/2}}\varphi_n(\V{p})^\dagger \left(\mathfrak{g}_\varepsilon(\V{p})p_\pm(\V{p})\gamma^0 b\, e^{i\eta(y,p)}\right)\!=\\
	&=\frac{1}{(2\pi)^3}\sum_{n=1}^N \left(\mathfrak{g}_\varepsilon\, p_\pm\,a\, e^{i\eta(x,\,\cdot\,)}\,\bigg|\,\varphi_{n}\right)_{\mathcal{L}^2}\left(\varphi_n\,\bigg|\,\mathfrak{g}_\varepsilon\, p_\pm\, \gamma^0\, b\, e^{i\eta(y,\,\cdot\,)}\right)_{\mathcal{L}^2}.
	\end{split}
	\end{equation*}
	At this point, noticing that the functions $\varphi_n$ are form a Hilbert basis, the completeness relations gives
	\begin{equation}
	\begin{split}
	&\sum_{n=1}^\infty a^\dagger\, \gR u_n(x)\,\Sl\gR u_n(y)|\,b\Sr=\frac{1}{(2\pi)^3} \left(\mathfrak{g}_\varepsilon\, p\,a\, e^{i\eta(\,\cdot\,,x)}\bigg|\mathfrak{g}_\varepsilon\, p\, \gamma^0\, b\, e^{i\eta(y,\,\cdot\,)}\right)_{\mathcal{L}^2}=\\
	&=\pm a^\dagger \left(\int_{\bR^3}\frac{d^3\V{k}}{(2\pi)^3}\, \mathfrak{g}_\varepsilon(\V{k})^2\, p_\pm(\V{k})\,(\pm \gamma^0)\, e^{-i\eta(x-y,k)}\bigg|_{k^0=-\omega(\V{k})}\right)\, b=\\
	&= \pm 2\pi\,a^\dagger P_{\pm}^{2\varepsilon}(x,y)\,b,
	\end{split}
	\end{equation}
	where the last equality follows by reasoning as in (\ref{abstractrestatement}).
\end{proof}

\begin{proof}[Proof of Lemma \ref{lemmalinearlyinde}]
	Let $\lambda^1,\dots,\lambda^n\in\bR$ be such that $\sum_{i=1}^n\lambda^i v_i=0$.  Then
	\begin{equation*}
		\begin{split}
		\sum_{j=1}^n|\lambda^j| =\sum_{j=1}^n\left|\sum_{i=1}^n\lambda^i\langle e_j|v_i-e_i\rangle\right|\le n\sum_{i=1}^n|\lambda^i|\|v_i-e_i\|\le n\varepsilon\sum_{i=1}^n|\lambda^i|.
		\end{split}
	\end{equation*}
If any of the coefficients $\lambda^j$ is different from zero we can divide by $\sum_{i}|\lambda^i|$ and get $1\le \varepsilon n<1$, which is not possible. Therefore, $\lambda^i=0$ for all $i=1,\dots,n$ and the claim is proved.
\end{proof}

\begin{proof}[Proof of Proposition \ref{part1}]
	Let $\{u_1,\dots,u_n\}$ be an orthonormal basis of $\sU\subset\sol^-$ and let $\epsilon >0$. Exploiting Lemma \ref{lemmaproizionieschwat}, for any $\epsilon'>0$ there must exist unit vectors $\{\phi_1,\dots,\phi_n\}\subset \hat{\mathrm{E}}(\hat{P}_-( \mathcal{S}_p(\bR^3,\bC^4))$ such that $\|u_i-\phi_i\|<\epsilon'/2<\epsilon'$ for any $i=1,\dots,n$. 
	Choosing $\epsilon'$ sufficiently small, we infer that the functions $\phi_i$ are linearly independent, thanks to Lemma \ref{lemmalinearlyinde}.
	As a consequence, for any choice of $i\neq j$ we also have
	\begin{equation}\label{bound}
	\begin{split}
	|\langle\phi_i|\phi_j\rangle|&=|\langle\phi_i|\phi_j\rangle-\langle u_i|u_j\rangle|=|\langle\phi_i|\phi_j\rangle-\langle\phi_i|u_j\rangle+\langle\phi_i|u_j\rangle-\langle u_i|u_j\rangle|=\\
	&\le \|\phi_i\|\|\phi_j-u_j\|+\|\phi_i-u_i\|\|u_j\|<\epsilon'.
	\end{split}
	\end{equation}
	At this point, we can apply the Gram-Schmidt algorithm and get an orthogonal set $\{\varphi_1,\dots,\varphi_n\}$ which spans the same linear space of $\{\phi_1,\dots,\phi_n\}$. 
	These orthogonal vectors can be defined in a non-recursive way as follows (apply for example \cite[Section 56.3]{vecchio}). For any $i=1,\dots,n$:
	\begin{equation}
	\varphi_i = \dfrac{1}{\det G_{i,\hat{i}}}\,\sum_{k=1}^i(-1)^{i+k}\big(\!\det G_{i,\hat{k}}\,\big)\, \phi_k,\quad
	G_i=\begin{bmatrix}
	(\phi_1|\phi_1) & \cdots & (\phi_i|\phi_1)\\
	\vdots & \ddots & \vdots\\
	(\phi_1|\phi_{i-1})& \cdots & (\phi_{i}|\phi_{i-1})
	\end{bmatrix}\:,
	\end{equation}
	where $G_{i,\hat{k}}$ is the square matrix obtained by removing the $k$-th column from $G_i$ and $\det G_{1,\hat{1}}:=1$. 
	Now, suppose that $ n!\epsilon'<1$. Exploiting Leibniz' formula for the determinant, together with (\ref{bound}) and $\|\phi_k\|=1$, it is not difficult to see that    (notice that $(\epsilon')^m<\epsilon'$):
	$$
	|\det G_{i,\hat{i}}|\ge 1-n!\epsilon',\quad 	|\det G_{i,\hat{k}}|\le n!\epsilon'\ \mbox{ for all } k\neq i.
	$$
	Exploiting these inequalities, we conclude the proof of point (i).
	Indeed, let $i=1,\dots,n$, then
	\begin{equation}
	\begin{split}
	\|\varphi_i-u_i\|\le \|\varphi_i-\phi_i\|+\|\phi_i-u_i\|\le \epsilon' +\sum_{k=1}^{i-1}\left|\frac{\det G_{i,\hat{k}}}{\det G_{i,\hat{i}}}\right|\le \epsilon' + (i-1)\frac{n!\epsilon'}{1-n!\epsilon'}.
	\end{split}
	\end{equation}
	Fix now any $\epsilon''>0$. Then we can always choose $\epsilon'$ small enough so that $\|\varphi_i-u_i\|<\epsilon''$ for any $i=1,\dots,n$. To conclude, define the orthonormal set $\psi_i=\varphi_i/\|\varphi_i\|$ for $i=1,\dots,n$. At this point, choosing $\epsilon''$ sufficiently small, it can be arranged that $\|u_i-\psi_i\|<\epsilon$. The second statement of point (i) can be proved easily exploiting the first one.

	Now, let us prove point (ii). Fix any $0<\epsilon<n^{-1/2}$ as in the assumptions, and take a corresponding set $\{\psi_1,\dots,\psi_n \}$ as in the proof of point (i). Then, we want to show that there is no non-vanishing $n$-ple $\{\lambda_1,\dots,\lambda_n\}\subset\bC$ such that $\sum_{i=1}\lambda_i\psi_i\perp\sU$ (in this sense the space spanned by the $\psi_i$ is "close enough" to $\sU$). Suppose by contradiction that this is not true.  Then we have
	\begin{equation*}
	\left\|\sum_{i=1}^n\lambda_i\psi_i-\sum_{i=1}^n\lambda_iu_i\right\|^2=\left\|\sum_{i=1}^n\lambda_i \psi_i\right\|^2+\left\|\sum_{i=1}^n\lambda_i u_i\right\|^2=\underbrace{\left\|\sum_{i=1}^n\lambda_i \psi_i\right\|^2}_{A>0}+\underbrace{\sum_{i=1}^n|\lambda_i|^2}_{B>0},
	\end{equation*}
	where $A$ must be strictly positive, because we assumed that at least some of the scalars $\lambda_i$ do no vanish and that the vectors $\psi_i$ are linearly independent.
	At the same time, using H\"older's inequality, we have
	\begin{equation*}
	\begin{split}
	\left\|\sum_{i=1}^n\lambda_i\psi_i-\sum_{i=1}^n\lambda_iu_i\right\|^2&=\left\|\sum_{i=1}^n\lambda_i (\psi_i-u_i)\right\|^2\le\left(\sum_{i=1}^n|\lambda_i|\|\psi_i-u_i\|\right)^2\le\\
	&\le \epsilon^2 \left(\sum_{i=1}^n|\lambda_i|\right)^2\le\epsilon^2 \left(\sum_{i=1}^n|\lambda_i|^2\right) n=n\epsilon^2 B.
	\end{split}
	\end{equation*}
	Putting all together we have
	$
	A+B\le n\epsilon^2 B
	$
	from which $1\le1+\frac{A}{B}\le n\epsilon^2<1$, which is a contradiction.
	This proves that $\sum_{i=1}^n\lambda_i\psi_i\perp\sU$ implies $\lambda_i=0$ for all $i=1,\dots,n$.
\end{proof}

\begin{proof}[Proof of Proposition \ref{part2}]
	Let $\{\psi_1,\dots,\psi_n\}$ be an approximating set for an orthonormal basis $\{u_1,\dots,u_n \}$ of $\sU$ as in Proposition \ref{part1}  and take any $\varphi\in \hat{\mathrm{E}}(\hat{P}_-(\mathcal{S}_p(\bR^3,\bC^4)))$. We want to find scalars $\{\lambda_1,\dots,\lambda_n\}\subset\bC$ such that the linear combination
	\begin{equation}\label{Psi}
	\Psi[\varphi]:=\varphi-\sum_{i=1}^n\lambda_i \psi_i\in\hat{\mathrm{E}}(\hat{P}_-(\mathcal{S}_p(\bR^3,\bC^4)))\subset\sol^- 
	\end{equation}
	is orthogonal to the subspace $\sU$, and show that these are uniquely determined by the function $\varphi$. This is equivalent to the requirement
	\begin{equation}\label{equazionelinear}
	0=\langle u_j|\Psi[\varphi]\rangle=\langle u_j|\varphi\rangle-\sum_{i=1}^n\lambda_i \langle u_j|\psi_i\rangle\quad \forall j=1,\dots,n.
	\end{equation}
	Suppose first that $\varphi\perp\sU$, then it must be $\sum_{i=1}^n\lambda_i \psi_i\perp\sU$, which is only possible if $\lambda_i=0$ for any $i=1,\dots,n$, as follows from (ii) in Proposition \ref{part1}.  On the other hand, it is clear that $\lambda_i=0$ for any $i=1,\dots,n$ implies $\varphi\perp\sU$. This proves the last statement of the proposition.
	
	Now, suppose on the contrary that $\varphi\not\perp\sU$. Then there must exist at least one basis element, say $u_1$, such that $(\varphi|u_1)\neq 0$. Equation (\ref{equazionelinear}) is equivalent to the linear system
	\begin{equation}\label{linearsystemmatrix}
	\begin{cases}
	\sum_{i=1}^n \lambda_i \langle u_1|\psi_i\rangle=\langle u_1|\varphi\rangle\\
	\vdots\\
	\sum_{i=1}^n \lambda_i \langle u_n|\psi_i\rangle=\langle u_n|\varphi\rangle
	\end{cases}
	\mbox{ or }\ \ 
	\underbrace{
		\left[
		\begin{matrix}
		\langle u_1|\psi_1\rangle & \cdots & \langle u_1|\psi_n\rangle\\
		\vdots & \ddots & \vdots\\
		\langle u_n|\psi_1\rangle & \cdots & \langle u_n|\psi_n\rangle
		\end{matrix}
		\right]
	}_{\sM}
	\underbrace{
		\left[
		\begin{matrix}
		\lambda_1\\
		\vdots\\
		\lambda_n
		\end{matrix}
		\right]}_{\lambda}
	=
	\underbrace{
		\left[
		\begin{matrix}
		\langle u_1|\varphi\rangle\\
		\vdots\\
		\langle u_n|\varphi\rangle
		\end{matrix}
		\right].
	}_{\sD}
	\end{equation}
	The matrix $\sM$ on the right-hand side is nonsingular. Indeed, suppose there is some $(\lambda_1,\dots,\lambda_n)$ in its kernel, then we would have 
	$$
	0=\sum_{i=1}^n\lambda_i\langle u_j|\psi_i\rangle=\left\langle u_j\bigg|\sum_{i=1}^n \lambda_i\psi_i\right\rangle\quad\mbox{for all } j=1,\dots,n,
	$$
	which implies $\sum_{i=1}^n\lambda_i\psi_i\perp\sU$ and so, as above, it must be $\lambda_i=0$ for all $i=1,\dots,n$. 
	Thus, the linear system (\ref{linearsystemmatrix}) has one and only one solution given by
	\begin{equation}\label{equationsystem}
	\lambda:=\left[
	\begin{matrix}
	\lambda_1\\
	\vdots\\
	\lambda_n
	\end{matrix}
	\right]
	=
	\left[
	\begin{matrix}
	\langle u_1|\psi_1\rangle & \cdots & \langle u_1|\psi_n\rangle\\
	\vdots & \ddots & \vdots\\
	\langle u_n|\psi_1\rangle & \cdots & \langle u_n|\psi_n\rangle
	\end{matrix}
	\right]^{-1}
	\left[
	\begin{matrix}
	\langle u_1|\varphi\rangle\\
	\vdots\\
	\langle u_n|\varphi\rangle
	\end{matrix}
	\right].
	\end{equation}
\end{proof}
\begin{proof}[Proof of Proposition \ref{part3}]
	Let us fix an orthonormal basis $\{u_1,\dots,u_n \}$ of $\sU$ and fix for any $0<\epsilon<n^{-1/2}$ an approximating set $\{\psi_1^\epsilon,\dots,\psi_n^\epsilon\}$ as in Proposition \ref{part1} (where the dependence on $\epsilon$ has been made explicit). Then, in particular, it holds that 
	$
	|\langle u_i|\psi_j^\epsilon\rangle-\delta_{ij}|<\epsilon
	$
	for any $i,j=1,\dots,n$ and therefore\footnote{Note that $\|A\|_2\le \sqrt{\sum_{i,j=1}^n |a_{ij}|^2}=:\|A\|_{HS}$ (\textit{Hilbert-Schmidt norm}) for any $A\in\bM(4,\bC)$.}
	$$
	\|\sM_\epsilon-\bI_n\|_2\le \sqrt{\sum_{i,j=1}^n|\langle u_i|\psi^\epsilon_j\rangle-\delta_{ij}|^2}<n\epsilon.
	$$
	At this point, exploiting the continuity of the inverse-matrix function, it is possible to choose $\epsilon$ small enough so that $\|\sM_\epsilon^{-1}-\bI_n\|_2<1$. This implies that $\|\sM_\epsilon^{-1}\|_2\le \|\sM_\epsilon^{-1}-\bI_n\|_2+\|\bI_n\|_2<2$ and therefore, exploiting (\ref{equationsystem}), we get
	$$
	|\lambda|\le \|\sM_\epsilon^{-1}\|_2|(\langle u_1|\varphi\rangle,\dots,\langle u_n|\varphi\rangle)|\le  2\sqrt{\sum_{i=1}^n|\langle u_i|\varphi\rangle|^2}\le 2\|\varphi\|,
	$$
	where we exploited Bessel's inequality in the last inequality.
\end{proof}

%
%

\newpage 

\thispagestyle{empty}

\end{document}